\title{Self-Index based on LZ77}
\author{Sebastian Kreft}
\newcommand{\source}{\textit{source}}
\newcommand{\chars}{\textit{char}}
\newcommand{\BWS}{\textrm{BWS}}
\newcommand{\etal}{\emph{et al.}}
\newcommand{\delete}[1]{}
\newcommand{\phrase}[1]{
    \ensuremath{
        \boxed{\texttt{#1\vphantom{\$}}}
    }
}
\newcommand{\rwidth}{0.35}
\newcommand{\results}[3]{
\begin{figure}[!ht]
\centering
\includegraphics[angle=-90,scale=\rwidth]{img/experiments/#1_extract}
\includegraphics[angle=-90,scale=\rwidth]{img/experiments/#1_locatep}
\includegraphics[angle=-90,scale=\rwidth]{img/experiments/#1_existf}
\includegraphics[angle=-90,scale=\rwidth]{img/experiments/#1_existn}
\caption[#2 (1)]{#2 (1). Note the logscales.}
\label{#3:1}
\end{figure}

\begin{figure}[!ht]
\centering
\includegraphics[angle=-90,scale=\rwidth]{img/experiments/#1_extract_tradeoff}
\includegraphics[angle=-90,scale=\rwidth]{img/experiments/#1_locate10}
\includegraphics[angle=-90,scale=\rwidth]{img/experiments/#1_locate15}
\includegraphics[angle=-90,scale=\rwidth]{img/experiments/#1_locate20}
\includegraphics[angle=-90,scale=\rwidth]{img/experiments/#1_locate02}
\includegraphics[angle=-90,scale=\rwidth]{img/experiments/#1_locate04}
\includegraphics[angle=-90,scale=\rwidth]{img/experiments/#1_existf_tradeoff}
\includegraphics[angle=-90,scale=\rwidth]{img/experiments/#1_existn_tradeoff}
\caption[#2 (2)]{#2 (2). Note the logscales.}
\label{#3:2}
\end{figure}

}
\newcommand{\figuresres}{\ref{res:t29:1}-\ref{res:kernel:2} and \ref{res:f41:1}-\ref{res:leaders:2}~}
\newcommand{\corpusurl}{\url{http://pizzachili.dcc.uchile.cl/repcorpus.html}}
\newcommand{\indexurl}{\url{http://pizzachili.dcc.uchile.cl/indexes/LZ77-index}}
\newcommand{\libcdsurl}{\url{http://code.google.com/p/libcds}}
\newcommand{\trotate}[1]{\begin{sideways}#1\end{sideways}}
\newcolumntype{x}[1]{%
>{\raggedleft\hspace{0pt}}p{#1}}%
\newcommand{\tn}{\tabularnewline}
\newcommand{\jeremy}[1]{#1}
\newcommand{\djeremy}[1]{}
\newcommand{\nieves}[1]{#1}
\newcommand{\dnieves}[1]{}
\newcommand{\diego}[1]{#1}
\newcommand{\ddiego}[1]{}
\newcommand{\cdelete}[1]{}
\newcommand{\dmine}[1]{}
\begin{document}
\pagenumbering{Roman}
\maketitle
\vspace{-15cm}
\centertitle{Resumen}
\vspace{-0.37cm}
Los dominios como bioinformática, sistemas de versionamiento de código, sistemas de edición colaborativos (wikis), y otros, producen grandes colecciones de texto que son sumamente repetitivas. Esto es, existen pocas diferencias entre los elementos de la colección. Esto permite que la compresibilidad de la colección sea extremadamente alta. Por ejemplo, una colección con versiones de un mismo artículo de Wikipedia puede ser comprimida a un $0.1\%$ de su espacio original, utilizando el esquema de compresión Lempel-Ziv de 1977 (LZ77).

Muchas de estas colecciones repetitivas contienen grandes volúmenes de texto. Es por eso que se requiere un método que permita almacenarlas eficientemente y a la vez operar sobre ellas. Las operaciones más comunes son extraer porciones aleatorias de la colección y encontrar todas las ocurrencias de un patrón dentro de la colección.

Un auto-índice es una estructura que almacena un texto en forma comprimida y permite encontrar eficientemente las ocurrencias de un patrón. Adicionalmente los auto-índices permiten extraer cualquier porción de la colección. 
Uno de los objetivos de estos índices es que puedan ser almacenados en memoria principal.
Esta característica es sumamente importante ya que el disco puede llegar a ser un millón de veces más lento que la memoria principal. 

La mayoría de los auto-índices existentes están basados en un esquema de compresión que predice los símbolos siguientes en base a una cantidad fija de símbolos anteriores. Este esquema, sin embargo, no funciona con textos repetitivos, ya que no es capaz de reconocer todos los elementos repetidos en la colección. Un esquema que sí captura las repeticiones es el LZ77, pero tiene el problema de no poder acceder aleatoriamente el texto.

En este trabajo se presenta un algoritmo para extraer substrings de un texto comprimido con un esquema Lempel-Ziv. Adicionalmente se presenta LZ-End, una variante de LZ77 que permite extraer el texto eficientemente usando espacio cercano al de LZ77. LZ77 extrae del orden de 1 millón de caracteres por segundo, mientras que LZ-End extrae más del doble.

Nuestro resultado más importante es el desarrollo del primer auto-índice orientado a textos repetitivos basado en LZ77/LZ-End. Su desempeño supera al auto-índice RLCSA, el estado del arte para textos repetitivos. 
La compresión de nuestros índices llega a ser dos veces mejor en ADN y colecciones de Wikipedia que la del RLCSA.
Cabe destacar que nuestro índice basado en LZ77 se construye en 35\% del tiempo requerido por el RLCSA, usando el 60\% de espacio de construcción.
La búsqueda de patrones cortos es más rápida que en el RLCSA, y para patrones largos la relación entre espacio y tiempo es favorable a nuestros índices.

Finalmente, se presenta también una colección de textos repetitivos provenientes de diversos dominios. Esta colección está disponible públicamente con el objetivo que se pueda convertir en un referente en experimentación.

\newpage
\pagenumbering{roman}
\maketitleenglish
\centertitle{Abstract}
Domains like bioinformatics, version control systems, collaborative editing systems (wiki), and others, are producing huge data collections that are very repetitive. That is, \jeremy{there} are few differences between the elements of the collection. This fact makes the compressibility of the collection extremely high. For example, a collection with all different versions of a Wikipedia article can be compressed up to the $0.1\%$ of its original space, using the Lempel-Ziv 1977 (LZ77) compression scheme.

Many of these repetitive collections handle huge amounts of text data. For that reason, we require a method to store them efficiently, while providing the ability to operate \jeremy{on} them. The most common operations are the extraction of random portions of the collection and the search for all the occurrences of a given pattern inside the whole collection.

A self-index is a data structure that stores a text in compressed form and allows to find the occurrences of a pattern efficiently. On the other hand, self-indexes can extract any substring of the collection, hence they are able to replace the original text. 
One of the main goals when using these indexes is to store them within main memory.  
This characteristic is very important, as the disk may be 1 million times slower than main memory. 

Most current self-indexes are based on a compression scheme that predicts the following symbol based on the previous $k$ symbols. However, this scheme is not well suited for repetitive texts as \jeremy{it does not capture} long-range repetitions\djeremy{ are not captured}. The LZ77 compression scheme does capture such repetitions, but it is not able to access the text at random.

In this thesis we present a scheme for random text extraction from text compressed with a Lempel-Ziv parsing. Additionally, we present a variant of LZ77, called LZ-End, that efficiently extracts text using space close to that of LZ77. LZ77 extracts around 1 million characters per second, while LZ-End extracts over 2 million. 

The main contribution of this thesis is the first self-index based on LZ77/LZ-End and oriented to repetitive texts, which outperforms the state of the art (the RLCSA self-index) in many aspects. 
The compression of our indexes is better than that of RLCSA, being two times better for DNA and for Wikipedia articles. Our index is built using just 60\% of the space required by the RLCSA and within 35\% of the time.
Searching for short patterns is faster than on the RLCSA, and for longer patterns the space/time trade-off is in favor of our indexes.

Finally, we present a corpus of repetitive texts, coming from several application domains. We aim at providing a standard set of texts for research and experimentation, hence this corpus is publicly available.

\thispagestyle{empty}
\tableofcontents
\newpage
\listoffigures
\listoftables
\clearpage
\pagenumbering{arabic}
\setlength{\parskip}{1.5ex}
\chapter{Introduction}
In recent times we have seen a rise in the amount of digital information. This may be attributable to the drop of the data acquisition and storage costs. Most of this information is text, that is, symbol sequences representing natural language, music, source code, time series, biological sequences like DNA and proteins, and others.

Despite that the examples presented above seem very different, there is an operation that arises in most applications handling those types of sequences. This operation is called \emph{text search} and consists in finding all positions on the text where a given pattern appears. This operation serves as a basis for building more complex and meaningful operations, like finding the most common words, or finding approximate patterns.

Text search can be solved by two different approaches. The first scans the text sequentially looking for matches of the pattern. Classical examples of this type of search are Knuth-Morris-Pratt \cite{KMP77} and Boyer-Moore \cite{BM77} algorithms. The second way of searching is by querying an \emph{index} of the text, a data structure we have to build before performing the queries. This structure allows us to find the occurrences of a given pattern without scanning the whole text.

To index the text we need enough space in order to store the index, and most importantly we need to be able to access it efficiently. Nowadays, storage is not a difficult problem, however efficient access is. In the last years the speed of hard\diego{-}drives has not experienced significant\djeremy{ speed} improvements. Hard\diego{-}drive access times \diego{are} \ddiego{is} around $10ms=10^7ns$, while main memory access (RAM) is around $10ns$; in other words, accessing secondary storage is 1 million times slower than accessing main memory. This problem is still present despite the appearance of solid state drives (SSD), which have access times around $0.1ms=10^5ns$, being 10 thousand times slower than main memory. For this reason, indexes using space proportional to the compressed text have been proposed, aiming at storing them in main memory and handling the data directly in compressed form, rather than decompressing before using it \cite{ZMNBY00,NM07}. There are some indexes that, within that compressed space, are able to replace the original text; these are called self-indexes and are obviously preferable as one can discard the original text.

A particular kind of texts not yet fully benefited by current self-indexes are repetitive \ddiego{texts} \diego{ones}. These \ddiego{texts} arise from domains that handle huge collections of very similar entries or documents. For example, in a DNA collection of human genomes of different individuals, the similarity between any two DNA sequences would be close to 99.9\% \cite{dna_sim}. Source code collections are also very repetitive, as the changes between one version and the next are not substantial, except in the case of a major release. Versioning systems, like \emph{wikis}, also generate very repetitive collections because each revision is very similar to the previous one. The main problem is that existing self-indexes do not sufficiently exploit these repetitions, being the self-index orders of magnitude larger than the space achievable with a compression scheme that does exploit the repetitions, like LZ77 \cite{ZL77}. \jeremy{LZ77 parses the text into \emph{phrases} so that each phrase, except its last letter, appears previously in the text (these previous occurrences are called \emph{sources})). It compresses by essentially replacing each phrase by a backward pointer.} A recent work, aiming at adapting current self-indexes to handle large DNA databases of the same species \cite{MNSV08} found that LZ77 compression was still much superior to capture this repetitiveness, yet it was inadequate as a format for compressed storage because of its inability to retrieve individual sequences from the collection. Another work \cite{CN09,CFMPNbibe10} shows that grammar-based compression can allow extraction of substrings while capturing such repetitions, yet LZ77 compression is superior to grammar compression \cite{Rytter03,CLLP+05}. 

For these reasons in this thesis we focus on the \ddiego{creation} \diego{definition} of a self-index oriented to repetitive texts and based on LZ77-like compression schemes. Our main contributions are two: (1) a scheme for random text extraction in LZ77-like parsing, as well as a space-competitive variant, called LZ-End, achieving faster text extraction; (2) a self-index based on LZ77/LZ-End that achieves a better space/time trade-off than the best self-indexes oriented to repetitive texts.
\section{Contributions of the Thesis}
\begin{description}
\item[Chapter 3]: We create a public corpus of highly repetitive texts. The corpus is composed of texts coming from different real domains like biology, source code repositories, document repositories, and others, as well as artificial texts having interesting combinatorial properties. This corpus is available at \corpusurl.
\item[Chapter 4]: 
The worst-case extraction time of a substring of length $m$ in an LZ77 parsing is $O(mH)$, where $H$ is the maximum number of times a character is transitively copied in the parsing. We present an alternative parsing, called LZ-End, that performs very close to LZ77 in terms of compression but permits faster text extraction, $O(m+H)$ worst-case time. This work was published in the \emph{20th Data Compression Conference} \cite{KN10}.
\item[Chapter 5]: We introduce a new self-index oriented to repetitive texts and based on the \ddiego{Lempel-Ziv parsing} \diego{LZ77, LZ-End, and similar parsings}. Let $n'$ be the number of phrases of the parsing \diego{(for highly repetitive texts, $n'$ will be a small value)}. This  index uses in theory $2n' \log n + n' \log n' + n' \log D + O(n'\log \sigma) + o(n)$ bits of space, where $\sigma$ is the size of the alphabet and $D$ is upper-bounded by the maximum number of sources covering each other. It finds the $occ$ occurrences of a pattern of length $m$ in time $O(m^2H+m\log n' + occ \cdot D \log n')$. We present several practical variants that achieve better results, both in time and space, than the \emph{Run-length Compressed Suffix Array} (RLCSA) \cite{MNSV08} and the \emph{Grammar-based Self-index} \cite{CN09,CFMPNbibe10}, the state-of-the art self-indexes oriented to repetitive texts.
\end{description}
\section{Outline of the Thesis}
\begin{description}
\item[Chapter 2] describes basic concepts and related work relevant to this thesis.
\item[Chapter 3] presents a text corpus intended for repetitive text.
\item[Chapter 4] explains the Lempel-Ziv (LZ) parsing and some of its properties. It also introduces a new LZ variant called LZ-End, able to extract an arbitrary substring in constant time per extracted symbol in some cases.
\item[Chapter 5] presents a new self-index based on LZ77-like parsings. It covers the theoretical proposal and the considerations we made when implementing the index.
\item[Chapter 6] shows the experimental results of our proposed index, comparing it with the state-of-the-art self-indexes for repetitive texts. 
\item[Chapter 7] presents our conclusions and gives some lines of research that can be further investigated. 
\end{description}

\chapter{Basic Concepts}
In this chapter we introduce the basic concepts and notation used through this thesis. Then, we present the data structures used to build our index. Finally, we present two self-indexes oriented to repetitive texts. All logarithms in this thesis will be in base 2 and we will assume \jeremy{that} $0\log 0 = 0$.
\section{Strings}
\label{sec:strings}
\begin{definition} A \emph{string} $T$ is a sequence of characters drawn from an alphabet $\Sigma$. 
The alphabet is an ordered and finite set of size $\left | \Sigma \right |=\sigma$. 
The $i$-th character of a string is represented as $T[i]$. The symbol $\varepsilon$ represents the empty string of length $0$.
\end{definition}

\begin{definition}
Given a string $T$, and positions $i$ and $j$, the \emph{substring} of $T$ starting at $i$ and ending at $j$ is defined as $T[i,j]=T[i]T[i+1]\ldots T[j]$. If $i>j$, then $T[i,j]=\varepsilon$.
\end{definition}

\begin{definition}
Let $T$ \diego{be} a string of length $n$. The \emph{prefixes} of $T$ are the strings $T[1,j], \forall\, 0\leq j \leq n$ and \djeremy{the}\jeremy{its} \emph{suffixes} \jeremy{are} the strings $T[i,n], \forall\, 1\leq i \leq n+1$.
\end{definition}

\begin{definition}
Let $T_1$, $T_2$ be strings of length $n_1$ and $n_2$\diego{, respectively}. We define the concatenation of these strings as  $T_1T_2=T_1[1]\ldots T_1[n_1]T_2[1]\ldots T_2[n_2]$.
\end{definition}

\begin{definition}
Given a string $T$ of length $n$, the reverse of $T$ is $T^{rev}=T[n]T[n-1]\ldots T[2]T[1]$.
\end{definition}
\begin{definition}
The \emph{lexicographic order} ($<$) between strings is defined as follows:
Let $a,b$ \diego{be} characters \diego{in $\Sigma$} and $X,Y$ \diego{be} strings \diego{over $\Sigma$}.
\begin{eqnarray*}
&\varepsilon < X, \, \forall X \neq \varepsilon& \\
&aX < bY\,\, \textit{if}\,\, a<b\, \vee \, ( a=b \wedge X<Y)&
\end{eqnarray*}
\end{definition}
\section{Search Problems}
\label{sec:queries}
\begin{definition}
Given a string $T$ and a pattern $P$ (a string of length $m$) both over an alphabet $\Sigma$, the \emph{occurrence positions} of $P$ in $T$ are defined as
$O = \lbrace 1 + |X|,\, \exists X,Y,\, T = XPY \rbrace$. 
\end{definition}

\begin{definition}
Given a string $T$ and a pattern $P$, the following search problems are of interest:
\begin{itemize}
\item $exists(P,T)$ returns true iff $P$ is in $T$, i.e., returns true iff $O\neq \emptyset$.
\item $count(P,T)$ counts the number of occurrences of $P$ in $T$, i.e., returns $occ = |O|$.
\item $locate(P,T)$  finds the occurrences \diego{of $P$ in $T$}, i.e., returns the set $O$ in some order.
\item $extract(T,l,r)$  extracts the text substring $T[l,r]$.
\end{itemize}
\end{definition}

\begin{remark}
Note that $exists$ and $count$ can be answered after performing a $locate$ query.
\end{remark}
\section{Entropy}
\label{sec:entropy}
\begin{definition}
Let $T$ be a string of length $n$. The zero-th order empirical entropy is defined as
\begin{equation*}
H_0(T)= - \sum_{c\in \Sigma} \frac{n_c}{n} \log \frac{n_c}{n} 
\end{equation*}
where $n_c$ is the number of times the character $c$ appears in $T$, that is, $n_c/n$ is the empirical probability of appearance of character $c$. 
\end{definition}
It is worth noticing that the zero-th order entropy is invariant to permutations in the order of the text characters. The value $nH_0(T)$ is the least number of bits needed to represent $T$ using a compressor that gives each character a fixed encoding.

\begin{definition}
Let $T$ be a string of length $n$. The $k$-th order empirical entropy \cite{Man2001} is defined as
\begin{equation*}
H_k(T)= \sum_{S\in \Sigma^k} \frac{\left|T^S\right|}{n} H_0\left(T^S\right) 
\end{equation*}
where $T^S$ is the sequence composed of all characters preceded by string $S$ in $T$.
\end{definition}

The value $nH_k(T)$ is the least number of bits needed to represent $T$ using a compressor that encodes each character taking into account the $k$ preceding characters \diego{in $T$}. This value assumes the first $k$ characters are encoded for free, thus it gives a relevant lower bound only when $n\gg k$.

\djeremy{It can be proved that }$H_k$ is a decreasing function in $k$, that is, 
\begin{equation*}
0 \leq H_k(T) \leq H_{k-1}(T) \leq \ldots \leq H_1(T) \leq H_0(T) \leq \log \sigma.
\end{equation*}

The following lemma \jeremy{yields the ground to} show\djeremy{s} that the empirical entropy $H_k$ is not a good lower-bound measure for the compressibility of repetitive texts.
\begin{lemma}
\label{lemma:entropytt}
Let $T$ be a string of length $n$. For any $k\le n$ it holds $H_k(TT)\ge H_k(T)$. 
\end{lemma}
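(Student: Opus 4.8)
The plan is to compare, context by context, the contributions to $H_k(TT)$ and to $H_k(T)$.

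First I would analyze how the occurrences of a fixed length-$k$ context $S\in\Sigma^k$ distribute in $TT$. A character $TT[i]$ with $i\ge k+1$ falls into exactly one of three groups according to the position $i$: (i) $i\le n$, so the character and all $k$ preceding characters lie inside the first copy of $T$; (ii) $i\ge n+k+1$, so both lie inside the second copy; (iii) $n+1\le i\le n+k$, the $k$ ``seam'' positions whose context straddles the boundary between the two copies. Group (i) reproduces the multiset $T^S$ exactly (same character, same context), and so does group (ii), since there $TT[i]=T[i-n]$ and $TT[i-k,i-1]=T[i-n-k,i-n-1]$. Group (iii) contributes some extra multiset, call it $E^S$, and these contributions total $\sum_{S\in\Sigma^k}|E^S|=k$ over all contexts (they are exactly the first $k$ characters of $T$, the ones counted ``for free'' in $T$). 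Hence, as multisets, $(TT)^S = T^S\uplus T^S\uplus E^S$.

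The key inequality I would invoke is the superadditivity of the ``total'' zeroth-order entropy: for any two character multisets $M_1,M_2$,
\[
|M_1\uplus M_2|\,H_0(M_1\uplus M_2)\ \ge\ |M_1|\,H_0(M_1)+|M_2|\,H_0(M_2),
\]
which follows from the log-sum inequality (equivalently, convexity of $x\log x$) applied to each character count. I would also use that $H_0$ depends only on relative character frequencies, so $H_0(T^S\uplus T^S)=H_0(T^S)$ and therefore $|T^S\uplus T^S|\,H_0(T^S\uplus T^S)=2\,|T^S|\,H_0(T^S)$. Combining this with the decomposition above and $|E^S|\,H_0(E^S)\ge 0$, I get, for every $S\in\Sigma^k$,
\[
|(TT)^S|\,H_0\big((TT)^S\big)\ \ge\ 2\,|T^S|\,H_0(T^S).
\]

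Finally I would sum over all $S\in\Sigma^k$ and divide by $2n$, the length of $TT$:
\[
H_k(TT)=\frac{1}{2n}\sum_{S\in\Sigma^k}|(TT)^S|\,H_0\big((TT)^S\big)\ \ge\ \frac{1}{n}\sum_{S\in\Sigma^k}|T^S|\,H_0(T^S)=H_k(T),
\]
which is the claimed inequality. The only real obstacle is the bookkeeping: carefully matching which positions of $TT$ belong to which context (the seam case in particular) against the ``first $k$ characters are free'' convention and the $n$-versus-$2n$ normalization. The entropy superadditivity itself is standard, and the degenerate cases ($k=0$, or $k=n$ where $H_k(T)=0$) follow from the same argument.
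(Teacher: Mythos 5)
Your proof is correct and follows essentially the same route as the paper's: decompose $(TT)^S$ into two copies of $T^S$ plus a seam piece, lower-bound $|(TT)^S|H_0((TT)^S)$ by $2|T^S|H_0(T^S)$ via a log-sum/Gibbs-type inequality, then sum and normalize by $2n$. The paper states the key lemma as monotonicity of $|X|H_0(X)$ under appending characters, while you frame it as superadditivity of $|M|H_0(M)$ over multisets, but since you then discard $|E^S|H_0(E^S)\ge 0$ these amount to the same bound, both resting on the same concavity argument.
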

\begin{proof}
As new relevant contexts may have arisen in the concatenation $TT$, we denote by $C(T,k)$ the contexts of length $k$ present in $T$, and $C(TT,k)$ the contexts of $TT$. We have that $C(T,k) \subseteq C(TT,k)$. The number of new contexts in $TT$ is at most $k$.
For each $S \in C(T,k)$, we have $(TT)^{S} = T^{S}A^ST^{S}$, for some $A^S$ such that $|A^S|\leq k$. Then, 
\begin{eqnarray*}
H_k(TT)&=&\frac{1}{|TT|}\sum_{S\in C(TT,k)}|(TT)^S|H_0((TT)^S)\\
&\ge &\frac{1}{2|T|}\sum_{S\in C(T,k)}|T^SA^ST^S|H_0(T^SA^ST^S)\\
&\ge &\frac{1}{2|T|}\sum_{S\in C(T,k)}|T^ST^S|H_0(T^ST^S)\\
&=&\frac{1}{|T|}\sum_{S\in C(T,k)}|T^S|H_0(T^S)\\
&=&H_k(T).
\end{eqnarray*}
In the first step we used $C(T,k)\subseteq C(TT,k)$; in the second we used $|T|H_0(T)\le |TA|H_0(TA)$, for $T=T^ST^S$ and $A=A^S$ (since $|T^SA^ST^S| H_0(T^SA^ST^S) = $ \linebreak $|T^ST^SA^S| H_0(T^ST^SA^S)$); and in the third we used $H_0(TT)=H_0(T)$. The second property holds because
\begin{eqnarray*}
|TA|H_0(TA) & = & \sum_{c\in \Sigma} (n_c^T+n_c^A) \log \frac{n^T+n^A}{n_c^T+n_c^A} \\
& \ge & n^T \sum_{c\in \Sigma} \frac{n_c^T}{n^T} \log \frac{n^T+n^A}{n_c^T+n_c^A} \\
& \ge & n^T \sum_{c\in \Sigma} \frac{n_c^T}{n^T} \log \frac{n^T}{n_c^T}\\
& = & |T|H_0(T)
\end{eqnarray*}
where $n_c^X$ is the number of occurrences of character $c$ in string $X$, and $n^X=|X|$ for $X=T\, \text{or}\, A$. The last line is justified by the Gibbs inequality \cite{gibbs}.
\end{proof}
It follows that $|TT|H_k(TT)\ge2|T|H_k(T)$, that is, to encode $TT$ this model uses at least twice the space of the one used to encode $T$. An LZ77 encoding would need just one more phrase, as seen later.
\section{Encodings}
Most data structures need to represent symbols and numbers. Classic data structures use a fixed amount of space to store them, for example 1 byte for characters and 4 bytes for integers. Instead, compressed data structures aim to use the minimum possible space, thus they represent symbols using variable-length prefix-free codes or just using a fixed amount $b$ of bits, where $b$ is as small as possible. Table \ref{tab:coders} shows different encodings for the \ddiego{first 9} integers \diego{1,\ldots,9}, which we describe next.
\begin{description}
\item{\textbf{Unary Codes}} This representation is the simplest and serves as a basis for other coders. It represents a positive $n$ as $\texttt{1}^{n-1}\texttt{0}$, thus it uses exactly $n$ bits.
\item{\textbf{Gamma Codes}} It represents a positive $n$ by concatenating the length of its binary representation in unary and the binary representation of the symbol, omitting the most significant bit. The space is $2 \lfloor\log n\rfloor + 1$, $\lfloor\log n\rfloor + 1$ for the length and $\lfloor\log n\rfloor$ for the binary representation.   
\item{\textbf{Delta Codes}} This is an extension of $\gamma$-codes that works better on larger numbers. It represents the length of the binary representation of $n$ using $\gamma$-codes and then $n$ in binary without its most significant bit, thus using $\lbrace 2 \lfloor\log(\lfloor\log n\rfloor + 1)\rfloor + +1 \rbrace + \lfloor\log n\rfloor$ bits.
\item{\textbf{Vbyte Coding} \cite{vbyte}} It splits the $\lfloor \log(n + 1)\rfloor$ bits needed to represent $n$ into blocks of $b$ bits and stores each block in a chunk of $b + 1$ bits. The highest bit is 0 in the chunk holding the most significant bits of $n$, and 1 in the
rest of the chunks. For clarity we write the chunks from most to least significant, just like the binary representation of $n$. For example, if $n = 25 = 11001$ and $b = 3$, then we need two chunks and the representation is $0011 \cdot 1001$. Compared to an optimal encoding of $\lfloor \log(n + 1)\rfloor$ bits, this code loses one bit per $b$ bits of $n$, plus possibly an almost empty final chunk. Even when the best choice for $b$ is used, the total space achieved is still worse than $\delta$-encoding's performance. In exchange, Vbyte codes are very fast to decode.

\end{description}
\begin{table}
\centering
\begin{tabular}{|l|l|l|l|l|l|}
\hline
Symbol & Unary Code & $\gamma$-Code & $\delta$-Code & Binary($b=4$)&Vbyte($b=2$)\\ \hline
\hline
1 & 0 & 0 & 0 & 0001 & 001\\ \hline
2 & 10 & 100 & 1000 & 0010 & 010\\ \hline
3 & 110 & 101 & 1001 & 0011 & 011\\ \hline
4 & 1110 & 11000 & 10100 & 0100 & 001100 \\ \hline
5 & 11110 & 11001 & 10101 & 0101 & 001101 \\ \hline
6 & 111110 & 11010 & 10110 & 0110 & 001110 \\ \hline
7 & 1111110 & 11011 & 10111 & 0111 & 001111\\ \hline
8 & 11111110 & 1110000 & 11000000 & 1000 & 001100100 \\ \hline
9 & 111111110 & 1110001 & 110000001 & 1001 & 001100101 \\ \hline
\end{tabular}
\caption{Example of different coders}
\label{tab:coders}
\end{table}
\subsection{Directly Addressable Codes}
\label{sec:dac}
\diego{In many cases} \ddiego{Many times} we need to store a set of numbers using the least possible space, yet providing fast random access to each element. Variable\diego{-}length codes complicate this task, as \ddiego{we need to store additional} \diego{they require storing, in addition,} pointers to sampled positions of the encoded sequence.

A simple solution that shows good performance in practice is the so\diego{-}called \nieves{\emph{Directly Addressable Codes} (DAC)} \ddiego{\emph{Reordered Vbytes}} \cite{DAC}, a variant of \emph{Vbytes} \cite{vbyte}.
They start with a sequence $C=C_1, \ldots, C_n$ of $n$ integers. Then they compute the Vbyte encoding of each number. The least significant blocks are stored contiguously in an array $A_1$, and the highest bits of the least significant chunks are stored in a bitmap $B_1$. The remaining chunks are organized in the same way in arrays $A_i$ and bitmaps $B_i$, storing contiguously the $i$-th chunks of the numbers that have them. Note that arrays $A_i$ store contiguously the bits $(i-1)\cdot b+1,\ldots,i\cdot b$ and bitmaps $B_i$ store whether a number has further chunks or not, hence the name \emph{Reordered Vbytes}.  

Figure \ref{fig:dac} shows an example of the \diego{resulting} structure. The first element is represented with two blocks, thus, $A_1[0] = C_{1,1}$, $A_2[0] = C_{1,2}$, $B_1[0] = \texttt{1}$ and $B_2[0] = \texttt{0}$.

\begin{figure}[ht]
\centering
\includegraphics[width=8cm]{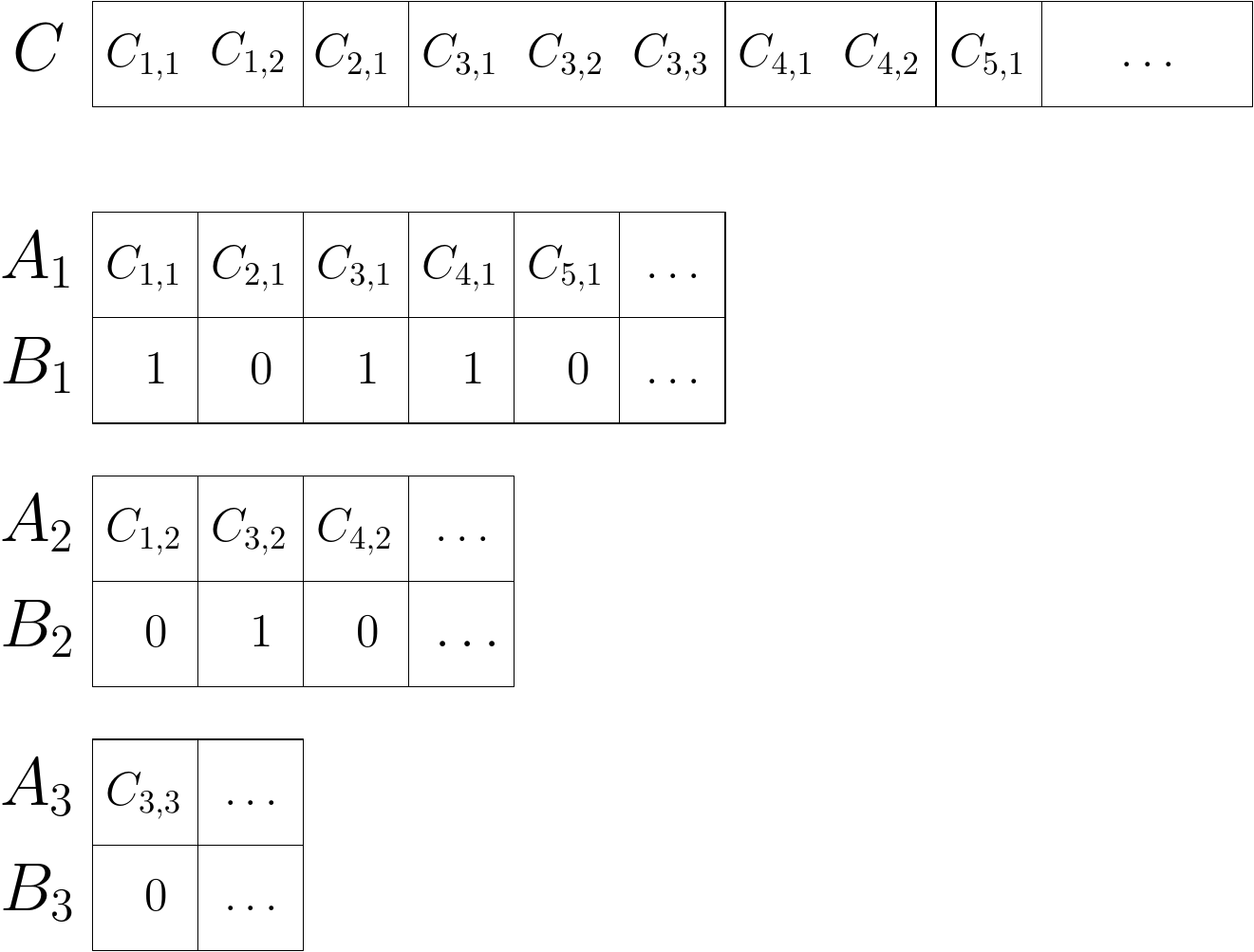}
\caption{Example of Directly Addressable Codes structure}
\label{fig:dac}
\end{figure}


To access the element at position $i=i_1$ we \ddiego{see if} \diego{check whether} $B_1[i_1]$ is set. If it is not set, this is the last chunk and we already have the value $C[i]=A_1[i_1]$, otherwise we have to fetch the following chunks. In that case, we recompute the position as $i_2 = rank_{1}(B_1, i_1)$, where $rank_{1}(B_1,i_1)$ is the number of ones up to position $i_1$ on bitmap $B_1$ (see Section \ref{sec:bitmaps} for further details). If $B_2[i_2]$ is not set we are done with $C[i] = A_1[i_1] + A_2[i_2]\cdot2^b$, otherwise we set $i_3 = rank_1(B_2,i_2)$ and continue in the following levels. Accessing a random element takes $O(\log(M)/b)$ worst case time, where $M=\max C_i$. However, the access time is lower for elements with shorter codewords, which are usually the most frequent ones.

We will use the implementation of Susana Ladra\jeremy{\footnote{Universidade da Coruña, Spain. \texttt{sladra@udc.es}}} (available by personal request) in this thesis.
\section{Bitmaps}
\label{sec:bitmaps}
Let $B$ a binary sequence over $\Sigma=\lbrace\texttt{0},\texttt{1}\rbrace$ (\diego{a} bitmap) of length $n$ and assume it has $m$ ones. We are interested in solving the following operations:
\begin{itemize}
    \item \textbf{$rank_b(B,i)$:} How many $b$'s are up to position $i$ \jeremy{(included)}.
    \item \textbf{$select_b(B,i)$:} The position of the $i$-th $b$ bit.
\end{itemize}

\begin{figure}
\centering
\includegraphics[width=12cm]{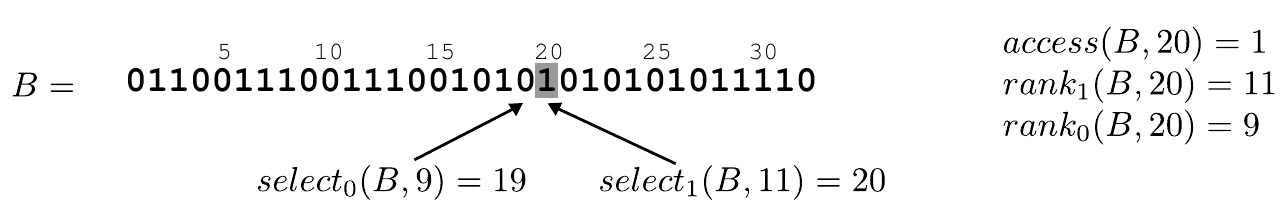}
\caption{Example of rank and select}
\label{fig:rank_select}
\end{figure}
\begin{example}
\nieves{Figure \ref{fig:rank_select} shows an example of the operations rank and select. We show the values of both $rank_1(B,20)=11$ and $rank_0(B,20)=9$. Note that these two values add up to 20, since the former returns the number of ones up to position 20, and the latter the number of zeroes. Also, $access$ simply returns the bit stored at that position, in our case at position 20 there is a 1. Finally, we show the value of $select_1(B,11)=20$, which was expected since $access(B,20)=1$. The value of $select_0(B,9)$ is 19.}
\end{example}

Several solutions have been proposed to address this problem. The first solution able to solve both \diego{kinds of} queries in constant time uses $n+O(\frac{n\log\log n}{\log n})$ bits of space \cite{Cla96}. 
Raman, Raman and Rao\jeremy{'s solution} (RRR) \cite{RRR02} achieve\jeremy{s} $nH_0(B) + O(\frac{n\log\log n}{\log n})$ bits and answer the queries in constant time. Okanohara and Sadakane \cite{OS07} proposed several alternatives tailored to the case of small $m$ (sparse bitmaps): \texttt{esp}, \texttt{recrank}, \texttt{vcode}, and \texttt{sdarray}. 
Table \ref{tab:bitmap} shows the time and space complexities of these solutions. Note that the reported space\jeremy{s} include\djeremy{s} the representation of the bitmap.

\begin{table}
\begin{center}
\begin{tabular}{l|c|c|c}
Variant & Size & Rank & Select \\ \hline
Clark & $n + o(n)$ & $O(1)$ & $O(1)$ \\
RRR & $nH_0(B) + o(n)$ & $O(1)$ & $O(1)$ \\
esp & $nH_0(B) + o(n)$ & $O(1)$ & $O(1)$ \\
recrank & $1.44 m \log \frac{n}{m} + m + o(n)$ & $O\left(\log \frac{n}{m}\right)$ & $O\left(\log \frac{n}{m}\right)$\\
vcode & $m\log(n/\log^2 n)+o(n)$ & $O(\log^2 n)$ & $O(\log n)$  \\
sdarray & $m \log \frac{n}{m} + 2m + o(m)$& $O\left(\log \frac{n}{m} + \frac{\log^4 m}{\log n}\right)$ &  $O\left(\frac{\log^4 m}{\log n}\right)$ 
\end{tabular}
\caption{Complexities for binary rank and select}
\label{tab:bitmap}
\end{center}
\end{table}

\subsection{Practical Dense Bitmaps}
\label{sec:prac_bitmap}
The extra $o(n)$ space of theoretical solutions \cite{Cla96} is large in practice.
\djeremy{A solution with good results in practice and small space overhead (up to 5\%) is the one proposed by González \etal ~\cite{GGMNwea05}.}
\jeremy{González \etal ~\cite{GGMNwea05} proposed a solution with good results in practice and small space overhead (up to 5\%).}
This implementation is very simple, yet its practical performance is better than classical solutions. They store the plain bitmap in an array $B$ and have a table $R_s$ where they store $rank_1(B,i \cdot s)$, where $s=32k$, where $k$ is a parameter \jeremy{for the frequency of the sampling of the bit vector}. They use a function called \emph{popcount} that counts the number of \diego{\texttt{1} bits} \ddiego{ones} in \djeremy{an integer} \jeremy{a word} \jeremy{(4 bytes)}. \nieves{This operation can be solved bit by bit, but it is easy to improve it, using either bit parallelism or precomputed tables, requiring thus just a few operations.}
\nieves{They} solve the operations as follows ($rank_0$ and $select_0$ are obvious variations):

\begin{itemize}
\item $rank_1(B,i)$: They start in the last entry of $R_s$ that precedes $i$ ($R_s[\lfloor i/s \rfloor]$), and then sequentially scan the array $B$, popcounting \djeremy{in chunks of $w = 32$ bits}  \jeremy{consecutive words}, until reaching the desired position. The popcounting of the last word is done by first setting all bits after position $i$ to zero, which is done in constant time using a mask. Thus the time is $O(k)$. 
\item $select_1(B,i)$: They first binary search the $R_s$ table for the last position $p$ where $R_s[p]\le i$. Then they scan $B$ sequentially using \emph{popcount} looking for the word where the desired select position is. Finally they find the desired position in the word by \djeremy{first popcounting byte by byte and finally they} sequentially scan\jeremy{ning} the \djeremy{byte} \jeremy{word} bit by bit. Thus the time is $O(k+\log \frac{n}{k})$.
\end{itemize} 

We will use the implementation of Rodrigo González (available at \libcdsurl) in this thesis.

\subsection{Practical Sparse Bitmaps}
\label{sec:deltacodes}
When the bitmap is very sparse (i.e., the number of ones in the bitmap is very low) one practical solution is to $\delta$-encode the \ddiego{consecutive} distances between \diego{consecutive} ones.
Additionally we need to store absolute sample\djeremy{s} \jeremy{values} $select_1(B, i\cdot s)$ for a sampling step $s$, plus pointers to the corresponding positions in the $\delta$-encoded sequence. We solve the operations as follows:
\begin{itemize}
\item $select_1(B,i)$ is solved within $O(s)$ time by going to the last sampled position preceding $i$ and decoding the $\delta$-encoded sequence from there. 
\item $rank_1(B,i)$ is solved in time $O(s + \log \frac{m}{s})$. First, we binary search the samples looking for the last sampled position such that $select_1(B, \ell\cdot s) \le i$. Starting from that position we sequentially decode the bitmap and stop as soon as $select_1(B, p) \ge i$.
\item $access(B,i)$ is solved in time $O(s + \log \frac{m}{s})$ in a way similar to rank.
\end{itemize}
The space needed by the structure is $W + n/s(\lfloor\log m\rfloor +1 + \lfloor\log W \rfloor + 1)$, where $W$ is the number of bits needed to represent all the $\delta$-codes. In the worst case $W=2m \lfloor\log(\lfloor\log \frac{n}{m}\rfloor + 1)\rfloor + m\lfloor\log \frac{n}{m}\rfloor + m=m\log \frac{n}{m} + O(m\log\log \frac{n}{m})$.

This structure allows a space-time trade-off related to $s$ and also has the property that several operations cost $O(1)$ after solving others. For example, $select_1(B, p)$ and $select_1(B, p+1)$ cost $O(1)$ after solving $p \leftarrow rank_1(B, i)$.
\section{Wavelet Trees}
\label{sec:wavelettree}
A wavelet tree \cite{GGV03} is an elegant data structure that stores a sequence $S$ of $n$ symbols from an alphabet $\Sigma$ of size $\sigma$. 
This structure supports some basic queries and is easily extensible to support others.

We split the alphabet into two halves $L$ and $R$, so that the elements of $L$ are lexicographically smaller than those of $R$. Then, we create a bitmap $B$ of size $n$ setting $B[i]=\texttt{0}$ if the symbol at position $i$ belongs to $L$ and $B[i]=\texttt{1}$ otherwise.
This bitmap is stored at the root of the tree. \jeremy{Afterward}\djeremy{Then}, we extract from $S$ all symbols belonging to $L$, generating sequence $S_L$, and all symbols belonging to $R$, generating sequence $S_R$ (these sequences are not stored). \jeremy{Finally, we recursively generate}\djeremy{Then we generate recursively} the left subtree on $S_L$ and the right subtree on $S_R$. We continue until we get a sequence over a one-letter alphabet. \nieves{Figure \ref{fig:wavelettree} shows the wavelet tree for the example text \texttt{alabar\_a\_la\_alabarda}. Only the bitmaps (black color) are stored in the tree. The labels of the tree show (gray color) the subsets $L$ and $R$ and the strings over the bitmaps (gray color) show the conceptual subsequences $S_L$ and $S_R$.}

\begin{figure}[ht]
\begin{center}
\includegraphics{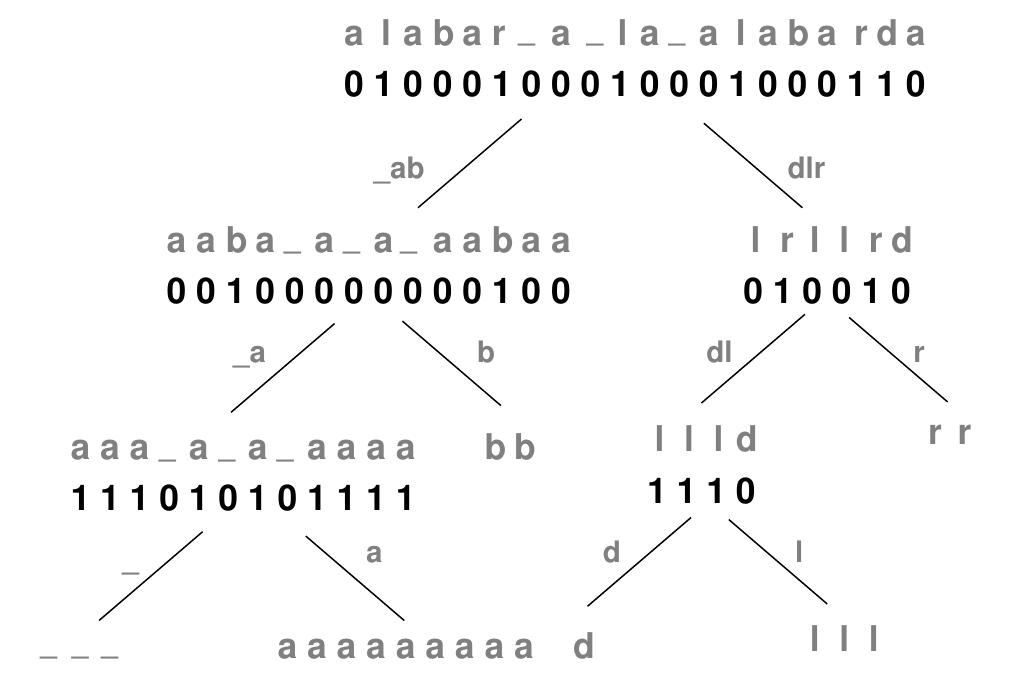}
\end{center}
\caption[Example of a wavelet tree]{Example of a wavelet tree for the text \texttt{alabar\_a\_la\_alabarda}}
\label{fig:wavelettree}
\end{figure}

The resulting tree has $\sigma$ leaves, height $\lceil \log \sigma \rceil$, and $n$ bits per level. Thus the space occupancy is $n \log \sigma$ bits, plus $o(n\log \sigma)$ (more precisely, \djeremy{$O(\frac{n\log \sigma \log\log n}{\log n})=O(n\log\log \sigma)$} \jeremy{$O(\frac{n\log \sigma \log\log n}{\log n})$}) additional bits to support \emph{rank} and \emph{select} queries on the bitmaps. 

In the following we \dmine{will} explain how this structure supports the operations \emph{access}, \emph{rank} and \emph{select} on $S$. The last two operations are just a generalization for larger alphabets of those defined in Section \ref{sec:bitmaps}.

\begin{itemize}
\item \textbf{Access:} To retrieve the symbol $S[i]$ we look at $B[i]$ at the root. If it is a \texttt{0} we go to left subtree, otherwise to the right subtree. The new position is $i \leftarrow rank_0(B,i)$ if we go to the left and $i \leftarrow rank_1(B,i)$ if we go to the right. This procedure continues recursively until we reach \ddiego{the last level} \diego{a leaf}. The bits read in the path from the root to the leaf represent the symbol sought.

\item \textbf{Rank:} To count how many $c$'s are up to position $i$ we  go to the left if $c$ is in $L$ and otherwise to the right. The new position is $i \leftarrow rank_0(B,i)$ if we go to the left and $i \leftarrow rank_1(B,i)$ if we go to the right, where $B$ is the bitmap of the root. When we reach a leaf the answer is $i$.

\item \textbf{Select:} To find the $i$-th symbol $c$ we first go to the leaf corresponding to $c$ and then \ddiego{we} go upwards to the root. Let $B$ the bitmap of the parent. If the current node is a left child then the position at the parent is $i \leftarrow select_0(B,i)$, otherwise it is $i \leftarrow select_1(B,i)$. When we reach the root the answer is \diego{the current} $i$ \diego{value}.
\end{itemize}

The running time of these operations is $O(\log \sigma)$, \diego{since} \ddiego{given that} we use a bitmap supporting constant-time rank, select and access.

\begin{figure}
\centering
\includegraphics{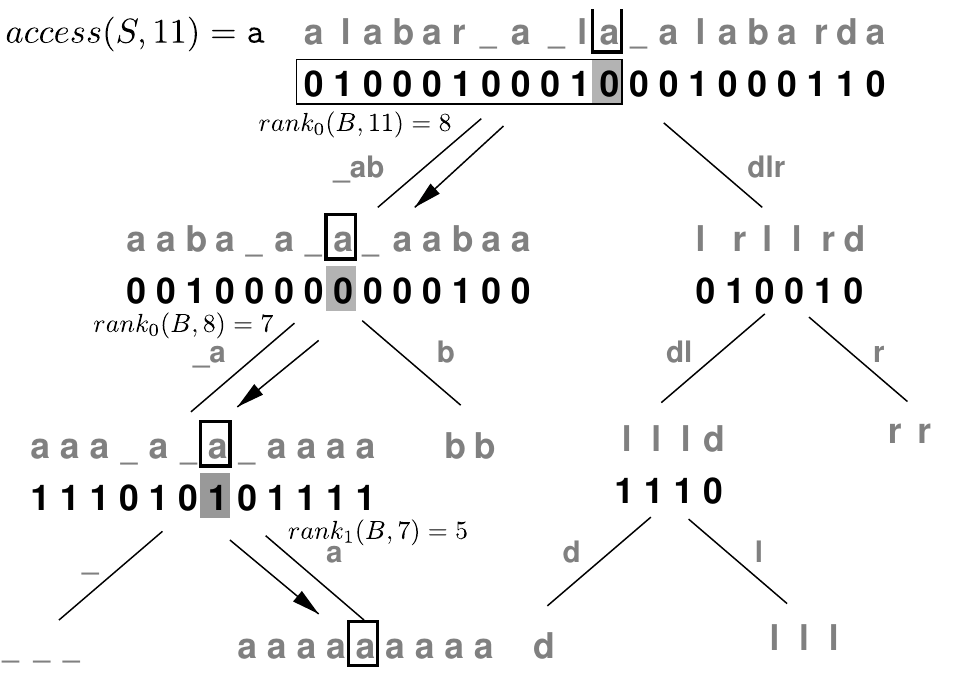}
\caption{Example of access in a wavelet tree}
\label{fig:wt_access}
\end{figure}
\begin{example}
\nieves{Figure \ref{fig:wt_access} shows an example of how we retrieve the 11th symbol of sequence $S$ ($access(S,11)=\texttt{a}$). First we access the bitmap of the root and see that at position 11 there is a 0. Hence we descend to the left. Then using $rank_0(B,11)=8$ we count how many zeroes are up to position 11. This value is our new position in the next level. Then we continue the process until we reach a leaf; in that case the symbol stored in that lead is the symbol sought, in our case an \texttt{`a'}.}
\end{example}

\begin{figure}
\centering
\includegraphics{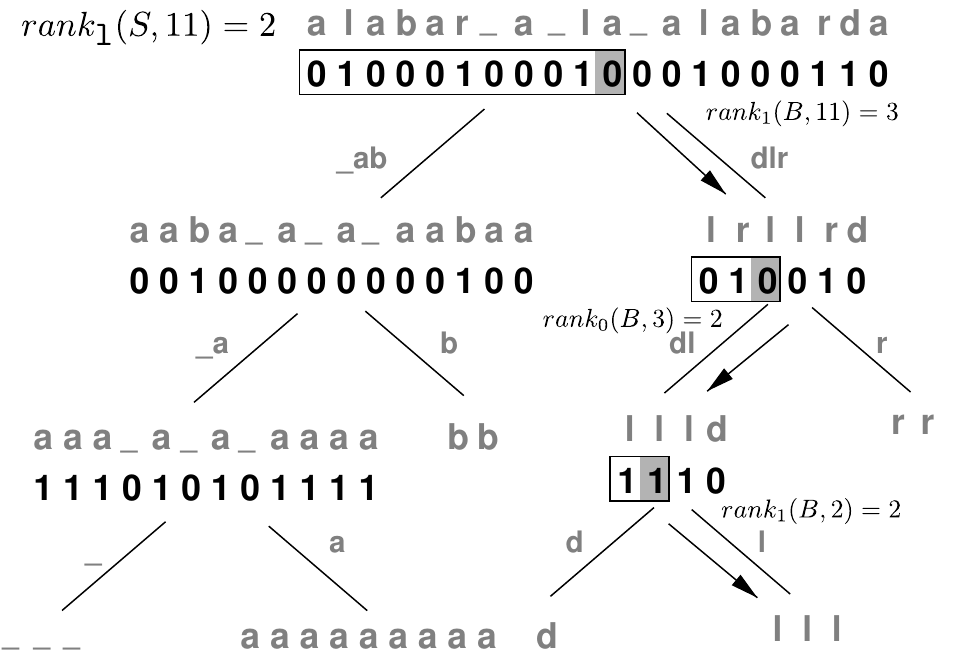}
\caption{Example of rank in a wavelet tree}
\label{fig:wt_rank}
\end{figure}
\begin{example}
\nieves{Figure \ref{fig:wt_rank} shows step by step how we compute $rank_\texttt{l}(S,11)=2$. Since symbol \texttt{`l'} is mapped to a 1 we descend from the root to the right child. Using $rank_1(B,11)=3$ we count the number of ones up to that position. This is our new position in the next level. Then we continue the process until we reach a leaf. The value sought is the last value of rank, in our case 2.}
\end{example}

\begin{figure}
\centering
\includegraphics{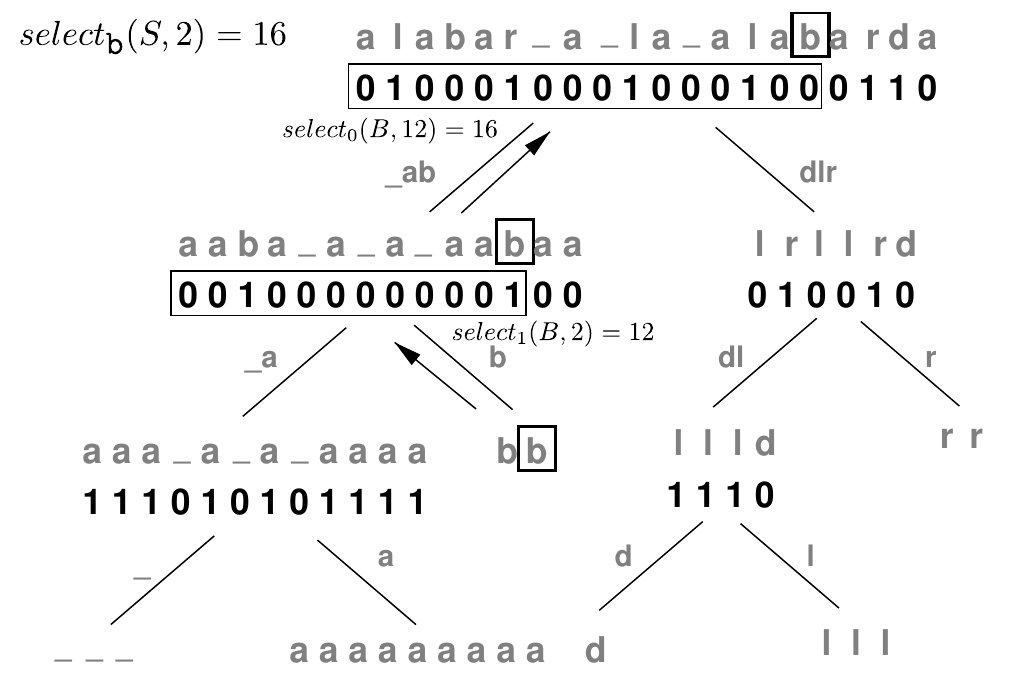}
\caption{Example of select in a wavelet tree}
\label{fig:wt_select}
\end{figure}
\begin{example}
\nieves{Figure \ref{fig:wt_select} shows an example of how to select the second \texttt{`b'} in the sequence $S$ ($select_\texttt{b}(S,2)$). First we descend to the leaf representing symbol \texttt{`b'}. Since that symbol was last mapped to a 1, we go to the parent and compute our new position as $select_1(B,2)=12$. In that level, \texttt{`b'} was mapped to a 0, so we go to the parent and the new position is $select_0(B,12)=16$, and that is the value sought.}
\end{example}

\subsection{Range Search}
\label{sec:range}
A direct application of wavelet trees is to answer range search queries \cite{MN07}. This method is very similar to the idea of Chazelle \cite{Cha88}.
\begin{definition}
Given a subset $R$ ($|R|=t$) of the discrete range \djeremy{$[1,n]\times[1,n]$} \jeremy{$[1,n]\times[1,\sigma]$}, a \emph{range query} returns the points $p \in R$ belonging to a range $[x_1,x_2]\times[y_1,y_2]$.
\end{definition}
\jeremy{An extension of the} \djeremy{The} wavelet tree supports range queries using $n+t\log n + o(n+t\log n)$ bits, counting the number of points \diego{within the range} in time $O(\log n)$ and reporting each occurrence in time $O(\log n)$. We will use a modified version of the implementation of Gonzalo Navarro\jeremy{\footnote{Check the LZ77-index source code (\indexurl) for the updated version}}.

We \djeremy{will} explain \jeremy{here} a simplified version \jeremy{for the case} in which there exists exactly one point for each value of $x$. We order the points of $R$ by their $x$ coordinate and create the sequence $S[1,n]$, such that for each $(x,y) \in R$, $S[x]=y$. Then we build the wavelet tree of $S$. 

\begin{figure}
\centering
\includegraphics[width=10cm]{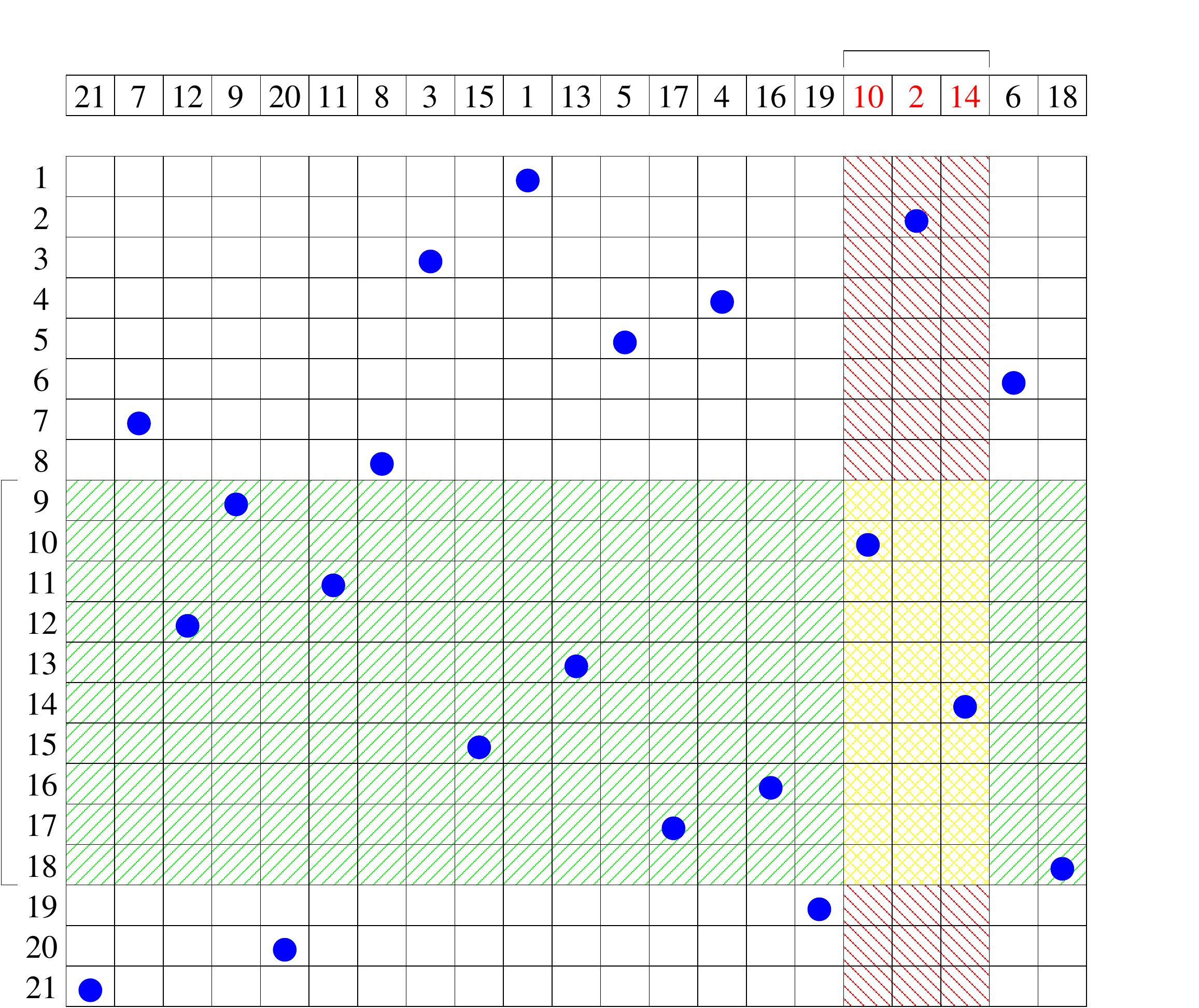}
\caption{Example of 2-dimensional range query}
\label{fig:range_example}
\end{figure}

\begin{example}
Figure \ref{fig:range_example} shows a grid, with exactly one $y$ value for each value of $x$. The figure shows in yellow the range $[17,19]\times[9,18]$, containing two occurrences; in red and yellow, the range $[17,19]\times[1,21]$, containing 3 occurrences; and in green and yellow, the range $[1,21]\times[9,18]$, containing 10  occurrences.
\end{example}

\subsubsection{Projecting}
A range in $S$ represents a range along the $x$ coordinate and the splits made by the wavelet tree define ranges along the $y$ coordinate. Every time we descend to a child of a node we need to know where \ddiego{is} the range represented in that child \diego{is}. The operation of determining the range defined by a child, given the range of the parent, is called \emph{projecting}. Using rank we project a range downwards. Given a node with bitmap $B$ the left projection of $[x,x']$ is $[1+rank_0(B,x-1), rank_0(B,x')]$ and the right projection is $[1+rank_1(B,x-1), rank_1(B,x')]$. A range $[y,y']$ along the $y$ coordinate is projected to the left as $[y,\lfloor(y+y')/2\rfloor]$ and to the right as $[\lfloor(y+y')/2\rfloor +1,y']$.

\subsubsection{Counting}
We start from the root with the \diego{one-dimensional} \ddiego{linear} ranges $[x,x']=[x_1,x_2]$ and $[y,y']=[1,n]$ and project them in both subtrees. We do this recursively until:
\begin{enumerate}
    \item $[x,x']=\emptyset$;
    \item $[y,y'] \cap [y_1,y_2] = \emptyset$; or
    \item $[y,y'] \subseteq [y_1,y_2]$, in which case we add $x'-x+1$ to the total.
\end{enumerate}
As the interval $[y_1,y_2]$ is covered by $O(\log n)$ maximal wavelet tree nodes, the total time to count the occurrences is $O(\log n)$.

\begin{figure}
\centering
\includegraphics[width=10cm]{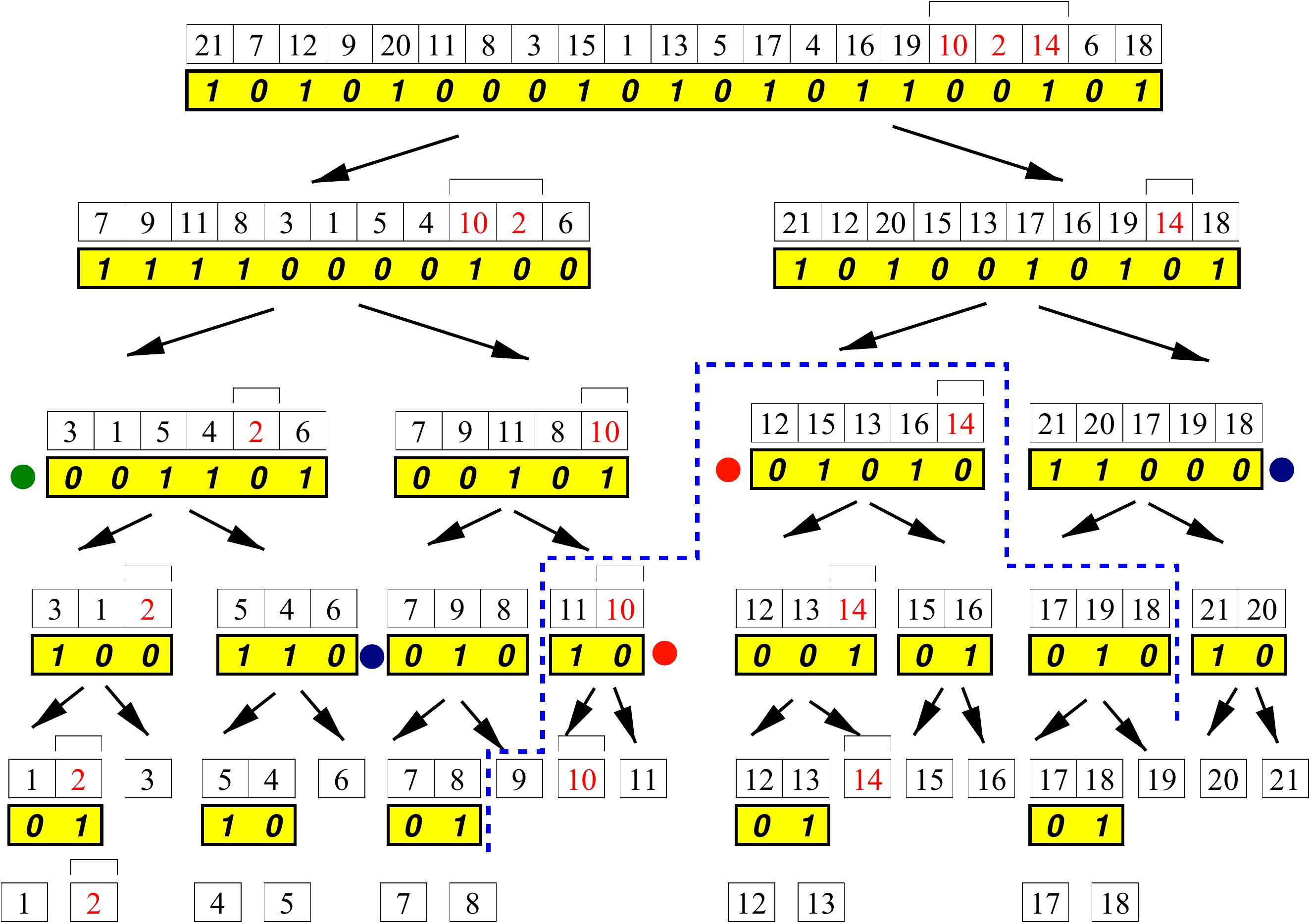}
\caption{Example of counting the occurrences in a 2-dimensional range query using a wavelet tree}
\label{fig:counting}
\end{figure}
\begin{example}
Figure \ref{fig:counting} shows the wavelet tree that represents the range of Figure \ref{fig:range_example}. The figure represents how to count the occurrences in the range $[17,19]\times[9,18]$. The figure shows in red how the range $[17,19]$ in the $x$ coordinate is projected downwards. The nodes below the blue line are those whose $y$ range is contained in the range $[9,18]$. Additionally, the nodes with a circle next to them are those in which the counting process ends. The blue circle represents rule number 1 (see above), the green one represents rule number 2 and finally rule number 3 is represented by the red circle. In our case, at each node marked with red, we report one occurrence, yielding a total of 2 occurrences. 
\end{example}
\subsubsection{Locating}
To locate the actual points we start from each node in which we were counting. If we want to know the $x$ coordinate we go up using select and if we want to know the $y$ coordinate we go down using rank. This operation takes $O(\log n)$ for each point \diego{located}.
\section{Permutations}
\label{sec:permutation}
A permutation is a bijection $\pi:[1,n]\rightarrow[1,n]$, and we are interested in computing efficiently both $\pi(i)$ and $\pi^{-1}(i)$ \jeremy{for any $1 \le i\le n$}. 
The permutation can be represented in a plain array using $n\log n$ bits, by storing $P=[\pi(1),\ldots,\pi(n)]$. This answers \djeremy{$\pi$}\jeremy{$\pi(i)$} in constant time. Solving \djeremy{$\pi^{-1}$}\jeremy{$\pi^{-1}(i)$} can be done by sequentially scanning $P$ for the position $j$ where $\pi(j)=i$. A more efficient solution \cite{MRRR03} is based on the cycles of a permutation. A cycle is a sequence $i,\pi(i),\pi^2(i), \ldots,\pi^k(i)$ such that $\pi^{k+1}(i)=i$. Every $i$ belongs to exactly one cycle. Then, to compute $\pi^{-1}$ we repeatedly apply $\pi$ over $i$, finding the element $e$ of the cycle such that $\pi(e)=i$. 
These solutions do not require any extra space to compute \djeremy{$\pi^{-1}$}\jeremy{$\pi^{-1}(i)$}, but they take $O(n)$ time in the worst case\dmine{ for computing $\pi^{-1}$}.

Representing the sequence $\pi[1,n]$ with a wavelet tree one \djeremy{could} \jeremy{can} answer both queries using $O(\log n)$ time and $n\log n + o(n \log n)$ bits of space. A faster solution \cite{MRRR03} is based on the cycles of the permutation. By introducing shortcuts in the cycles, it uses $(1 + \varepsilon)n \log n + O(n)$ bits and solves \djeremy{$\pi$}\jeremy{$\pi(i)$} in constant time and \djeremy{$\pi^{-1}$}\jeremy{$\pi^{-1}(i)$} in $O(1/\varepsilon)$ time, for any $\varepsilon>0$. 

We will use the implementation of \jeremy{Munro \etal's} \djeremy{permutations} \jeremy{shortcut technique} by Diego Arroyuelo\jeremy{\footnote{Yahoo! Research, Chile. \texttt{darroyue@dcc.uchile.cl}}}, available at \libcdsurl.
\section{Tree Representations}
\label{sec:dfuds}
A classical representation of a general tree of $n$ nodes requires $O(nw)$ bits of space, where $w \ge \log n$ is the bit length of a machine pointer. Typically only operations such as moving to the first child and to the next sibling, or to the $i$-th child, are supported in constant time. By further increasing the constant, some other simple operations are easily supported, such as moving to the parent, knowing the subtree size, or the depth of the node. However, the $\Omega(n \log n)$ bit space complexity is excessive in terms of information theory. The number of different general trees of $n$ nodes is $C_n \approx 4^n /n^{3/2}$, hence $\log C_n = 2n - \Theta(\log n)$ bits are sufficient to distinguish any one of them.

There are several succinct tree representations that use $2n+o(n)$ bits of space and answer most queries in constant time (see \jeremy{the review by Arroyuelo \etal} \cite{ACNSalenex10} for a detailed exposition); here we explain the DFUDS \cite{labeled_child} representation as this is the one that meets our requirements.

\begin{definition}
A sequence $S$ drawn from alphabet $\Sigma=\lbrace\texttt{0},\texttt{1}\rbrace$ is said to be \emph{balanced} if: (1) there are as many \texttt{0}s as \texttt{1}s and (2) at any position $i$ the number of zeroes to the left is greater or equal than the number of ones (i.e., $rank_\texttt{0}(S,i)\ge rank_\texttt{1}(S,i)$). Usually a balanced sequence is referred as \emph{balanced parentheses} by identifying \texttt{0} as `(' and \texttt{1} as `)', as the nesting of parentheses satisfies the above definition.
\end{definition}

The operations defined over a balanced sequence are: (1) \emph{findclose(S,i) (findopen(S,i))} finds the matching \texttt{1} (\texttt{0}) of the \texttt{0} (\texttt{1}) at position $i$, and (2) \emph{enclose(S,i)} is the position of tightest \texttt{0} enclosing node $i$.

\begin{definition}[\cite{labeled_child}]
The Depth-first unary degree sequence (DFUDS) is generated by a depth-first traversal of the tree, at each node appending the degree of the node in unary. Additionally a leading \texttt{1} is prepended to the sequence to make it balanced \jeremy{and allow the concatenation of several such encodings into a forest}. 
\end{definition}

The DFUDS sequence represents the topology of the tree using $2n$ bits. \jeremy{Tree nodes are identified in the DFUDS sequence according to their rank in the order given by the depth-first traversal (more precisely, the $i$-th node is identified by position $select_\texttt{1}(i)$ in the DFUDS encoding).}  \nieves{Figure \ref{fig:dfuds} shows the DFUDS bit-sequence for the example tree. The red 1 in the sequence is the preceding 1 added to make the sequence balanced. The green node is represented by the 10th 1 in the sequence, as it is the 10th node visited during a depth-first traversal. The blue sequence of five 1s and one 0 is the degree of the blue node.}

\begin{figure}[ht]
\begin{center}
\includegraphics{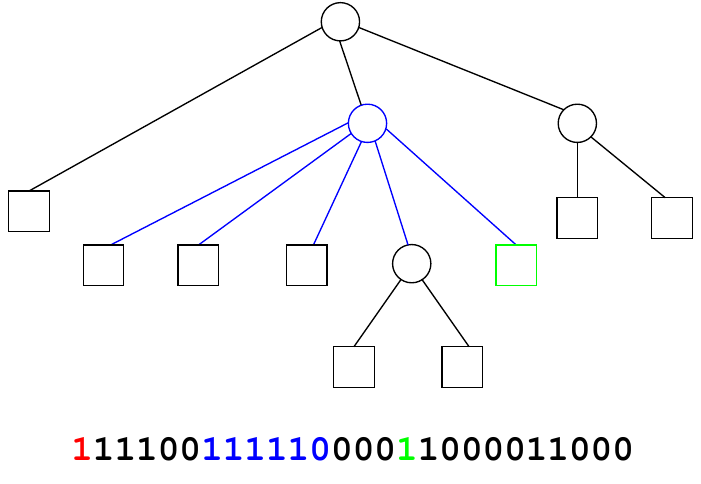}
\end{center}
\caption[Example of DFUDS representation]{Example of DFUDS representation}
\label{fig:dfuds}
\end{figure}

To solve the common operations over trees two data structures are built over the DFUDS sequence: a bitmap data structure supporting rank and select (Section \ref{sec:bitmaps}) and a data structure solving operations \emph{findclose}, \emph{findopen} and \emph{enclose} \cite{bp_jacobson,bp_munroraman,Nav09}. These structures allow one to compute the most common operations in constant time using $o(n)$ additional bits of space. Additionally, if we use labeled trees we need to store the labels of the edges in an array $chars$, using $n \log \sigma$ additional bits, where $\sigma$ is the labels' alphabet size. The label of the edge pointing to the $i$-th child of node $x$ is at \djeremy{$chars[rank_{\texttt{(}}(dfuds,x)+i]$} \jeremy{$chars[rank_{\texttt{1}}(dfuds,x)+i]$}. The operations we are interested in for this thesis are:
\begin{itemize}
\item \emph{degree($x$)}: number of children of node $x$.
\item \emph{isLeaf($x$)}: whether node $x$ is a leaf.
\item \emph{child($x$,$i$)}: $i$-th child of node $x$.
\item \emph{labeledChild($x$,$c$)}: child of node $x$ labeled by symbol $c$.
\item \emph{leftmostLeaf($x$)}: leftmost leaf of the subtree starting at node $x$.
\item \emph{rightmostLeaf($x$)}: rightmost leaf of the subtree starting at node $x$.
\item \emph{leafRank($x$)}: number of leaves to the left of node $x$. 
\item \emph{preorder($x$)}: preorder position of node $x$.
\end{itemize}
All these operations can be solved theoretically in constant time; however, in practice \emph{labeledChild} is solved by binary searching the \diego{labels of the} children\jeremy{, because it is much easier to implement and fast enough in practice}. To solve \emph{leftmostLeaf}, \emph{rightmostLeaf} and \emph{leafRank} we need to solve the queries $rank_{\texttt{00}}(i)$ and $select_{\texttt{00}}(i)$. $Rank_{\texttt{00}}(i)$ returns the number of occurrences of the substring \texttt{00} in the bitmap up to position $i$ and $select_{\texttt{00}}(i)$ returns the position $p$ of the $i$-th occurrence of the substring \texttt{00} in the bitmap. Solving these queries requires an additional data structure that uses $o(n)$ bits. It uses the same ideas as the one for solving rank and select for binary alphabets.

We will use a modified version of the implementation of Diego Arroyuelo available at \libcdsurl, adding support for leaf-related operations.
\section{Tries}
\label{sec:tries}
A \emph{trie} or digital tree is a data structure that stores a set of strings. It can find the elements of the set prefixed by a pattern in time proportional to the pattern length.
\begin{definition}
A \emph{trie} for a set $S$ of distinct strings is a tree where each node represents a distinct prefix in the set. The root node represents the empty
prefix $\varepsilon$. A node $v$ representing prefix $Y$ is a child of node $u$ representing prefix $X$ iff $Y = Xc$ for some character $c$, which labels the edge between $u$ and $v$.
\end{definition}
We suppose that all strings are ended by a special symbol \texttt{\$}, not present in the alphabet. We do this in order to ensure that no string $S_i$ is a prefix of some other string $S_j$. This property guarantees that the tree has exactly $|S|$ leaves. Figure \ref{fig:trie} shows an example of a trie.

\begin{figure}[ht]
\begin{center}
\includegraphics{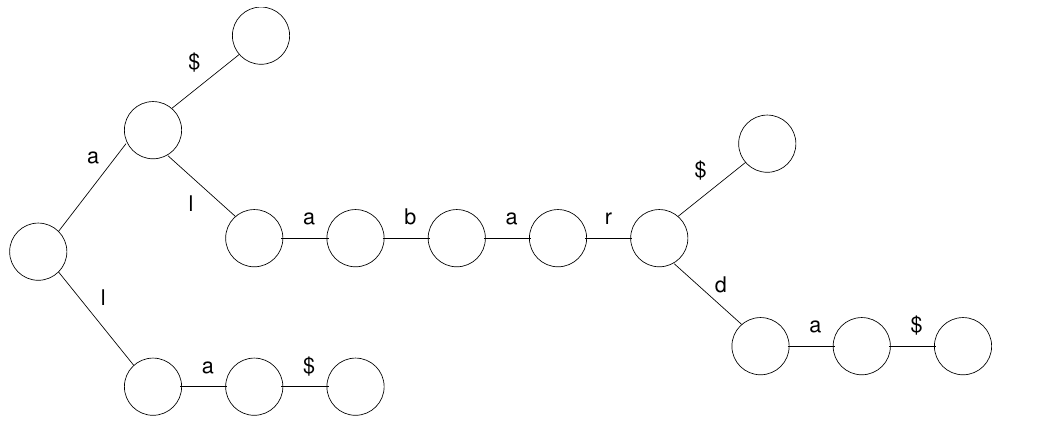}
\end{center}
\caption[Example of a trie]{Example of a trie for the set $S=\lbrace\texttt{`alabar'}, \texttt{`a'}, \texttt{`la'}, \texttt{`alabarda'}\rbrace$.}
\label{fig:trie}
\end{figure}

A trie for the set $S=\lbrace S_1,\ldots,S_n\rbrace$ is easily built on $O(|S_1|+\ldots+|S_n|)$ time by successive insertions \diego{(assuming we can descend to any child in constant time)}. A pattern $P$ is searched for in the trie starting from the root and following the edges labeled with the characters of $P$. This takes a total time of $O(|P|)$.

A \emph{compact trie} is an alternative representation that reduces the space of the trie by collapsing unary nodes \ddiego{to} \diego{into} a single node and labeling the edge with the concatenation of all labels. A \emph{PATRICIA tree} \cite{patricia}, an alternative that uses even less space, just stores the first character of the label string and its length. This variant is used when the strings $S$ are \ddiego{separately available} \diego{available separately}, as not all information is stored in the edges. In this variant, after the search we need to check if the prefix found actually matches the pattern. For doing so, we have to extract the text corresponding to any string with the prefix found and compare it with the pattern. It they are equal then all leaves will be occurrences (i.e., strings prefixed with the pattern), otherwise none will be an occurrence. Figure \ref{fig:trie2} shows an example of this kind of trie.  

\begin{figure}[ht]
\begin{center}
\includegraphics{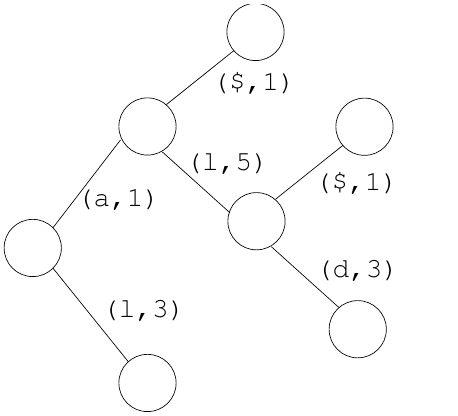}
\end{center}
\caption[Example of a PATRICIA trie]{Example of a PATRICIA trie for the set $S=\lbrace\texttt{`alabar'}, \texttt{`a'}, \texttt{`la'}, \texttt{`alabarda'}\rbrace$. The values in parentheses are respectively the first character of the label and the length of the label.}
\label{fig:trie2}
\end{figure}

\begin{definition}
A \emph{suffix trie} is a trie composed of all the suffixes $T[i,n]$ of a given text $T[1,n]$. The leaves of the trie store the positions where the suffixes start.
\end{definition}

\section{Suffix Trees}
\label{sec:stree}
\begin{definition}[\cite{Wei73,McC76}]
A \emph{suffix tree} is a PATRICIA tree built over all the suffixes $T[i,n]$ of a given text $T[1,n]$. The leaves in the tree indicate the text positions where the corresponding suffixes start.
\end{definition}

Figure \ref{fig:sa} shows the suffix tree for the text \texttt{`alabar\_a\_la\_alabarda\$'}.
\begin{figure}[ht]
\begin{center}
\includegraphics{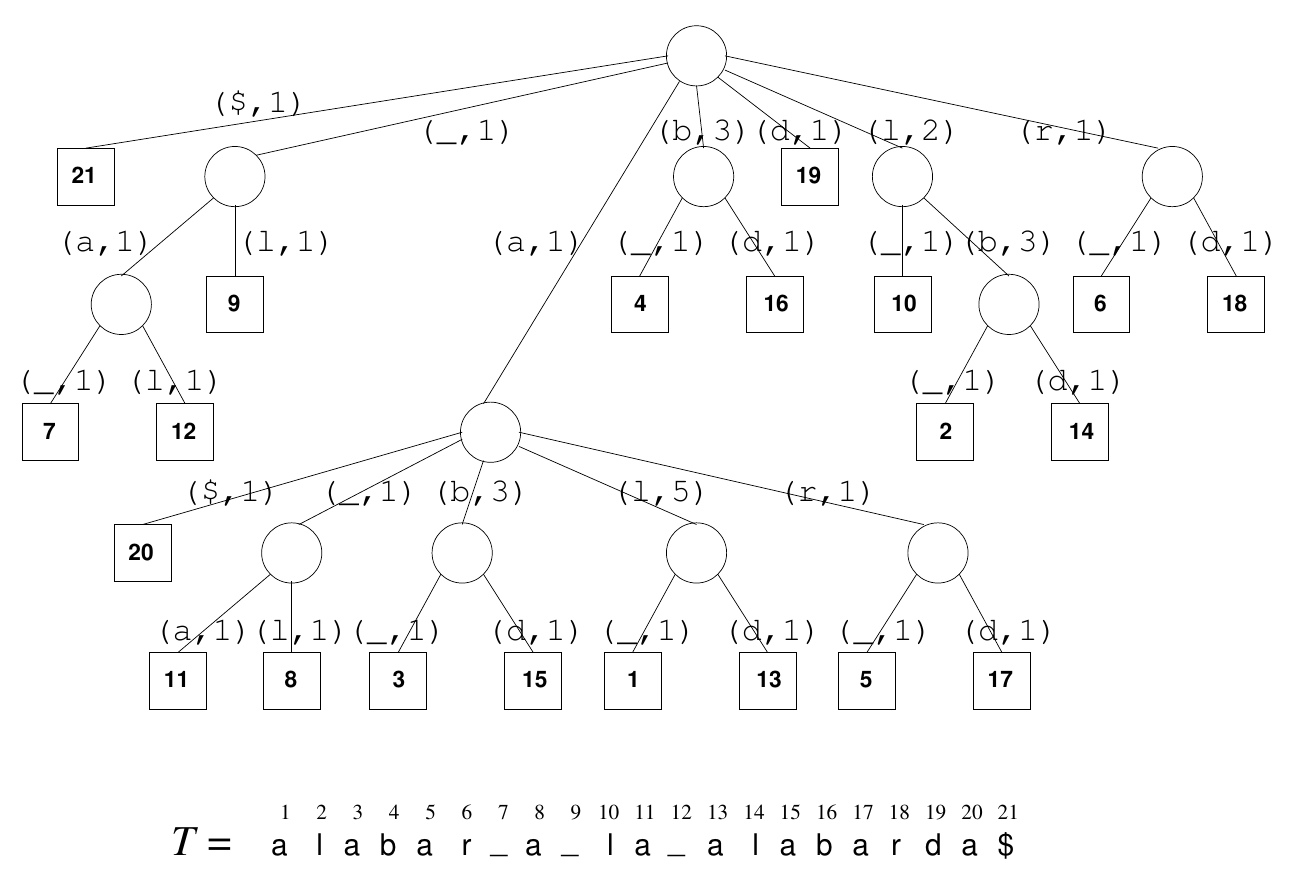}
\end{center}
\caption{\texorpdfstring{The suffix tree for the text \texttt{`alabar\_a\_la\_alabarda\$'}}{The suffix tree for the text 'alabar\_a\_la\_alabarda\$'}}
\label{fig:sa}
\end{figure}

The suffix tree can be built in $O(n)$ time using $O(n\log n)$ bits of space \cite{McC76,Ukk95}.

A suffix tree is able to find all the $occ$ occurrences of a pattern $P$ of length $m$ in time $O(m+occ)$, i.e., to solve the \emph{locate} query described in Section \ref{sec:queries}. After descending by the tree according to the characters of the pattern, we could be in three different cases: i) we reach a point in which there is no edge labeled with the current character of $P$, which means that the pattern does not occur in $T$; ii) we finish reading $P$ in an internal node (or in the middle of an edge), in which case the suffixes of the corresponding subtree are either all occurrences or none, therefore we only need to check if one of those suffixes matches the pattern $P$; iii) we end up in a leaf without consuming all the pattern, in which case at most one occurrence is found after checking the suffix with the pattern. As a subtree with $occ$ leaves has $O(occ)$ nodes\diego{,} the total time for reporting the occurrences is as stated above.

The suffix tree can solve the queries \emph{count} and \emph{exists} in $O(m)$ time. The process is similar to that of \emph{locate}. First we descend the tree according to the pattern. Then, we check if one of the suffixes of the subtree is a match. If it is a match the answer of \emph{count} is the number of leaves of the subtree \diego{(for which we need to store in each internal node the number of leaves that descend from it)}, otherwise it is zero.

\section{Suffix Arrays}
\label{sec:sarray}
\begin{definition}[\cite{MM93,sa_gonet}]
A \emph{suffix array} $A[1,n]$ is a permutation of the integer interval $[1,n]$, holding $T[A[i],n]<T[A[i+1],n]$ for all $1\le i < n$. In other words\diego{,} it is a permutation of the suffixes of the text such that the suffixes are lexicographically sorted.
\end{definition}

Figure \ref{fig:sarray} shows the suffix array for the text \texttt{`alabar\_a\_la\_alabarda\$'}. \jeremy{The character \texttt{\$} is the smallest one in lexicographical order.} The zone highlighted in gray represents those suffixes starting with \ddiego{an} \texttt{`a'}.
\begin{figure}[ht]
\begin{center}
\includegraphics{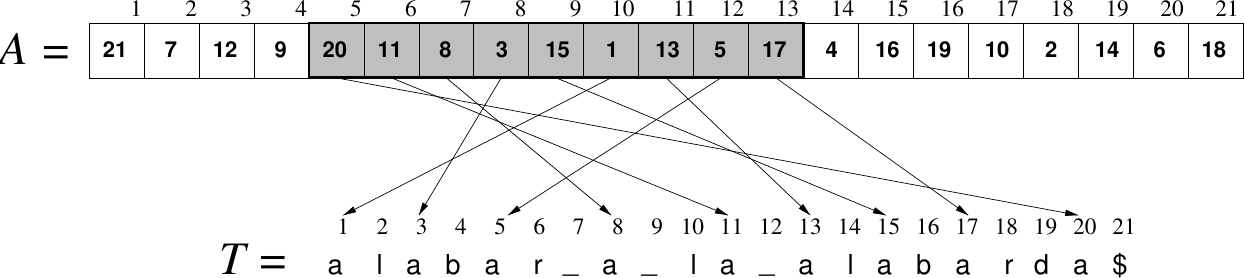}
\end{center}
\caption{\texorpdfstring{The suffix array for the text \texttt{`alabar\_a\_la\_alabarda\$'}}{The suffix array for the text 'alabar\_a\_la\_alabarda\$'}}
\label{fig:sarray}
\end{figure}

Note that the suffix array could be computed by collecting the values at the leaves of the suffix tree. 
\ddiego{Several} \diego{However, several} methods exist that compute the suffix array in $O(n)$ or $O(n\log n)$ time\diego{,} using significantly less space. For a complete survey see \cite{sa_construction}.

The suffix array can solve \emph{locate} queries in $O(m\log n + occ)$ time, and \emph{count} and \emph{exists} queries in $O(m\log n)$ time. First, we search for the interval $A[sp_1, ep_1]$ of the suffixes starting with $P[1]$. This can be done via two binary searches on $A$. The first binary search determines the starting position $sp$ for the suffixes lexicographically larger than or equal to $P[1]$, and the second determines the ending position $ep$ for suffixes that start with $P[1]$. Then, we consider $P[2]$, narrowing the interval to $A[sp_2, ep_2]$, holding all suffixes starting with $P[1,2]$. This process continues until $P$ is fully consumed or the \ddiego{interval is empty} \diego{current interval becomes empty}. Note that this algorithm searches for the pattern from left to right. For each character of the pattern, we do two binary searches taking at most time $O(\log n)$, hence the total time is $O(m\log n)$. Then \emph{locate} reports all occurrences in $O(occ)$ time and the answer to \emph{count} is $ep_m-sp_m+1$. We can also directly search for the interval $A[sp,ep]$ where the suffixes start with the pattern $P$ using just two binary searches on $A$, which find the first  and last position where the suffixes start with $P$. Each comparison between the pattern and a suffix will take at most $O(m)$ time, hence the total running time is also $O(m\log n)$. Yet, this is faster in practice than the previous method and is what we use in this thesis.

\section{Backward Search}
\label{sec:bwt}
Backward search is an alternative method for finding the interval $[sp,ep]$ corresponding to a pattern $P$ in the suffix array. It searches for the pattern from right to left, and is based on the Burrows-Wheeler transform.

\begin{definition}[\cite{BW94}]
Given a text $T$ terminated with the special character $T[n]=\texttt{\$}$ smaller than all others, and its suffix array $A[1, n]$, the \emph{Burrows-Wheeler transform (BWT)} of $T$ is defined as $T^{bwt}[i] = T[A[i]-1]$, except when $A[i] = 1$, where $T^{bwt}[i]=T[n]$. In other words, the transformation is \jeremy{conceptually} built first by generating all the cyclic shifts of the text, then sorting them lexicographically
, and finally taking the last character of each shift. \jeremy{In practice it can be built in linear time by building the suffix array first.}
\end{definition}
We can think of the sorted list of cyclic shift as a conceptual matrix $M[1,n][1,n]$. Figure \ref{fig:bwt} shows an example of how the BWT is computed for the text \linebreak
\texttt{`alabar\_a\_la\_alabarda\$'}. This transformation has the advantage of being easily compressed by local compressors \cite{Man2001}. It can be reversed as follows.
 
\begin{figure}[ht]
\begin{center}
\includegraphics[scale=1.2]{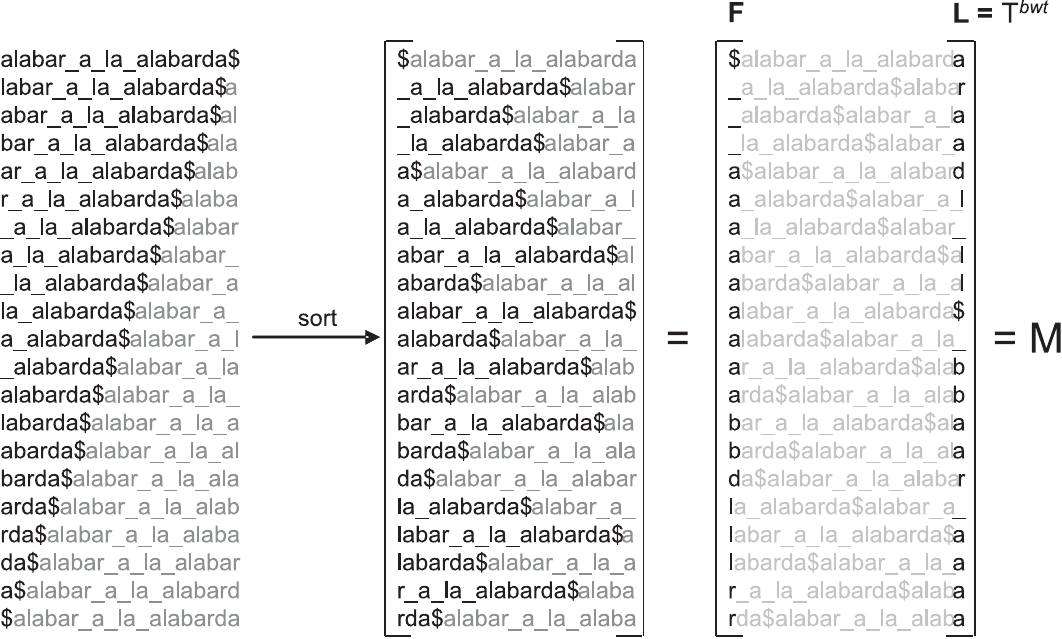}
\end{center}
\caption{\texorpdfstring{The BWT of the text \texttt{`alabar\_a\_la\_alabarda\$'}}{The BWT of the text 'alabar\_a\_la\_alabarda\$'}}
\label{fig:bwt}
\end{figure}
\begin{definition}
The LF-mapping $LF(i)$ maps a position $i$ in the last column of $M$ ($L=T^{bwt}$) to its occurrence in the first column of $M$ ($F$).
\end{definition}

\begin{lemma}[\cite{FM05}]
\label{lemma:lfmap}
It holds
\begin{equation*}
LF(i) = C[c] + rank_{c}(T^{bwt},i)
\end{equation*}
where $c=T^{bwt}[i]$ and $C[c]$ is the number of symbols smaller than $c$ in $T$.
\end{lemma}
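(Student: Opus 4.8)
The plan is to exploit the two structural facts about the Burrows--Wheeler matrix $M$: that its first column $F$ is a sorted string, and that the relative order of occurrences of a fixed character is the same in the last column $L = T^{bwt}$ as in the first column $F$.

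First I would note that, since the rows of $M$ are exactly the cyclic shifts of $T$ listed in lexicographic order, the first column $F$ is sorted. Hence all rows of $M$ whose first character equals $c$ occupy a contiguous block of row positions, and since precisely $C[c]$ rows begin with a character strictly smaller than $c$, this block consists of rows $C[c]+1, \ldots, C[c]+n_c$, where $n_c$ is the number of occurrences of $c$ in $T$. In particular, the $k$-th row of $M$ (counting from the top) that begins with $c$ is row $C[c]+k$.

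Next I would establish the order-preservation property. Take two rows $p < q$ of $M$ with $L[p] = L[q] = c$, and write row $p$ as $X_p\,c$ and row $q$ as $X_q\,c$, where each $X$ denotes the first $n-1$ characters of the row. Because $T[n]=\texttt{\$}$ is unique, all cyclic shifts of $T$ are distinct, so $X_p\,c < X_q\,c$ forces $X_p < X_q$. Rotating each of these rows one step so that the trailing $c$ moves to the front yields $c\,X_p$ and $c\,X_q$, which are again cyclic shifts of $T$ and hence rows of $M$; and $X_p < X_q$ gives $c\,X_p < c\,X_q$, so the row $c\,X_p$ precedes the row $c\,X_q$ in $M$. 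Thus the map sending a row ending in $c$ to the row beginning with that same occurrence of $c$ is monotone, so the $k$-th row from the top ending in $c$ is sent to the $k$-th row from the top beginning with $c$.

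Finally I would combine the pieces. By definition $LF(i)$ is the position in $F$ of the occurrence of $L[i]=c$ sitting at the end of row $i$, i.e., the row index of $c\,X_i$ where row $i$ is $X_i\,c$. Row $i$ is the $rank_c(T^{bwt},i)$-th row ending in $c$, since $rank_c(T^{bwt},i)$ counts the occurrences of $c$ in $L[1\ldots i]$ including position $i$; by the monotonicity step it therefore maps to the $rank_c(T^{bwt},i)$-th row beginning with $c$, which by the first step is row $C[c] + rank_c(T^{bwt},i)$. This yields $LF(i) = C[c] + rank_c(T^{bwt},i)$. The only delicate point is the wraparound case $A[i]=1$, where the preceding character is taken to be $T[n]=\texttt{\$}$; there $c=\texttt{\$}$, $C[\texttt{\$}]=0$, and there is exactly one row starting with $\texttt{\$}$, so the formula still holds. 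I expect the main obstacle to be phrasing the order-preservation argument cleanly; there is no real mathematical difficulty beyond it.
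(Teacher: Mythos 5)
The paper states Lemma~\ref{lemma:lfmap} without proof, attributing it directly to Ferragina and Manzini, so there is no in-paper argument to compare against. Your proof is correct and is exactly the standard argument for the LF-mapping formula: you establish the two classical facts --- that the sorted first column $F$ places all rows beginning with $c$ contiguously starting at row $C[c]+1$, and that occurrences of a fixed character $c$ appear in the same relative order in $L$ and in $F$ (the stability argument via rotating $X_p c$ to $c X_p$) --- and then combine them. You also handle the only subtlety, the $A[i]=1$ wraparound where $T^{bwt}[i]=\texttt{\$}$, and your appeal to uniqueness of \texttt{\$} to guarantee $X_p \neq X_q$ is the right justification for strict monotonicity.
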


\begin{lemma}[\cite{BW94}]
The LF-mapping allows one to reverse the Burrows-Wheeler transform.
\end{lemma}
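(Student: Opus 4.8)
The plan is to upgrade Lemma~\ref{lemma:lfmap} into an explicit decoding procedure. The first observation is that the right-hand side of the $LF$ formula can be computed from $T^{bwt}$ alone: since $T^{bwt}$ is a permutation of the multiset of characters of $T$, the table $C$ (number of symbols smaller than $c$) is read off directly from $T^{bwt}$, and $rank_c(T^{bwt},\cdot)$ by definition depends only on $T^{bwt}$. Hence $LF\colon[1,n]\to[1,n]$ is a function the decoder can evaluate with no knowledge of $T$. What remains is to show that iterating $LF$ reconstructs $T$.

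The structural fact to establish is that $LF$ takes one step \emph{backwards} through the text. Write $A$ for the suffix array and recall that row $i$ of the conceptual matrix $M$ is the cyclic rotation of $T$ beginning at position $A[i]$ (because $T[n]=\$$ is unique and smallest, ordering rotations is the same as ordering suffixes, so this indexing is consistent with $T^{bwt}[i]=T[A[i]-1]$), so that $F[i]=T[A[i]]$ and $L[i]=T^{bwt}[i]=T[A[i]-1]$, reading position $0$ as $n$. I claim $A[LF(i)]=A[i]-1 \pmod n$. Set $c=L[i]$; then $c$ is exactly the first character of the rotation beginning at $A[i]-1$, so $LF(i)$ must be one of the rows whose first column is $c$, that is, a row in the block $[C[c]+1,\,C[c]+n_c]$. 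Within that block the rows are sorted by $T[A[\cdot],n]$ (equivalently by the part of the rotation after the leading $c$); and the rows having $L=c$ occur in $M$, by the same rotation-ordering, also sorted by $T[A[\cdot],n]$. The correspondence "row whose last column is $c$ and whose rotation continues $T[A[i],n]$" $\leftrightarrow$ "row whose first column is $c$ followed by $T[A[i],n]$" is order preserving, so it sends the $k$-th row with $L=c$ to the $k$-th row with $F=c$; this is precisely $LF(i)=C[c]+rank_c(T^{bwt},i)$, and it carries the rotation at $A[i]$ to the rotation at $A[i]-1$. Since Lemma~\ref{lemma:lfmap} is already available, this paragraph could instead be phrased simply as unpacking what that formula means geometrically.

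Finally I would run the reconstruction. Because $\$$ is unique and smallest, the only row of $M$ beginning with $\$$ is row~$1$, and it is the rotation starting at position $n$, so $A[1]=n$ and $T[n]=\$$. Put $p_0=1$ and $p_j=LF(p_{j-1})$; the identity above gives $A[p_j]=n-j$ for $0\le j\le n-1$, so $i\mapsto LF(i)$ is a single $n$-cycle visiting every row. Using $F[LF(i)]=L[i]$ (the first character of row $LF(i)$ equals the last character of row $i$), the list $L[p_0],L[p_1],\ldots,L[p_{n-1}]$ equals $F[p_1],F[p_2],\ldots,F[p_n]$, which together with $F[p_0]=\$$ enumerates $T[n],T[n-1],\ldots,T[1]$; reversing the list recovers $T$. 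The only step with real content is the order-preservation argument of the middle paragraph, and that is subsumed by Lemma~\ref{lemma:lfmap}; everything else is bookkeeping with cyclic indices. The one place to be careful is the wrap-around at position $n$ — equivalently, the placement of $\$$ — since that is exactly what pins down which cyclic rotation the decoded string is, and hence makes the reconstruction unambiguous.
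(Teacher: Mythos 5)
Your proof is correct and takes essentially the same approach as the paper: both start from row $1$ (which holds the rotation beginning at $\$$), iterate $LF$ to walk one text position backwards per step, and read off $T$ in reverse from $L$. You supply more detail than the paper — in particular you re-derive the structural fact $A[LF(i)]=A[i]-1\pmod n$ via the order-preservation argument, where the paper simply appeals to ``$L[i]$ always precedes $F[i]$ in $T$'' and to Lemma~\ref{lemma:lfmap} — but the underlying argument is the same.
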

\begin{proof}
We know that $T[n]=\texttt{\$}$ and since $\texttt{\$}$ is the smallest symbol, $T[n]=F[1]=\texttt{\$}$ and thus $T[n-1]=L[1]=T^{bwt}[1]$. Using the LF-mapping we compute $i=LF(1)$; knowing that $T[n-1]$ is at $F[i]$, we have $T[n-2]=L[i]$, as $L[i]$ always precedes $F[i]$ in $T$. In general, it holds $T[n-k] = T^{bwt}[LF^{k-1}(1)]$.
\end{proof}

Given the close relation between the suffix array and the BWT, it is natural to expect that a search algorithm can work on top of the BWT. Such algorithm is called \emph{backward search} (BWS), and at each stage it narrows the interval $[sp_i,ep_i]$ of the suffix array in which the suffixes start with $P[i,m]$, starting from $i=m$ and ending with $i=1$. Narrowing the interval $A[sp,ep]$ with a new character $c$ is called a $BWS(sp,ep,c)$ step and it is done very similarly to the LF-mapping (Lemma \ref{lemma:lfmap}). BWS searches a pattern from right to left, opposite to the search on suffix arrays, that searches for a pattern from left to right. 

Figure \ref{alg:bws} shows the backward search algorithm. Lines 5-7 correspond to the BWS step. 
\begin{figure}[ht]
\renewcommand{\algorithmiccomment}[1]{/*#1*/}
\algsetup{linenodelimiter=,indent=0.8em}
\textbf{BWS}$(P)$
\begin{algorithmic}[1]
\STATE $i \leftarrow len(P)$
\STATE $sp \leftarrow 1$
\STATE $ep \leftarrow n$
\WHILE{$sp \le ep$ \textbf{and} $i \ge 1$}
    \STATE $c \leftarrow P[i]$
    \STATE $sp \leftarrow C[c] + rank_{c}(T^{bwt},sp-1)+1$ 
    \STATE $ep \leftarrow C[c] + rank_{c}(T^{bwt},ep)$ 
    \STATE $i \leftarrow i-1$
\ENDWHILE
\IF{$sp>ep$}
    \STATE \textbf{return} $\emptyset$
\ENDIF
\STATE \textbf{return} $(sp,ep)$
\end{algorithmic}
\caption{Backward Search algorithm (BWS)}
\label{alg:bws}
\end{figure}

\section{Lempel-Ziv Parsings and Repetitions}
\label{sec:lzparsings}
Lempel and Ziv proposed in the seventies a new compression method \cite{LZ76,ZL77,ZL78}. The basic idea is to replace a repeated portion of the text with a pointer to some previous occurrence of that portion. To find the repetitions they keep a dictionary representing all the portions that can be copied. Many variants of these algorithms exist \cite{LZSS,LZW,LZRW} which differ in the way they parse the text or the encoding they use.

The LZ77 \cite{ZL77} parsing is a dictionary-based compression scheme in which the dictionary used is the set of substrings of the preceding text. This definition allows it to get one of the best compression ratios for repetitive texts.

\begin{definition}[\cite{ZL77}]
\label{def:lz77}
The \emph{LZ77 parsing} of text $T[1,n]$ is a sequence $Z[1,n']$ of
\emph{phrases} such that $T = Z[1] Z[2] \ldots Z[n']$, built as follows. 
Assume we
have already processed $T[1,i-1]$ producing the sequence $Z[1,p-1]$. Then, we
find the longest prefix $T[i,i'-1]$ of $T[i,n]$ which occurs in
$T[1,i-1]$,\footnote{The original definition allows the source of
$T[i,i'-1]$ to extend beyond position $i-1$, but we ignore this feature in
this thesis.} set $Z[p] = T[i,i']$ and continue with $i = i'+1$. The
occurrence in $T[1,i-1]$ of the prefix $T[i,i'-1]$ is called the \emph{source} of 
the phrase $Z[p]$.
\end{definition}

Note that each phrase is composed of the content of a source, which can be the 
empty string $\varepsilon$, plus a trailing character. Note also that all 
phrases of the parsing are different, except possibly the last one. To avoid 
that case, a special character \texttt{\$} 
is appended at the end, $T[n]=\texttt{\$}$. 

Typically a phrase is represented as a triple
$Z[p]=(start,len,c)$, where $start$ is the start position of the source, $len$ is the length of the source and $c$ is the trailing character.

\begin{example}
\label{ex:lz77}
Let $T=\texttt{`alabar\_a\_la\_alabarda\$'}$; the LZ77 parsing is as follows:
\begin{equation*}
 \phrase{a}\phrase{l}\phrase{ab}\phrase{ar}\phrase{\_}\phrase{a\_}\phrase{la\_}\phrase{alabard}\phrase{a\$}
\end{equation*}
In this example the seventh phrase copies two characters starting at position 2 and has a trailing character `\texttt{\_}'.
\end{example}

One of the greatest  advantages  of this algorithm 
is the simple and fast scheme of decompression, opposed to the construction algorithm which is more complicated. Decompression runs in linear time by copying the source content referenced by each phrase and then the trailing character. However, random text extraction is not as easy.

The LZ78 \cite{ZL78} compression scheme is also dictionary-based. Its dictionary is the set of all phrases previously produced. Because of this definition of the dictionary the construction process is much simpler than that of LZ77.

\begin{definition}[\cite{ZL78}]
The {\em LZ78 parsing} of text $T[1,n]$ is a sequence $Z[1,n']$ of \emph{phrases} such that $T = Z[1] Z[2] \ldots Z[n']$, built as follows. 
Assume we have already processed $T[1,i-1]$ producing the sequence $Z[1,p-1]$. Then, we find the longest phrase $Z[j]$\diego{, for $j\leq p-1$,} that is a prefix of $T[i,n]$, set $Z[p] = Z[j]T[i+|Z[j]|]$ and continue with $i = i+|Z[j]|+1$.
\end{definition}

Typically a phrase is represented as $Z[p]=(j,c)$, where $j$ is the phrase number of the source and $c$ is the trailing character.

\begin{example}
\label{ex:lz78}
Let $T=\texttt{`alabar\_a\_la\_alabarda\$'}$; the LZ78 parsing is as follows:
\begin{equation*}
 \phrase{a}\phrase{l}\phrase{ab}\phrase{ar}\phrase{\_}\phrase{a\_}\phrase{la}\phrase{\_a}\phrase{lab}\phrase{ard}\phrase{a\$}
\end{equation*}
In this example the ninth phrase copies two characters starting at position 2 and has a trailing character `\texttt{b}'.
\end{example}

With respect to compression, both LZ77 and LZ78 converge to the entropy of stationary ergodic sources \cite{LZ76,ZL78}. It also converges below the empirical entropy (Section \ref{sec:entropy}), as detailed next.

\begin{definition}[\cite{KM99}]
A parsing algorithm is said to be {\em coarsely optimal} if its compression
ratio $\rho(T)$ differs from the $k$-th order empirical entropy $H_k(T)$ by a
quantity depending only on the length of the text and that goes to zero as the
length increases. That is, $\forall k \, \exists f_k, \lim_{n \rightarrow
\infty} f_k(n) = 0$, such that for every text $T$,
$
\rho(T) \le H_k(T)+f_k(|T|).
$
\end{definition}

\begin{theorem}[\cite{KM99,PWZ92}]
The LZ77 and LZ78 parsings are coarsely optimal.
\end{theorem}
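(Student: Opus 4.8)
The plan is to show, for each parsing and for every fixed $k$ (and alphabet size $\sigma$), that the number of bits output is at most $nH_k(T) + o(n)$; this is exactly coarse optimality, since dividing by $n$ gives $\rho(T) \le H_k(T) + f_k(n)$ with $f_k(n) = o(1)$. I would treat LZ77 and LZ78 almost uniformly. Write $c$ for the number of phrases. For LZ78, encoding phrase $j$ as a pointer into $\{0,\dots,j-1\}$ plus a trailing symbol costs at most $\lceil\log j\rceil + \lceil\log\sigma\rceil$ bits, so the total is at most $c\log c + O(c\log\sigma) + O(c)$. For LZ77, encoding phrase $j$ as a source position in $[1,n]$, a source length written in Elias $\gamma$, and a trailing symbol costs $\log n + O(\log\ell_j) + \log\sigma$ bits, where $\ell_j$ is the phrase length; since $\sum_j\ell_j = n$, concavity of $\log$ gives $\sum_j\log\ell_j \le c\log(n/c)$, and writing $\log n = \log c + \log(n/c)$ turns the total into $c\log c + O(c\log(n/c)) + O(c\log\sigma)$. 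So in both cases the output is $c\log c$ plus terms that I will argue are $o(n)$, and the two parsings reduce to the \emph{same} core estimate because all their phrases are distinct (as noted after Definition~\ref{def:lz77}).

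The first ingredient is the classical combinatorial bound: if $T[1,n]$ is cut into $c$ distinct pieces then $c \le n/((1-\varepsilon_n)\log_\sigma n)$ with $\varepsilon_n\to 0$. I would prove it by the standard extremal argument — among collections of $c$ distinct strings the total length is minimized by using all strings of length $1$, then all of length $2$, and so on — which forces $c\log_\sigma c \le n(1+o(1))$. The consequences are that $c = o(n)$, hence $O(c\log\sigma) = o(n)$, and, since $x\mapsto x\log(1/x)\to 0$ as $x\to 0$, also $O(c\log(n/c)) = o(n)$. Thus it remains to prove the core inequality $c\log c \le nH_k(T) + o(n)$.

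For the core inequality — essentially Ziv's lemma — I would rewrite the entropy as a log-likelihood. Unfolding the definition of $H_k$ gives $nH_k(T) = \sum_{i>k}\log\!\big(1/\hat p(T[i]\mid T[i-k,i-1])\big)$ exactly, where $\hat p$ denotes empirical conditional frequencies. Regrouping the positions by the phrase they fall in, and writing $\tilde Q(w\mid s)$ for the product of empirical conditionals produced by emitting the string $w$ starting from the length-$k$ left context $s$, this becomes $\sum_j\log(1/\tilde Q(\mathrm{phrase}_j\mid s_j))$ up to an $o(n)$ boundary term for the first few phrases. The two structural facts I would use are: (i) for fixed $s$, $\sum_{|w|=\ell}\tilde Q(w\mid s)\le 1$ for every length $\ell$; and (ii) within a fixed context $s$ the phrases are pairwise distinct. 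Partitioning the phrases with context $s$ by their length, applying the elementary bound $\sum_{i=1}^m\log(1/q_i)\ge m\log m$ whenever $q_1,\dots,q_m\ge 0$ sum to at most $1$, and then recombining over lengths and over the $\sigma^k$ contexts via the log-sum (convexity) inequality, yields $nH_k(T) \ge c\log c - c\cdot k\log\sigma - (\text{length-dispersion correction})$, where $c\cdot k\log\sigma = o(n)$ since $c = o(n)$ and $k$ is fixed.

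The main obstacle is bounding that last correction, which has the form $\sum_s c_s\log L_s$ with $c_s$ the number of phrases in context $s$ and $L_s$ the number of distinct lengths among them; the naive estimate $L_s = O(\sqrt n)$ only gives $O(c\log n) = O(n\log\sigma)$, which is not $o(n)$. The resolution is a trade-off argument: a context class whose phrases spread over many distinct lengths must contain correspondingly long phrases, and since all phrase lengths together sum to $n$, this caps both how many such phrases and how many such classes can occur; carrying this through bounds the correction by $O(c\log(n/c))$, which is $o(n)$ because $c = o(n)$. Assembling the pieces gives output bits $\le c\log c + o(n) \le nH_k(T) + o(n)$ for both parsings, i.e., $\rho(T)\le H_k(T)+f_k(n)$ with $f_k(n)\to 0$. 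A fully rigorous treatment of the correction term and of the boundary bookkeeping is given in \cite{KM99,PWZ92}.
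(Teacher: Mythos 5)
The paper does not prove this theorem: it states it with a citation to \cite{KM99,PWZ92}, and later imports the key entropy inequality as Lemma~\ref{lem:entropy}, also attributed to \cite{KM99}. The paper's own version of this style of argument is its proof that LZ-End is coarsely optimal, and that proof combines exactly the three ingredients you identify: distinctness of the phrases (Lemma~\ref{lem:unique}, which for LZ77/LZ78 is a remark following Definition~\ref{def:lz77}), the combinatorial bound $n' = O(n/\log_\sigma n)$ on any parsing into distinct phrases (Lemma~\ref{lem:folk}, from \cite{LZ76}), and the entropy inequality $n'\log n' \le nH_k(T) + n'\log\frac{n}{n'} + \Theta(n'(1+k\log\sigma))$ (Lemma~\ref{lem:entropy}). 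Your outer skeleton — bound the output bit count by $c\log c$ plus terms that are $o(n)$ via the combinatorial bound, then reduce coarse optimality to the core inequality $c\log c \le nH_k(T) + o(n)$ — is the same route. Your encoding accounting for LZ77 and LZ78 is also fine, and the observation that $c\log(n/c)=o(n)$ follows from $c=O(n/\log_\sigma n)$ is correct.

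Where you diverge is that you attempt to prove the core inequality (Ziv's lemma) from scratch, whereas the paper uses it as a black box. Your sketch has the right skeleton: the exact identity $nH_k(T)=\sum_{i>k}\log(1/\hat p(T[i]\mid T[i-k,i-1]))$, telescoping per-symbol conditionals into per-phrase terms $\tilde Q(\cdot\mid s)$, the normalization $\sum_{|w|=\ell}\tilde Q(w\mid s)\le 1$, distinctness within each (context, length) class, and the elementary bound $\sum_i\log(1/q_i)\ge m\log m$. But the step you flag yourself — bounding the length-dispersion correction $\sum_s c_s\log L_s$ by $O\bigl(c\log\frac{n}{c}\bigr)$ rather than the useless $O(c\log n)$ — is the genuinely delicate part of Ziv's lemma, and your ``trade-off argument'' is a gesture, not a derivation. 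Concavity of $\log$ reduces it to bounding $\sum_s c_s L_s$, which is not immediately controlled by $\sum_s\sum_\ell\ell\, m_{s,\ell}\le n$; the correct bookkeeping (as in \cite{KM99}) needs more care. Since you explicitly hand this step back to \cite{KM99,PWZ92}, you end up in the same position as the paper, which is fine; but if the intent was a self-contained proof, that is the step that still needs to be carried out.
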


As explained in Section \ref{sec:entropy}, however, converging to $H_k(T)$ is not sufficiently good for repetitive texts. Repetitive texts are originated in applications where many similar versions of one base text are generated (i.e., DNA sequences); or where successive versions, each one similar to the preceding one (i.e., wiki), are generated. Statistical compressors are not able to capture this characteristic, because they predict a symbol based only on a short previous context, and such statistics do not change when the text is replicated many times (see Section \ref{sec:entropy} for the relation between $H_k(T)$ and $H_k(TT)$). Compressors based on repetitions, such as Lempel-Ziv parsings or grammar based ones, do exploit this repetitiveness.
\section{Self-Indexes}
\begin{definition}
A \emph{self-index} \cite{NM07} is an index that uses space proportional to that of the compressed text and solves the queries \emph{locate} and \emph{extract}. As this kind of indexes can reproduce any text substring, they replace the original text. Additionally, some indexes provide more efficient ways of computing \emph{exists} and \emph{count} queries.
\end{definition}

There are several general-purpose self-indexes, however most of them \jeremy{do not achieve high compression for repetitive texts, as they are only able to compress up to the $k$-th order empirical entropy (Section \ref{sec:entropy}).}
\djeremy{are based on the $k$-th order empirical entropy model (Section \ref{sec:entropy}), thus they do not achieve high compression for repetitive texts.} 
Most are based on the BWT or suffix array 
 (see \cite{NM07} for a complete survey).
In the last years some self-indexes oriented to repetitive texts have been proposed. We cover these now.

\subsection{Run-Length Compressed Suffix Arrays (RLCSAs)}
\label{sec:rlcsa}
The Run-Length Compressed Suffix Array (RLCSA) \cite{MNSV08} is based on the Compressed Suffix Array of Sadakane \cite{Sad03}. This is built around the so called $\Psi$ function. 
\begin{definition}[\cite{GV00}]
Let $A[1,\ldots,n]$ be the suffix array of a text $T$. Then $\Psi(i)$ is defined as 
\begin{equation*}
\Psi(i)=A^{-1}[(A[i]\!\!\!\!\mod n) + 1]
\end{equation*}
\end{definition}
The $\Psi$ function is the inverse of the LF mapping. $\Psi$ maps suffix $T[A[i],n]$ to suffix $T[A[i]+1,n]$, allowing one to scan the text from left to right. A \emph{run} in the $\Psi$ array is an interval $[a,b]$ for which it holds $\forall i \in [a,b-1],\,\Psi(i+1)=\Psi(i)+1$. 

In the RLCSA, \djeremy{they} \jeremy{one} run-length encodes the differences $\Psi[i]-\Psi[i-1]$ and store absolute samples of the array $\Psi$. This structure is very fast for \emph{count} and \emph{exists} queries. Its major drawback is the sampling it requires for \emph{locate} and \emph{extract} queries, as it takes $(n\log n)/s$ extra bits to achieve locating time $O(s)$, and time $O(s+r-l)$ for $extract(l,r)$, where $s$ is the sampling step.

The number of runs may be much smaller than $nH_k(T)$ (for example $runs(T)=runs(TT)$, whereas $|TT|H_k(TT)\ge 2|T|H_k(T)$ as shown in Section \ref{sec:entropy}). However, the difference between the number of runs and the number of phrases in an LZ77 parsing \cite{ZL77} may be a multiplicative factor as high as $\Theta(\sqrt{n})$.\footnote{Veli M\"{a}kinen, personal communication} For these reasons, the RLCSA seems to be an intermediate solution between LZ77 and empirical-entropy-based indexes. 

\subsection{Indexes based on sparse suffix arrays}
\label{sec:ku}
In this section we present two indexes \cite{KU96_sst,KU96} by K\"{a}rkk\"{a}inen and Ukkonen. Although these are not self-indexes, they set the ground for several self-indexes proposed later.

\begin{itemize}
\item First, they choose some indexing positions of the text. These can be evenly spaced points \cite{KU96_sst} or the points defined by a Lempel-Ziv parsing \cite{KU96}.
\item The suffixes starting at those points are indexed in a suffix trie, and the reversed prefixes in another trie.
\item The index in principle only allows one to find occurrences crossing an indexing point.
\item To find a pattern $P$ of length $m$, they partition it in all $m+1$ combinations of prefix and suffix.
\item For each partition, they search for the suffix in the suffix trie and for the prefix of the pattern in the reverse prefix trie.
\item The previous searches define a 2-dimensional range in a grid that relates each indexed text prefix (in lexicographic order) with the text suffix that follows (in lexicographic order). That is, related prefixes and suffixes are consecutive in the text.
\item A data structure supporting 2-dimensional range queries \cite{Cha88}, finds all pairs of related suffixes and prefixes, finding in this way the actual occurrences.
\item Additionally, using a Lempel-Ziv parsing they are able to find all the occurrences of the pattern. The occurrences are either found in the grid by the process described above (primary occurrences), or by considering the copies detected by the parsing (secondary occurrences), for which an additional method tracking the copies finds the remaining occurrences.
\end{itemize}

All following indexes can be thought as heirs of this general idea, which was improved by replacing or adding additional compact data structures to decrease the space usage. In most cases, the parsing was restricted only to LZ78 (Section \ref{sec:lz78_selfindexes}), since it simplifies the index, and in others to text grammars (SLPs, Section \ref{sec:slps}). In the following two subsections we list the results obtained in those cases. This thesis can also be thought as a heir of this fundamental scheme: For the first time compact data structures supporting the LZ77 parsing have been developed in this thesis, which show better performance on repetitive texts.

\subsection{LZ78-based Self-Indexes}
\label{sec:lz78_selfindexes}
In this section we present the space and running times of two indexes based on LZ78. Although they offer decent upper bounds and competitive performance on typical texts, experiments \cite{MNSV08} have demonstrated that LZ78 is too weak to profit from highly repetitive texts. There are other such self-indexes \cite{FM05}, not implemented as far as we know.
\subsubsection{Arroyuelo \etal's LZ-Index}
\label{sec:lz78index}
Navarro's LZ-Index \cite{Nav02} is the first self-index based on the LZ78 parsing \diego{using $O(nH_k(T))$ bits of space (it is also the first implemented in practice)}. It uses $4n'\log n'(1+o(1))$ bits and takes $O(m^3\log \sigma + (m+occ)\log n')$ time to \emph{locate} the $occ$ occurrences of a pattern of length $m$, where $\sigma$ is the size of the alphabet, and $n'$ is the number of phrases of the parsing.

Arroyuelo \diego{\etal} \ddiego{and Navarro} later improved the time and space of the index, achieving $(2+\epsilon)n'\log n'(1+o(1))$ bits and $O(m^2+(m+occ)\log n')$ locate time \ddiego{\cite{AN06}} \diego{\cite{ANSalgor10}}, or $(3+\epsilon)n'\log n'(1+o(1))$ bits and $O((m+occ)\log n')$ locate time \cite{AN07}.

\subsubsection{Russo and Oliveira's ILZI}
\label{sec:ilzi}
Russo and Oliveira present a self-index based on the so-called \emph{maximal parsing}, called ILZI \cite{ilzi}.
\begin{definition}[\cite{ilzi}]
Given a suffix trie $\mathcal{T}$ (of a set of strings), the \emph{$\mathcal{T}$-maximal parsing} of string $T$ is the sequence of nodes $v_1, \ldots , v_f$ such
that $T = v_1 \ldots v_f$ and, for every $j$, $v_j$ is the largest prefix of $v_j\ldots v_f$ that is a node of $T$.
\end{definition}

First, they compute the LZ78 parsing of $T^{rev}$, and then generate a suffix tree $\mathcal{T}_{78}$ over the set of the reverse phrases. Next they build the maximal parsing of $T$ using $\mathcal{T}_{78}$. \djeremy{They use this} \jeremy{This} parsing \djeremy{as it} improves the compression of LZ78, as shown by the following lemma. 

\begin{lemma}[\cite{ilzi}]
If the number of \ddiego{blocks} \diego{phrases} of the LZ78 parsing of $T$ is $n'$ then the $\mathcal{T}_{78}$-maximal parsing of $T$ has at most $n'$ \ddiego{blocks} \diego{phrases}.
\end{lemma}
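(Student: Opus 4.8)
The plan is to compare the two parsings phrase by phrase, showing that the maximal parsing reaches at least as far into the text as the LZ78 parsing at every corresponding step, which immediately bounds its number of phrases. Write the LZ78 parsing of $T$ as $Z[1]\ldots Z[n']$ with phrase $Z[p]$ ending at text position $e_p$, and write the $\mathcal{T}_{78}$-maximal parsing as $v_1\ldots v_f$ with $v_j$ ending at position $e'_j$. I would prove by induction on $p$ the invariant that after consuming the first $p$ maximal phrases we have already covered $T[1,e_p]$, i.e. $e'_p \ge e_p$ (interpreting this so that the $p$-th maximal phrase exists whenever $Z[p]$ does). Taking $p=n'$ then gives that the maximal parsing finishes within $n'$ phrases, so $f \le n'$.

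First I would recall why every LZ78 phrase of $T$ is a node of $\mathcal{T}_{78}$: by construction $\mathcal{T}_{78}$ is built over the reverse phrases of the LZ78 parsing of $T^{rev}$, but the key earlier lemma (the one just stated for ILZI, that the number of phrases does not grow) and the structure of LZ78 guarantee that the set of strings indexed in $\mathcal{T}_{78}$ contains, as trie nodes, all the LZ78 phrases of $T$ — here I would need to check the precise relationship the authors set up between the LZ78 parsing of $T^{rev}$ and the phrases of $T$; most likely each LZ78 phrase of $T$ appears reversed as a prefix of some indexed reverse-phrase, hence is a node of the suffix trie $\mathcal{T}_{78}$. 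Granting that, the inductive step is short: suppose we have covered $T[1,e_{p-1}]$ using $\le p-1$ maximal phrases and, by the invariant, $e'_{p-1}\ge e_{p-1}$. The next maximal phrase $v$ starts at position $e'_{p-1}+1 \le e_{p-1}+1 = $ (start of $Z[p]$). By definition of the maximal parsing, $v$ is the \emph{largest} prefix of the remaining suffix that is a node of $\mathcal{T}_{78}$. Since the remaining suffix starting at the start of $Z[p]$ has $Z[p]$ as a prefix (when the start positions coincide) or has $Z[p]$ overlapping appropriately, and $Z[p]$ is a node of $\mathcal{T}_{78}$, maximality forces $|v|$ to extend at least to the end of $Z[p]$, giving $e'_p \ge e_p$.

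The main obstacle I expect is handling the case where the maximal parsing has "gotten ahead" of the LZ78 parsing, so the start position of the next maximal phrase lies strictly inside $Z[p]$ rather than at its left end. Then I cannot directly invoke "$Z[p]$ is a prefix of the remaining suffix"; instead I need that a suffix of $Z[p]$ (namely $T[e'_{p-1}+1, e_p]$) is still a node of $\mathcal{T}_{78}$, or more robustly, I should phrase the induction on which text position is covered rather than pairing phrase indices, and argue that each maximal phrase advances the covered prefix by at least as much as the corresponding LZ78 step would from the same starting point. This requires knowing that $\mathcal{T}_{78}$ is closed under taking suffixes of its nodes in the relevant sense, or equivalently that the trie is "suffix-closed" along the path — a property one gets because a suffix trie's nodes are exactly the substrings of the indexed strings and any suffix of a phrase that is itself a substring of an indexed string is again a node. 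Nailing down this closure property for $\mathcal{T}_{78}$ specifically, and checking it suffices, is the delicate part; once it is in place the counting conclusion $f \le n'$ is immediate.
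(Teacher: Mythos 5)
Your overall strategy---set up a reference parsing of $T$ into $n'$ pieces, each a node of $\mathcal{T}_{78}$, then argue by induction on the covered prefix that the greedy maximal parsing always advances at least as far, using the substring-closure of a suffix trie to handle the case where it has ``gotten ahead''---is exactly the right shape, and your reframing of the induction on text position in the last paragraph is the correct fix for the alignment issue you notice. The paper does not prove this lemma itself (it is imported from Russo and Oliveira), so there is no author argument to compare against, but your proposal contains a genuine gap at precisely the point you flag as unchecked.

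The gap is your choice of reference parsing. The forward LZ78 phrases $Z[1],\dots,Z[n']$ of $T$ are not, in general, nodes of $\mathcal{T}_{78}$, because $\mathcal{T}_{78}$ indexes the \emph{reversed} phrases of the LZ78 parsing of $T^{rev}$, and there is no reason these should contain the forward LZ78 phrases of $T$ as substrings. Concretely, take $T=\texttt{aab}$, so $T^{rev}=\texttt{baa}$ has LZ78 phrases $\texttt{b},\texttt{a},\texttt{a}$; the indexed reversed-phrase set is $\{\texttt{a},\texttt{b}\}$ and the nodes of $\mathcal{T}_{78}$ are only $\varepsilon,\texttt{a},\texttt{b}$. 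But the forward LZ78 parsing of $T$ is $\texttt{a},\texttt{ab}$ (two phrases), and $\texttt{ab}$ is not a node; indeed the $\mathcal{T}_{78}$-maximal parsing of $T$ is $\texttt{a},\texttt{a},\texttt{b}$, which has three phrases, \emph{more} than the forward LZ78 count. So the lemma is actually false under the literal reading ``$n'$ is the number of LZ78 phrases of $T$''; the thesis's statement is being loose, and the intended $n'$ is the number of LZ78 phrases of $T^{rev}$. With that reading the reference parsing you want is $T=Y_{n'}^{rev}Y_{n'-1}^{rev}\cdots Y_1^{rev}$, where $Y_1,\dots,Y_{n'}$ are the LZ78 phrases of $T^{rev}$: each block $Y_i^{rev}$ is literally one of the strings over which the suffix trie is built, so it and all of its substrings (in particular the suffix of it you need in the ``gotten ahead'' case) are nodes, and your suffix-closure induction then goes through verbatim. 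Without that substitution, the step ``every reference phrase is a node of $\mathcal{T}_{78}$'' cannot be repaired.
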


Their index uses at most $5n'\log n'(1+o(1))$ bits and takes $O((m+occ)\log n')$ time to locate the $occ$ occurrences of a pattern of length $m$ ($n'$ is the number of blocks of the maximal parsing).
\subsection{Straight Line Programs (SLPs)}
\label{sec:slps}
Claude and Navarro \cite{CN09} propose\jeremy{d} a self-index based on \emph{straight-line programs} (SLPs). SLPs are grammars in which the rules are either $X_i \rightarrow \alpha \in \Sigma$ or $X_i \rightarrow X_lX_r, \, \text{for}\,\, l,r<i$. The LZ78 \cite{ZL78} parsing may produce an output exponentially larger than the smallest SLP. However, the LZ77 \cite{ZL77} parsing outperforms the smallest SLP \cite{CLLP+05}. On the other hand producing the smallest SLP is an NP-complete problem \cite{Rytter03,CLLP+05}. However, Rytter \cite{Rytter03} has shown how to generate in linear time a grammar using $O(\ell \log \ell)$ rules and height $O(\log \ell)$, where $\ell$ is the size of the LZ77 parsing. Again, SLPs are intermediate between LZ77 and other methods.

The index \cite{CN09} uses $n'\log n + O(n'\log n')$ bits of space, where $n'$ is the number of rules of the grammar. It solves $extract(l,r)$ in $O((r-l + h) \log n')$ time and \emph{locate} in $O((m(m + h) + h \cdot occ) \log n')$ time, where $h$ is the height of the derivation tree of the grammar and $m$ the length of the pattern.

Claude \etal ~\cite{CFMPNbibe10} evaluate\jeremy{d} a practical implementation using the grammar produced by Re-Pair \cite{LM00}. The results are competitive with the RLCSA only for extremely repetitive texts and short patterns.
\chapter{A Repetitive Corpus Testbed}
In this chapter we present a corpus of repetitive texts. These texts are categorized according to the source they come from into the following: Artificial Texts, Pseudo-Real Texts and Real Texts. The main goal of this collection is to serve as a standard testbed for benchmarking algorithms oriented to repetitive texts. The corpus can be downloaded from \corpusurl.
\section{Artificial Texts}
This subset is composed of highly repetitive texts that do not come from any real-life source, but are artificially generated through some mathematical definition and have interesting combinatorial properties.

\subsection{\texorpdfstring{Fibonacci Sequence ($F_n$)}{Fibonacci Sequence}}
This sequence is defined by the recurrence
\begin{align}
F_1 & = \texttt{0}\nonumber \\
F_2 & = \texttt{1}\nonumber \\
F_n & = F_{n-1}F_{n-2}
\end{align}
The length of the string $F_n$ is the Fibonacci number $f_{n}$ and the sequence is a \emph{sturmian word} \cite{ACW}, which means it has $i+1$ different substrings (factors) of length $i$.

\subsection{\texorpdfstring{Thue-Morse Sequence ($T_n$)}{Thue-Morse Sequence}}
This sequence \cite{thuemorse} is defined by the recurrence 
\begin{align}
T_1 & = \texttt{0}\nonumber \\
T_n & = T_{n-1}\overline{T_{n-1}}
\end{align}
where $\bar{F}$ is the bitwise negation operator (i.e., all \texttt{0} get converted to \texttt{1} and all \texttt{1} to \texttt{0}).
Because of the construction scheme of this sequence, there are many substrings of the form $XX$, for any string $X$. However, there are no overlapping squares, i.e., substrings of the form $\texttt{0}X\texttt{0}X\texttt{0}$ or $\texttt{1}X\texttt{1}X\texttt{1}$. Furthermore, this sequence is strongly cube-free, i.e., there are no substrings of the form $XXx$, where $x$ is the first character of the string $X$. Another interesting property of this string is that it is recurrent. 
That is, given any finite substring $w$ of length $n$, there is some length $n_w$ (often much longer than $n$) such that $w$ is contained in every substring of length $n_w$.
The length of these strings is $|T_n|=2^n$.


\subsection{\texorpdfstring{Run-Rich String Sequence ($R_n$)}{Run-Rich String Sequence}}
A measure of string complexity, related to the regularities of the text and strongly related to the LZ77 parsing \cite{runslz}, is the number of runs.
\jeremy{
\begin{definition}
A \emph{period} of string $T[1,n]$ is a positive integer $p$ holding that $\forall\, 1\le i \le n-p,\, T[i]=T[i+p]$. A string is said to be \emph{periodic} if its minimum period $p$ is such that $p \leq n/2$.  
\end{definition}
}
\begin{definition}[\cite{Main89}]
The substring $T[i,j]$ is a \emph{run} in a string $T$ iff $T[i,j]$ is periodic \dmine{the minimum period of $T[i,j]=p\leq |T[i,j]|/2$} and $T[i,j]$ is not extendable to the right ($j=n$ or $T[j+1]\neq T[j-p+1]$) or left ($i=1$ or $T[i-1]\neq T[i+p-1]$). 
\end{definition}
The higher the number of runs in a string, the more regularities it has.

It has been shown that the maximum number of runs in a string is greater than $0.944n$ \cite{MKIBS08} and lower than $1.029n$ \cite{CIT08}.
Franek \etal ~\cite{rich} show a constructive and simple way to obtain strings with many runs; the $n$-th of those strings is denoted $R_n$. The ratio of the runs of their strings to the length approaches $3/(1+\sqrt{5})=0.92705\ldots$.

\section{Pseudo-Real Texts}
Here we present a set of texts that were generated by artificially adding repetitiveness to real texts, thus we call them \emph{pseudo-real texts}.

To generate the texts, we took a prefix of 1MiB of all texts of Pizza\&Chili Corpus\footnote{\url{http://pizzachili.dcc.uchile.cl}}\diego{,} \ddiego{and} we mutated them\diego{, and we concatenated all of them in the order they were generated}. Our mutations take a random character position and change it to a random character different from the original one. 

We used two different schemes for the mutations. The first one, denoted by a `$^1$', generates different mutations of the first text. The second, denoted by a `$^2$', mutates the last text generated. The second scheme resembles the changes obtained through time in a software project or the versions of a document\jeremy{, while the first scheme produces changes analogous to the ones found in a collection of related DNA sequences}. 

The mutation rate, i.e., percentage of mutated characters, was set to $0.1\%$, $0.01\%$ and $0.001\%$.

The base texts (all from the Pizza\&Chili corpus) we mutated were the following:
\begin{itemize}
\item Sources: This file is formed by C/Java source code obtained by concatenating all the \texttt{.c}, \texttt{.h}, \texttt{.C} and \texttt{.java} files of the linux-2.6.11.6 and gcc-4.0.0  distributions.
\item Pitches: This file is a sequence of midi pitch values (bytes in 0-127, plus a few extra special values) obtained from a myriad of MIDI files freely available on Internet. 
\item Proteins: This file is a sequence of newline-separated protein sequences obtained from the Swissprot database. 
\item DNA: This file is a sequence of newline-separated gene DNA sequences obtained from files \texttt{01hgp10} to \texttt{21hgp10}, plus \texttt{0xhgp10} and \texttt{0yhgp10}, from Gutenberg Project. 
\item English: This file is the concatenation of English text files selected from \texttt{etext02} to \texttt{etext05} collections of Gutenberg Project. 
\item XML: This file is an XML that provides bibliographic information on major computer science journals and proceedings and it was obtained from \url{http://dblp.uni-trier.de}. 
\end{itemize}
\section{Real Texts}
This subset is composed of texts coming from real repetitive sources. These sources are DNA, Wikipedia Articles, Source Code, and Documents.

For the case of DNA we concatenated the texts in random order. For the others, we concatenated the texts according to the date they were created, from oldest to newest. 

\subsection{DNA}
Our DNA texts come from different sources. 
\begin{itemize}
\item The Saccharomyces Genome Resequencing Project\footnote{\url{http://www.sanger.ac.uk/Teams/Team71/durbin/sgrp}} provides two text collections: \emph{para}, which contains 36 sequences of \emph{Saccharomyces Paradoxus} and \emph{cere}, which contains 37 sequences of \emph{Saccharomyces Cerevisiae}.
\item From the National Center for Biotechnology Information (NCBI){\footnote{\url{http://www.ncbi.nlm.nih.gov}}} we collected some DNA sequences of the same bacteria. The species we collected are \emph{Escherichia Coli} (23), \emph{Salmonella Enterica} (15), \emph{Staphylococcus Aureus} (14), \emph{Streptococcus Pyogenes} (13), \emph{Streptococcus Pneumoniae} (11) and \emph{Clostridium Botulium} (10). \jeremy{We wrote in parentheses the total number of sequences of each collection.} We chose these species as they were the only ones with 10 or more different sequences.
\item A collection composed of 78,041 sequences of \emph{Haemophilus Influenzae}\footnote{\url{ftp://ftp.ncbi.nih.gov/genomes/INFLUENZA/influenza.fna.gz}}, also coming from the NCBI.
\end{itemize}
\begin{remark}Although there are four bases $\lbrace \texttt{A},\texttt{C},\texttt{G},\texttt{T}\rbrace$, DNA sequences may have alphabets of size up to $16=2^4$ because some characters denote an unknown choice among the four bases. The most common character used is $\texttt{N}$, which denotes a totally unknown symbol.
\end{remark}

\subsection{Wikipedia Articles}
We downloaded all versions of three Wikipedia articles, \emph{Albert Einstein}, \emph{Alan Turing} and \emph{Nobel Prize}. 
We downloaded them in English (denoted \emph{en}) and German (denoted \emph{de}). We chose these languages as they are among the most widely used on Internet and their alphabet may be represented using standard 1-byte encodings. The versions for all documents are up to January 12, 2010, except for the English article of \emph{Albert Einstein}, which was downloaded only up to November 10, 2006 because of the massive number of versions it has.
\subsection{Source Code}
We collected all versions 5.x of the \emph{Coreutils}\footnote{\url{ftp://mirrors.kernel.org/gnu/coreutils}} package and removed all binary files, making a total of 9 versions. We also collected all 1.0.x and 1.1.x versions of the \emph{Linux Kernel}\footnote{\url{ftp://ftp.kernel.org/pub/linux/kernel}}, making a total of 36 versions.
\subsection{Documents}
We took all \emph{pdf} files of CIA World Leaders\footnote{\url{https://www.cia.gov/library/publications/world-leaders-1}}
 ~from January 2003 to December 2009, and converted them to text (using software \texttt{pdftotext}).
\section{Statistics}
To understand the characteristics of the texts present in the \emph{Repetitive Corpus}, we provide below some statistics about them. The statistics presented are the following:

\begin{itemize}
\item \textbf{Alphabet Size:} 
We give the alphabet size and the inverse probability of matching (IPM), which is the inverse of the probability that two characters chosen at random match. IPM is a measure of the effective alphabet size. On a uniformly distributed text, it is precisely the alphabet size.

\item \textbf{Compression Ratio:}
Since we are dealing with compressed indexes it is useful to have an idea of the compressibility of the texts using general-purpose compressors. 
The following compressors are used: \texttt{gzip}\footnote{\url{http://www.gzip.org}} gives an idea of compressibility via dictionaries (an LZ77 parsing with limited window size); \texttt{bzip2}\footnote{\url{http://www.bzip.org}} gives an idea of block-sorting compressibility (using the BWT transform, Section \ref{sec:bwt}); \texttt{ppmdi}\footnote{\url{http://pizzachili.dcc.uchile.cl/utils/ppmdi.tar.gz}} gives an idea of partial-match-based compressors (related to the $k$-th order entropy, Section \ref{sec:entropy}); \texttt{p7zip}\footnote{\url{http://www.7-zip.org}} gives an idea of LZ77 compression with virtually unlimited window; and \texttt{Re-Pair}\footnote{\url{http://www.cbrc.jp/~rwan/en/restore.html}} \cite{LM00} gives an idea of grammar-based compression. All compressors were run with the highest compression options.

\item \textbf{Empirical Entropy:} Here we give the empirical entropy $H_k$ of the text with $k$ ranging from $0$ to $8$, measured as compression ratio. 
We also show, in parentheses, the \emph{complexity function} of $T$ \cite{ACW} (or the \emph{number of contexts}) which count how many different substrings of a given size does $T$ have. This is exactly our $C(T,k)$ of Lemma \ref{lemma:entropytt}.
This measure has the following properties:
\begin{eqnarray*}
C(T,1) & = & \sigma\\
C(T,n+m) & \leq & C(T,n)C(T,m)
\end{eqnarray*}
The lower this measure, the more repetitive the text is. For example, if $C(T,n)=1\, \forall n$, then $T=c^{m}$ for some character $c$. When $P(C,n) = n+1$ the sequence is said to be \emph{Sturmian} (the Fibonacci sequence is an example of a \emph{Sturmian} string).
\end{itemize}
\begin{remark}
The compression ratios are given as the percentage of the compressed file size over the uncompressed file
size, assuming the original file uses one byte per character. This means that 25\% compression can be
achieved over a DNA sequence having an alphabet \texttt{\{A,C,G,T\}} by simply using 2 bits per symbol. As seen from the real-life examples
given, these four symbols are usually predominant, so it is not hard to get very close to 25\% on
general DNA sequences as well.
\end{remark}

\subsection{Artificial Texts}
Tables \ref{tab:art_sigma}-\ref{tab:art_entropy} give the statistics of artificial texts.
\begin{table}[!ht]
\begin{center}
\begin{scriptsize}
\begin{tabular}{|l|r|r|r|}
\hline
\textbf{File} & \multicolumn{1}{c|}{\textbf{Size}} & \multicolumn{1}{c|}{$\Sigma$} & \multicolumn{1}{c|}{\textbf{IPM}} \\ \hline
$F_{41}$ & 256MiB & 2 & 1.894 \\ \hline
$T_{29}$ & 256MiB & 2 & 2.000 \\ \hline
$R_{13}$ & 207MiB & 2 & 2.000 \\ \hline
\end{tabular}
\end{scriptsize}
\end{center}
\caption{Alphabet statistics for Artificial Collection}
\label{tab:art_sigma}
\end{table}

\begin{table}[!ht]
\begin{center}
\begin{scriptsize}
\begin{tabular}{|l|c|c|c|c|c|}
\hline
\textbf{File} & \multicolumn{1}{c|}{\textbf{p7zip}} & \multicolumn{1}{c|}{\textbf{bzip2}} & \multicolumn{1}{c|}{\textbf{gzip}} & \multicolumn{1}{c|}{\textbf{ppmdi}} & \multicolumn{1}{c|}{\textbf{Re-pair}} \\ \hline
$F_{41}$ & 0.17624\% & 0.00572\% & 0.46875\% & 1.87500\% & 0.00002\% \\ \hline
$T_{29}$ & 0.35896\% & 0.01259\% & 0.54688\% & 2.18750\% & 0.00004\% \\ \hline
$R_{13}$ & 0.17172\% & 0.01227\% & 0.53140\% & 2.12560\% & 0.00009\% \\ \hline
\end{tabular}
\end{scriptsize}
\end{center}
\caption{Compression statistics for Artificial Collection}
\label{tab:art_compression}
\end{table}

\begin{table}[!ht]
\begin{center}
\begin{scriptsize}
\begin{tabular}{|l|c|c|c|c|c|c|c|c|c|}
\hline
\textbf{File} & \textbf{$H_0$} & \textbf{$H_1$} & \textbf{$H_2$} & \textbf{$H_3$} & \textbf{$H_4$} & \textbf{$H_5$} & \textbf{$H_6$} & \textbf{$H_7$} & \textbf{$H_8$} \\ \hline
\multirow{2}{*}{$F_{41}$} & 11.99\% & 7.41\% & 4.58\% & 4.58\% & 2.83\% & 2.83\% & 2.83\% & 1.75\% & 1.75\% \\
  & (1) & (2) & (3) & (4) & (5) & (6) & (7) & (8) & (9) \\ \hline
\multirow{2}{*}{$T_{29}$} & 12.50\% & 11.48\% & 8.34\% & 8.34\% & 4.16\% & 4.16\% & 4.16\% & 2.09\% & 2.09\% \\
  & (1) & (2) & (4) & (6) & (10) & (12) & (16) & (20) & (22) \\ \hline
\multirow{2}{*}{$R_{13}$} & 12.50\% & 9.85\% & 8.51\% & 6.55\% & 2.56\% & 2.33\% & 2.33\% & 2.33\% & 2.33\% \\
  & (1) & (2) & (4) & (6) & (8) & (10) & (12) & (14) & (16) \\ \hline
\end{tabular}
\end{scriptsize}
\end{center}
\caption{Empirical entropy statistics for Artificial Collection}
\label{tab:art_entropy}
\end{table}

\clearpage
\subsection{Pseudo-Real Texts}
Tables \ref{tab:pseudoreal_sigma}-\ref{tab:pseudoreal2_entropy} give the statistics of pseudo-real texts.
\begin{table}[!ht]
\begin{center}
\begin{scriptsize}
\begin{tabular}{|l|r|r|r|}
\hline
\textbf{File} & \multicolumn{1}{c|}{\textbf{Size}} & \multicolumn{1}{c|}{$\Sigma$} & \multicolumn{1}{c|}{\textbf{IPM}} \\ \hline
Xml 0.001\%$^1$ & 100MiB & 89 & 27.84 \\ \hline
Xml 0.01\%$^1$ & 100MiB & 89 & 27.84 \\ \hline
Xml 0.1\%$^1$ & 100MiB & 89 & 27.84 \\ \hline
DNA 0.001\%$^1$ & 100MiB & 5 & 3.98 \\ \hline
DNA 0.01\%$^1$ & 100MiB & 5 & 3.98 \\ \hline
DNA 0.1\%$^1$ & 100MiB & 5 & 3.98 \\ \hline
English 0.001\%$^1$ & 100MiB & 106 & 15.65 \\ \hline
English 0.01\%$^1$ & 100MiB & 106 & 15.65 \\ \hline
English 0.1\%$^1$ & 100MiB & 106 & 15.65 \\ \hline
Pitches 0.001\%$^1$ & 100MiB & 73 & 33.07 \\ \hline
Pitches 0.01\%$^1$ & 100MiB & 73 & 33.07 \\ \hline
Pitches 0.1\%$^1$ & 100MiB & 73 & 33.07 \\ \hline
Proteins 0.001\%$^1$ & 100MiB & 21 & 16.90 \\ \hline
Proteins 0.01\%$^1$ & 100MiB & 21 & 16.90 \\ \hline
Proteins 0.1\%$^1$ & 100MiB & 21 & 16.90 \\ \hline
Sources 0.001\%$^1$ & 100MiB & 98 & 28.86 \\ \hline
Sources 0.01\%$^1$ & 100MiB & 98 & 28.86 \\ \hline
Sources 0.1\%$^1$ & 100MiB & 98 & 28.86 \\ \hline
\end{tabular}
\end{scriptsize}
\end{center}
\caption{Alphabet statistics for Pseudo-Real Collection (Scheme 1)}
\label{tab:pseudoreal_sigma}
\end{table}

\begin{table}[!ht]
\begin{center}
\begin{scriptsize}
\begin{tabular}{|l|r|r|r|}
\hline
\textbf{File} & \multicolumn{1}{c|}{\textbf{Size}} & \multicolumn{1}{c|}{$\Sigma$} & \multicolumn{1}{c|}{\textbf{IPM}} \\ \hline
Xml 0.001\%$^2$& 100MiB & 89 & 27.84 \\ \hline
Xml 0.01\%$^2$& 100MiB & 89 & 27.84 \\ \hline
Xml 0.1\%$^2$& 100MiB & 89 & 27.86 \\ \hline
DNA 0.001\%$^2$& 100MiB & 5 & 3.98 \\ \hline
DNA 0.01\%$^2$& 100MiB & 5 & 3.98 \\ \hline
DNA 0.1\%$^2$& 100MiB & 5 & 3.98 \\ \hline
English 0.001\%$^2$& 100MiB & 106 & 15.65 \\ \hline
English 0.01\%$^2$& 100MiB & 106 & 15.66 \\ \hline
English 0.1\%$^2$& 100MiB & 106 & 15.74 \\ \hline
Pitches 0.001\%$^2$& 100MiB & 73 & 33.07 \\ \hline
Pitches 0.01\%$^2$& 100MiB & 73 & 33.07 \\ \hline
Pitches 0.1\%$^2$& 100MiB & 73 & 33.10 \\ \hline
Proteins 0.001\%$^2$& 100MiB & 21 & 16.90 \\ \hline
Proteins 0.01\%$^2$& 100MiB & 21 & 16.90 \\ \hline
Proteins 0.1\%$^2$& 100MiB & 21 & 16.92 \\ \hline
Sources 0.001\%$^2$& 100MiB & 98 & 28.86 \\ \hline
Sources 0.01\%$^2$& 100MiB & 98 & 28.86 \\ \hline
Sources 0.1\%$^2$& 100MiB & 98 & 28.92 \\ \hline
\end{tabular}
\end{scriptsize}
\end{center}
\caption{Alphabet statistics for Pseudo-Real Collection (Scheme 2)}
\label{tab:pseudoreal2_sigma}
\end{table}

\begin{table}[!ht]
\begin{center}
\begin{scriptsize}
\begin{tabular}{|l|r|r|r|r|r|}
\hline
\textbf{File} & \multicolumn{1}{c|}{\textbf{p7zip}} & \multicolumn{1}{c|}{\textbf{bzip2}} & \multicolumn{1}{c|}{\textbf{gzip}} & \multicolumn{1}{c|}{\textbf{ppmdi}} & \multicolumn{1}{c|}{\textbf{Re-Pair}} \\ \hline
Xml 0.001\%$^1$ & 0.15\% & 11.00\% & 18.00\% & 3.50\% & 0.19\% \\ \hline
Xml 0.01\%$^1$ & 0.18\% & 12.00\% & 18.00\% & 3.60\% & 0.46\% \\ \hline
Xml 0.1\%$^1$ & 0.46\% & 12.00\% & 18.00\% & 4.10\% & 2.00\% \\ \hline
DNA 0.001\%$^1$ & 0.27\% & 27.00\% & 28.00\% & 11.00\% & 0.34\% \\ \hline
DNA 0.01\%$^1$ & 0.29\% & 27.00\% & 28.00\% & 11.00\% & 0.58\% \\ \hline
DNA 0.1\%$^1$ & 0.51\% & 27.00\% & 28.00\% & 12.00\% & 2.50\% \\ \hline
English 0.001\%$^1$ & 0.31\% & 28.00\% & 37.00\% & 22.00\% & 0.39\% \\ \hline
English 0.01\%$^1$ & 0.35\% & 28.00\% & 37.00\% & 22.00\% & 0.65\% \\ \hline
English 0.1\%$^1$ & 0.59\% & 28.00\% & 37.00\% & 22.00\% & 2.70\% \\ \hline
Pitches 0.001\%$^1$ & 0.47\% & 54.00\% & 52.00\% & 47.00\% & 0.69\% \\ \hline
Pitches 0.01\%$^1$ & 0.50\% & 54.00\% & 52.00\% & 47.00\% & 0.95\% \\ \hline
Pitches 0.1\%$^1$ & 0.75\% & 54.00\% & 52.00\% & 48.00\% & 3.20\% \\ \hline
Proteins 0.001\%$^1$ & 0.32\% & 41.00\% & 39.00\% & 31.00\% & 0.42\% \\ \hline
Proteins 0.01\%$^1$ & 0.35\% & 41.00\% & 39.00\% & 31.00\% & 0.68\% \\ \hline
Proteins 0.1\%$^1$ & 0.59\% & 41.00\% & 39.00\% & 32.00\% & 2.70\% \\ \hline
Sources 0.001\%$^1$ & 0.20\% & 19.00\% & 25.00\% & 12.00\% & 0.28\% \\ \hline
Sources 0.01\%$^1$ & 0.23\% & 19.00\% & 25.00\% & 12.00\% & 0.56\% \\ \hline
Sources 0.1\%$^1$ & 0.50\% & 20.00\% & 25.00\% & 13.00\% & 2.60\% \\ \hline
\end{tabular}
\end{scriptsize}
\end{center}
\caption{Compression statistics for Pseudo-Real Collection (Scheme 1)}
\label{tab:pseudoreal_compression}
\end{table}

\begin{table}[!ht]
\begin{center}
\begin{scriptsize}
\begin{tabular}{|l|r|r|r|r|r|}
\hline
\textbf{File} & \multicolumn{1}{c|}{\textbf{p7zip}} & \multicolumn{1}{c|}{\textbf{bzip2}} & \multicolumn{1}{c|}{\textbf{gzip}} & \multicolumn{1}{c|}{\textbf{ppmdi}} & \multicolumn{1}{c|}{\textbf{Re-Pair}} \\ \hline
Xml 0.001\%$^2$ & 0.15\% & 12.00\% & 18.00\% & 3.50\% & 0.18\% \\ \hline
Xml 0.01\%$^2$ & 0.18\% & 14.00\% & 19.00\% & 4.40\% & 0.39\% \\ \hline
Xml 0.1\%$^2$ & 0.39\% & 25.00\% & 29.00\% & 17.00\% & 2.20\% \\ \hline
DNA 0.001\%$^2$ & 0.26\% & 27.00\% & 28.00\% & 11.00\% & 0.33\% \\ \hline
DNA 0.01\%$^2$ & 0.29\% & 27.00\% & 28.00\% & 11.00\% & 0.52\% \\ \hline
DNA 0.1\%$^2$ & 0.46\% & 27.00\% & 28.00\% & 13.00\% & 2.20\% \\ \hline
English 0.001\%$^2$ & 0.31\% & 28.00\% & 37.00\% & 22.00\% & 0.38\% \\ \hline
English 0.01\%$^2$ & 0.34\% & 29.00\% & 37.00\% & 23.00\% & 0.59\% \\ \hline
English 0.1\%$^2$ & 0.55\% & 38.00\% & 43.00\% & 31.00\% & 2.50\% \\ \hline
Pitches 0.001\%$^2$ & 0.46\% & 54.00\% & 52.00\% & 47.00\% & 0.68\% \\ \hline
Pitches 0.01\%$^2$ & 0.49\% & 54.00\% & 53.00\% & 48.00\% & 0.89\% \\ \hline
Pitches 0.1\%$^2$ & 0.71\% & 59.00\% & 57.00\% & 52.00\% & 2.80\% \\ \hline
Proteins 0.001\%$^2$ & 0.31\% & 41.00\% & 39.00\% & 32.00\% & 0.41\% \\ \hline
Proteins 0.01\%$^2$ & 0.34\% & 42.00\% & 40.00\% & 33.00\% & 0.62\% \\ \hline
Proteins 0.1\%$^2$ & 0.54\% & 47.00\% & 46.00\% & 40.00\% & 2.50\% \\ \hline
Sources 0.001\%$^2$ & 0.20\% & 20.00\% & 25.00\% & 13.00\% & 0.27\% \\ \hline
Sources 0.01\%$^2$ & 0.23\% & 21.00\% & 26.00\% & 14.00\% & 0.49\% \\ \hline
Sources 0.1\%$^2$ & 0.44\% & 34.00\% & 35.00\% & 26.00\% & 2.50\% \\ \hline
\end{tabular}
\end{scriptsize}
\end{center}
\caption{Compression statistics for Pseudo-Real Collection (Scheme 2)}
\label{tab:pseudoreal2_compression}
\end{table}

\begin{table}[!ht]
\begin{center}
\begin{scriptsize}
\begin{tabular}{|l|c|c|c|c|c|c|c|c|c|}
\hline
\textbf{File} & \textbf{$H_0$} & \textbf{$H_1$} & \textbf{$H_2$} & \textbf{$H_3$} & \textbf{$H_4$} & \textbf{$H_5$} & \textbf{$H_6$} & \textbf{$H_7$} & \textbf{$H_8$} \\ \hline
Xml & 65.25\% & 38.63\% & 21.00\% & 12.50\% & 8.13\% & 6.00\% & 5.25\% & 4.75\% & 4.13\% \\
0.001\%$^1$ & (1) & (89) & (3325) & (20560) & (56120) & (98084) & (134897) & (168846) & (200451) \\ \hline
Xml & 65.25\% & 38.63\% & 21.00\% & 12.50\% & 8.13\% & 6.00\% & 5.25\% & 4.75\% & 4.13\% \\
0.01\%$^1$ & (1) & (89) & (4135) & (30975) & (79379) & (131811) & (177924) & (220923) & (261651) \\ \hline
Xml & 65.25\% & 38.75\% & 21.25\% & 12.75\% & 8.25\% & 6.13\% & 5.38\% & 4.88\% & 4.25\% \\
0.1\%$^1$ & (1) & (89) & (5251) & (67479) & (196554) & (326296) & (440199) & (550570) & (661284) \\ \hline
DNA & 25.00\% & 24.25\% & 24.13\% & 24.00\% & 24.00\% & 23.75\% & 23.50\% & 22.88\% & 21.25\% \\
0.001\%$^1$ & (1) & (5) & (18) & (67) & (260) & (1029) & (4102) & (16349) & (62437) \\ \hline
DNA & 25.00\% & 24.25\% & 24.13\% & 24.00\% & 24.00\% & 23.75\% & 23.50\% & 22.88\% & 21.25\% \\
0.01\%$^1$ & (1) & (5) & (18) & (67) & (260) & (1029) & (4102) & (16368) & (63204) \\ \hline
DNA & 25.00\% & 24.25\% & 24.13\% & 24.00\% & 24.00\% & 23.75\% & 23.50\% & 22.88\% & 21.38\% \\
0.1\%$^1$ & (1) & (5) & (19) & (70) & (264) & (1034) & (4109) & (16399) & (65168) \\ \hline
English & 57.25\% & 45.13\% & 34.75\% & 25.88\% & 19.88\% & 15.88\% & 12.50\% & 9.63\% & 7.25\% \\
0.001\%$^1$ & (1) & (106) & (2659) & (18352) & (63299) & (145194) & (256838) & (379514) & (501400) \\ \hline
English & 57.25\% & 45.13\% & 34.75\% & 25.88\% & 19.88\% & 15.88\% & 12.50\% & 9.63\% & 7.25\% \\
0.01\%$^1$ & (1) & (106) & (3243) & (24063) & (82896) & (180401) & (305292) & (439387) & (572056) \\ \hline
English & 57.25\% & 45.25\% & 34.88\% & 26.13\% & 20.13\% & 16.00\% & 12.50\% & 9.75\% & 7.25\% \\
0.1\%$^1$ & (1) & (106) & (4491) & (46116) & (190765) & (439130) & (715127) & (983435) & (1237512) \\ \hline
Pitches & 66.13\% & 61.00\% & 53.50\% & 37.13\% & 16.38\% & 6.25\% & 2.88\% & 1.38\% & 0.75\% \\
0.001\%$^1$ & (1) & (73) & (3549) & (73664) & (376958) & (642406) & (767028) & (833456) & (871970) \\ \hline
Pitches & 66.13\% & 61.00\% & 53.50\% & 37.25\% & 16.38\% & 6.25\% & 2.88\% & 1.38\% & 0.75\% \\
0.01\%$^1$ & (1) & (73) & (3581) & (76900) & (399435) & (684445) & (821533) & (898126) & (946219) \\ \hline
Pitches & 66.13\% & 61.13\% & 53.63\% & 37.38\% & 16.63\% & 6.38\% & 2.88\% & 1.50\% & 0.88\% \\
0.1\%$^1$ & (1) & (73) & (3733) & (95838) & (598394) & (1096014) & (1363610) & (1543086) & (1687166) \\ \hline
Proteins & 52.25\% & 52.13\% & 51.63\% & 47.50\% & 25.13\% & 4.63\% & 0.75\% & 0.25\% & 0.25\% \\
0.001\%$^1$ & (1) & (21) & (422) & (8045) & (128975) & (463357) & (572530) & (589356) & (595906) \\ \hline
Proteins & 52.25\% & 52.13\% & 51.63\% & 47.50\% & 25.13\% & 4.63\% & 0.75\% & 0.25\% & 0.25\% \\
0.01\%$^1$ & (1) & (21) & (422) & (8045) & (131064) & (494845) & (626269) & (654067) & (670075) \\ \hline
Proteins & 52.25\% & 52.13\% & 51.63\% & 47.50\% & 25.50\% & 4.88\% & 0.88\% & 0.38\% & 0.38\% \\
0.1\%$^1$ & (1) & (21) & (425) & (8076) & (143879) & (768510) & (1150595) & (1293347) & (1403589) \\ \hline
Sources & 68.75\% & 46.88\% & 30.00\% & 19.63\% & 14.38\% & 11.00\% & 8.38\% & 6.88\% & 5.75\% \\
0.001\%$^1$ & (1) & (98) & (4557) & (29667) & (75316) & (130527) & (194105) & (259413) & (320468) \\ \hline
Sources & 68.75\% & 46.88\% & 30.00\% & 19.63\% & 14.38\% & 11.00\% & 8.50\% & 6.88\% & 5.75\% \\
0.01\%$^1$ & (1) & (98) & (5621) & (42303) & (102977) & (170525) & (244755) & (320237) & (391260) \\ \hline
Sources & 68.75\% & 47.00\% & 30.25\% & 19.88\% & 14.63\% & 11.13\% & 8.50\% & 7.00\% & 5.88\% \\
0.1\%$^1$ & (1) & (98) & (7359) & (104679) & (299799) & (498046) & (687941) & (872189) & (1049051) \\ \hline

\end{tabular}
\end{scriptsize}
\end{center}
\caption{Empirical entropy statistics for Pseudo-Real Collection (Scheme 1)}
\label{tab:pseudoreal_entropy}
\end{table}

\begin{table}[htbp]
\begin{center}
\begin{scriptsize}
\begin{tabular}{|l|c|c|c|c|c|c|c|c|c|}
\hline
\textbf{File} & \textbf{$H_0$} & \textbf{$H_1$} & \textbf{$H_2$} & \textbf{$H_3$} & \textbf{$H_4$} & \textbf{$H_5$} & \textbf{$H_6$} & \textbf{$H_7$} & \textbf{$H_8$} \\ \hline
Xml & 65.25\% & 38.63\% & 21.13\% & 12.63\% & 8.13\% & 6.00\% & 5.25\% & 4.75\% & 4.13\% \\
0.001\%$^2$ & (1) & (89) & (3325) & (20560) & (56120) & (98084) & (134897) & (168846) & (200451) \\ \hline
Xml & 65.25\% & 39.38\% & 22.00\% & 13.25\% & 8.63\% & 6.50\% & 5.63\% & 5.13\% & 4.50\% \\
0.01\%$^2$ & (1) & (89) & (4135) & (31042) & (79630) & (132163) & (178388) & (221499) & (262329) \\ \hline
Xml & 65.25\% & 44.00\% & 28.75\% & 18.50\% & 12.25\% & 9.25\% & 8.00\% & 7.13\% & 6.25\% \\
0.1\%$^2$ & (1) & (89) & (5255) & (72227) & (226418) & (378994) & (513539) & (645141) & (777226) \\ \hline
DNA & 25.00\% & 24.25\% & 24.13\% & 24.00\% & 24.00\% & 23.75\% & 23.50\% & 22.88\% & 21.25\% \\
0.001\%$^2$ & (1) & (5) & (18) & (67) & (260) & (1029) & (4102) & (16349) & (62436) \\ \hline
DNA & 25.00\% & 24.25\% & 24.13\% & 24.13\% & 24.00\% & 23.88\% & 23.50\% & 23.00\% & 21.38\% \\
0.01\%$^2$ & (1) & (5) & (18) & (67) & (260) & (1029) & (4102) & (16369) & (63242) \\ \hline
DNA & 25.00\% & 24.50\% & 24.38\% & 24.25\% & 24.25\% & 24.13\% & 23.88\% & 23.50\% & 22.38\% \\
0.1\%$^2$ & (1) & (5) & (19) & (70) & (264) & (1034) & (4109) & (16400) & (65387) \\ \hline
English & 57.25\% & 45.13\% & 34.75\% & 26.00\% & 20.00\% & 15.88\% & 12.50\% & 9.63\% & 7.13\% \\
0.001\%$^2$ & (1) & (106) & (2659) & (18353) & (63300) & (145195) & (256838) & (379514) & (501400) \\ \hline
English & 57.25\% & 45.50\% & 35.38\% & 26.50\% & 20.25\% & 15.88\% & 12.38\% & 9.50\% & 7.13\% \\
0.01\%$^2$ & (1) & (106) & (3243) & (24079) & (83037) & (180592) & (305458) & (439539) & (572186) \\ \hline
English & 57.38\% & 47.75\% & 39.50\% & 31.13\% & 23.00\% & 16.63\% & 12.13\% & 8.88\% & 6.38\% \\
0.1\%$^2$ & (1) & (106) & (4482) & (47357) & (202366) & (466838) & (749065) & (1015587) & (1265447) \\ \hline
Pitches & 66.13\% & 61.13\% & 53.63\% & 37.25\% & 16.38\% & 6.25\% & 2.88\% & 1.38\% & 0.75\% \\
0.001\%$^2$ & (1) & (73) & (3549) & (73664) & (376958) & (642406) & (767028) & (833456) & (871970) \\ \hline
Pitches & 66.13\% & 61.13\% & 53.88\% & 37.50\% & 16.50\% & 6.38\% & 2.88\% & 1.38\% & 0.88\% \\
0.01\%$^2$ & (1) & (73) & (3581) & (76917) & (399546) & (684518) & (821589) & (898152) & (946228) \\ \hline
Pitches & 66.13\% & 62.00\% & 55.88\% & 40.25\% & 17.38\% & 6.50\% & 3.13\% & 1.88\% & 1.38\% \\
0.1\%$^2$ & (1) & (73) & (3742) & (96359) & (606175) & (1103560) & (1367417) & (1545154) & (1688526) \\ \hline
Proteins & 52.25\% & 52.13\% & 51.63\% & 47.50\% & 25.25\% & 4.63\% & 0.75\% & 0.25\% & 0.25\% \\
0.001\%$^2$ & (1) & (21) & (422) & (8045) & (128975) & (463357) & (572529) & (589356) & (595906) \\ \hline
Proteins & 52.25\% & 52.13\% & 51.63\% & 47.63\% & 25.75\% & 5.00\% & 0.88\% & 0.50\% & 0.38\% \\
0.01\%$^2$ & (1) & (21) & (422) & (8045) & (131079) & (494846) & (626306) & (654107) & (670114) \\ \hline
Proteins & 52.25\% & 52.13\% & 51.75\% & 48.75\% & 30.13\% & 7.63\% & 2.13\% & 1.50\% & 1.38\% \\
0.1\%$^2$ & (1) & (21) & (426) & (8072) & (143924) & (771311) & (1154106) & (1297080) & (1407901) \\ \hline
Sources & 68.75\% & 47.00\% & 30.00\% & 19.75\% & 14.38\% & 11.00\% & 8.50\% & 6.88\% & 5.75\% \\
0.001\%$^2$ & (1) & (98) & (4557) & (29667) & (75316) & (130527) & (194105) & (259413) & (320468) \\ \hline
Sources & 68.75\% & 47.50\% & 30.75\% & 20.13\% & 14.63\% & 11.13\% & 8.63\% & 7.00\% & 5.88\% \\
0.01\%$^2$ & (1) & (98) & (5615) & (42337) & (103082) & (170646) & (244874) & (320346) & (391369) \\ \hline
Sources & 68.75\% & 51.25\% & 36.63\% & 24.38\% & 16.75\% & 12.13\% & 9.13\% & 7.25\% & 6.00\% \\
0.1\%$^2$ & (1) & (98) & (7372) & (108997) & (319310) & (525914) & (718657) & (904022) & (1080824) \\ \hline
\end{tabular}
\end{scriptsize}
\end{center}
\caption{Empirical entropy statistics for Pseudo-Real Collection (Scheme 2)}
\label{tab:pseudoreal2_entropy}
\end{table}

\clearpage
\subsection{Real Texts}
Tables \ref{tab:real_sigma}-\ref{tab:real_entropy} give the statistics of real texts.
\begin{table}[!ht]
\begin{center}
\begin{scriptsize}
\begin{tabular}{|l|r|r|r|}
\hline
\textbf{File} & \multicolumn{1}{c|}{\textbf{Size}} & \multicolumn{1}{c|}{$\Sigma$} & \multicolumn{1}{c|}{\textbf{IPM}} \\ \hline
Cere & 440MiB & 5 & 4.301 \\ \hline
Para & 410MiB & 5 & 4.096 \\ \hline
Clostridium Botulium & 34MiB & 4 & 3.356 \\ \hline
Escherichia Coli & 108MiB & 15 & 4.000 \\ \hline
Salmonella Enterica & 66MiB & 9 & 3.993 \\ \hline
Staphylococcus Aureus & 38MiB & 5 & 3.579 \\ \hline
Streptococcus Pneumoniae & 23MiB & 8 & 3.836 \\ \hline
Streptococcus Pyogenes & 24MIB & 10 & 3.800 \\ \hline
Influenza & 148MiB & 15 & 3.845 \\ \hline
Coreutils & 196MiB & 236 & 19.553 \\ \hline
Kernel & 247MiB & 160 & 23.078 \\ \hline
Einstein (en) & 446MiB & 139 & 19.501 \\ \hline
Einstein (de) & 89MiB & 117 & 19.264 \\ \hline
Nobel (en) & 85MiB & 126 & 20.070 \\ \hline
Nobel (de) & 31MiB & 118 & 17.786 \\ \hline
Turing (en) & 7.7MiB & 103 & 21.096 \\ \hline
Turing (de) & 85MiB & 100 & 19.719 \\ \hline
World Leaders & 45MiB & 89 & 3.855 \\ \hline
\end{tabular}
\end{scriptsize}
\end{center}
\caption{Alphabet statistics for Real Collection}
\label{tab:real_sigma}
\end{table}

\begin{table}[!ht]
\begin{center}
\begin{scriptsize}
\begin{tabular}{|l|r|r|r|r|r|}
\hline
\textbf{File} & \multicolumn{1}{c|}{\textbf{p7zip}} & \multicolumn{1}{c|}{\textbf{bzip2}} & \multicolumn{1}{c|}{\textbf{gzip}} & \multicolumn{1}{c|}{\textbf{ppmdi}} & \multicolumn{1}{c|}{\textbf{Re-Pair}} \\ \hline
Cere & 1.14\% & 2.50\% & 26.36\% & 24.09\% & 1.86\% \\ \hline
Para & 1.46\% & 26.34\% & 27.07\% & 24.88\% & 2.80\% \\ \hline
Clostridium Botulium & 8.53\% & 25.88\% & 26.47\% & 24.12\% & 20.00\% \\ \hline
Escherichia Coli & 4.72\% & 26.85\% & 28.70\% & 25.93\% & 9.63\% \\ \hline
Salmonella Enterica & 5.61\% & 27.27\% & 28.79\% & 25.76\% & 12.42\% \\ \hline
Staphylococcus Aureus & 2.89\% & 26.32\% & 28.95\% & 25.00\% & 5.26\% \\ \hline
Streptococcus Pneumoniae & 4.78\% & 26.52\% & 27.39\% & 24.78\% & 9.57\% \\ \hline
Streptococcus Pyogenes & 5.00\% & 26.25\% & 27.08\% & 25.00\% & 9.58\% \\ \hline
Influenza & 1.35\% & 6.62\% & 7.43\% & 3.78\% & 3.31\% \\ \hline
coreutils & 1.94\% & 16.33\% & 24.49\% & 12.76\% & 2.55\% \\ \hline
kernel & 0.81\% & 21.86\% & 27.13\% & 18.62\% & 1.13\% \\ \hline
einstein.en & 0.07\% & 5.38\% & 35.20\% & 1.61\% & 0.10\% \\ \hline
einstein.de & 0.11\% & 4.38\% & 31.46\% & 1.35\% & 0.16\% \\ \hline
nobel.en & 0.13\% & 2.94\% & 18.82\% & 1.76\% & 0.20\% \\ \hline
nobel.de & 0.18\% & 3.55\% & 27.74\% & 1.68\% & 0.30\% \\ \hline
turing.en & 1.09\% & 36.36\% & 285.71\% & 15.58\% & 1.71\% \\ \hline
turing.de & 0.03\% & 0.18\% & 0.10\% & 0.11\% & 0.05\% \\ \hline
world leaders & 1.29\% & 7.11\% & 17.78\% & 3.56\% & 1.78\% \\ \hline
\end{tabular}
\end{scriptsize}
\end{center}
\caption{Compression statistics for Real Collection}
\label{tab:real_compression}
\end{table}

\begin{table}[!ht]
\begin{center}
\begin{scriptsize}
\begin{tabular}{|l|c|c|c|c|c|c|c|c|c|}
\hline
\textbf{File} & \textbf{$H_0$} & \textbf{$H_1$} & \textbf{$H_2$} & \textbf{$H_3$} & \textbf{$H_4$} & \textbf{$H_5$} & \textbf{$H_6$} & \textbf{$H_7$} & \textbf{$H_8$} \\ \hline
\multirow{2}{*}{Cere} & 27.38\% & 22.63\% & 22.63\% & 22.50\% & 22.50\% & 22.50\% & 22.50\% & 22.38\% & 22.25\% \\
   & (1) & (5) & (25) & (125) & (610) & (2515) & (8697) & (28080) & (88624) \\ \hline
\multirow{2}{*}{Para} & 26.50\% & 23.50\% & 23.38\% & 23.38\% & 23.38\% & 23.38\% & 23.25\% & 23.25\% & 23.13\% \\
   & (1) & (5) & (25) & (125) & (625) & (3125) & (14725) & (51542) & (139149) \\ \hline
Clostridium & 23.25\% & 23.00\% & 22.88\% & 22.75\% & 22.75\% & 22.75\% & 22.63\% & 22.50\% & 22.25\% \\
Botulium & (1) & (4) & (16) & (64) & (256) & (1024) & (4096) & (16383) & (65118) \\ \hline
Escherichia & 25.00\% & 24.75\% & 24.50\% & 24.38\% & 24.25\% & 24.25\% & 24.13\% & 24.13\% & 23.88\% \\
Coli & (1) & (15) & (145) & (779) & (2715) & (7436) & (15641) & (32561) & (85363) \\ \hline
Salmonella & 25.00\% & 24.75\% & 24.50\% & 24.38\% & 24.25\% & 24.13\% & 24.13\% & 24.00\% & 23.75\% \\
Enterica & (1) & (9) & (35) & (97) & (299) & (1077) & (4159) & (16457) & (65618) \\ \hline
Staphylococcus & 23.88\% & 23.75\% & 23.75\% & 23.63\% & 23.63\% & 23.63\% & 23.50\% & 23.25\% & 22.75\% \\
Aureus & (1) & (5) & (18) & (67) & (260) & (1029) & (4102) & (16391) & (65282) \\ \hline
Streptococcus & 24.63\% & 24.38\% & 24.38\% & 24.25\% & 24.13\% & 24.13\% & 24.00\% & 23.75\% & 23.13\% \\
Pneumoniae & (1) & (8) & (31) & (133) & (574) & (2183) & (6928) & (21093) & (71592) \\ \hline
Streptococcus & 24.50\% & 24.38\% & 24.25\% & 24.13\% & 24.13\% & 24.13\% & 24.00\% & 23.88\% & 23.25\% \\
Pyogenes & (1) & (10) & (50) & (174) & (456) & (1291) & (4418) & (16758) & (65919) \\ \hline
\multirow{2}{*}{Influenza} & 24.63\% & 24.13\% & 24.13\% & 24.00\% & 23.88\% & 23.50\% & 22.00\% & 18.63\% & 13.25\% \\
   & (1) & (15) & (125) & (583) & (2329) & (7978) & (21316) & (44748) & (101559) \\ \hline
\multirow{2}{*}{coreutils} & 68.38\% & 51.25\% & 35.88\% & 23.88\% & 17.00\% & 12.88\% & 10.13\% & 8.00\% & 6.50\% \\
   & (1) & (236) & (18500) & (169716) & (606527) & (1335553) & (2258650) & (3258896) & (4247313) \\ \hline
\multirow{2}{*}{kernel} & 67.25\% & 50.50\% & 36.63\% & 25.75\% & 19.25\% & 15.13\% & 12.13\% & 9.63\% & 7.75\% \\
   & (1) & (160) & (7122) & (90396) & (351918) & (773818) & (1305616) & (1912604) & (2553008) \\ \hline
\multirow{2}{*}{einstein.en} & 62.00\% & 46.38\% & 33.38\% & 21.13\% & 13.25\% & 9.00\% & 6.50\% & 4.75\% & 3.50\% \\
   & (1) & (139) & (4546) & (28685) & (77333) & (142559) & (211506) & (276343) & (335151) \\ \hline
\multirow{2}{*}{einstein.de} & 63.00\% & 44.88\% & 32.63\% & 20.88\% & 13.25\% & 9.00\% & 6.13\% & 4.38\% & 3.13\% \\
   & (1) & (117) & (3278) & (16765) & (39010) & (64884) & (89914) & (112043) & (130473) \\ \hline
\multirow{2}{*}{nobel.en} & 62.63\% & 44.63\% & 30.50\% & 18.25\% & 11.50\% & 8.13\% & 6.00\% & 4.50\% & 3.38\% \\
   & (1) & (126) & (3566) & (18079) & (42334) & (69855) & (95644) & (119260) & (140401) \\ \hline
\multirow{2}{*}{nobel.de} & 61.13\% & 43.25\% & 31.13\% & 19.63\% & 12.50\% & 8.63\% & 6.00\% & 4.13\% & 3.00\% \\
   & (1) & (118) & (2726) & (12959) & (30756) & (49695) & (66108) & (80467) & (92184) \\ \hline
\multirow{2}{*}{turing.en} & 63.25\% & 45.75\% & 32.00\% & 19.13\% & 11.50\% & 7.63\% & 5.38\% & 3.88\% & 2.88\% \\
   & (1) & (103) & (2794) & (14091) & (33498) & (55489) & (75611) & (93402) & (108636) \\ \hline
\multirow{2}{*}{turing.de} & 62.38\% & 43.25\% & 29.25\% & 16.75\% & 9.50\% & 6.00\% & 3.88\% & 2.63\% & 2.00\% \\
   & (1) & (100) & (1806) & (7268) & (15407) & (23070) & (29038) & (33714) & (37335) \\ \hline
world & 43.38\% & 24.38\% & 17.25\% & 11.63\% & 7.63\% & 5.13\% & 4.00\% & 3.50\% & 3.13\% \\
leaders & (1) & (89) & (2526) & (23924) & (106573) & (246566) & (374668) & (468701) & (547040) \\ \hline
\end{tabular}
\end{scriptsize}
\end{center}
\caption{Empirical entropy statistics for Real Collection}
\label{tab:real_entropy}
\end{table}
\clearpage
\section{Discussion}
It can be seen in the tables presented above that only p7zip and Re-Pair capture the repetitiveness of the texts, achieving a compression ratio at least one order of magnitude better than bzip2, gzip or ppmdi. 
It can also be noted in Tables \ref{tab:pseudoreal_compression} and \ref{tab:pseudoreal2_compression} that p7zip is more robust to capture the repetitiveness than Re-Pair, as with mutation ratios of 0.1\% p7zip compresses 5 times better than Re-Pair. Table \ref{tab:real_compression} also shows that Re-Pair fails to capture some repetitions, as for all DNA texts except \emph{para} and \emph{cere} the compression of p7zip is two times better than that of Re-Pair. Tables \ref{tab:pseudoreal_compression} and \ref{tab:pseudoreal_entropy} also show that the compression ratio of bzip2, gzip and ppmdi does not change significantly when increasing the repetitiveness of the text (decreasing mutation ratio). However, Tables \ref{tab:pseudoreal2_compression} and \ref{tab:pseudoreal2_entropy} show that when decreasing the mutation ratio from 0.1\% to 0.01\% the gain in compression is greater than 10\%, but when decreasing the mutation to 0.001\% the compression ratio does not improve as much. It can also be seen that the compression ratios of bzip2 and gzip are close to the $H_2$-$H_3$, whereas, curiously, ppmdi compression ratios are not well predicted by any $H_k$.
\diego{Notice that, since artificial texts are extremely compressible, small constant overheads (usually irrelevant) may produce significant differences in the size of the compressed file.}

\chapter{LZ-End: A New Lempel-Ziv Parsing}
\label{chap:parsing}
In this chapter we explain \nieves{some properties of the LZ77 parsing (see Section \ref{sec:lzparsings})} \dnieves{the concept of a Lempel-Ziv parsing} and present a variant that has the advantage of faster text extraction. The results presented in this chapter were published in the \emph{20th Data Compression Conference (DCC)} \cite{KN10}.
\section{LZ77 on Repetitive Texts}
\label{sec:lz77}
An interesting property of the LZ77 parsing is that it captures the repetitions of the text. \diego{Text repetitions, as well as single-character edits on a text, alter the number of phrases of the parsing very little. This explains why LZ77 is so strong on highly repetitive collections.}
\begin{lemma}
\label{lemma:lz77rep}
Given \diego{the texts} \ddiego{a text} $T$, \diego{$T'$ and the characters $a$ , $b$;}  the following statements hold
\begin{eqnarray}
H^{LZ77}(TT) & = & H^{LZ77}(T) + 1 \label{eq:e1} \\
H^{LZ77}(TT\texttt{\$}) & \le & H^{LZ77}(T\texttt{\$}) + 1 \label{eq:e2}\\
H^{LZ77}(TT') & \le & H^{LZ77}(TaT') + 1 \label{eq:e3}\\
H^{LZ77}(TaT') & \le & H^{LZ77}(TT') + 1 \label{eq:e4}\\
H^{LZ77}(TaT') & \le & H^{LZ77}(TbT') + 1 \label{eq:e5}
\end{eqnarray} 
where $H^{LZ77}(T)$ is the number of phrases of the LZ77 parsing.
\end{lemma}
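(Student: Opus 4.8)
All five relations should follow from one organizing principle: the greedy LZ77 parsing of a string $S$ uses the fewest phrases among all \emph{admissible} parsings of $S$, where a parsing $S=w_1w_2\cdots w_k$ is admissible when, for every $i$, the string $w_i$ with its last character deleted occurs in $w_1\cdots w_{i-1}$ (this is the usual optimality of greedy LZ77, provable by a short exchange argument). Hence, to upper bound $H^{LZ77}(S)$ it suffices to exhibit one admissible parsing of $S$ with the desired number of phrases, and this is what I would do in every case.

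For \eqref{eq:e1}, the bound $H^{LZ77}(TT)\le H^{LZ77}(T)+1$ is obtained by parsing the first copy of $T$ with its own greedy parsing and covering the whole second copy by a single phrase whose body is $T[1,n-1]$ (which sits at the start of the first copy) and whose trailing character is $T[n]$. For the matching lower bound I would inspect the greedy parsing of $TT$ with boundaries $0=b_0<\cdots<b_k=2n$ and take $j$ maximal with $b_j\le n$: if $b_j=n$ then $Z[1],\dots,Z[j]$ is an admissible parsing of $T$, so $j\ge H^{LZ77}(T)$ and the remaining $\ge 1$ phrases give $k\ge H^{LZ77}(T)+1$; if $b_j<n<b_{j+1}$ then the body of $Z[j+1]$ contains $T[b_j+1,n]$, so $T[b_j+1,n]$ occurs inside $T[1,b_j]$, whence $T$ has an admissible parsing with $j+1$ phrases ($H^{LZ77}(T)\le j+1$), while a length comparison (a string of length $n$ cannot occur inside a shorter one) forces $b_{j+1}<2n$ and hence $k\ge j+2\ge H^{LZ77}(T)+1$. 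Together with the upper bound this yields the equality. Relation \eqref{eq:e2} is the same construction adapted to the terminator: take the greedy parsing $Y[1]\cdots Y[m]$ of $T\texttt{\$}$, keep $Y[1],\dots,Y[m-1]$ (covering a prefix $T[1,q]$ of $T$), use one phrase to cover $T[q+1,n]$ together with the first symbol of the second copy, and one more phrase with body $T[2,n]$ and trailing symbol \texttt{\$} to cover the remainder; this is an admissible parsing of $TT\texttt{\$}$ with $m+1$ phrases.

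Relations \eqref{eq:e3}, \eqref{eq:e4}, \eqref{eq:e5} are the single-edit cases (deleting, inserting, and replacing one interior character), which I would treat uniformly. Start from the greedy parsing $Z[1]\cdots Z[k]$ of the right-hand-side string, let $Z[j]$ be the phrase whose span contains the edited position, keep $Z[1],\dots,Z[j-1]$ unchanged (they lie strictly to the left of the edit, so they remain admissible for the left-hand-side string), absorb the one-character change by replacing $Z[j]$ with a constant number of short phrases obtained by splitting it at the edit point (the two halves inherit sources from $Z[j]$'s source, restricted to the text before the edit; the edited/inserted symbol, if needed, becomes a single-character phrase), and then continue using the phrases $Z[j+1],\dots,Z[k]$, whose \emph{contents} are unchanged and merely shifted by one position. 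A careful treatment of the boundary sub-cases (edit on a phrase boundary, on the trailing character of $Z[j]$, or at the first position of $Z[j]$) handles the splitting of $Z[j]$.

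The delicate point, which I expect to be the real obstacle, is the reuse of $Z[j+1],\dots,Z[k]$: although each such phrase has unchanged content, its \emph{source} may have straddled the edited position, in which case it is destroyed in the new string; moreover naively splitting every offending phrase would spend one extra phrase per straddling source and break the ``$+1$'' budget (and indeed explicit examples show that the greedy parse of the edited string recovers far more gracefully than such a split). The resolution I would pursue is to not reuse $Z[j+1],\dots,Z[k]$ verbatim but, once the $O(1)$ phrases around $Z[j]$ have carried the parse past the edit, to re-parse the remaining suffix greedily: that suffix lies entirely on one side of the edit and is word-for-word a suffix of the common part $T'$ (or $T$), so only the available history has changed, and by one character. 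Thus the statement reduces to: \emph{greedily parsing a fixed string against two histories that differ by a single character edit changes the number of phrases by at most one}, which is proven by induction on the remaining length (equivalently on $k$), with the short-suffix base cases handled directly. Pinning this overhead to exactly one phrase — ruling out a cascade of successive shortenings — is where the argument has to be carried out with care, and it is the only genuinely nontrivial ingredient of the proof.
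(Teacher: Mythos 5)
Your treatment of (\ref{eq:e1}) and (\ref{eq:e2}) via minimality of the greedy parse among admissible parsings is correct, and is actually more careful than the paper's own proof, which just reads off the greedy parse of $TT\texttt{\$}$ and never spells out the lower bound in (\ref{eq:e1}); your case split on $b_j$ and the length argument forcing $b_{j+1}<2n$ are right.

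For (\ref{eq:e3})--(\ref{eq:e5}) you have put your finger on the right spot, but the hard step is not ``just carry out the induction with care'' --- the reduced claim is false, and the paper's own argument conceals the same hole behind the unjustified assertion that ``each successive phrase will cover at least the next phrase of the original parsing.'' Two problems. First, your budget does not close: you spend a constant number of phrases replacing $Z[j]$ (two or three if $Z[j]$ straddles the edit and the new symbol needs its own phrase) and then allow the suffix re-parse to cost one more, which already exceeds the $+1$ you are allowed. Second, and fatally, the dominance/re-parsing claim fails whenever a later phrase of $TT'$ has a \emph{source straddling the $T/T'$ boundary}; inserting $a$ destroys that source, and there is no later catch-up. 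Concretely, take $T=\texttt{ab}$, $a=\texttt{d}$, $T'=\texttt{cabcx}$. The greedy LZ77 parse of $TT'=\texttt{abcabcx}$ is $\texttt{a}\mid\texttt{b}\mid\texttt{c}\mid\texttt{abcx}$ ($4$ phrases; the last phrase's source is $\texttt{abc}$ at positions $1$--$3$, crossing the $T/T'$ boundary), while the greedy parse of $TaT'=\texttt{abdcabcx}$ is $\texttt{a}\mid\texttt{b}\mid\texttt{d}\mid\texttt{c}\mid\texttt{abc}\mid\texttt{x}$ ($6$ phrases, because $\texttt{abc}$ is not a substring of $\texttt{abdc}$). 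Hence $H^{LZ77}(TaT')=H^{LZ77}(TT')+2$, contradicting (\ref{eq:e4}) as stated, so no amount of care in the deferred induction can rescue the $+1$ bound: a correct argument must account for boundary-crossing sources, and the worst-case overhead of a single edit is $+2$, not $+1$.
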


\begin{proof}
Assume the last phrase of the LZ77 parsing of $T\texttt{\$}$ is \phrase{\$} and that $H^{LZ77}(T\texttt{\$})=n'$ . That means the first $n'-1$ phrases cover the text $T$. Now, if we have the text $TT\texttt{\$}$, we have that the first $n'-1$ phrases are the same as for the parsing of $T$ and the last phrase would be \phrase{$T$\$}, hence inequality for Equation (\ref{eq:e2}) holds. Now, assume the last phrase of the parsing is \phrase{$A$\$} for some $A \neq \varepsilon$. Therefore the $n'$-th phrase of the parsing of $TT$ would be \phrase{$A$$B$} for some $B$ such that $1\le|B|<|T|$, thus this phrase does not completely cover $TT$. An additional phrase covers the remaining portion of the text, thus equality holds for Equation (\ref{eq:e2}). The proof of Equation (\ref{eq:e1}) is similar to the second part of Equation (\ref{eq:e2}).

Now consider Equation (\ref{eq:e3}). Let $Z[p]=XY$ the last phrase covering $T$, where $X$ is a suffix of $T$ and $Y$ is a prefix of $T'$. When adding the new character in the middle, in the worst case the phrase gets converted to $Xa$ (this is the phrase that may increase the total number of phrases). Then the following phrase will cover at least the prefix $Y$, and each successive phrase will cover at least the next phrase of the original parsing. Hence, the number of phrases is at most one more than the original number of phrases. The proofs for Equations (\ref{eq:e4}) and (\ref{eq:e5}) are similar to the one above.
\end{proof}

\ddiego{The lemma states that a repetition of a text, as well as single-character edits on a text, alter the number of phrases very little. This explains why LZ77 is so strong on highly repetitive collections. On the contrary, the} \diego{The} LZ78 parsing \cite{ZL78} described in Section \ref{sec:lz78_selfindexes} is not that powerful. On $T=a^n$ it produces $n'=\frac{\sqrt{n}}{2}+O(1)$ phrases, and this increases to $n'=\frac{\sqrt{2n}}{2}+O(1)$ on $TT$. LZ77, instead, produces $n'=\log_2(n)+O(1)$ phrases on $T$ and just one more phrase on $TT$.
\section{LZ-End}
\label{sec:lzend}
In this section we introduce a new LZ-like parsing. Its main characteristic is a faster random text extraction, while its compression is close to that of LZ77. 
\begin{definition}
\label{def:lzend}
The \emph{LZ-End parsing} of text $T[1,n]$ is a sequence $Z[1,n']$ of
{\em phrases} such that $T = Z[1] Z[2] \ldots Z[n']$, built as follows. 
Assume we
have already processed $T[1,i-1]$ producing the sequence $Z[1,p-1]$. Then, we
find the longest prefix $T[i,i'-1]$ of $T[i,n]$ that is a suffix of 
$Z[1] \ldots Z[q]$ for some $q<p$,
set $Z[p] = T[i,i']$ and continue with $i = i'+1$. 
\end{definition}

\begin{example}
Let $T=\texttt{`alabar\_a\_la\_alabarda\$'}$; the LZ-End parsing is as follows:
\begin{equation*}
\phrase{a}\phrase{l}\phrase{ab}\phrase{ar}\phrase{\_}\phrase{a\_}\phrase{la}\phrase{\_a}\phrase{labard}\phrase{a\$}
\end{equation*}
In this example, when generating the seventh phrase we cannot copy two characters as in Example \ref{ex:lz77}, because \texttt{`la'} does not end in a previous end of phrase. However, \texttt{`l'} does end in an end of phrase, hence we generate the phrase \texttt{`la'}. Notice that the number of phrases increased from 9 to 10 \diego{with respect to the original LZ77 scheme}.
\end{example}

The LZ-End parsing is similar to \djeremy{that}\jeremy{the one proposed} by Fiala and Green \cite{FG89}, in that
theirs restricts where the sources start, while ours restricts where the
sources end. This is the key feature that will allow us extract arbitrary
phrases in constant time per extracted symbol and, as shown \jeremy{in Section \ref{sec:lz_extraction}}\djeremy{later}\dmine{, the key to stay close to LZ77 in compression ratio for very repetitive sequences}.

\subsection{Encoding}
\label{sec:encoding}

The output of an LZ77 compressor is, essentially, the sequence of triplets
$z(p) = (j,\ell,c)$, such that the source of $Z[p] = T[i,i']$ is
$T[j,j+\ell-1]$, $\ell=i'-i$, and $c = T[i']$. This format allows fast decompression of $T$, but not decompressing an individual phrase $Z[p]$ \jeremy{in reasonable time (one must basically decompress the whole text)}.

The LZ-End parsing, although potentially \diego{generates} \ddiego{generating} more phrases than LZ77, permits a
shorter encoding of each, of the form $z(p) = (q,\ell,c)$, such that the
source of $Z[p] = T[i,i']$ is a suffix of $Z[1]\ldots Z[q]$, and the rest is
as above. \diego{This representation is shorter because it stores the phrase identifier rather than a text position.} We introduce a more sophisticated encoding that will, in addition,
allow us \jeremy{to} extract individual phrases in constant time per extracted symbol.

\begin{itemize}
    \item $\chars[1,n']$ (using $n'\lceil \log \sigma \rceil$ bits) 
encodes the trailing characters ($c$ above).
\vspace*{-2mm}
    \item $\source[1,n']$ (using $n'\lceil \log n' \rceil$ bits) 
encodes the phrase identifier where the source ends ($q$ above).
\vspace*{-2mm}
    \item $B[1,n]$ (using $n'\log\frac{n}{n'}+O(n'+\frac{n\log\log n}{\log n})$ bits in compressed form \cite{RRR02}, see Section \ref{sec:bitmaps}) marks the ending positions of the phrases in $T$.
\end{itemize}

Thus we have $z(p) = (q,l,c) = (\source[p],select_1(B,p+1)-select_1(B,p)-1,\chars[p])$. 
We can also know in constant time that phrase $p$ ends at 
$select_1(B,\source[p])$ and that it starts $\ell$ positions before. Finally, 
we can know that \jeremy{the} text position $i$ belongs to phrase $Z[rank_1(B,i-1)+1]$.

\subsection{Extraction Algorithm}
\label{sec:lz_extraction}
The algorithm to extract an arbitrary substring in LZ-End is given in Figure~\ref{alg:extraction}.
The extraction works from right to left. First we compute the last phrase $p$ \ddiego{covering} \diego{intersecting} the substring. If the last character is stored explicitly, i.e., it is an end of phrase (see Line 4),  we output $char[p]$ and recursively extract the remaining substring (line 5). Otherwise we split the substring into two parts. The right one is the intersection of the rightmost phrase covering the substring and the substring \diego{itself}, and is extracted recursively by going to the source of that phrase (line 10). The left part is also extracted recursively (line 13). 

While the algorithm works for extracting any substring, we can prove it \ddiego{is optimal} \diego{takes constant time per extracted symbol} when the substring ends at a phrase.
\begin{figure}[tb]
\renewcommand{\algorithmiccomment}[1]{/*#1*/}
\algsetup{linenodelimiter=,indent=0.8em}
\begin{center}
\begin{tabular}{l}
\begin{minipage}{14cm}
\textbf{Extract}$(start, len)$
\begin{algorithmic}[1]
\IF{$len > 0$} 
   \STATE $end \leftarrow start+len-1$
   \STATE $p \leftarrow rank_1(B,end)$
   \IF{$B[end]=1$}
       \STATE Extract$(start,len-1)$
       \STATE {\bf output} $\chars[p]$
   \ELSE
       \STATE $pos \leftarrow select_1(B,p)+1$
       \IF{$start<pos$}
           \STATE Extract$(start,pos-start)$
	       \STATE $len \leftarrow end-pos+1$
	       \STATE $start \leftarrow pos$
       \ENDIF
       \STATE Extract$(select_1(B,\source[p+1])-select_1(B,p+1)+start+1,len)$
   \ENDIF
\ENDIF
\end{algorithmic}
\end{minipage}
\end{tabular}
\end{center}
\caption[LZ-End extraction algorithm]{%
LZ-End extraction algorithm for $T[start,start+len-1]$.
}
\label{alg:extraction}
\end{figure}

\begin{theorem}
\label{lemma:extraction_lzend}
Function {\em Extract} outputs a text substring $T[start,end]$ ending
at a phrase in time $O(end-start+1)$.
\end{theorem}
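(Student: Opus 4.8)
The plan is to bound the total number of recursive calls that \emph{Extract} makes, and to observe that each call does only $O(1)$ work (two $select_1$ queries, one $rank_1$ query on $B$, one access to $\source$, and at most one character output — all $O(1)$ with the RRR representation of $B$). So it suffices to show that the number of calls is $O(m)$, where $m=end-start+1$.

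First I would record correctness (by induction on $len$): each invocation emits exactly $T[start,end]$, every character once, and the only output statement, line~6, sits inside the branch $B[end]=\texttt{1}$. Classify an invocation as \emph{terminal} if $len>0$ and $B[end]=\texttt{1}$, \emph{internal} if $len>0$ and $B[end]=\texttt{0}$, and \emph{trivial} if $len=0$. Since every emitted character is produced by a distinct terminal invocation and there are $m$ of them, there are exactly $m$ terminal invocations. The remaining task is to show that each terminal invocation triggers only $O(1)$ internal and trivial invocations before control reaches the next terminal invocations.

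Here the defining feature of LZ-End is used: the source of a phrase \emph{ends} at a phrase boundary. Consider a terminal call on $T[start,e]$ with $e=select_1(B,p)$, the last position of phrase $p$; it outputs $\chars[p]$ and recurses on $T[start,e-1]$. If phrase $p$ has length~$1$, then $e-1$ is again a phrase boundary and the child is terminal; otherwise $e-1=i'-1$ is the last position of the copied part $T[i,i'-1]$ of phrase $p$, so the child is internal, with $p-1=rank_1(B,e-1)$ and $pos=i$. That internal call either splits (when $start<i$), producing a left child on $T[start,i-1]$ — which ends at the boundary of phrase $p-1$, hence is terminal — and then, after $start\gets i$, a ``jump'' on the whole copied part $T[i,i'-1]$; or it does not split (when $i\le start\le i'-1$) and jumps on $T[start,i'-1]$ directly. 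In both cases the jumped interval ends at position $i'-1$; because $T[i,i'-1]$ is a suffix of $Z[1]\cdots Z[q]$ with $q=\source[p]<p$, position $i'-1$ maps under the jump to $select_1(B,q)$, the last position of $Z[1]\cdots Z[q]$. Hence the jump target ends at a phrase boundary and the jump child is itself terminal. So every terminal call spawns at most two further terminal calls, with at most one internal and one trivial call in between.

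Combining, the terminal calls form a tree with $m$ nodes, each carrying $O(1)$ internal/trivial calls, so there are $O(m)$ calls in total and the running time is $O(m)=O(end-start+1)$. For completeness I would also argue termination from the lexicographic measure $(len,\ \text{index of the phrase containing }end)$: the branch $B[end]=\texttt{1}$ and all non-jump children strictly decrease $len$, while a pure jump keeps $len$ fixed but strictly decreases the phrase index since $q<p$. The main obstacle is exactly the invariant that a jump performed while extracting an interval whose right endpoint is a phrase boundary produces an interval whose right endpoint is again a phrase boundary; this is what fails for LZ77 (a source can end inside a phrase), which is why that parsing only attains the $O(mH)$ bound.
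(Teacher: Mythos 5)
Your proof is correct, but it takes a genuinely different route from the paper's. The paper argues by direct induction on $len = end - start + 1$: after emitting the last character, it applies the inductive hypothesis to the left-split call at line~10 (whose interval ends at a phrase boundary by construction) and, separately, to the jump call at line~13, after observing via the LZ-End property that the jump again lands on an interval ending at a phrase boundary; the two subproblem lengths sum to at most $len-1$, giving the $O(len)$ bound directly. You instead run a global counting/amortization argument: each invocation costs $O(1)$, there are exactly $m = end-start+1$ terminal invocations (one per emitted character), every internal invocation is the unique child of some terminal one, and each internal invocation spawns only terminal children (at most two) — this last fact being precisely the LZ-End invariant that the jump target ends at a phrase boundary. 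Both proofs hinge on the same crux, but yours makes the recursion tree fully explicit and classifies calls by role, which is arguably more transparent about where the amortization happens and why it breaks for LZ77 (a source can end mid-phrase, so a jump can yield another internal call, chaining up to $H$ times); the paper's induction is terser and keeps the argument local to one level of the recursion. One small point worth tightening in your write-up: a terminal call's direct child can also be trivial (when the terminal's own $len$ is $1$), and the jump call can carry the parent's full $len$ when no split occurs, so it does not strictly decrease $len$ — your lexicographic measure correctly handles this, but the phrase ``at most one internal and one trivial call in between'' is slightly loose since trivial calls are leaves hanging directly off terminals rather than sitting between two terminals.
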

\begin{proof}
If $T[start,end]$ ends at a phrase, then $B[end]=1$. We proceed by induction 
on $len=end-start+1$. The case $len\le 1$ is trivial by inspection.
Otherwise, we output $T[end]$ at line 6 after a recursive call on the same 
phrase and length $len-1$. This time we go to line 8. The current phrase (now
$p+1$) starts at $pos$. If $start<pos$, we carry out a recursive call at line 
10 to handle the segment $T[start,pos-1]$. As this segment ends \diego{at the end of} phrase $p$, 
induction shows that this takes time $O(pos-start+1)$. Now the segment
$T[\max(start,pos),end]$ is contained in $Z[p+1]$ and it finishes one symbol 
before the phrase ends. Thus a copy of it finishes where $Z[\source[p+1]]$ 
ends, so induction applies also to the recursive call at line 13, which
extracts the remaining string from the source instead of from $Z[p+1]$, also
in constant time per extracted symbol.
\end{proof}

We have shown that the algorithm extracts any substring by starting from the end of a phrase. Thus, extracting an arbitrary substring may be more expensive than an end-of-phrase aligned one.
\begin{definition}
\label{def:height}
Let $T=Z[1]Z[2]\ldots Z[n']$ be a LZ-parsing of $T[1,n]$. Then the \emph{height} of the parsing is defined as $H=\max_{1\le i \le n} C[i]$, where $C$ is defined as follows.  Let $Z[i]=T[a,b]$ be a phrase which source is $T[c,d]$, then 
\begin{eqnarray*}
C[k] & = & C[(k-a)+c]+1,\, \forall a \le k < b\\
C[b] & = & 1
\end{eqnarray*}
\end{definition}
Array $C$ counts how many times a character was transitively copied from its original source. This is also the extraction cost of that character. Hence, the value $H$ is the worst-case bound for extracting a single character in the LZ parse.  

\begin{lemma}
\label{lemma:hlzend}
In an LZ-End parsing it holds that $H$ is smaller than the longest phrase, i.e., $H \le \max_{1\le p \le n'} |Z[p]|$.
\end{lemma}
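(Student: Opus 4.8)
The plan is to analyze how the array $C$ from Definition \ref{def:height} propagates through the phrases of an LZ-End parsing, exploiting the defining restriction that every source ends exactly at a phrase boundary. First I would set up notation: let $Z[p] = T[a,b]$ be a phrase whose source is the suffix $T[c,d]$ of $Z[1]\ldots Z[\source[p]]$, so that $d = \mathit{select}_1(B,\source[p])$ is the ending position of phrase $\source[p]$, and in particular $d$ is a phrase boundary with $d < a$. By Definition \ref{def:height}, $C[b] = 1$ and $C[k] = C[(k-a)+c] + 1$ for $a \le k < b$; the positions $(k-a)+c$ for $a \le k < b$ range over $T[c,d']$ where $d' = d - 1 < d \le b - 1 < b$, i.e.\ the source positions that feed $C$ on this phrase all lie strictly to the left of the current phrase and, crucially, all lie strictly before the phrase boundary $d$.

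The key step is to prove by strong induction on the phrase index $p$ that for every position $k$ inside phrase $Z[p]$, $C[k] \le |Z[p]|$; the lemma then follows by taking the maximum over all phrases. For the base cases (the first one or two phrases, which are single characters) this is immediate. For the inductive step, fix phrase $Z[p] = T[a,b]$ with source $T[c,d]$. For $k = b$ we have $C[b] = 1 \le |Z[p]|$. For $a \le k < b$, the position $j := (k-a)+c$ satisfies $c \le j \le d-1 < d$, so $j$ is a position belonging to some phrase $Z[p']$ with $p' \le \source[p] < p$; moreover since $j \le d-1$ and $d$ is the right endpoint of $Z[\source[p]]$, the position $j$ is not the last character of its phrase unless $p' < \source[p]$. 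Here I expect the main obstacle: I must control $C[j]$ sharply enough that adding $1$ does not overshoot $|Z[p]|$. The clean way is to track, alongside $C[k]$, the quantity "distance from $k$ to the end of its own phrase": define $r_p(k) = b - k$ for $a \le k \le b$, i.e.\ how many symbols remain in $Z[p]$ from position $k$ onward (inclusive of $k$ it would be $b-k+1$, but I will pick whichever convention makes the recursion tight). Because the source is a \emph{suffix} of $Z[1]\ldots Z[\source[p]]$, the copy is tail-aligned: position $b-1$ (the last copied position, the symbol $T[i']$ being stored explicitly so actually $b-1$ is the last \emph{copied} one) maps to $d = \mathit{select}_1(B,\source[p])$, the very end of phrase $\source[p]$. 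So each copied position $k$ in $Z[p]$ maps to a position $j$ in $Z[\source[p]]$ with the \emph{same} remaining-distance-to-phrase-end, $b - 1 - k = d - j$. Hence the trajectory of repeated copying never increases the remaining-distance-to-end, and every time we follow one copy step we strictly decrease the phrase index; the number of copy steps starting from $k$ is therefore at most the number of phrases whose right portion is involved, and this number is bounded by the remaining length $b-k \le |Z[p]|$.

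To finish I would make this precise as follows: claim that $C[k] \le (b-1-k) + 1 = b - k$ for $a \le k \le b-1$, and $C[b]=1$. I prove the claim by induction on the phrase index. If $k$ is copied, it maps to $j$ with $d - j = b - 1 - k$, and $j$ lies in phrase $Z[\source[p]]$ (or a phrase to its left, if the suffix source spans several phrases — in that case $j$ is even closer to a phrase end, which only helps). If $j$ is the last position of its phrase, $C[j] = 1 = (d-j)+1$ trivially with $d-j=0$ forcing $b-1-k=0$, i.e.\ $k=b-1$, and $C[b-1] = C[d]+1 = 1+1 = 2 \le b-(b-1)+1$... here I'd need to double-check the boundary arithmetic and adjust the constant by one if necessary; this off-by-one bookkeeping is the fiddly part but not conceptually hard. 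Otherwise $j$ is not at a phrase end, so by the inductive hypothesis applied to the phrase containing $j$, $C[j] \le (\text{distance from } j \text{ to end of its phrase}) = d - j = b-1-k$, whence $C[k] = C[j] + 1 \le b-1-k+1 = b-k \le |Z[p]|$. Taking the maximum over all positions and all phrases gives $H = \max_k C[k] \le \max_{1\le p \le n'} |Z[p]|$, as claimed. The essential point throughout — and what fails for general LZ77 — is that sources end at phrase boundaries, so following a copy chain strictly advances us toward a previously completed phrase while preserving (not increasing) the offset from the phrase's end.
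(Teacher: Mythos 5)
Your proposal is correct in its essential approach and matches the paper's: both arguments hinge on the observation that an LZ-End source ends at a previous phrase boundary $d$, so position $b-1$ of $Z[p]=T[a,b]$ maps to $d$ where $C[d]=1$, which caps the growth of $C$ within each phrase (the paper packages this as the local inequality $C[i]\le C[i+1]+1$ plus telescoping from the phrase end, you as a direct strong induction on the phrase index with the invariant ``distance-to-phrase-end plus one''). Your stated invariant $C[k]\le b-k$ is indeed off by one as you suspect (it already fails at $k=b-1$, where $C[b-1]=2$); with the corrected invariant $C[k]\le b-k+1$ the inductive step closes cleanly, since if $k$ maps to $j$ in a phrase ending at $b'\le d$ then $C[k]=C[j]+1\le(b'-j+1)+1\le(d-j)+2=(b-1-k)+2=b-k+1\le|Z[p]|$.
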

\begin{proof}
We will prove by induction that $\forall 1\le i < n,\, C[i]\le C[i+1]+1$. From this inequality the lemma follows. For all positions $i_p$ where a phrase $p$ ends, it holds by definition that $C[i_p]=1$. Thus, for all positions $i$ in the phrase $p$, we have $C[i]\le C[i_p]+i_p-i \le |Z[p]|$. \\

The first phrase of any LZ-End parsing is $T[0]$, and the second is either $T[1]$ or $T[1]T[2]$. In the first case, we have $C[1]C[2]=1,1$, in the latter $C[1]C[2]C[3]=1,2,1$. In both cases the property holds. Now, suppose the inequality is valid up to position $i_p$ where the phrase $Z[p]$ ends. Let $i_{p+1}$ be the position where the phrase $Z[p+1]=T[a,b]$ ends (so $a=i_p+1$ and $b=i_{p+1}$) and $T[c,d]$ its source. For all $i_p+1\le i < i_{p+1}$, $C[i] = C[(i-a)+c]+1$, and since $d\le i_p$, the inequality holds by inductive hypothesis for $i_p+1\le i \le i_{p+1}-2$. By definition of the LZ-End parsing the source of a phrase ends in a previous end of phrase, hence $C[i_{p+1}-1]=C[d]+1=2 \le 1+1=C[i_{p+1}]+1$. For position $i_{p+1}$ (end of phrase) the inequality trivially holds as it has by definition the least possible value.
\end{proof}

The above lemma does not hold for LZ77. Moreover, the LZ-End parsing yields a better extraction complexity.

\begin{lemma}
\label{lemma:cost_lzend}
Extracting a substring of length $\ell$ from an LZ-End parsing takes time $O(\ell+H)$.
\end{lemma}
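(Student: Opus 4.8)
The plan is to reduce everything to Theorem~\ref{lemma:extraction_lzend}, which already gives the bound when the extracted substring ends at a phrase boundary. Write $S=T[start,end]$ with $end=start+\ell-1$, and let $p=rank_1(B,end)$. If $B[end]=1$ then $S$ ends at phrase $p$ and Theorem~\ref{lemma:extraction_lzend} immediately gives time $O(\ell)\subseteq O(\ell+H)$. So the interesting case is $B[end]=0$: then $end$ lies strictly inside phrase $p+1=T[pos,e]$, where $pos=select_1(B,p)+1$ and $e=select_1(B,p+1)$, and from $B[end]=0$ one gets $pos\le end<e$, so this phrase has length at least $2$ and hence a non-empty source, which ends at phrase $\source[p+1]$.

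First I would unroll the recursion of Figure~\ref{alg:extraction} and isolate its ``main chain''. Set $S_0=S$. As long as the current argument $S_i=T[s_i,e_i]$ does not end at a phrase (i.e.\ $B[e_i]=0$), the call computes $p_i=rank_1(B,e_i)$ and $pos_i=select_1(B,p_i)+1$, possibly splits off at line~10 the left block $T[s_i,pos_i-1]$ (only when $s_i<pos_i$) --- which \emph{does} end at phrase $p_i$ --- and then recurses at line~13 on $S_{i+1}=T[s_{i+1},e_{i+1}]$, namely the image of $T[\max(s_i,pos_i),e_i]$ under the source map of phrase $p_i+1$. A direct check of the line~13 arithmetic gives $e_{i+1}=e_i+select_1(B,\source[p_i+1])-select_1(B,p_i+1)+1$ and $|S_{i+1}|=e_i-\max(s_i,pos_i)+1\le|S_i|$.

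The key point is that going to the source strips off exactly one level of copying at the right endpoint of the chain: since $e_i$ is a copied (non-trailing) position of phrase $p_i+1$, Definition~\ref{def:height} yields $C[e_i]=C[e_{i+1}]+1$. Thus $C[e_0]>C[e_1]>\cdots\ge 1$ strictly decreases, and a position has $C$-value $1$ exactly when it is a phrase end; hence after $k\le C[e_0]-1\le H-1$ steps the chain reaches $S_k$ with $B[e_k]=1$, and Theorem~\ref{lemma:extraction_lzend} handles that last call in $O(|S_k|)\subseteq O(\ell)$ time. The remaining cost is: (i) the $k\le H$ chain steps, each doing only $O(1)$ work (a few $rank$/$select$ calls), for $O(H)$ in total; and (ii) the left blocks split off at line~10, whose lengths telescope, $\sum_i(pos_i-s_i)=\sum_i(|S_i|-|S_{i+1}|)=|S_0|-|S_k|\le\ell$, so by Theorem~\ref{lemma:extraction_lzend} their combined extraction cost is $O(\ell)$. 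Adding the three contributions gives $O(\ell+H)$, as claimed.

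The part needing the most care is the bookkeeping: pinning down the line~13 index arithmetic so that $C[e_i]=C[e_{i+1}]+1$ drops out cleanly from Definition~\ref{def:height}, and verifying the degenerate cases (lengths $\ell\le 1$, phrases of length $1$, and steps with $s_i\ge pos_i$ where no left block is produced and the corresponding telescoping term vanishes) so that the argument stays valid throughout.
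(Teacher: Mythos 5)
Your proof is correct and follows essentially the same strategy as the paper's: invoke Theorem~\ref{lemma:extraction_lzend} for the phrase-aligned pieces, observe that the chain of line~13 calls decreases $C[\cdot]$ at the right endpoint by exactly one per step and so has length at most $H$, and let the left blocks split off at line~10 telescope to total length at most $\ell$. You have simply made explicit (via the $C[e_i]=C[e_{i+1}]+1$ bookkeeping and the telescoping sum) the amortization that the paper's proof states more tersely.
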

\begin{proof}
Theorem \ref{lemma:extraction_lzend} already shows that the cost to extract a substring ending at a phrase boundary is \diego{constant per extracted symbol}. The only piece of the code in Figure \ref{alg:extraction} that does not amortize in this sense is line 13, where we recursively unroll the last phrase, removing the last character each time, until hitting the end of the substring to extract. By definition of $H$, this line cannot be executed more than $H$ times. So the total time is $O(\ell+H)$.
\end{proof}

\begin{remark}
On a text coming from an ergodic Markov source of entropy $h$, the expected value of the longest phrase is $O(\frac{\log n}{h})$. However, as we are dealing with highly repetitive texts this expected length does not hold.
\end{remark}

\begin{remark}
Algorithm Extract (Figure \ref{alg:extraction}) also works on parsing LZ77, but in this case the best theoretical bound we can prove for extracting a substring of length $\ell$ is $O(\ell H)$ . However, the results in Section \ref{sec:lzend_ext_res} suggest that on average it may be much better.
\end{remark}



\section{Compression Performance}
We study now the compression performance of LZ-End, first with respect to the empirical $k$-th order entropy and then on repetitive texts.
\subsection{Coarse Optimality}
We prove that LZ-End is coarsely optimal. The main tool is the following lemma.
\label{sec:coarse_op}
\begin{lemma} \label{lem:unique}
All the phrases generated by an LZ-End parse are different.
\end{lemma}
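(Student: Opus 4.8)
The plan is to argue by contradiction, mimicking the classical argument for LZ77 but accounting for the extra rigidity of LZ-End (sources must end at a phrase boundary). Suppose two phrases coincide, say $Z[p] = Z[p']$ with $p < p'$. Both phrases have the form ``some string $W$ followed by a trailing character $c$'', so $Z[p] = Wc = Z[p']$ where $W = T[i,i'-1]$ for the earlier occurrence and $W = T[j,j'-1]$ for the later one, with the same trailing character $c$ and $|W| = i'-i = j'-j$. The key point to extract is that $Z[p]$ occurs, as a whole, inside $T[1,j-1]$: indeed $Z[p]$ sits entirely within $T[1,i'] \subseteq T[1,j-1]$ since $i' < j$ (phrase $p$ ends strictly before phrase $p'$ starts, as $p<p'$).

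Next I would use the greedy/maximality property in the definition of LZ-End (Definition~\ref{def:lzend}): when the parser created phrase $Z[p'] = T[j,j']$, it first found the \emph{longest} prefix $T[j,j'-1]$ of $T[j,n]$ that is a suffix of $Z[1]\ldots Z[q]$ for some $q < p'$. The obstacle here — and what I expect to be the main subtlety — is that it is not enough for $W$ to occur somewhere in $T[1,j-1]$; it must occur \emph{ending at a phrase boundary}. This is exactly where we exploit that $Z[p] = Wc$ is itself a phrase: the string $W$ is a suffix of $Z[1]\ldots Z[p-1]$ only if... wait, more carefully: $W = T[i,i'-1]$ is a proper prefix of the phrase $Z[p]=T[i,i']$, and by the LZ-End definition $W$ itself was chosen as a suffix of $Z[1]\ldots Z[q]$ for some $q < p$. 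Hence $Wc = Z[p] = T[i,i']$ ends at the boundary of phrase $p$, so $Wc$ is a suffix of $Z[1]\ldots Z[p]$, with $p < p'$. Therefore, at the time phrase $Z[p']$ was formed, the parser could have taken the prefix $T[j,j'] = Wc$ of $T[j,n]$ — of length $|W|+1$ — as a suffix of $Z[1]\ldots Z[p]$, which is strictly longer than the prefix $T[j,j'-1] = W$ of length $|W|$ that it actually matched before appending the trailing character. This contradicts the maximality of the match in the construction of $Z[p']$.

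I would write this up by first fixing notation for the two equal phrases and their positions, then stating and using the observation ``$Z[p]$ is a suffix of $Z[1]\ldots Z[p]$, and $Z[p] = Wc$ with $W$ a proper prefix'' to locate a legal longer match available to the parser at step $p'$, and finally invoking Definition~\ref{def:lzend} to derive the contradiction. A minor point to handle with care is the boundary/degenerate case $W = \varepsilon$ (phrases of length one): if $Z[p] = Z[p'] = c$ is a single character, then $c$ already occurs as the end of phrase $Z[p]$ before position $j$, so at step $p'$ the parser would have matched a nonempty prefix (the single character $c$, a suffix of $Z[1]\ldots Z[p]$) and produced a phrase of length $\ge 2$, again contradicting that $Z[p'] = c$ has length $1$ — unless $p'$ is the last phrase, but the special terminator $\texttt{\$}$ guarantees all but possibly the last phrase are distinct anyway, and the terminator character is unique so it cannot coincide with an earlier phrase. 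This last bookkeeping is routine and I would dispatch it in a sentence.
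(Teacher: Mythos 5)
Your proposal is correct and follows essentially the same argument as the paper's proof: since $Z[p]$ is trivially a suffix of $Z[1]\ldots Z[p]$ with $p<p'$, it is a legal LZ-End source at step $p'$, and taking it would have yielded a phrase strictly longer than $Z[p']$, contradicting maximality in Definition~\ref{def:lzend}. The brief detour reasoning about how $W$ was chosen as a source for $Z[p]$ is unnecessary (you only need that $Z[p]$ ends at its own phrase boundary), and the $W=\varepsilon$ case is already subsumed by the general argument, but neither affects correctness.
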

\begin{proof}
Assume by contradiction $Z[p] = Z[p']$ for some $p < p'$. When $Z[p']$ was generated, we could have taken $Z[p]$ as the source, yielding phrase $Z[p']c$, longer than $Z[p']$. This is clearly a valid source as $Z[p]$ is a suffix of $Z[1] \ldots Z[p]$. So this is not an LZ-End parse.
\end{proof}

\begin{lemma}[\cite{LZ76}]  \label{lem:folk}
Any parsing of $T[1,n]$ into $n'$ distinct phrases satisfies
$n' = O\left(\frac{n}{\log_{\sigma}n}\right)$,
where $\sigma$ is the alphabet size of $T$.
\end{lemma}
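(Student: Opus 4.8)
The plan is to use the only hypothesis at hand — that the $n'$ phrases are pairwise distinct — together with the elementary counting fact that there are exactly $\sigma^{j}$ distinct strings of length $j$ over an alphabet of size $\sigma$.

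First I would observe that, among \emph{any} collection of $n'$ pairwise distinct (nonempty) strings over $\Sigma$, the one of minimum total length is obtained by greedily taking all $\sigma$ strings of length $1$, then all $\sigma^{2}$ strings of length $2$, and so on, since at most $\sigma^{j}$ of the phrases can have length exactly $j$. Hence, writing $k$ for the least integer with $\sigma+\sigma^{2}+\cdots+\sigma^{k}\ge n'$,
\[
 n=\sum_{p=1}^{n'}|Z[p]|\;\ge\;\sum_{j=1}^{k-1} j\,\sigma^{j}.
\]
A routine geometric summation gives $\sum_{j=1}^{k} j\sigma^{j}=\Theta(k\,\sigma^{k})$ (with the hidden constant depending on $\sigma$; for the lower bound one term already suffices, as $\sum_{j=1}^{k-1}j\sigma^{j}\ge (k-1)\sigma^{k-1}$), while the defining inequality for $k$ forces $\sigma^{k}=\Theta(n')$, so $k=\log_{\sigma}n'+O(1)=\Theta(\log_{\sigma}n')$. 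Combining these estimates yields $n=\Omega(n'\log_{\sigma}n')$, i.e. $n'=O(n/\log_{\sigma}n')$.

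It then remains to replace $\log_{\sigma}n'$ by $\log_{\sigma}n$, which I would do with a short case split. If $n'\le\sqrt n$ then $n'=O(\sqrt n)=O(n/\log_{\sigma}n)$, because $\log_{\sigma}n\le\log_{2}n\le\sqrt n$ for $\sigma\ge 2$. If instead $n'>\sqrt n$, then $\log_{\sigma}n'>\tfrac12\log_{\sigma}n$, so the bound $n'=O(n/\log_{\sigma}n')$ immediately gives $n'=O(n/\log_{\sigma}n)$. Either way the claim follows. I do not expect a genuine obstacle here: the only fiddly points are the book-keeping when $n'$ is not exactly of the form $\sigma+\cdots+\sigma^{k}$ (absorbed into the floor implicit in the choice of $k$), the harmless possibility of a single empty phrase, and the degenerate alphabet $\sigma=1$ (for which each length admits only one string and the same greedy packing gives the even stronger $n=\Omega(n'^{2})$); none of these affects the argument.
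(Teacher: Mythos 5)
The paper states this as a classical fact from Lempel and Ziv (1976) and supplies no proof of its own, so there is nothing internal to compare against; your extremal counting argument (there are at most $\sigma^{j}$ distinct strings of length $j$, so $n'$ pairwise distinct phrases must have total length at least that of the greedy length-$1,2,\ldots$ packing, yielding $n=\Omega(n'\log_{\sigma}n')$, and then the $\sqrt{n}$ case split replaces $\log_\sigma n'$ by $\log_\sigma n$) is exactly the standard argument behind that citation, and every step is correct. The one point worth flagging, which you yourself note, is that your hidden constant is of order $\sigma$ (coming from the slack $\sigma^{k-1}<n'<2\sigma^{k}$ and the one-term lower bound on $\sum j\sigma^{j}$), whereas the sharp form of the Lempel--Ziv bound is $n'\le(1+o(1))\,n/\log_{\sigma}n$ with a $\sigma$-independent leading constant; this looseness is harmless for the thesis, which later applies the lemma (in the coarse-optimality argument) with $k$ and $\sigma$ explicitly treated as constants.
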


\begin{lemma}[\cite{KM99}] \label{lem:entropy}
For any text $T[1,n]$ parsed into $n'$ different phrases, it holds
$ n' \log n' \le n H_k(T) + n' \log \frac{n}{n'} + \Theta(n'(1+k\log \sigma))$,
for any $k$.
\end{lemma}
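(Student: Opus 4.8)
The plan is to read $nH_k(T)$ as the ideal code length of $T$ under the $k$-th order Markov model with the empirical transition probabilities, to split this code length phrase by phrase, and then to apply a Kraft-type counting bound inside families of phrases that share \emph{both} their preceding $k$-context and their length. For a position $j>k$ write $S_j=T[j-k,j-1]$, let $n_S$ be the number of positions with context $S$ and $n_{c,S}$ the number of those also carrying character $c$; unrolling the definition of $H_k$ gives $nH_k(T)=\sum_{j>k}\log\frac{n_{S_j}}{n_{T[j],S_j}}$, a sum of nonnegative terms. For a phrase $Z[i]=T[a_i,b_i]$ with $a_i>k$ set $\mathrm{cost}(Z[i])=\sum_{j=a_i}^{b_i}\log\frac{n_{S_j}}{n_{T[j],S_j}}$; since the phrases tile $T$ and the discarded terms (from the at most $k$ phrases starting among the first $k$ positions) are nonnegative, $nH_k(T)\ge\sum_{i:\,a_i>k}\mathrm{cost}(Z[i])$, and $M:=|\{i:\,a_i>k\}|\ge n'-k$.

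Next I would observe that $2^{-\mathrm{cost}(Z[i])}$ is exactly the probability that the $k$-th order Markov chain, started from context $S_{a_i}$, emits the string $Z[i]$: the context sequence encountered along the phrase is precisely the one the chain follows, so a phrase visiting many contexts internally is harmless. Group the $M$ phrases by the pair (starting context $S$, length $\ell$) and let $m_{S,\ell}$ be the size of group $(S,\ell)$. Inside a group the phrases are distinct strings of one fixed length $\ell$, while for fixed $(S,\ell)$ the emission probabilities of all length-$\ell$ strings sum to $1$; hence at most $2^{t}$ distinct strings in the group have $\mathrm{cost}\le t$, which forces the $i$-th smallest cost in the group to be at least $\log i$. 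Summing and applying Stirling, $\sum_{i:\,a_i>k}\mathrm{cost}(Z[i])\ge\sum_{S,\ell}\sum_{i=1}^{m_{S,\ell}}\log i=\sum_{S,\ell}\log(m_{S,\ell}!)\ge\sum_{S,\ell}m_{S,\ell}\log m_{S,\ell}-M\log e$.

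It then remains to bound $\sum_{S,\ell}m_{S,\ell}\log m_{S,\ell}$ from below knowing only $\sum_{S,\ell}m_{S,\ell}=M$, $\sum_{S,\ell}\ell\,m_{S,\ell}\le n$, and that $S$ ranges over at most $\sigma^{k}$ values. Since $x\log x$ is convex and superadditive, the sum is minimized by spreading the mass as evenly as possible over as many $(S,\ell)$ pairs as the length budget permits; using the $\sigma^{k}$ contexts with lengths $1,\dots,L$ costs about $M(L+1)/2$ of that budget, so $L\approx 2n/M$ lengths per context are affordable, giving a minimum of about $M\log\frac{M}{\sigma^{k}\cdot 2n/M}=M\log M-M\log\frac{n}{M}-Mk\log\sigma-O(M)$. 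Plugging this back yields $nH_k(T)\ge M\log M-M\log\frac{n}{M}-\Theta(M(1+k\log\sigma))$, and a routine conversion from $M$ to $n'$ finishes: the claimed inequality is immediate when $n'\le\sqrt n$, since then $n'\log\frac{n}{n'}\ge n'\log n'$ already, while for $n'>\sqrt n$ one has $\log n<2\log n'=O(n')$, so replacing $M\ge n'-k$ by $n'$ costs only $O(k\log n)=O(n'k\log\sigma)$. The case $k=0$ is the same argument with a single empty context.

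I expect the real obstacle to be this last extremal estimate of $\sum_{S,\ell}m_{S,\ell}\log m_{S,\ell}$: one must argue rigorously that the even-spread-over-maximum-support configuration is genuinely the minimizer under \emph{both} the cardinality and the length constraints, and that the number of usable length values is controlled by $n/M$. The Kraft step is subtle only in that one must group by context and length \emph{simultaneously} — it is exactly this pairing that makes each group both a set of distinct strings (so the counting applies) and a sub-collection of disjoint events in one probability space (so Kraft applies); the rest of the bookkeeping (first $k$ positions, the degenerate cases $\sigma=1$ or $k$ comparable to $n'$) is routine.
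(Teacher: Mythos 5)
The paper states this lemma as an import from Kosaraju and Manzini and offers no proof of its own, so there is no internal argument to compare against; your reconstruction follows the standard PWZ/Kosaraju--Manzini route and is correct. The decomposition of $nH_k(T)$ into nonnegative per-position code lengths, the reading of $2^{-\mathrm{cost}(Z[i])}$ as a $k$-th order Markov emission probability, the grouping by the pair (starting context, length) so that distinctness and Kraft can both be invoked, and the resulting $\log(m_{S,\ell}!)$ lower bound per group are all sound. The extremal step you single out can be closed cleanly, with no need to argue that an even-spread configuration is the minimizer: writing $p_{S,\ell}=m_{S,\ell}/M$, one has $\sum m_{S,\ell}\log m_{S,\ell}=M\log M-M\,H(p)$, and by subadditivity $H(p)\le H(S)+H(\ell)\le k\log\sigma+H(\ell)$; since $\ell\ge 1$ is integer-valued with $\mathbb{E}_p[\ell]\le n/M$, the maximum-entropy law on $\{1,2,\dots\}$ under a mean constraint is geometric, which yields $H(\ell)\le\log\frac{n}{M}+\log e$. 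This gives exactly your $\sum m_{S,\ell}\log m_{S,\ell}\ge M\log M-M\log\frac{n}{M}-Mk\log\sigma-O(M)$ without any combinatorial optimization over supports. Your conversion from $M\ge n'-k$ to $n'$, split at $n'=\sqrt n$, and the dismissal of the degenerate cases $\sigma=1$ and $n'\le k$, are indeed routine.
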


\begin{lemma} \label{lemma:phrase_bounds}
For any text $T[1,n]$ parsed into $n'$ different phrases, using LZ77 or LZ-End, it holds
$n'\log n \le nH_k(T)+o(n \log \sigma)$ for any $k = o(\log_\sigma n)$.
\end{lemma}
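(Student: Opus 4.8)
The plan is to start from Lemma~\ref{lem:entropy}, which already gives the bound $n'\log n' \le nH_k(T) + n'\log\frac{n}{n'} + \Theta(n'(1+k\log\sigma))$ for any parsing into $n'$ distinct phrases (and both LZ77 by Definition~\ref{def:lz77} and LZ-End by Lemma~\ref{lem:unique} produce distinct phrases). So what I need is to convert the bound on $n'\log n'$ into a bound on $n'\log n$, paying only an $o(n\log\sigma)$ surcharge. The gap is exactly $n'\log\frac{n}{n'}$: indeed $n'\log n = n'\log n' + n'\log\frac{n}{n'}$, so adding $n'\log\frac{n}{n'}$ to the right-hand side of Lemma~\ref{lem:entropy} I get
\[
n'\log n \le nH_k(T) + 2n'\log\frac{n}{n'} + \Theta(n'(1+k\log\sigma)).
\]
Hence it suffices to show that both $n'\log\frac{n}{n'}$ and $n'(1+k\log\sigma)$ are $o(n\log\sigma)$ under the hypothesis $k = o(\log_\sigma n)$.

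The key quantitative input is Lemma~\ref{lem:folk}: any parsing into $n'$ distinct phrases has $n' = O(n/\log_\sigma n) = O\!\left(\frac{n\log\sigma}{\log n}\right)$. First I would handle the term $n'(1+k\log\sigma)$: since $n' = O(n\log\sigma/\log n)$ and $k\log\sigma = o(\log n)$ by hypothesis, we get $n'\cdot k\log\sigma = o(n\log\sigma)$, and similarly $n' = o(n\log\sigma)$ (in fact $n' = O(n\log\sigma/\log n)$), so this whole term is $o(n\log\sigma)$. For the term $n'\log\frac{n}{n'}$, the function $x\log(n/x)$ is increasing for $x \le n/e$, and $n' = O(n\log\sigma/\log n) \le n/e$ for $n$ large; plugging in $n' = O(n\log\sigma/\log n)$ gives $n'\log\frac{n}{n'} = O\!\left(\frac{n\log\sigma}{\log n}\cdot\log\frac{\log n}{\log\sigma}\right) = O\!\left(\frac{n\log\sigma\,\log\log n}{\log n}\right) = o(n\log\sigma)$. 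Combining the two estimates yields the claim.

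The routine care needed is just in the monotonicity argument for $x\mapsto x\log(n/x)$ — one must check $n'$ is below the peak at $n/e$, which follows from Lemma~\ref{lem:folk} for large $n$ — and in keeping the asymptotic notation honest when $\sigma$ is allowed to grow with $n$. I do not expect a genuine obstacle here; the only mild subtlety is that the bound is stated for "any $k = o(\log_\sigma n)$," so the $o(\cdot)$ in the conclusion should be read as holding for each such $k$ with the hidden constants depending on the rate at which $k/\log_\sigma n \to 0$. This mirrors exactly the way $nH_k$ itself is only a meaningful lower bound when $n \gg k$ (as noted after the definition of $H_k$), so no extra hypotheses are required.
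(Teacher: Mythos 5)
Your proof is correct and uses exactly the ingredients the paper relies on: Lemma~\ref{lem:unique} (or the definition of LZ77) to ensure distinctness, Lemma~\ref{lem:entropy} to bound $n'\log n'$, the decomposition $n'\log n = n'\log n' + n'\log\frac{n}{n'}$, and Lemma~\ref{lem:folk} together with the monotonicity of $x\mapsto x\log(n/x)$ to absorb the remainder into $o(n\log\sigma)$. The paper itself dispenses with this lemma by citing Arroyuelo and Navarro \cite{AN_LZ}, but the calculation you spell out is precisely the one the paper carries out a few lines later in the coarse-optimality theorem, so you have simply made explicit the argument the citation stands in for.
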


\begin{proof}
Arroyuelo and Navarro \cite{AN_LZ} prove that the property holds for any LZ parsing for which Lemmas \ref{lem:folk} and \ref{lem:entropy} hold. In particular it holds for LZ77 and for our proposal, the LZ-End.
\end{proof}

\begin{theorem}
The LZ-End compression is coarsely optimal.
\end{theorem}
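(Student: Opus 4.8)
The plan is to follow the standard route for establishing coarse optimality of a dictionary parsing, as already set up by Lemmas~\ref{lem:unique}--\ref{lemma:phrase_bounds}. Recall (Definition from \cite{KM99}) that coarse optimality means that for every $k$ there is a function $f_k$ with $\lim_{n\to\infty}f_k(n)=0$ such that the compression ratio $\rho(T)$ satisfies $\rho(T)\le H_k(T)+f_k(|T|)$ for every text $T$. Here the compression ratio is the number of output bits divided by $n\log\sigma$ (the size of the plain text in bits). So the task reduces to: (1) write down how many bits the LZ-End encoder emits in terms of $n'$, and (2) bound that quantity by $nH_k(T)+o(n\log\sigma)$, using the phrase-count bound already proved.

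First I would account for the output size. By the encoding described in Section~\ref{sec:encoding}, an LZ-End compressor outputs, for each of the $n'$ phrases, a source identifier (at most $\lceil\log n'\rceil$ bits), a length (encodable in $O(\log n)$ bits, e.g.\ with $\delta$-codes, or $O(\log\frac{n}{n'})$ amortized using the bitmap $B$), and a trailing character ($\lceil\log\sigma\rceil$ bits). Thus the total number of bits is $n'(\log n' + \log\sigma + O(\log\tfrac{n}{n'})) = n'\log n' + n'\log\sigma + O(n'\log\tfrac{n}{n'})$. The crucial input is Lemma~\ref{lem:unique}: all LZ-End phrases are distinct, so Lemmas~\ref{lem:folk}, \ref{lem:entropy} and hence \ref{lemma:phrase_bounds} apply to the LZ-End parsing.

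Next I would assemble the bound. By Lemma~\ref{lemma:phrase_bounds}, $n'\log n \le nH_k(T)+o(n\log\sigma)$ for any $k=o(\log_\sigma n)$, and since $n'\le n$ we have $n'\log n' \le n'\log n \le nH_k(T)+o(n\log\sigma)$. The remaining terms are lower-order: by Lemma~\ref{lem:folk}, $n' = O(n/\log_\sigma n)$, so $n'\log\sigma = O(n\log\sigma/\log_\sigma n) = o(n\log\sigma)$, and likewise $n'\log\frac{n}{n'} = O\!\left(\frac{n\log\sigma}{\log_\sigma n}\cdot\log\log_\sigma n\right)=o(n\log\sigma)$. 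Adding up, the total output is $nH_k(T)+o(n\log\sigma)$ bits, so dividing by $n\log\sigma$ gives $\rho(T)\le H_k(T)+o(1)$, where the $o(1)$ term depends only on $n$ (through the $o(n\log\sigma)/(n\log\sigma)$ quantities, with the dependence on $k$ hidden in the $\Theta(n'(1+k\log\sigma))$ of Lemma~\ref{lem:entropy}, which is still $o(n\log\sigma)$ for $k=o(\log_\sigma n)$). This is exactly the definition of coarse optimality. For fixed $k$ not satisfying $k=o(\log_\sigma n)$ — i.e.\ when $\sigma$ and $k$ are constants but $n\to\infty$ — the hypothesis $k=o(\log_\sigma n)$ is automatically met, so there is nothing extra to check.

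The only mild obstacle is bookkeeping rather than mathematics: one must make sure the length field is encoded with a universal code so that its contribution is genuinely $o(n\log\sigma)$ (a naive fixed $\lceil\log n\rceil$-bit length field gives $n'\log n$, which by Lemma~\ref{lemma:phrase_bounds} is still $\le nH_k(T)+o(n\log\sigma)$, so even the crude choice works), and that the "$o(n\log\sigma)$" is uniform in $T$ for each fixed $k$, which it is because every estimate above depends on $T$ only through $n$ and $n'$ and we bounded $n'$ in terms of $n$. Hence the theorem follows directly by combining the output-size count with Lemma~\ref{lemma:phrase_bounds}.
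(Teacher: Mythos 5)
Your proof is correct and follows essentially the same route as the paper's: account for the $n'(\log n' + \log\sigma + \log\tfrac{n}{n'})$ output bits, use Lemma~\ref{lem:unique} to ensure the phrase-count lemmas apply, and conclude via the entropy bounds. The only cosmetic difference is that you shortcut through Lemma~\ref{lemma:phrase_bounds}, whereas the paper re-derives the bound directly from Lemmas~\ref{lem:folk} and \ref{lem:entropy}; both are equivalent since Lemma~\ref{lemma:phrase_bounds} is itself a repackaging of those two.
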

\begin{proof}
The proof is based on the one by Kosaraju and Manzini \cite{KM99} for LZ77.
Here we consider in addition our particular encoding (the result holds for 
triplets $(q,\ell,c)$ as well).
The size of the parsing in bits is
\begin{eqnarray*}
\textit{LZ-End}(T) & = & n'\lceil \log \sigma \rceil + n' \lceil \log n'
\rceil + n'\log\frac{n}{n'} + O\left(n' + \frac{n\log\log n}{\log n}\right) \\
& = & n'\log n + O\left(n'\log\sigma + \frac{n\log\log n}{\log n}\right).
\end{eqnarray*}
Thus from Lemmas~\ref{lem:unique} and \ref{lem:entropy} we have
\[
\textit{LZ-End}(T) ~~\le~~ nH_k(T) + 2n' \log \frac{n}{n'} + 
		O\left(n'(k+1)\log\sigma + \frac{n\log\log n}{\log n}\right).
\]
Now, by means of Lemma~\ref{lem:folk} and since $n'\log\frac{n}{n'}$ is
increasing in $n'$, we get
\begin{eqnarray*}
\textit{LZ-End}(T) &\le& nH_k(T) + 
             O\left (\frac{n\log\sigma \log \log n}{\log n}\right ) + 
             O\left (\frac{n(k+1)\log^2 \sigma}{\log n} + \frac{n\log\log n}{\log n}\right)\\
&=&nH_k(T) + O\left (\frac{n\log\sigma (\log \log n+(k+1)\log \sigma)}{\log n}\right ).
\end{eqnarray*}
Thus, diving by $n$ and taking $k$ and $\sigma$ as constants, we get that the 
compression ratio is
\[
\rho(T) ~~\le~~ H_k(T) + O\left (\frac{n\log \log n}{\log n} \right ).
\]
\end{proof}

\subsection{Performance on Repetitive Texts}
We have not found a worst-case bound for the competitiveness of LZ-End compared to LZ77. 
However, we show, on the negative side, a sequence that produces almost twice the number of phrases when parsed with LZ-End, so LZ-End is at best 2-competitive with LZ77. On the positive side, we show that LZ-End satisfies some of the properties of Lemma \ref{lemma:lz77rep}.

\begin{example}
Let $T=\texttt{112}\cdot \texttt{113}\cdot \texttt{214}\cdot \texttt{325}\cdot \texttt{436}\cdot \texttt{547}\cdot\ldots\cdot(\sigma-2)(\sigma-3)\sigma$. The length of the text is $n=3(\sigma-1)$. The LZ parsings are: 
\begin{eqnarray*}
\text{LZ77} & \phrase{1}\phrase{12}\phrase{113}\phrase{214}\phrase{325}\phrase{436}\phrase{547}\ldots\boxed{(\sigma-2)(\sigma-3)\sigma\vphantom{\texttt{\$}}}\\
\text{LZ-End} & \phrase{1}\phrase{12}\phrase{11}\phrase{3}\phrase{21}\phrase{4}\phrase{32}\phrase{5}\phrase{43}\phrase{6}\phrase{54}\phrase{7}\ldots\boxed{(\sigma-2)(\sigma-3)\vphantom{\texttt{\$}}}\boxed{\sigma\vphantom{(\sigma-3)}}
\end{eqnarray*}
The size of LZ77 is $n'=\sigma$ and the size of LZ-End is $n'=2(\sigma-1)$.
\end{example}

For LZ-End we can only prove the following lemma regarding the concatenation of texts.
\begin{lemma}
\label{lemma:lzendrep}
Given a text $T$, the following statements hold
\begin{eqnarray}
H^{LZ-End}(TT) & \le & H^{LZ-End}(T) + 2 \label{eq:e1_end} \\ 
H^{LZ-End}(TT\texttt{\$}) & \le & H^{LZ-End}(T\texttt{\$}) + 1 \label{eq:e2_end}
\end{eqnarray} 
where $H^{LZ-End}(T)$ is the number of phrases of the LZ-End parsing.
\end{lemma}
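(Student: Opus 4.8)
The plan is to mimic the structure of the proof of Lemma~\ref{lemma:lz77rep}, but being careful about the extra restriction that LZ-End sources must end at a previous phrase boundary. First I would handle Equation~(\ref{eq:e2_end}), the cleaner of the two. Suppose $H^{LZ\text{-}End}(T\texttt{\$})=n'$ and the phrases parsing $T\texttt{\$}$ are $Z[1],\ldots,Z[n']$. Since $\texttt{\$}$ occurs only once, the last phrase $Z[n']$ is the only one containing $\texttt{\$}$; write $Z[n']=A\texttt{\$}$ where $A$ is a (possibly empty) suffix of $T$ and $Z[1]\cdots Z[n'-1]=T[1,|T|-|A|]$. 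Now parse $TT\texttt{\$}$: the first $n'-1$ phrases are unchanged and cover the prefix $T[1,|T|-|A|]$ of the first copy of $T$. What remains to encode is $A\cdot T\cdot\texttt{\$}$. I claim this is covered by at most two phrases: the string $A\cdot T$ (without the final $\texttt{\$}$) is a substring of $T\cdot T$, and moreover it is a \emph{suffix} of $Z[1]\cdots Z[n'-1]\,A\,T = (TT)[1,2|T|-|A|+\ldots]$; more precisely $A\cdot T$ is a suffix of the already-emitted prefix followed by itself, and — crucially — $A\cdot T$ ends exactly at the end of the first copy of $T$ in $TT\texttt{\$}$, which is the end of phrase $n'-1$ shifted, i.e. a legal LZ-End source endpoint, provided that position is itself an end of phrase in the parsing so far. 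I would argue it is: after emitting $Z[1],\ldots,Z[n'-1]$ we are positioned at $|T|-|A|+1$, and the greedy LZ-End rule will emit the longest prefix of $A\,T\,\texttt{\$}$ that is a suffix of some $Z[1]\cdots Z[q]$; that prefix is at least $A\,T$ truncated to $A\,T$ minus its last symbol if needed, but at worst one extra phrase for the last symbol and one for $\texttt{\$}$, giving $n'+1$ total. I would spell this case analysis out carefully (the subtlety being whether $A\,T$ itself, or only a proper prefix of it, is a valid source-aligned copy), paralleling the $A\neq\varepsilon$ / $A=\varepsilon$ split in Lemma~\ref{lemma:lz77rep}.

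For Equation~(\ref{eq:e1_end}), the approach is the same but without the terminator trick, so I expect to lose one more phrase. Let $Z[1],\ldots,Z[n']$ parse $T$ with $H^{LZ\text{-}End}(T)=n'$, and let $Z[n']=A$ be the last phrase, a suffix of $T$. When parsing $TT$, the first $n'-1$ phrases are reused, leaving $A\cdot T$ to encode. The string $A$ is a suffix of $Z[1]\cdots Z[n'-1]A$ (trivially), but to copy $A$ in one phrase we need $A$ to end at a phrase boundary of the new parsing — it does, it is exactly where phrase $n'$ ended in the parsing of $T$, hence where phrase $n'$ (identical) ends now. So one phrase re-emits $A$ (or a prefix of it plus a trailing char, hence possibly splitting into two). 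Then the remaining copy of $T$ starts at a phrase boundary, so it can be copied phrase-by-phrase against the original parse $Z[1]\cdots Z[n']$, which ends at a phrase boundary; the only loss is that the last such copy cannot include its trailing character "for free," so in the worst case we need one extra phrase at the very end. Tallying: $n'-1$ reused $+\ 2$ (for $A$ possibly splitting) $+\ \ldots$ — I would organize it so the bound lands on exactly $n'+2$.

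The main obstacle I anticipate is the bookkeeping around whether a copied block must be split into two phrases because its \emph{source} (as opposed to the block itself) fails to end at a phrase boundary, or because greedy matching overshoots. In LZ77 this issue does not arise (any text position is a legal source start), which is why the LZ77 bounds are tighter by one; for LZ-End I must track, at each step, the set of legal source endpoints and argue that the relevant boundary (end of the first copy of $T$, or end of phrase $n'$) is among them. A clean way to do this is: observe that after re-using $Z[1],\ldots,Z[n'-1]$ (resp. $Z[1],\ldots,Z[n']$) the concatenation of phrases emitted so far has, as one of its phrase-boundary positions, the position needed to serve as the source endpoint for copying the whole first copy of $T$; then a single invocation of the greedy rule consumes all of $T$ except possibly its last character, costing one phrase, plus at most one more for that last character — exactly the $+1$ slack versus LZ77 in each equation. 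I would also double-check the tiny base cases (e.g. $T$ a single character) by direct inspection, as is done in Lemma~\ref{lemma:hlzend}.
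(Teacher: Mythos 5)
Your key step for Equation~(\ref{eq:e2_end}) does not go through. When the parser is about to emit phrase $n'$ of $TT\texttt{\$}$, the only positions that can serve as source endpoints are the already-emitted phrase boundaries $i_1<\cdots<i_{n'-1}\le |T|-|A|$, so any valid source has length at most $|T|-|A|$. The string $A\cdot T$ you propose has length $|A|+|T|$ and therefore cannot be a source; the ``end of the first copy of $T$'' at position $|T|$ is where phrase $n'$ of $T\texttt{\$}$ ended, but that phrase has not yet been emitted in the parse of $TT\texttt{\$}$, so that boundary is not yet a legal source endpoint. Exactly the same confusion recurs in your sketch of Equation~(\ref{eq:e1_end}), where you write that $A$ ``ends exactly where phrase $n'$ ended\ldots hence where phrase $n'$ (identical) ends now.'' You cannot designate a future phrase boundary as a source endpoint for the very phrase you are currently emitting, and the greedy rule may place that boundary somewhere else entirely.

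I should flag that the paper's own proof follows the same outline (reuse the first $n'-1$ phrases; split on $A=\varepsilon$ versus $A\neq\varepsilon$; cover the remainder with one or two extra phrases) and appears to carry the same gap: in the $A\neq\varepsilon$ case it claims, after the phrase $Ax$, that $XA$ serves as source for a final phrase $XA\texttt{\$}$, but $XA$ need not be a suffix of any $Z[1]\cdots Z[q]$ with $q\le n'$. Indeed, inequality~(\ref{eq:e2_end}) already fails for $T=\texttt{aba}$: LZ-End parses $T\texttt{\$}=\texttt{aba\$}$ into $\texttt{a}$, $\texttt{b}$, $\texttt{a\$}$ (so $n'=3$), while the greedy LZ-End parse of $TT\texttt{\$}=\texttt{abaaba\$}$ is $\texttt{a}$, $\texttt{b}$, $\texttt{aa}$, $\texttt{ba}$, $\texttt{\$}$ (five phrases), because at position~5 the string $\texttt{ba}$ is not a suffix of $\texttt{a}$, $\texttt{ab}$, or $\texttt{abaa}$, so $\texttt{ba\$}$ costs two phrases. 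Thus $H^{LZ-End}(TT\texttt{\$})=5>4=H^{LZ-End}(T\texttt{\$})+1$. The obstruction is intrinsic to LZ-End: a greedy choice inside the first copy of $T$ (here the phrase $\texttt{aa}$, placing a boundary at position~$4$ rather than~$3$) can remove precisely the source endpoint a repetition argument needs, so any correct statement must either weaken the constant or argue carefully about which endpoints survive the crossing.
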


\begin{proof}
Assume $H^{LZ-End}(T\texttt{\$})=n'$ and that the last phrase of the LZ-End parsing of $T\texttt{\$}$ is \phrase{\$}. That means the first $n'-1$ phrases cover the text $T$. Now, if we have the text $TT\texttt{\$}$, we have that the first $n'-1$ phrases are the same as for the parsing of $T$ and the last phrase would be \phrase{$T$\$} (since $T$ ends at the end of the $(n'-1)$-th phrase), hence inequality for Equation (\ref{eq:e2_end}) holds (actually it holds $H^{LZ-End}(TT\texttt{\$}) = H^{LZ-End}(T\texttt{\$})$). Now, assume the last phrase of the parsing is \phrase{$A$\$} for some $A \neq \varepsilon$, and that the prefix of $T$ in the $n'-1$ first phrases is $xX$, where $x$ is a character. Therefore the $n'$-th phrase of the parsing of $TT$ would be at least \phrase{$A$$x$}. Then the $(n'+1)$-th phrase will be \phrase{$X$$A$\texttt{\$}}, thus equality holds for Equation (\ref{eq:e2_end}). The situation is analogous if the $n'$-th phrase extends beyond $Ax$. For Equation (\ref{eq:e1_end}) consider that $T$ is parsed into $n'-1$ phrases covering the prefix $xX$ and the last phrase is $aAb$ (where $x$, $a$ and $b$ are characters and $A\neq \varepsilon$ is a string). Then, the $(n'+1)$-th phrase of $TT$ is at least \phrase{$x$$X$$a$}, and thus, the $(n'+2)$-th phrase is \phrase{$Ab$}. Because there must exist a phrase ending in $A$ for phrase \phrase{$aAb$} to exist. If, instead, the $n'$-th phrase is just \phrase{$a$}, then the $(n'+1)$-th phrase is \phrase{$xXa$}$=$\phrase{$T$}.
\end{proof}

\section{Construction Algorithm}
\label{sec:constr}

\begin{figure}[tb]
\renewcommand{\algorithmiccomment}[1]{/*#1*/}
\algsetup{linenodelimiter=,indent=0.8em}
\begin{center}
\begin{tabular}{l}
\begin{minipage}{10cm}
\begin{algorithmic}[1]
\STATE $\mathcal{F} \leftarrow \{\langle -1,n+1\rangle\}$
\STATE $i \leftarrow 1$, $p \leftarrow 1$
\WHILE{$i\le n$}
    \STATE $[sp,ep] \leftarrow [1,n]$ \\
    \STATE $j \leftarrow 0$, $\ell \leftarrow j$
    \WHILE{$i+j\le n$}
        \STATE $[sp,ep] \leftarrow \BWS(sp,ep,T[i+j])$ 
        \STATE $mpos \leftarrow \arg \max_{sp \le k \le ep}A[k]$
	\IF{$A[mpos] \le n+1-i$}
        \STATE \textbf{break}
    \ENDIF
	\STATE $j \leftarrow j+1$
        \STATE $\langle q,fpos\rangle \leftarrow \textrm{Successor}(\mathcal{F},sp)$
	\IF{$fpos \leq ep$}
        \STATE $\ell \leftarrow j$
    \ENDIF
    \ENDWHILE
    \STATE Insert$(\mathcal{F},\langle p,A^{-1}[n+1-(i+\ell)]\rangle)$
    \STATE \textbf{output} $(q, \ell, T[i+\ell])$
    \STATE $i \leftarrow i+\ell+1$,~ $p \leftarrow p+1$
\ENDWHILE
\end{algorithmic}
\end{minipage}
\end{tabular}
\end{center}
\caption[LZ-End construction algorithm]{%
LZ-End construction algorithm. $\mathcal{F}$ stores pairs 
$\langle$phrase identifier, suffix array position$\rangle$ and answers successor queries
on the text position. $BWS(sp,ep,c)$ was defined in Section \ref{sec:bwt}.
}
\label{alg:construction}
\end{figure}

We present an algorithm to compute the parsing LZ-End, inspired \jeremy{by the}\djeremy{on} algorithm CSP2 
by Chen~\etal~\cite{CPS08}. We compute the range of all text prefixes ending with a 
pattern $P$, rather than suffixes starting with $P$ \cite{FG89}. 

We first build the suffix array (Section \ref{sec:sarray}) $A[1,n]$ of the
{\em reverse} text, $T^{rev} = T[n-1]\ldots T[2] T[1]\texttt{\$}$, so that
$T^{rev}[A[i],n]$ is the lexicographically $i$-th smallest suffix of $T^{rev}$. 
We also build its inverse permutation: $A^{-1}[j]$ is the 
lexicographic rank of $T^{rev}[j,n]$. Finally, we build the Burrows-Wheeler 
Transform (BWT) (Section \ref{sec:bwt}) of $T^{rev}$, $T^{bwt}[i] = T^{rev}[A[i]-1]$ (or
$T^{rev}[n]$ if $A[i]=1$). 

On top of the BWT we will apply \emph{backward search} (Section \ref{sec:bwt}) to find out whether there are occurrences of a $T[i,i'-1]$ (Definitions \ref{def:lz77} and \ref{def:lzend}). 

Since, for LZ-End, the phrases must in addition finish at a previous 
phrase end, we maintain a dynamic set $\mathcal{F}$ where we add the ending 
positions of the successive phrases we create, mapped to $A$. That is, once 
we create phrase $Z[p] = T[i,i']$, we insert $A^{-1}[n+1-i']$ into 
$\mathcal{F}$.

Backward search over $T^{rev}$ adapts very well to our purpose. By considering the
patterns $P = (T[i,i'-1])^{rev}$ for consecutive values of $i'$, we are searching
backwards for $P$ in $T^{rev}$, and thus finding the {\em ending} positions of 
$T[i,i'-1]$ in $T$, by carrying out one further BWS step for each new $i'$
value. Thus we can use $\mathcal{F}$ naturally.

As we advance $i'$ in $T[i,i'-1]$, we test
whether $A[sp,ep]$ contains some occurrence finishing before $i$ in $T$, that
is, starting after $n+1-i$ in $T^{rev}$. If it does not, then we stop looking for
larger $i'$ values as there are no matches preceding $T[i]$. For this, we
precompute a {\em Range Maximum Query (RMQ)} data structure \cite{FH07} on $A$,
which answers queries $mpos = \arg\max_{sp\le k\le ep} A[k]$. Then 
if $A[mpos]$ is not large enough, we stop.

In addition, we must know if $i'$ finishes at some phrase end, i.e., if
$\mathcal{F}$ contains some value in $[sp,ep]$. A {\em successor} 
query on $\mathcal{F}$ finds the smallest value $fpos \ge sp$ in $\mathcal{F}$. 
If $fpos \le ep$, then it
represents a suitable LZ-End source for $T[i,i']$. Otherwise, as the condition
could hold again for a later $[sp,ep]$ range, we do not stop but recall the last $j=i'$ where it was 
valid. Once we stop because no matches ending before $T[i]$ exist, we insert 
phrase $Z[p] = T[i,j]$ and continue from $i = j+1$. This may retraverse 
some text since we had processed up to $i' \ge j$. We call $N \ge n$ the
total number of text symbols processed.

The algorithm is depicted in Figure \ref{alg:construction}.

\begin{figure}
\centering
\includegraphics[width=14cm]{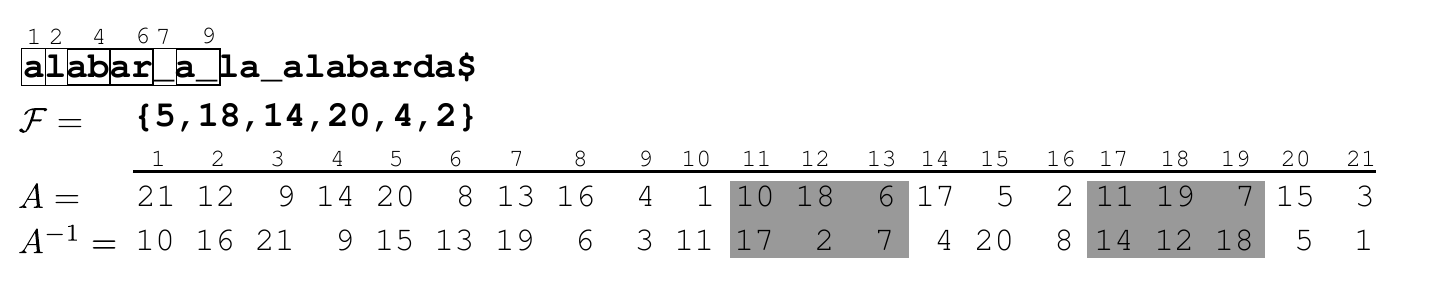}
\caption{Example of LZ-End construction algorithm}
\label{fig:construction_lzend_ex}
\end{figure}

\begin{example}
\nieves{Figure \ref{fig:construction_lzend_ex} shows the structures used during the parsing of the string \texttt{`alabar\_a\_la\_alabarda\$'}. The array $A$ corresponds to the suffix array of the reversed text and $A^{-1}$ to its inverse permutation. The figure shows the parsing up to the 6th phrase and the values inserted in the dictionary $\mathcal{F}$. The values inserted in $\mathcal{F}$ are $A^{-1}[len-i]$, where $i$ is the ending position of a phrase and $len$ the length of the text. For example, the second phrase ends in position 2, thus the value inserted corresponds to $A^{-1}[21-2]=18$.}\\

\nieves{Now we continue the process to generate the next phrase. First, using BWS we find the interval of $A$ that represents the suffixes (of the reverse text) starting with \texttt{`l'}, obtaining the range $[17,18]$ (right gray zone). Then we look in $\mathcal{F}$ for the successor of $17$, obtaining the value $18$, which is still in the range. Hence, we have found a valid source. Afterward, we continue with the next character. Again, with BWS we find the interval of $A$ representing the suffixes (of the reverse text) starting with \texttt{`al'} (left gray zone), which are the prefixes of the text ending with \texttt{`la'}. This gives us the range $[11,13]$. We look for the successor of $11$ in $\mathcal{F}$, which is $14$. Since this value is outside the interval, there are no valid sources. We continue this process until there are no more possible sources. Finally, we get that the only valid source is \texttt{`l'}, generating the new phrase \texttt{`la'}.}
\end{example}

In theory the construction algorithm can work within bit space 
(1) $nH_k(T^{rev})+o(n\log\sigma) = nH_k(T)+o(n\log\sigma)$ (since $nH_k(T) = nH_k(T^{rev}) + O(\log n)$ \cite[Theorem A.3]{FM05}) for building the BWT incrementally \cite{GN08}; plus
(2) $2n+o(n)$ bits for the RMQ structure \cite{FH07}; plus
(3) $O(n'\log n)$ bits for a successor data structure.
After building the BWT incrementally in time $O(n\log n \lceil
\frac{\log\sigma}{\log\log n} \rceil)$ \cite{GN08}, we can make it static, so
that it supports access to the successive characters of $T$ in time
$O(\lceil\frac{\log\sigma}{\log\log n}\rceil)$, as well to $A$ and $A^{-1}$ 
in time $O(\log^{1+\epsilon} n)$ for any constant $\epsilon>0$ \cite{FMMN07}.
The RMQ structure is built in $O(n)$ time and within the same final space,
and answers queries in constant time.
The successor data structure could be a simple balanced search tree, with
leaves holding $\Theta(\log n)$ elements, so that the access time is
$O(\log n)$ and the space is $n'\log n (1+o(1))$ \cite{Munro86}.
Thus, using Lemmas \ref{lem:folk} and \ref{lem:entropy}, the overall construction space is
$2n(H_k(T)+1)+o(n\log\sigma)$ bits, for any $k=o(\log_\sigma n)$. 
The time is dominated by the BWT construction, 
$O(n\log n \lceil \frac{\log\sigma}{\log\log n} \rceil)$, 
plus the $N$ accesses to $A$, $O(N\log^{1+\epsilon} n)$.
If, instead, we use $O(n\log n)$ bits of space, we can build and store 
explicitly $A$ and $A^{-1}$ in $O(n)$ time \cite{KS03}. The overall time
becomes $O(N\lceil\frac{\log\sigma}{\log\log n}\rceil)$.

Note that a simplification of our construction algorithm, disregarding
$\mathcal{F}$ (and thus $N=n$) builds the LZ77 parsing using just
$n(H_k(T)+2)+o(n\log\sigma)$ bits and 
$O(n\log n(\log^\epsilon n + o(\log\sigma)))$ time, which is less than 
the best existing solutions \cite{OS08,CPS08}. 

In practice our implementation of the algorithm works within \emph{byte} space 
(1) $n$ as we maintain $T$ explicitly; plus
(2) $2.02n$ for our implementation of BWT (following Navarro's ``large''
FM-index implementation \cite{Nav09}, where $L$ is maintained explicitly); plus
(3) $4n$ for $A$, which is explicitly maintained; plus
(4) $0.7n$ for Fischer's implementation of RMQ \cite{FH07}; plus
(5) $n$ for $A^{-1}$, using a sampling-based implementation of inverse
permutations~\cite{MRRR03} (Section \ref{sec:permutation}); plus
(6) $12n'$ for a balanced binary tree implementing the successor structure.
This adds up to less than $10n$ bytes in practice.
$A$ is built in time $O(n\log n)$ in practice; other construction times are $O(n)$. 
After this, the time of the algorithm is $O(N\log n') = O(N\log n)$.

As we see soon, $N$ is usually (but not always) only slightly larger than $n$;
we now prove it is limited by the phrase lengths. 

\begin{lemma}
The amount of text retraversed at any step is $< |Z[p]|$ for some $p$.
\end{lemma}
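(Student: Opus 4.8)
The lemma bounds $N - n = \sum_p (\text{retraversed amount at step } p)$ by showing each individual retraversal is strictly less than some phrase length. Recall the situation at the end of an iteration of the outer \textbf{while} loop: we began processing at text position $i$, we advanced $j$ (via backward search steps) up to some maximal value $j_{\max}$ at which point we stopped because no occurrence of $T[i, i+j_{\max}]$ ends before position $i$; but the last value of $j$ at which the successor query in $\mathcal{F}$ fell inside $[sp,ep]$ was $\ell \le j_{\max}$. We then emit the phrase $Z[p] = T[i, i+\ell]$ and restart from $i = i + \ell + 1$. The text positions $T[i+\ell+1, i+j_{\max}]$ were scanned during this iteration but will be scanned again in the next iteration(s); this is exactly the retraversed text, of length $j_{\max} - \ell$.

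**The plan.** First I would observe that it suffices to prove $j_{\max} - \ell < |Z[p']|$ for some phrase $p'$ — and the natural candidate is the phrase of the \emph{source} that made $T[i, i+\ell]$ a legal LZ-End phrase, or a phrase overlapping the retraversed region. The key structural fact is why we kept advancing past $\ell$ at all: between $\ell$ and $j_{\max}$ the string $T[i, i+j-1]$ still \emph{occurred} somewhere ending before position $i$ (otherwise the RMQ test at Line 9 would have fired and we would have broken out), it simply did not end at a \emph{phrase boundary} recorded in $\mathcal{F}$. So throughout the range $j \in (\ell, j_{\max}]$, the substring $T[i, i+j-1]$ is a substring of $T[1,i-1]$ whose every occurrence ending before $i$ fails to be phrase-aligned. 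I would then argue that such a substring cannot be longer than the longest phrase straddling one of its occurrences: an occurrence of $T[i, i+j_{\max}-1]$ ending at some position $e < i$ is covered by a set of consecutive phrases, and since its right endpoint $e$ is strictly interior to a phrase (not a phrase end), the phrase containing position $e$ has length at least the distance from $e$ back to where that phrase starts. Pushing this a little, the portion of $T[i,i+j_{\max}-1]$ lying inside that final covering phrase, together with the already-committed prefix of length $\ell$ which \emph{is} phrase-aligned, forces $j_{\max} - \ell$ to be at most the length of that covering phrase.

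**Executing the bound.** Concretely: let the occurrence of $T[i, i+j_{\max}-1]$ that the RMQ found be $T[s, e]$ with $e = s + j_{\max} - 1 \le i-1$. Let $Z[p'] = T[a,b]$ be the phrase with $a \le e \le b$. The prefix $T[i, i+\ell-1]$ of the current phrase is phrase-aligned (some recorded phrase end in $\mathcal{F}$ maps to its right endpoint via the occurrence used at Line 14), so the occurrence of $T[s, s+\ell-1]$ ends at a phrase boundary; equivalently $s + \ell - 1 = b'$ for some phrase end $b'$. If $b' \ge a - 1$ — i.e. the committed prefix reaches at least to the start of $Z[p']$ within this occurrence — then $e - b' \le b - (a-1) = |Z[p']|$, giving $j_{\max} - \ell = e - (s+\ell-1) = e - b' \le |Z[p']|$, and strictness comes from $e < b$ (since $e$ is not the end of $Z[p']$, as that end is not in the relevant phrase-aligned set in the way required — or more simply, $e \le i - 1 < $ current position while the phrase continues). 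If instead $b' < a-1$, then the occurrence of the phrase-aligned prefix ends before $Z[p']$ even starts, which would contradict $\ell$ being the last valid boundary before $j_{\max}$, since some phrase end between $b'$ and $e$ would have been detected — so this case does not arise. I would tidy the off-by-one bookkeeping and conclude $N - n = \sum_p (j_{\max}^{(p)} - \ell^{(p)}) < \sum_p |Z[p_{\text{wit}}]| \le$ (number of phrases) $\times \max_p |Z[p]|$, though for the lemma as stated only the per-step bound is needed.

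**Main obstacle.** The delicate point is the case analysis on \emph{why} $j$ was allowed to advance past $\ell$: I must rule out the possibility that a phrase boundary sat strictly between the committed endpoint and $e$ in the witnessing occurrence, which would have reset $\ell$ to a larger value. This requires carefully relating the dynamic set $\mathcal{F}$ (which records phrase ends of \emph{already created} phrases, mapped through $A^{-1}$) to actual text positions inside the witnessing occurrence $T[s,e]$, and using that $\mathcal{F}$ contains \emph{all} prior phrase ends — so if the occurrence's interior contained a prior phrase end, the successor query would have returned a position inside $[sp,ep]$ at that larger $j$. Getting this correspondence exactly right, including the reversal and the $n+1-i'$ index arithmetic, is where the real work lies; the length inequality itself is then a one-line computation.
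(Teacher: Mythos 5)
Your witness phrase is different from the paper's: you take $Z[p']$, the phrase covering the right end $e$ of the RMQ-found occurrence $T[s,e]$, whereas the paper takes $Z[p]$ for the source $Z[1]\ldots Z[p-1]$ and argues that a longer retraversal would have made $Z[1]\ldots Z[p]$ itself a valid match. Your witness is tied directly to the occurrence that kept the RMQ test alive, and the resulting arithmetic is quite concrete.

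The gap is the assertion that ``the occurrence of $T[s,s+\ell-1]$ ends at a phrase boundary; equivalently $s+\ell-1=b'$ for some phrase end $b'$.'' This does not follow. The RMQ occurrence $T[s,e]$ is selected by maximizing $A$ over $[sp,ep]$, so it is the occurrence of $T[i,i+j_{\max}-1]$ ending at the \emph{leftmost} text position; the occurrence through which $\ell$ was last validated (the one the successor query on $\mathcal{F}$ detected) is in general a \emph{different} occurrence of $T[i,i+\ell-1]$. There is no reason $s+\ell-1$ should be a phrase end, and usually it is not. Fortunately your computation never uses that $b'$ is a phrase end, only the identity $b'=s+\ell-1$; what is actually needed is the observation you make in your ``main obstacle'' paragraph, applied explicitly: for every $\ell\le j_0\le j_{\max}-1$, the position $s+j_0$ is the right end of $T[s,s+j_0]$, an occurrence of $T[i,i+j_0]$ ending before $i$, so $s+j_0$ cannot be a phrase end (otherwise the successor query at that $j_0$ would have fired and reset $\ell$ to $j_0+1>\ell$). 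Hence every position in $\lbrace s+\ell,\ldots,e\rbrace$ is a non-phrase-end. Because $a-1$ \emph{is} a phrase end (the end of $Z[p'-1]$) and $a-1<a\le e$, this forces $a-1<s+\ell$, i.e.\ $b'\ge a-1$, so your second case cannot occur --- for the reason you gesture at, but the operative phrase end is $a-1$, not the nonexistent $b'$. The same fact at $j_0=j_{\max}-1$ shows $e$ is not a phrase end, giving $e<b$, which is the strictness step your parenthetical does not actually establish. Dropping the phrase-end characterization of $b'$ and invoking the non-phrase-end property of $\lbrace s+\ell,\ldots,e\rbrace$ explicitly yields a correct proof.
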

\begin{proof}
Say the last valid match $T[i,j-1]$ was with suffix $Z[1]\ldots Z[p-1]$
for some $p$, thereafter we worked until $T[i,i'-1]$ without finding
any other valid match, and then formed the phrase (with source $p-1$). Then
we will retraverse $T[j+1,i'-1]$, which must be shorter than $Z[p]$ since
otherwise $Z[1]\ldots Z[p]$ would have been a valid match.
\end{proof}

\begin{remark}
On ergodic sources with entropy $h$, $N=O(\frac{n}{h}\log n)$, but as explained this is not a realistic model on repetitive texts.
\end{remark}




\section{Experimental Results}

We implemented two different LZ-End encoding schemes. The first is as 
explained in Section~\ref{sec:encoding}. In the second ({\em LZ-End2}) we 
store the starting position of the source, $select_1(B,\source[p])$, rather than 
the identifier of the source, $\source[p]$. This in theory raises the $nH_k(T)$
term in the space to $2nH_k(T)$ (and noticeably in practice, as seen soon), 
yet we save one $select$ operation at extraction time (line 13 in
Figure~\ref{alg:extraction}), which has a significant 
impact in performance. In both implementations,
bitmap $B$ is represented by $\delta$-encoding the consecutive phrase lengths (Section \ref{sec:deltacodes}). 
Recall that, in a $\delta$-encoded bitmap, $select_1(B,p)$ and $select_1(B,p+1)$ cost $O(1)$ after solving $p \leftarrow rank_1(B,end)$, thus LZ-End2 does no $select$ operations for extracting.

We compare our compressors with {\em LZ77} and {\em LZ78} implemented by 
ourselves. \jeremy{LZ77 triples are encoded in the same way as LZ-End2.} We include the best performing compressors of Chapter 3, p7zip and Re-Pair. Compared
to p7zip, LZ77 differs in the final
encoding of the triples, which p7zip does better. This is orthogonal to the
parsing issue we focus on in this thesis. We also implemented 
{\em LZB} \cite{lzb}, which limits the distance \emph{dist} at
which the phrases can be from their original (not transitive) sources, so one 
can decompress any window by starting from that distance behind; and 
{\em LZ-Cost}, a novel proposal where we limit the number of times any text 
character can be copied (i.e., its $C[\cdot]$ value in Definition \ref{def:height}), thus directly limiting the maximum cost per character 
extraction.
We have found no efficient parsing algorithm for {\em LZB} and 
{\em LZ-Cost}, thus we test them on small texts only. 
We also implemented {\em LZ-Begin}, the ``symmetric'' variant of LZ-End, which
also allows random phrase extraction in constant time per extracted symbol. LZ-Begin forces the
source of a phrase to start where some previous phrase starts, just like
Fiala and Green \cite{FG89}, yet phrases have a leading rather than a 
trailing character. Although the parsing is much simpler, the compression ratio 
is noticeably worse than that of LZ-End, as we \jeremy{will} see \jeremy{in Section \ref{sec:compression_lzend}}\djeremy{soon}.
 
We used the
texts of the Canterbury corpus (\url{http://corpus.canterbury.ac.nz}), the 
50MB texts from the Pizza\&Chili corpus 
(\url{http://pizzachili.dcc.uchile.cl}), and highly repetitive texts from the previous chapter.
We use a 3.0 GHz Core 2 Duo processor with 4GB of main memory, running Linux 
2.6.24 and g++ (gcc version 4.2.4) compiler with -O3 optimization. 

\subsection{Compression Ratio}
\label{sec:compression_lzend}

Table~\ref{tab:compression} gives compression ratios for the different
collections and parsers. \jeremy{Figure \ref{fig:lzparsing_compression} shows the same results graphically for one representative text of each collection.} For LZ-End we omit the sampling for bitmap $B$, as it 
can be reconstructed on the fly at loading time. LZ-End is usually 5\% worse than 
LZ77, and at most 10\%
over it on general texts and 20\% on the highly repetitive collections, where
the compression ratios are nevertheless excellent. LZ78 is from 20\% better
to 25\% worse than LZ-End on typical texts, but it is orders of magnitude worse
on highly
repetitive collections. With parameter $\log(n)/2$, LZ-Cost is usually close 
to LZ77, yet sometimes it is much worse, and it is never better than LZ-End
except by negligible margins. LZB is not competitive at all. Finally,
LZ-Begin is about 30\% worse \jeremy{than LZ77} on typical texts, and up to 40 times worse for 
repetitive texts. This is because not all phrases of the parsing are unique (Lemma \ref{lemma:lzbegin}). This property was the key to prove the coarse optimality of the LZ parsings.
\begin{lemma}
\label{lemma:lzbegin}
Not all the phrases generated by an LZ-Begin parsing are different.
\end{lemma}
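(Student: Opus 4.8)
The plan is to prove the lemma by exhibiting an explicit text whose LZ-Begin parsing contains two identical phrases, and then to point out precisely why the uniqueness argument used for LZ-End (Lemma~\ref{lem:unique}) breaks down. First I would make the informally described LZ-Begin rule precise in the natural way suggested by its constant-time extraction property: starting from position $i$ with phrases $Z[1,p-1]$ already produced, the next phrase is $Z[p] = T[i]\cdot T[i+1,i']$, where $T[i+1,i']$ is the longest string that is a \emph{prefix} of $Z[q]Z[q+1]\cdots Z[p-1]$ for some $q<p$ (equivalently, an occurrence in $T[1,i-1]$ that starts at a phrase boundary), using the thesis-wide convention that the source may not extend into the phrase being formed. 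Output of the extractor is then: emit the leading character, jump to the start of $Z[q]$, and copy $i'-i$ symbols rightwards.

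With the definition fixed, consider the text $T=\texttt{aa\$}$. The first phrase is forced to be $Z[1]=\texttt{a}$, since there is no preceding text, so the copied part has length $0$. For the second phrase the leading character is $T[2]=\texttt{a}$; the only candidate source start is the beginning of $Z[1]$, and the symbol following the leading character is $T[3]=\texttt{\$}\neq \texttt{a}=Z[1][1]$, so again the copied part has length $0$ and $Z[2]=\texttt{a}$. Hence $Z[1]=Z[2]=\texttt{a}$, which already proves the lemma (the parse then finishes with $Z[3]=\texttt{\$}$). I would also mention that the phenomenon is not an artifact of the very first position: on $T=\texttt{aaaa\$}$ the parse is $\texttt{a}\mid\texttt{aa}\mid\texttt{a}\mid\texttt{\$}$, where $Z[1]=Z[3]=\texttt{a}$, because the copy that would merge the third $\texttt{a}$ with a previous phrase would have to begin strictly inside $Z[1]$, which is not a phrase boundary.

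Finally I would contrast this with LZ-End to explain \emph{why} the uniqueness proof fails, which is the point of the lemma in context. In Lemma~\ref{lem:unique} the argument was that if $Z[p]=Z[p']$ with $p<p'$, then $Z[p]$ is a suffix of $Z[1]\cdots Z[p]$, an end-aligned source; so when $Z[p']$ was formed it could have copied $Z[p]$ and been extended by one more trailing character, contradicting greediness. For LZ-Begin the source must instead be start-aligned, and to extend a repeated phrase $Z[p']=cw$ one would need $w$ — the part after the leading character $c$ — to begin at a phrase boundary; an earlier equal phrase $Z[p]=cw$ does not supply such a boundary, because $w$ sits one position past the boundary at which $Z[p]$ itself starts. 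The counterexample above is exactly a minimal instance of this gap, and it is what makes the ``repeated phrases are essentially free'' step of the coarse-optimality proof unavailable for LZ-Begin.

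The only real care needed is in committing to the (informally stated) LZ-Begin rule and in respecting the no-self-reference convention inherited from Definition~\ref{def:lz77}; once those are fixed the verification of the parses above is immediate, so there is no substantive obstacle.
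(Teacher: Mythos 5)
Your proof is correct and takes the same approach as the paper: exhibiting a text whose LZ-Begin parse repeats a phrase. The paper uses the abstract family $T=A\texttt{xy}A\texttt{y}A\texttt{z}$ (with $A$ arbitrary and $\texttt{x},\texttt{y},\texttt{z}$ distinct), yielding two phrases $\texttt{y}A$, whereas your $T=\texttt{aa\$}$ is a concrete minimal instance; both are valid, and your added discussion of exactly where the LZ-End uniqueness argument (Lemma~\ref{lem:unique}) fails for start-aligned sources is a useful complement rather than a divergence.
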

\begin{proof}
We prove this lemma by showing a counter-example. Let $T=A\texttt{xy}A\texttt{y}A\texttt{z}$, where $\texttt{x},\texttt{y},\texttt{z}$ are distinct characters and $A$ is a string. Suppose we have parsed up to $A\texttt{x}$, then the next phrase will be $\texttt{y}A$, and the following phrase will also be $\texttt{y}A$. 
\end{proof}
\begin{figure}[!ht]
	\centering
	\includegraphics[angle=-90,width=8cm]{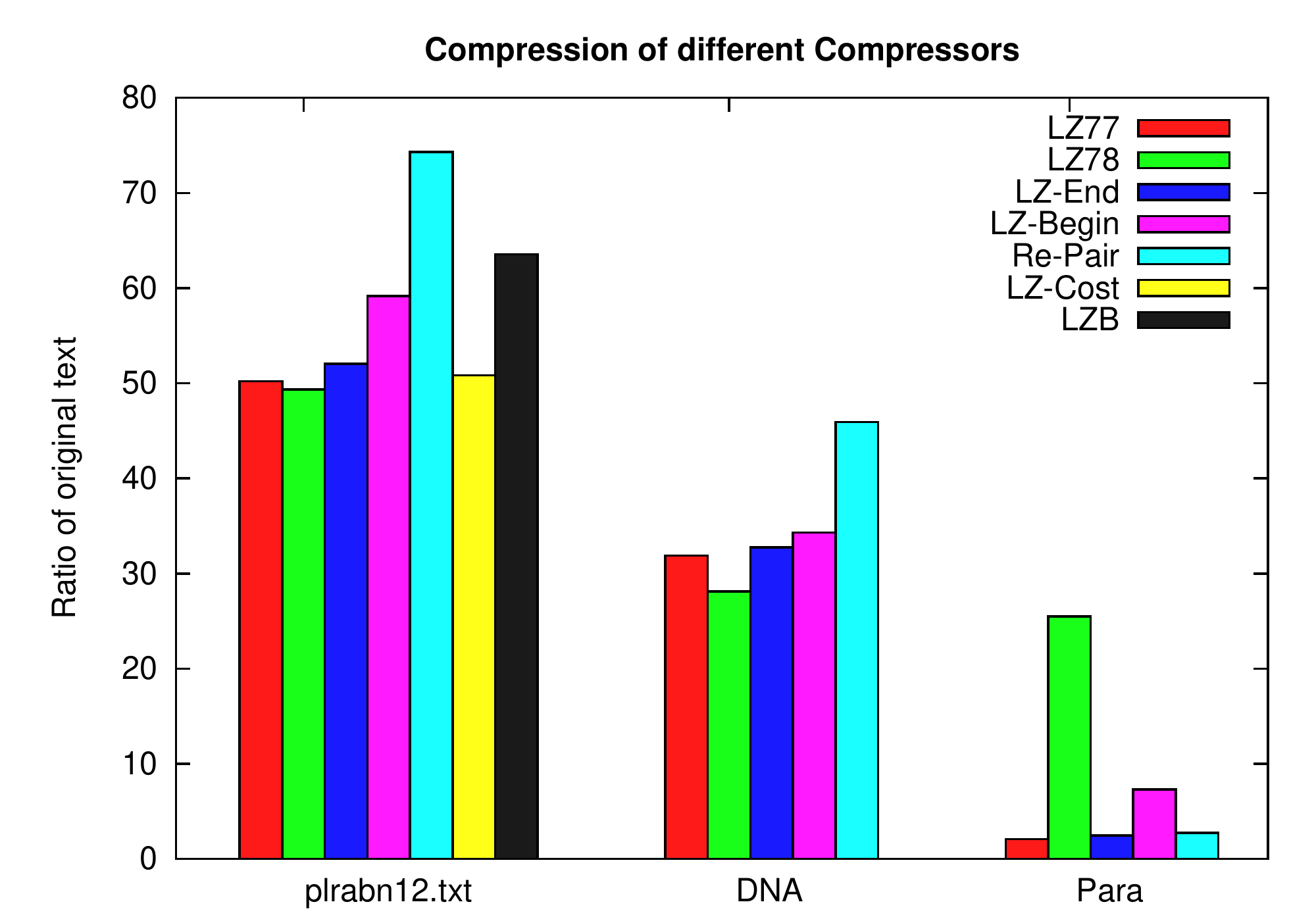}
	\caption{Compression ratio for different compressors}
\label{fig:lzparsing_compression}
\end{figure}

\begin{table}[tbp]
\begin{center}
{\scriptsize
\begin{tabular}{|l|r|r|r|r|r|r|r|r|}
\hline
 & \multicolumn{1}{l|}{} & \multicolumn{1}{c|}{\textbf{LZ77}} &
\multicolumn{1}{c|}{\textbf{LZ78}} & \multicolumn{1}{c|}{\textbf{LZ-End}} &
\multicolumn{1}{c|}{\textbf{LZ-Cost}} & \multicolumn{1}{c|}{\textbf{LZB}} &
\multicolumn{1}{c|}{\textbf{LZ-Begin}} & \multicolumn{1}{c|}{\textbf{Re-Pair}} \\ \hline
\hline
\textbf{Canterbury} & Size(KiB) & \multicolumn{7}{|l|}{} \\ \hline
alice29.txt  & 148.52  & 47.17\% & 49.91\% & 49.32\% & 48.51\% & 61.75\% & 59.02\% & 72.29\% \\ \hline
asyoulik.txt & 122.25  & 51.71\% & 52.95\% & 53.51\% & 52.41\% & 66.42\% & 62.34\% & 81.52\% \\ \hline
cp.html      & 24.03   & 43.61\% & 53.60\% & 45.53\% & 46.27\% & 66.26\% & 56.93\% & 78.65\% \\ \hline
fields.c     & 10.89   & 39.21\% & 54.73\% & 41.69\% & 44.44\% & 61.32\% & 60.61\% & 65.19\% \\ \hline
grammar.lsp  & 3.63    & 48.48\% & 57.85\% & 50.41\% & 56.30 \% & 67.02\% & 67.14\% & 85.60\% \\ \hline
lcet10.txt   & 416.75  & 42.62\% & 46.83\% & 44.65\% & 43.44\% & 56.72\% & 54.21\% & 57.47\% \\ \hline
plrabn12.txt & 470.57  & 50.21\% & 49.34\% & 52.06\% & 50.83\% & 63.55\% & 59.15\% & 74.32\% \\ \hline
xargs.1      & 4.13    & 57.87\% & 65.38\% & 59.56\% & 59.45\% & 86.37\% & 73.14\% & 107.33\% \\ \hline
aaa.txt      & 97.66   & 0.055\% & 0.51\% & 0.045\% & 1.56\% & 0.95\% & 0.040\% & 0.045\% \\ \hline
alphabet.txt & 97.66   & 0.110\% & 4.31\% & 0.105\% & 0.23\% & 1.15\% & 0.100\% & 0.081\% \\ \hline
random.txt   & 97.66   & 107.39\% & 90.10\% & 105.43\% & 107.40\% & 121.11\% & 106.9\% & 219.24\% \\ \hline
E.coli       & 4529.97 & 34.13\% & 27.70\% & 34.72\% & - & - & 35.99\% & 57.63\% \\ \hline
bible.txt    & 3952.53 & 34.18\% & 36.27\% & 36.44\% & - & - & 43.98\% & 41.81\% \\ \hline
world192.txt & 2415.43 & 29.04\% & 38.52\% & 30.99\% & - & - & 41.52\% & 38.29\% \\ \hline
pi.txt       & 976.56  & 55.73\% & 47.13\% & 55.99\% & - & - & 57.36\% & 108.08\% \\ \hline
\end{tabular}
\vspace{1cm}\\
%
\begin{tabular}{|p{2.5cm}|x{1.35cm}|x{1.67cm}|x{1.67cm}|x{1.67cm}|x{1.67cm}|x{1.67cm}|}
\hline
 & \multicolumn{1}{l|}{} & \multicolumn{1}{c|}{\textbf{LZ77}} &
\multicolumn{1}{c|}{\textbf{LZ78}} & \multicolumn{1}{c|}{\textbf{LZ-End}} &
\multicolumn{1}{c|}{\textbf{LZ-Begin}} & \multicolumn{1}{c|}{\textbf{Re-Pair}} \tn \hline
\textbf{Pizza Chili} & Size(MiB) & \multicolumn{5}{l|}{}\tn \hline
Sources  & 50 & 28.50\% & 41.14\% & 31.00\% & 41.95\% & 31.07\% \tn \hline
Pitches  & 50 & 44.50\% & 59.30\% & 45.78\% & 57.22\% & 59.90\% \tn \hline
Proteins & 50 & 47.80\% & 53.20\% & 47.84\% & 54.95\% & 71.29\% \tn \hline
DNA      & 50 & 31.88\% & 28.12\% & 32.76\% & 34.28\% & 45.90\% \tn \hline
English  & 50 & 31.12\% & 41.80\% & 31.12\% & 38.54\% & 30.50\% \tn \hline
XML      & 50 & 17.00\% & 21.24\% & 17.64\% & 25.49\% & 18.50\% \tn \hline
\end{tabular}
\vspace{1cm}\\


\begin{tabular}{|p{2.5cm}|x{1.35cm}|x{1.67cm}|x{1.67cm}|x{1.67cm}|x{1.67cm}|x{1.67cm}|}
\hline
 & \multicolumn{1}{l|}{} & \multicolumn{1}{c|}{\textbf{LZ77}} &
\multicolumn{1}{c|}{\textbf{LZ78}} & \multicolumn{1}{c|}{\textbf{LZ-End}} &
\multicolumn{1}{c|}{\textbf{LZ-Begin}} & \multicolumn{1}{c|}{\textbf{Re-Pair}} \tn \hline
\textbf{Repetitive} & Size(MiB) & \multicolumn{5}{l|}{} \tn \hline
Wikipedia Einstein         & 357.40 & 9.97$\times 10^{-2}$\% & 9.29\% & 1.01$\times 10^{-1}$\% & 4.27\% & 1.04$\times 10^{-1}$\% \tn \hline
World Leaders          &  40.65 & 1.73\% & 15.89\% & 1.93\% & 7.97\% & 1.89\% \tn \hline
Rich String 11             &  48.80 & 3.20$\times 10^{-4}$\% & 0.82\% & 4.18$\times 10^{-4}$\% & 0.01\% & 3.75$\times 10^{-4}$\% \tn \hline
Fibonacci 42               & 255.50 & 7.32$\times 10^{-5}$\% & 0.40\% & 5.32$\times 10^{-5}$\% & 6.07$ \times 10^{-5}$\% & 2.13$\times 10^{-5}$\% \tn \hline
Para                       & 409.38 & 2.09\% & 25.49\% & 2.48\% & 7.29\% & 2.74\% \tn \hline
Cere                       & 439.92 & 1.48\% & 25.33\% & 1.74\% & 6.15\% & 1.86\% \tn \hline
Coreutils                  & 195.77 & 3.18\% & 27.57\% & 3.35\% & 7.33\% & 2.54\% \tn \hline
Kernel                     & 246.01 & 1.35\% & 30.02\% & 1.43\% & 3.43\% & 1.10\% \tn \hline
\end{tabular}
}
\end{center}
\caption[Compression ratio of different LZ-like parsings]{Compression ratio of different parsings, in percentage of compressed
over original size. We use parameter $cost=(\log n)/2$ for {\em LZ-Cost} and $dist=n/5$ for {\em LZB}.}
\label{tab:compression}
\end{table}

The above results show that LZ-End achieves very competitive compression ratios, even in
the challenging case of highly repetitive sequences, where LZ77 excells.

Consistently with Chapter 3, Re-Pair results show that grammar-based compression is a relevant
alternative. Yet, we note that it is only competitive on highly repetitive
sequences, where most of the compressed data is in the dictionary. This
implementation applies sophisticated compression to the dictionary,
which we do not apply on our compressors. Those sophisticated dictionary compression techniques prevent direct access to
the grammar rules, essential for extracting substrings.

\subsection{Parsing Time}

Figure~\ref{fig:construction_time} shows parsing times on two files for 
LZ77 (implemented following CSP2 \cite{CPS08}), LZ-End with the algorithm of
Section~\ref{sec:constr}, and p7zip. We show separately the time of the suffix
array construction algorithm we use, \texttt{libdivsufsort}
(\texttt{http://code.google.com/p/libdivsufsort}),
common to LZ77 and LZ-End. 

\begin{figure}[!ht]
	\centering
	\includegraphics[angle=-90,scale=0.372]{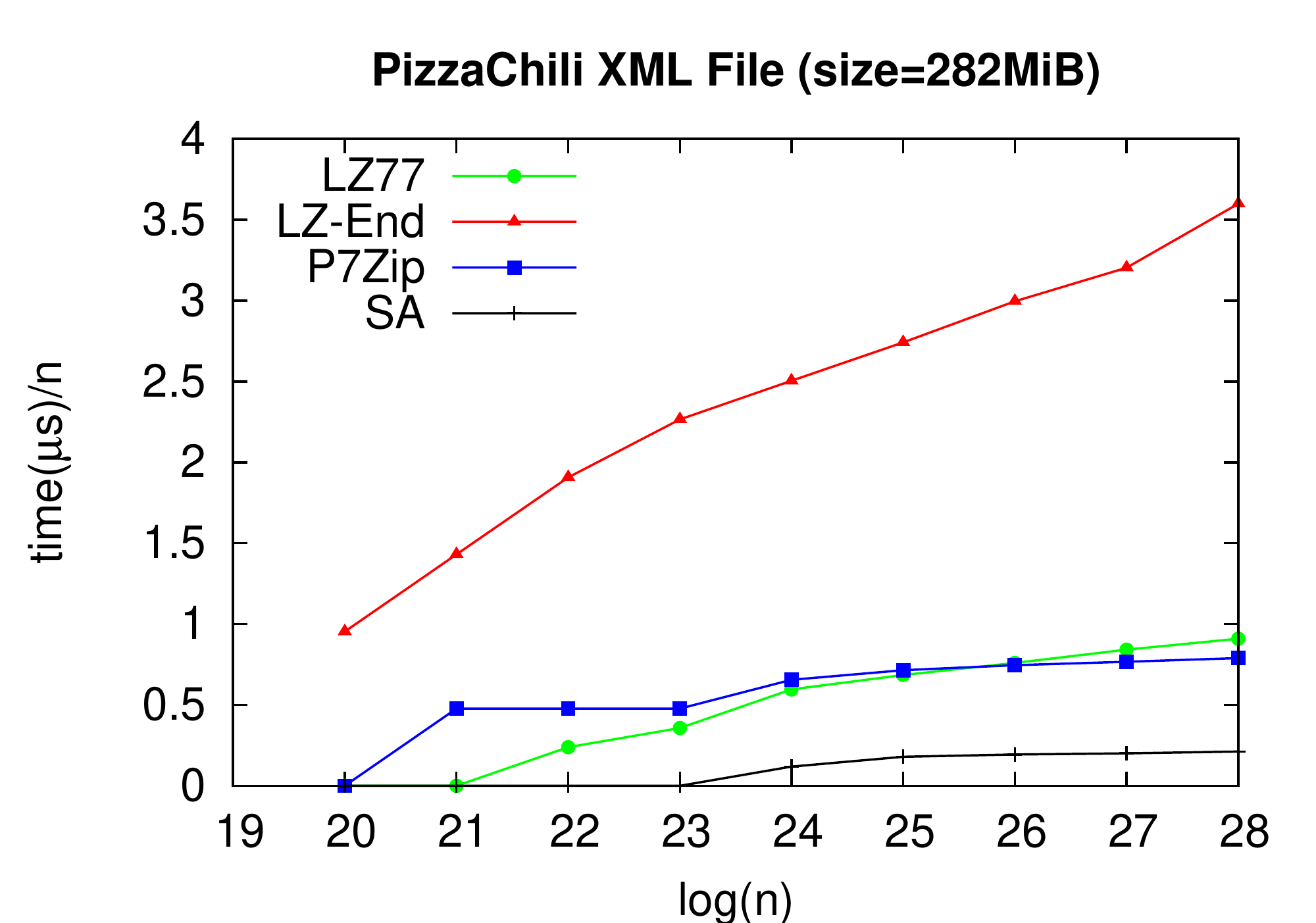}
    \includegraphics[angle=-90,scale=0.372]{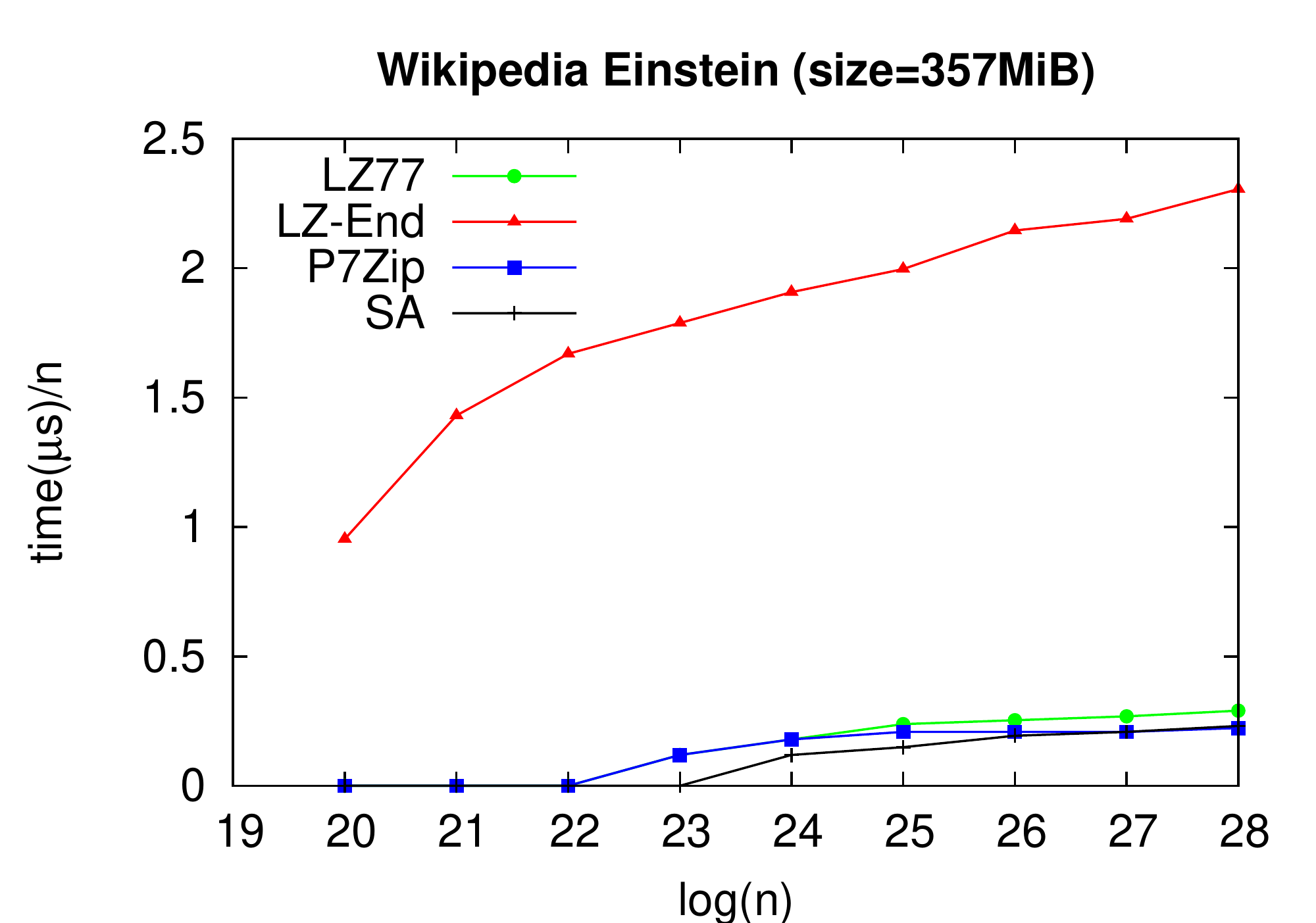}
	\caption[LZ77 and LZ-End parsing times]{
Parsing times for XML and Wikipedia Einstein, in microseconds per character.
}
	\label{fig:construction_time}
\end{figure}

Our LZ77 construction time is competitive with the state
of the art (p7zip), thus the excess of LZ-End is due to the more
complex parsing. Least squares fitting for the nanoseconds/char yields 
$10.4 \log n+O(1/n)$ (LZ77) and $82.3 \log n +O(1/n)$ (LZ-End) for 
Einstein text, and $32.6 \log n +O(1/n)$ (LZ77) and $127.9 \log n +O(1/n)$ 
(LZ-End) for XML. The correlation coefficient is always over 0.999, which
suggests that $N=O(n)$ and our parsing \djeremy{is}takes $O(n\log n)$ time in practice. 
Indeed, across all of our collections, the ratio $N/n$ stays between 1.05 and 
1.37, except on {\tt aaa.txt} and {\tt alphabet.txt}, where it is 10--14 (which
suggests that $N=\omega(n)$ in the worst case). Figure \ref{fig:lzend_total_work} shows the total text traversed by LZ-End parsing algorithm for two different texts.

\begin{figure}[!t]
	\centering
	\includegraphics[angle=-90,scale=0.372]{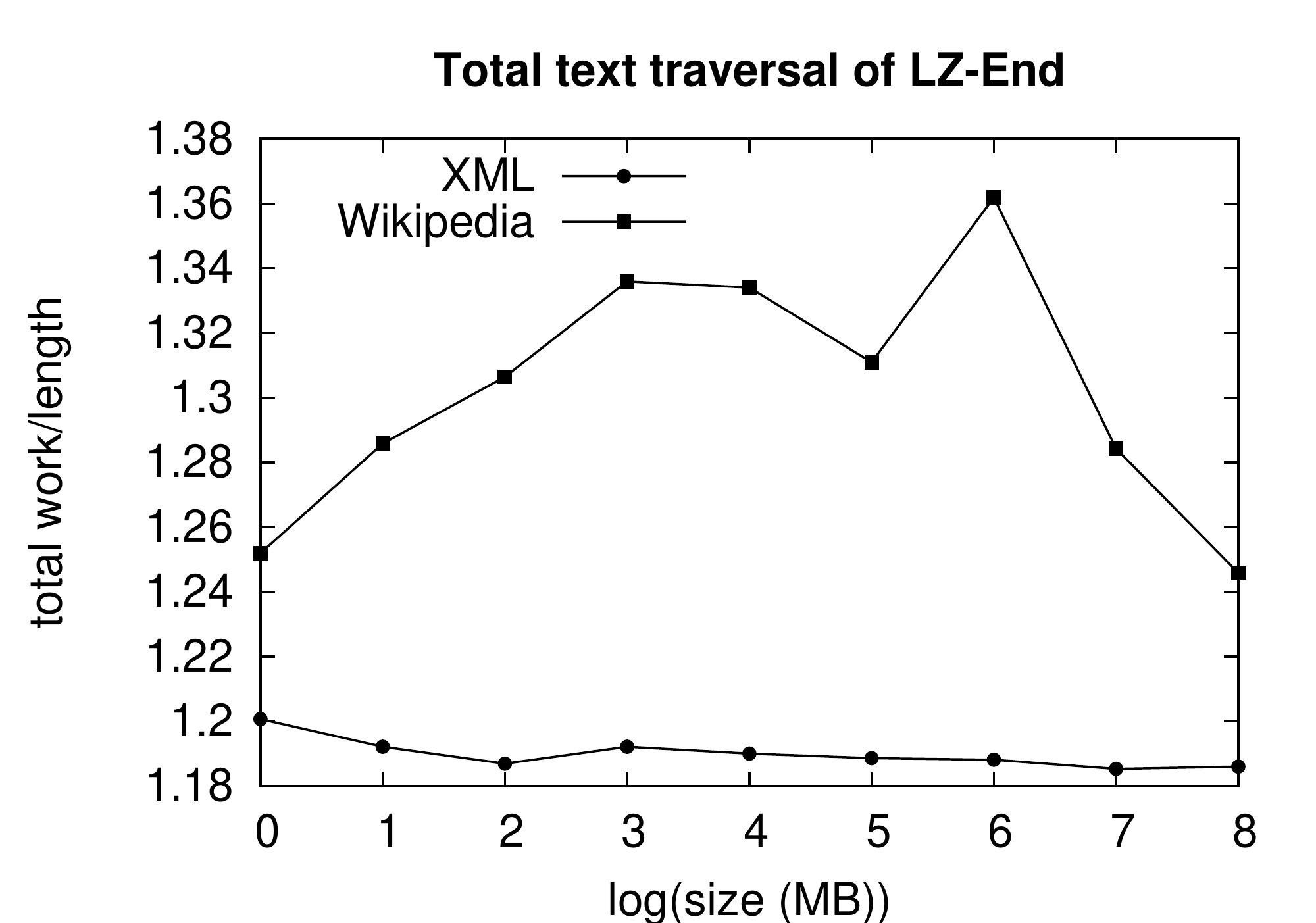}
	\caption{Total text traversed during LZ-End construction algorithm.}
	\label{fig:lzend_total_work}
\end{figure}

The LZ-End parsing time breaks down as follows. For XML: BWS 36\%, RMQ 19\%, 
tree operations 33\%, SA construction 6\% and inverse SA lookups 6\%. For
Einstein: BWS 56\%, RMQ 19\%, tree operations 17\%, and SA construction 8\% (the inverse SA lookups take negligible time).


\subsection{Text Extraction Speed}
\label{sec:lzend_ext_res}
Figure~\ref{fig:extraction_length} shows the extraction speed of 
{\em arbitrary} substrings of 
increasing length. The three parsings (LZ77, LZ-End and LZ-End2) are parameterized to use 
approximately the same space, 550KiB for Wikipedia Einstein and 64MiB for XML. This is achieved by adjusting the sample step $s$ of the $\delta$-encoded bitmaps (Section \ref{sec:deltacodes}).
It can be seen that (1) the time per character stabilizes after some extracted
length, as expected from Lemma \ref{lemma:cost_lzend}, (2) LZ-End variants extract faster than LZ77,
especially on very repetitive collections, and (3) LZ-End2 is faster than
LZ-End, even if the latter invests its better compression in a denser
sampling. Least squares fitting for the extraction time of a substring of length $m$ are given in Table \ref{tab:model}.

\begin{table}[!ht]
\begin{center}
\begin{tabular}{|l|r|}
\multicolumn{2}{c}{Pizza\&Chili XML} \\ \hline
Scheme & Model \\ \hline
LZ77 &  $4.44+0.33m$\\
LZ-End & $7.40+0.36m$\\
LZ-End2 & $6.41+0.27m$\\ \hline
\end{tabular}
\begin{tabular}{|l|r|}
\multicolumn{2}{c}{Wikipedia Einstein} \\ \hline
Scheme & Model \\ \hline
LZ77 &  $19.09+ 0.38m$\\
LZ-End & $5.75+0.19m$\\
LZ-End2 & $5.64+0.19m$\\ \hline
\end{tabular}
\end{center}
\caption[Least squares fitting for extraction time]{Least squares fitting for extraction time. All correlation coefficients are always over 0.999.}
\label{tab:model}
\end{table}

\begin{figure}[!ht]
	\centering
	\includegraphics[angle=-90,scale=0.372]{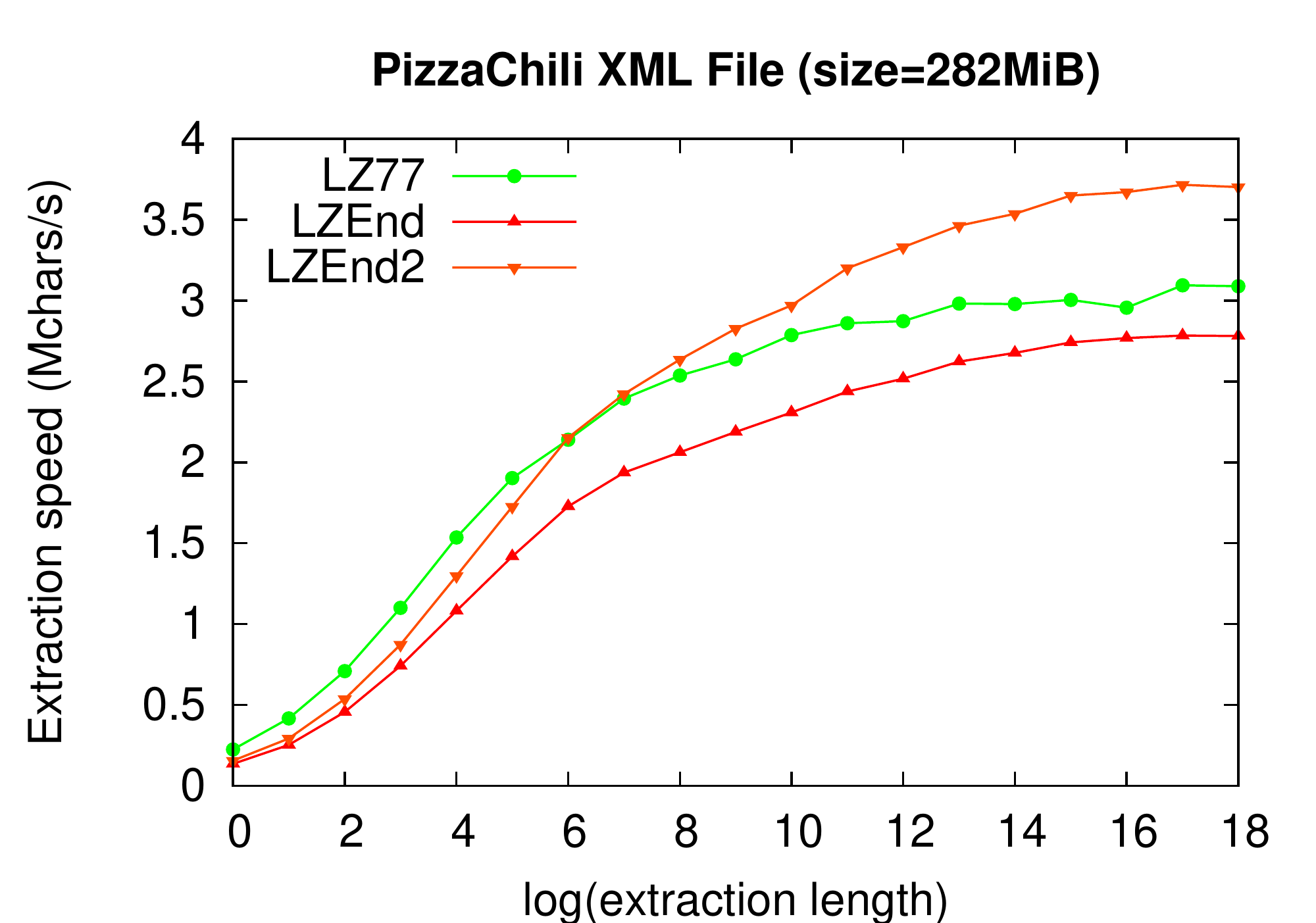}
    \includegraphics[angle=-90,scale=0.372]{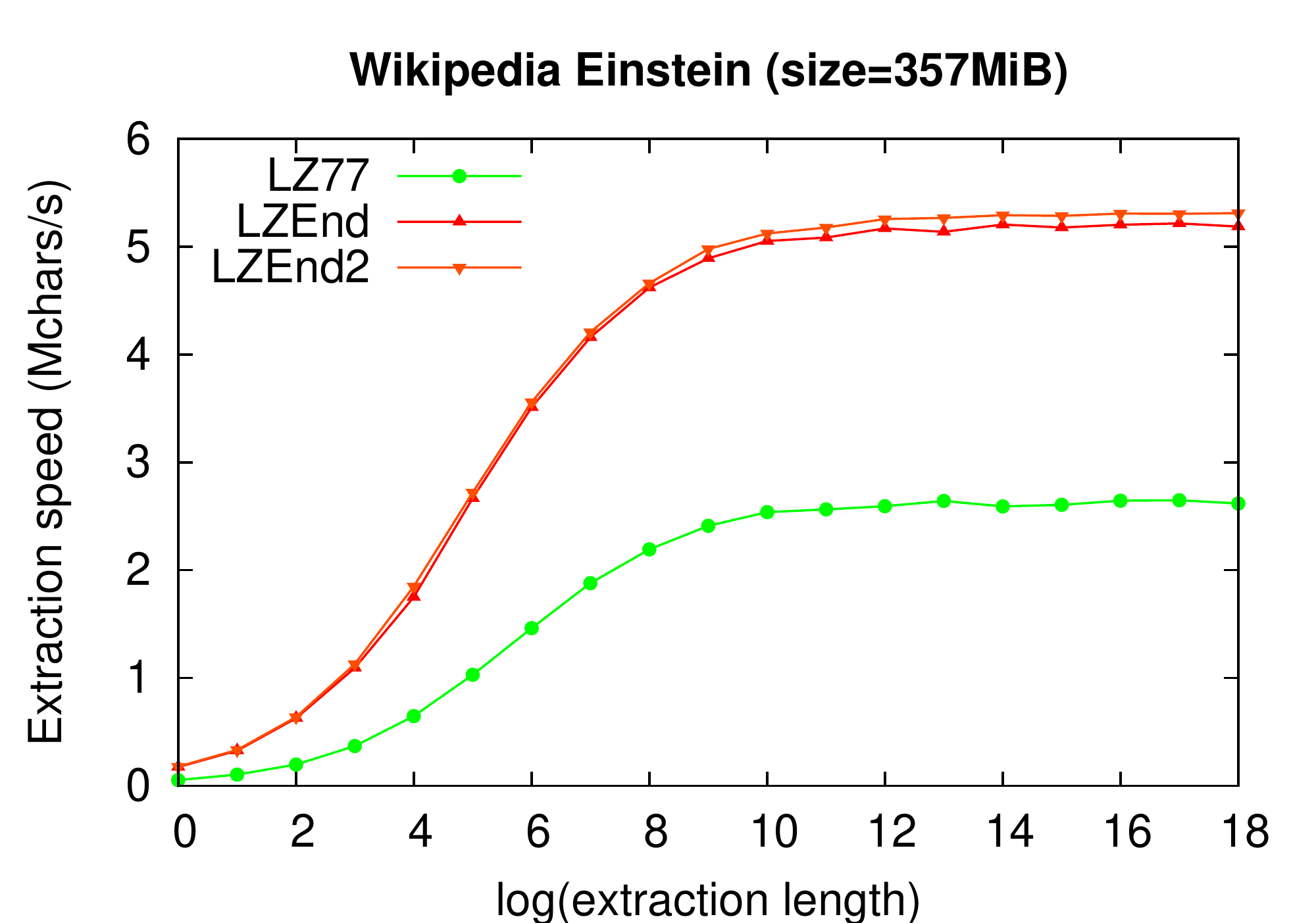}
	\caption[LZ extraction speed vs extracted length]{
Extraction speed vs extracted length, for XML and Wikipedia Einstein.
}
	\label{fig:extraction_length}
\end{figure}

We now set the extraction length to 1,000 and measure the extraction speed per character,
as a function of the space used by the data and the sampling. Here we use
bitmap $B$ and its sampling for the other formats as well. LZB and LZ-Cost have
also their own space/time trade-off parameter; we tried several
combinations and chose the points dominating the others.
Figure \ref{fig:extraction} shows the results for small and large files.

\begin{figure}[!ht]
	\centering
	\includegraphics[angle=-90,scale=0.372]{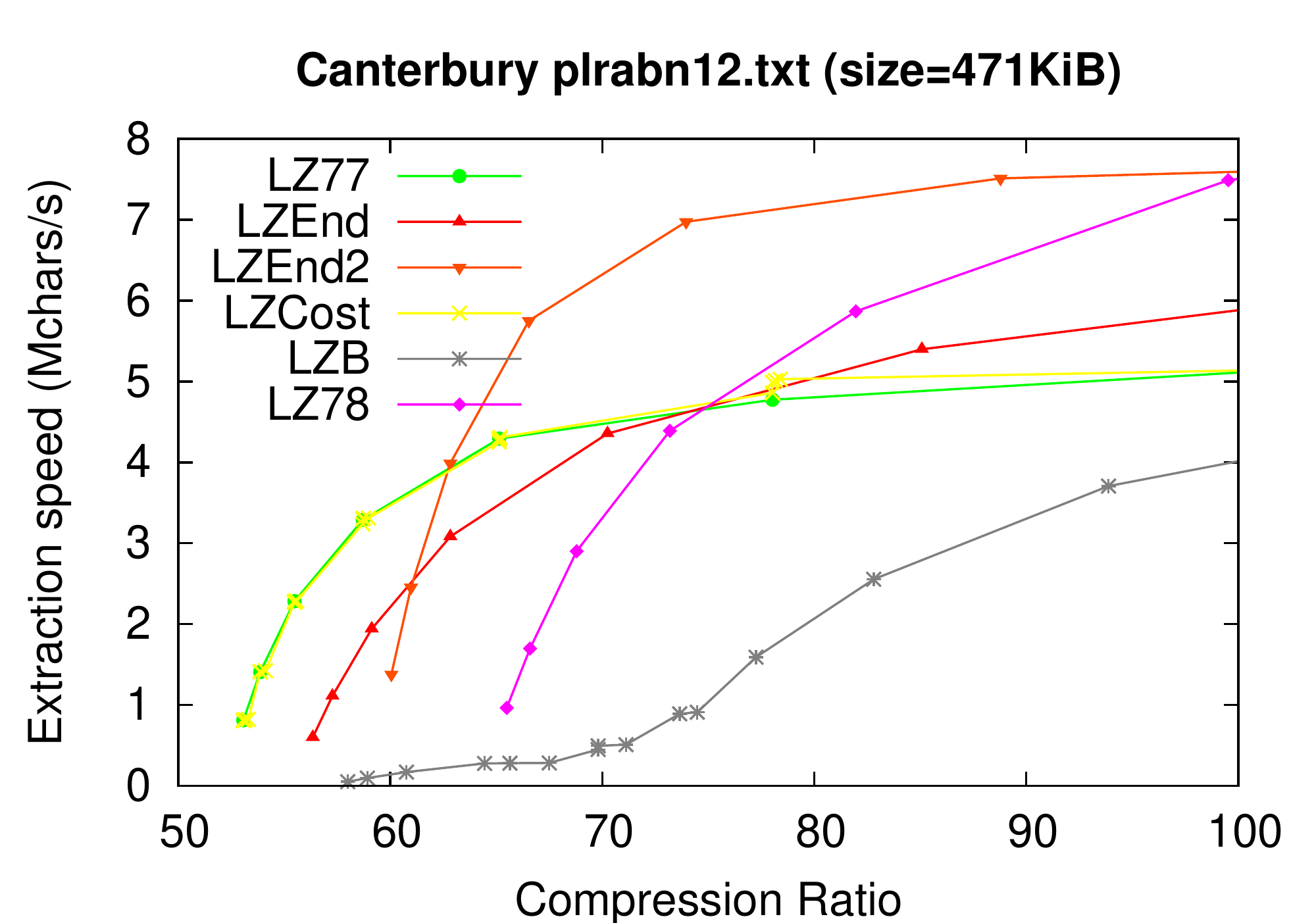}
    \includegraphics[angle=-90,scale=0.372]{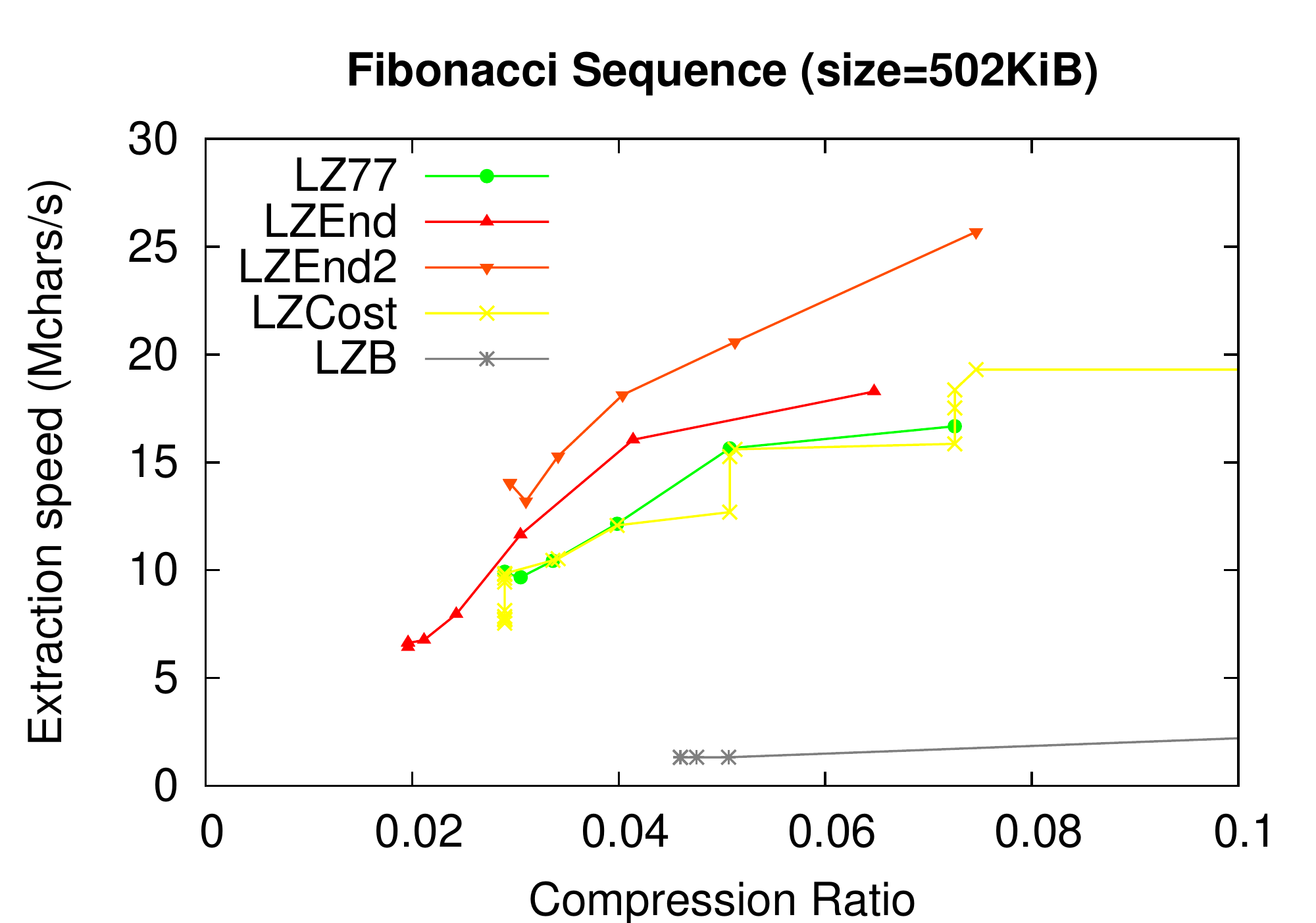}
	\includegraphics[angle=-90,scale=0.372]{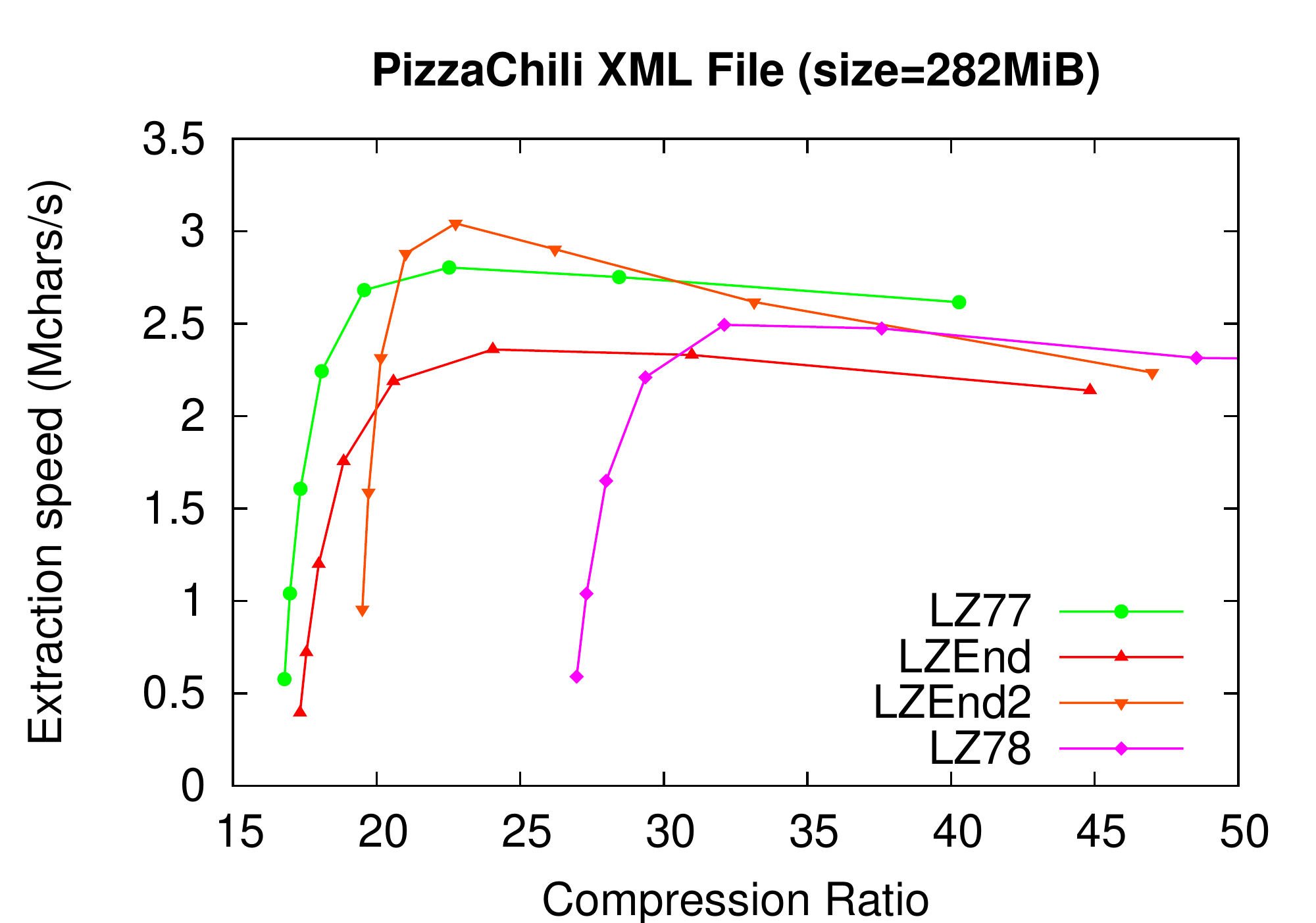}
    \includegraphics[angle=-90,scale=0.372]{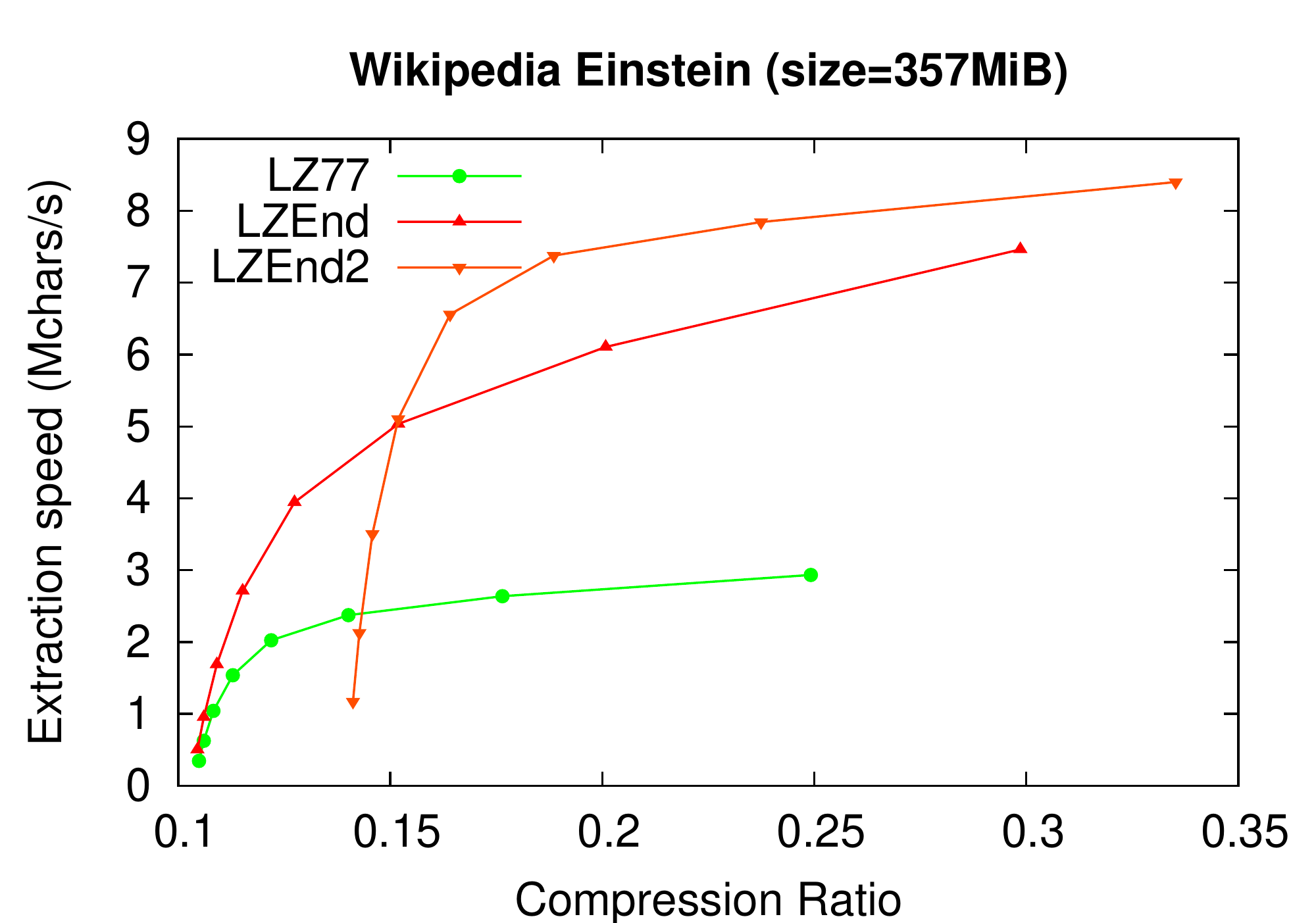}
    \caption[LZ extraction speed vs parsing size]{
Extraction speed vs parsing and sampling size, on different texts.
}
	\label{fig:extraction}
\end{figure}


It can be seen that LZB is not competitive, whereas LZ-Cost follows LZ77
closely (while offering a worst-case guarantee). The LZ-End variants dominate
all the trade-off except when LZ77/LZ-Cost are able of using less space. On
repetitive collections, LZ-End2 is more than 2.5 times faster than LZ77
at extraction.

\chapter{An LZ77-Based Self-Index}
In this chapter we describe a self-index based on the LZ77 parsing. It builds on the ideas of the original LZ-based index proposed by K{\"a}rkk{\"a}inen and Ukkonen \cite{KU96,Kar99} and the ideas presented by Navarro for reducing its space usage \cite{Nav08}. Our index will be mostly independent \jeremy{of}\djeremy{on} the type of Lempel-Ziv parsing used, and we will combine it with LZ77 and LZ-End.
We use compact data structures to achieve the minimum possible space. These structures also allow one to convert the original index into a self-index, \jeremy{so that}\djeremy{hence} we do not need the text anymore.

As we will show, the index includes all the structures needed to randomly extract any substring from the text, introduced in the previous chapter. 
The worst-case time to extract a substring of length $\ell$ is $O(\ell H)$ for LZ77 and $O(\ell+H)$ for LZ-End (see Section \ref{sec:lz_extraction}). 
Additionally, the proposed index only supports \emph{count} queries by performing a full \emph{locate}, and \emph{exists} queries by essentially locating one occurrence. For these reasons, in the following we focus \diego{only} on \ddiego{the} \emph{locate} \ddiego{query} \diego{queries}.
\section{Basic Definitions}
Assume we have a text $T$ of length $n$, which is partitioned into $n'$ phrases using a LZ77-like compressor (see Chapter \ref{chap:parsing}).
Let $P[1,m]$ be a search pattern. We will call \emph{primary occurrences} of $P$ those covering more than one phrase; \emph{special primary occurrences} those ending at the end of a phrase and being completely covered by the phrase; and \emph{secondary occurrences} those occurrences completely covered by a phrase and not ending at an end of phrase.

\begin{example}
\begin{displaymath}
\setlength\arraycolsep{0em}
\begin{array}{ccccccccc}
& & & & & & \text{\scriptsize{1 1 1}} & \text{\scriptsize{1 1 1 1 1 1 1}} & \text{\scriptsize{2 2}}\\[-0.5em]
\text{\scriptsize{1}} & \text{\scriptsize{2}} & \text{\scriptsize{3 4}} & \text{\scriptsize{5 6}} & \text{\scriptsize{7}} & \text{\scriptsize{8 9}} & \text{\scriptsize{0 1 2}} &  \text{\scriptsize{3 4 5 6 7 8 9}} & \text{\scriptsize{0 1}}\\
\phrase{a} & \phrase{\textcolor{red}{l}} & \phrase{\textcolor{red}{ab}} & \phrase{ar} & \phrase{\_} & \phrase{a\_} & \phrase{la\_} & \phrase{a\textcolor{blue}{lab}a\textcolor{green}{rd}} & \phrase{a\$} 
\end{array}
\end{displaymath}
In this example the occurrence of \texttt{`lab'} starting at position 2 (red color) is primary as it spans two phrases. The second occurrence, starting at position 14 (blue color) is secondary. The occurrence of \texttt{`rd'} starting at position 18 (green color) is special primary.
\end{example}

\ddiego{We distinguish between these three types of occurrences, as we are first finding the primary occurrences (including the special ones), and recursively from those the secondary ones.}
\diego{We need to distinguish between these three types of occurrences, as we will find first the primary occurrences (including the special ones), which will be then used to recursively find the secondary ones (which, in turn, will be used to find further secondary occurrences).}

\section{Primary Occurrences}
\djeremy{From the definition given above it holds that}\jeremy{By definition,} a primary occurrence covers at least two phrases. Thus, each primary occurrence can be seen as $P=LR$, where the left side $L$ is a suffix of a phrase and the right side $R$ is the concatenation of zero or more consecutive phrases plus a prefix of the next phrase. For this reason, to find this type of occurrences we partition the pattern in two (in every possible way). Then, we search for the occurrences of the left part of the pattern in the suffixes of the phrases and for the right part in the prefixes of the suffixes of the text starting at  beginning of phrases. Then, we need to find which pairs of left and right occurrences actually represent an occurrence of pattern $P$: 
\begin{enumerate}
    \item Partition the pattern $P[1,m]$ into $P[1,i]$ and $P[i+1,m]$ for each $1 \le i < m$.
    \item Search for the right part $P[i+1,m]$ in the prefixes of the suffixes of the text starting at phrases.
    \item Search for the left part $P[1,i]$ in suffixes of phrases.
    \item Connect both results, generating all primary occurrences.
\end{enumerate}
\subsection{Right Part of the Pattern}
\label{sec:right_part}
To find the right side \diego{$P[i+1,m]$} of the pattern we \ddiego{have} \diego{use} a \ddiego{sparse} suffix \diego{trie} (recall Sections \ref{sec:tries} and \ref{sec:ku}) that indexes all suffixes of $T$ starting at the beginning of a phrase. In the leaves of the tree we store the identifier (id) of the phrases. Conceptually, these form an array $id$ that stores the phrase ids in lexicographic order (i.e., the leaves of the \ddiego{sparse} suffix \diego{trie}). As we see later, we do not need to store $id$ explicitly.
\begin{figure}[!ht]
\centering
\begin{displaymath}
\setlength\arraycolsep{0em}
\begin{array}{ccccccccc}
\text{\scriptsize{1}} & \text{\scriptsize{2}} & \text{\scriptsize{3}} & \text{\scriptsize{4}} & \text{\scriptsize{5}} & \text{\scriptsize{6}} & \text{\scriptsize{7}} &  \text{\scriptsize{8}} & \text{\scriptsize{9}}\\
\phrase{a} & \phrase{l} & \phrase{ab} & \phrase{ar} & \phrase{\_} & \phrase{a\_} & \phrase{la\_} & \phrase{alabard} & \phrase{a\$} 
\end{array}
\end{displaymath}
\includegraphics[width=8cm]{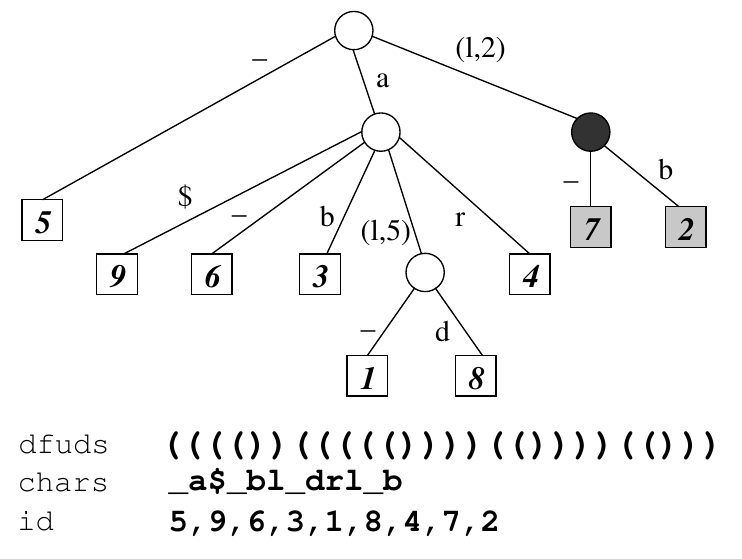}
\caption[\texorpdfstring{The \ddiego{sparse} suffix \diego{trie} for the string \texttt{`alabar\_a\_la\_alabarda\$'}}{The suffix trie for the string 'alabar\_a\_la\_alabarda\$'}]{The \ddiego{sparse} suffix \diego{trie} for the string \texttt{`alabar\_a\_la\_alabarda\$'}. The dark node is the note at which we stop searching for the pattern \texttt{`la'}, and the gray leaves represent the phrases that start with that pattern.}
\label{fig:sst}
\end{figure}

We will represent the \ddiego{sparse} suffix \diego{trie} as a labeled tree using DFUDS (Section \ref{sec:dfuds}). To search for a pattern we descend through the tree using $labeled\_child$ (recall Section \ref{sec:dfuds}), and then discard as many characters of the pattern as the skip of the branch indicates.
We continue this process \diego{either} until we reach a leaf, the pattern is completely consumed, or we cannot descend anymore. Our answer will be an interval of the array of ids, representing all \ddiego{matches} \diego{phrases starting with the pattern $P[i+1,m]$}.
In case we consume the pattern in an internal node, we need to go to the leftmost and rightmost leaves in order to obtain the interval, which is computed using $leaf\_rank$ and represents the start and end positions in the array of the ids.

\begin{example}
\nieves{Suppose we are looking for the right pattern \texttt{`la'}. Figure \ref{fig:sst} shows in dark the node at which we stop searching for the pattern, and in gray the phrases that start with that pattern. The answer is the range $[8,9]$ (i.e., the lexicographical order of the phrases).}
\end{example}

\begin{remark}
\label{rem:check}
Recall from Section \ref{sec:tries} that in a PATRICIA tree, after searching for the positions we need to check if they are actually a match, as some characters are not checked because of the skips. \nieves{In the example presented above, the answer would have been the same if we were searching for any right pattern of the form $\texttt{l}x$, where $x$ is a character distinct from $\texttt{a}$.} We use a different method here, which is explained in Section \ref{sec:connect}. 
\end{remark} 

We do not explicitly store the skips in our theoretical proposal, as they can be computed from the tree and the text. Given a node in the trie, if we go to the leftmost and rightmost leaves, we can extract the corresponding suffixes until computing how many characters they share. This value will be the sum of all the skips from the root to the given node. However, we already know they share $S$ characters, where $S$ is the sum of all skips from the root to the previous node (i.e., the parent node). Therefore, to compute the skip, we \ddiego{skip the first $S$ characters and extract from there} \diego{extract the suffixes of both leaves skipping the first $S$ characters}. The amount of symbols shared by both \diego{extracted} strings will be the skip. Extracting a skip of length $s$ will take at most $O(sH)$ time both for LZ77 and for LZ-End, since the extraction is from left to right and we have to extract one character at a time until they differ. Thus, the total time for extracting the skips as we descend is $O(mH)$.


\subsection{Left Part of the Pattern}
To find the left \diego{part $P[1, i]$} \ddiego{side} of the pattern we have a trie (actually a PATRICIA trie, Section \ref{sec:tries}) that indexes all the reversed phrases, stored as a compact labeled tree (Section \ref{sec:dfuds}). Thus to find the left part of the pattern in the text we need to search for $(P[1,i])^{rev}$ in this trie. The array that stores the leaves of the trie is called $rev\_id$ and is stored explicitly.
\begin{figure}[!ht]
\centering
\includegraphics[width=8cm]{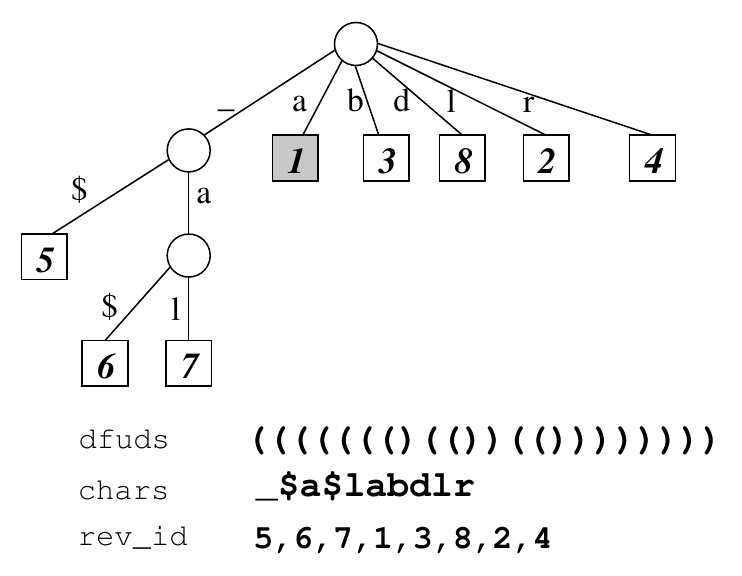}
\caption[\texorpdfstring{The reverse trie for the string \texttt{`alabar\_a\_la\_alabarda\$'}}{The reverse trie for the string 'alabar\_a\_la\_alabarda\$'}]{The reverse trie for the string \texttt{`alabar\_a\_la\_alabarda\$'}. The gray leaf is the node at which we stop searching for the pattern \texttt{`a'}.}
\label{fig:rev}
\end{figure}

The search process and the considerations for this tree are exactly the same as the ones for Section \ref{sec:right_part}.
The only difference with \ddiego{respect \jeremy{to}} the \ddiego{sparse} suffix \diego{trie} is that \ddiego{in this case the computation of the skips is faster} \diego{the computation of the skips is faster now}. Our text extraction algorithm works from right to left and since the text is reversed our algorithm outputs the characters in the correct order. Thus, extracting a skip of length $s$ takes $O(sH)$ time for LZ77 and $O(s+H)$ time for LZ-End. However, in the worst case the total time would \djeremy{be} still \jeremy{be} $O(mH)$ as all skips may be of length 1.

\begin{example}
\nieves{Suppose we are looking for the left pattern \texttt{`a'}. Figure \ref{fig:rev} shows in gray the node at which we stop searching for the pattern. In this case we end up in a leaf, so that is the only phrase that ends with the given pattern. The answer is the range $[4,4]$.}
\end{example}

\subsection{Connecting Both Parts}
\label{sec:connect}
In the previous steps we found two intervals, one in the $id$ array and the other in the $rev\_id$ array. These intervals represent the sets of phrases where the matches of the right side of the pattern start ($id$ array interval) and the phrases ending with the left side of the pattern ($rev\_id$ array interval). Actual occurrences of the pattern are composed of consecutive phrases. Hence, to find the occurrences of the pattern, we need to find which ids in the right interval are consecutive to those rev\_ids in the left interval.   
For doing so we use a range structure (see Section \ref{sec:range}) that connects the consecutive phrases in both trees. Figure \ref{fig:range} shows the range data structure connecting both trees for our example string \nieves{and below the sequence that is represented with the wavelet tree}. 
\begin{figure}[!ht]
\centering
\includegraphics[width=10cm]{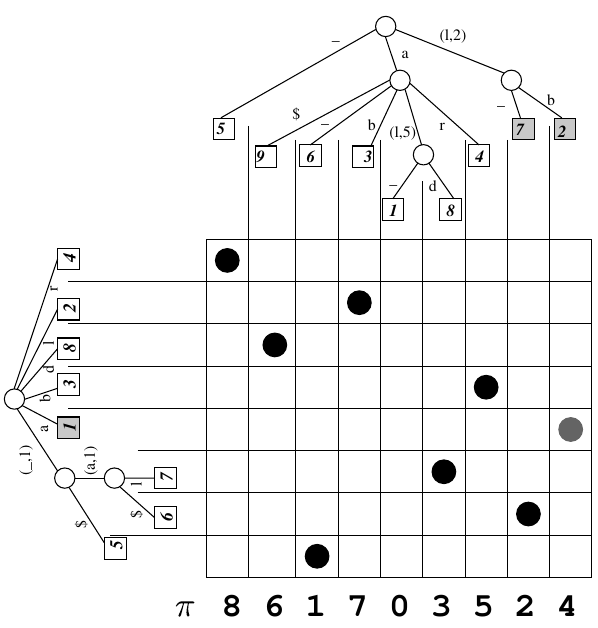}
\caption[\texorpdfstring{The range structure for the string \texttt{`alabar\_a\_la\_alabarda\$'}}{The range structure for the string 'alabar\_a\_la\_alabarda\$'}]{The range structure for the string \texttt{`alabar\_a\_la\_alabarda\$'}. The gray circle marks the only primary occurrence of the pattern \texttt{`ala'}, and the gray nodes show the ranges defined by the left and right part of the pattern.}
\label{fig:range}
\end{figure}

This structure is built from a permutation $\pi$ on $[1,\ldots,n']$. 
This permutation is just an array containing for each id (column) the corresponding rev\_id (row). In other words, the permutation holds that $id[i] = 1 + rev\_id[\pi(i)]$. For our example the permutation array would be \texttt{\{8,6,1,7,0,3,5,2,4\}} \nieves{(note that we count from left to right and from bottom to top, and that we assume that $rev\_id[0]=0$)}. 

\begin{example}
\nieves{Suppose we are looking for the pattern \texttt{`ala'}. The possible partitions are $(\texttt{a},\texttt{la})$ and $(\texttt{al},\texttt{a})$. Figure \ref{fig:range} shows in gray the ranges obtained when searching for the left and right part of partition $(\texttt{a},\texttt{la})$. Then we look for all points inside those ranges, obtaining the only primary occurrence that starts at phrase 1. The same procedure is carried out for the other partition.}
\end{example}

\begin{remark}
\label{rem:ids}
The range structure allows us to compute $id[i]$, just storing the $rev\_id$ array. Say we want to compute $id[i]$. We extract the value $S[i]$ from the wavelet tree, giving us the row $p$ where the corresponding reverse id is. Then we compute $id[i] = 1 + rev\_id[p]$. 
\end{remark}

\begin{example}
\nieves{Say we want to compute $id[6]$ (i.e., the phrase id of the 6th lexicographical smallest phrase). We extract from the wavelet tree the 6th symbol, getting the value 3. This value is the lexicographical order of the reversed 5th phrase. Computing $rev\_id[3]=7$, we know that the 5th phrase is phrase number 7. Hence, the 6th phrase is phrase number 8 (i.e., $id[6]=8$).}
\end{example}

At this stage we also have to validate that the answers returned by the search query are actual occurrences, as the PATRICIA tries by themselves do not guarantee the pattern found is actually a match (see Remark \ref{rem:check}).
For the first occurrence reported by the range data structure we extract the substring of length $m$ starting at the reported position and check if it matches the pattern. If so we can ensure that all the other reported occurrences match the pattern as well, otherwise no occurrence is a match. This process works because all occurrences reported by both tries share all characters, thus all occurrences reported by the range query share all characters. We check the validity of the occurrences here as the range check is cheaper than extracting text and we want to extract text only when a candidate to complete occurrence is found.

This structure adds $O(\log n')$ time to the search phase, and $O(\log n')$ time per primary occurrence found.

Note that we are able to answer \emph{exists} queries with the structures explained so far. If the number of occurrences reported by the range search is greater than one, then we check if one of those queries is an actual match. If there is a match, then the pattern is present in the text.

\subsection{Special Primary Occurrences}
The special primary occurrences could be found using the same steps explained above for primary occurrences, taking the left part of the pattern as the pattern itself and the right side of the pattern as the empty string $\varepsilon$. However, we do know that looking for $\varepsilon$ in the \ddiego{sparse} suffix \diego{trie} will return the complete tree, thus making the search in the range structure unnecessary. For this reason we call this type of occurrence special primary, as we search for them slightly differently from the primary ones. For these occurrences we just need to search for $P^{rev}$ in the reverse trie.

Since the search $P^{rev}$ in the reverse trie gives us a range in the rev\_id array, we decided to store it explicitly instead of the id array. Furthermore, the result of the range search gives us positions \jeremy{in}\djeremy{of} the rev\_id array.

\subsection{Converting Phrase Ids to Text Positions}
\label{sec:convertid}
From the range structure we obtain the phrase id where an occurrence lies. Then we need to convert it to a real text position. For doing so, we use a bitmap that marks the ends of phrases. This bitmap is the same $B$ used in Chapter \ref{chap:parsing} for extracting text. Figure \ref{fig:bitmap} shows the bitmap for the example string. The bitmap is below the parsing. 

\begin{figure}[!ht]
\centering
\includegraphics[width=8cm]{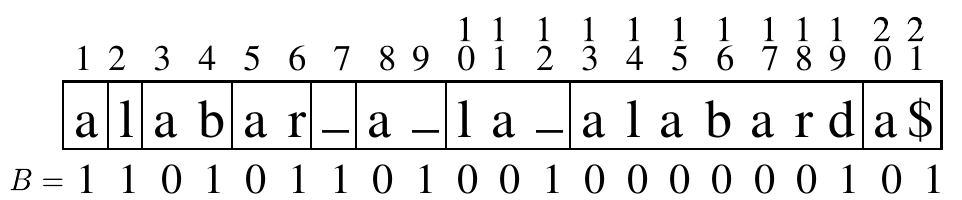}
\caption{\texorpdfstring{The bitmap $B$ of phrases for the string \texttt{`alabar\_a\_la\_alabarda'}}{The bitmap B of phrases for the string 'alabar\_a\_la\_alabarda'}}
\label{fig:bitmap}
\end{figure}

The conversion between phrase ids and text positions takes constant time as follows:
\begin{itemize}
\item $phrase(pos) = 1+rank_1(B,pos-1)$: it gives the phrase id containing any text position $pos$.
\item $first\_pos(id) = select_1(B,id-1)+1$: position of the first character of phrase $id$. 
\item $last\_pos(id) = select_1(B,id)$: position of the trailing character of phrase $id$. 
\end{itemize}

Recall from Section \ref{sec:encoding} that this bitmap also allows us to compute the length of the phrase as $length(id) = select_1(B,id+1)-select_1(B,id)$.

\subsection{Implementation Considerations}
\label{sec:implementing}
Here we explain some considerations we made when implementing our index.

\begin{itemize}
\item \textbf{Skips:} as the average value for the skips is usually very low and computing them from the text phrases is slow in practice, we considered storing the skips, for one or for both tries, using the \emph{Directly Addressable Codes} (Section \ref{sec:dac}). Note that in this case we never access array $id$ nor $rev\_id$ during the trie traversal, they are only accessed when checking and reporting the occurrences. 

\item \textbf{Binary Search:} instead of storing the trie we can do a binary search over the ids (rev\_ids) of the \ddiego{sparse} suffix \diego{trie} (reverse trie). For the \ddiego{sparse} suffix \diego{trie}, we do not have explicitly the array of ids, but as shown in Remark \ref{rem:ids} we can retrieve them using the range structure and the rev\_ids array. This alternative modifies the complexity of searching for a prefix/suffix of $P$ to $O(mH\log n')$ for LZ77 or $O((m+H)\log n')$ for LZ-End (actually, since we extract the phrases right-to-left, binary search on the reverse trie costs $O(m\log n')$ for LZ-End). Additionally, we could store explicitly the array of ids, instead of accessing them through the rev ids. Although this alternative increases the space usage of the index and does not improve the complexity, it gives an interesting trade-off in practice.
\end{itemize}
\section{Secondary Occurrences}
Secondary occurrences are found from the primary occurrences and, recursively, from \jeremy{other previously discovered} secondary occurrences. 
\subsection{Basic Idea}
\label{sec:secondary}
The idea to find the secondary occurrences is to locate all sources (of the LZ parsing) covering the occurrence and then mapping their corresponding phrases to real text positions. To do this we use  another bitmap, called \emph{bitmap of sources} $B_S$. The bitmap is built by first writing in unary the \ddiego{number} \diego{amount} of empty sources ($\varepsilon$) and then for each position of the text writing in unary how many sources start at that position. In this way each \texttt{1} corresponds to a source and a \texttt{0} represents the position where the sources (\texttt{1}s) immediately preceding it start.
Figure \ref{fig:sources} shows the sources and the corresponding phrases they generate (except the empty sources), and below the resulting bitmap. Since there are 3 empty sources the bitmap starts with \texttt{1110}, then are 5 sources starting at position 1, hence \texttt{111110} follows, then just one source starting at position 2, adding \texttt{10}, and finally one \texttt{0} for each remaining position.

Additionally, we need a permutation $P_S$ connecting the \texttt{1}s in the bitmap $B$ of phrases (recall Section \ref{sec:convertid}) to the \texttt{1}s in the bitmap $B_S$ of sources. The sources starting at a given position are sorted by increasing length, thus the last \texttt{1} before a \texttt{0} marks the longest source starting at that position. An example is given in Figure \ref{fig:perm}. 
This permutation replaces the array $source$ of Section \ref{sec:encoding}

\begin{figure}[!ht]
\centering
\includegraphics[width=10cm]{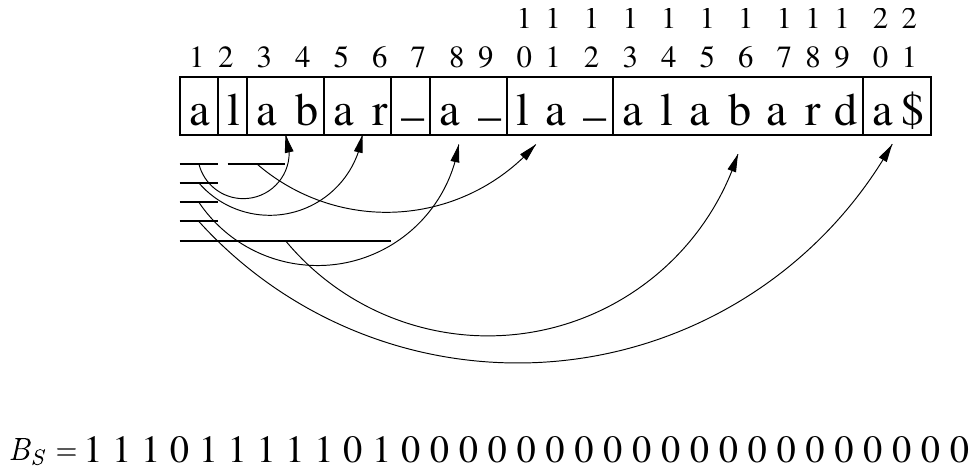}
\caption{\texorpdfstring{Marking sources on bitmap $B_S$}{Marking sources on bitmap B_S}}
\label{fig:sources}
\end{figure}

\begin{figure}[!ht]
\centering
\includegraphics[width=10cm]{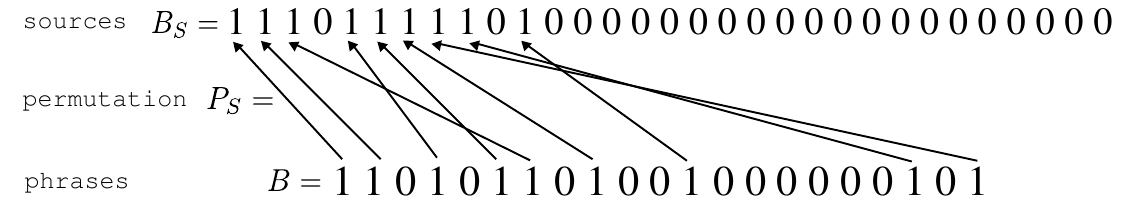}
\caption[Permutation connecting bitmap of phrases and bitmap of sources]{\texorpdfstring{Permutation connecting bitmap of phrases $B$ (bottom) and bitmap of sources $B_S$ (top)}{Permutation connecting bitmap of phrases and bitmap of sources}}
\label{fig:perm}
\end{figure}

For each occurrence found, we find the position \emph{pos} of the \texttt{0} corresponding to its starting position in the bitmap of sources. Then we consider all the \texttt{1}s to the left of \emph{pos}. We convert each source to its target phrase, compute its length and see if the source covers the occurrence. If so, we report it as a secondary occurrence and recursively generate all secondary occurrences from this new occurrence. In case the source does not cover the occurrence, we stop the process and continue processing the remaining occurrences. The algorithm is depicted in Figure \ref{alg:secondary}.

\begin{figure}[!ht]
\renewcommand{\algorithmiccomment}[1]{//#1}
\algsetup{linenodelimiter=,indent=0.8em}
\begin{center}
\begin{tabular}{l}
\begin{minipage}{12cm}
\textbf{secondaryOcc}$(start,len)$
\begin{algorithmic}[1]
    \STATE $pos \leftarrow select_0(B_S,start+1)$
    \STATE $source\_id \leftarrow pos-start-1$
    \WHILE{$source\_id>0$}
        \STATE $phrase\_id \leftarrow P_S^{-1}(source\_id)$
        \STATE $source\_start \leftarrow select_1(B_S,source\_id)-source\_id$ 
        \IF{$source\_start + len(phrase\_id) \ge start+len$}
            \STATE $occ\_pos \leftarrow first\_pos(source\_id)+start-source\_start$
            \STATE \textbf{report} $occ\_pos$
            \STATE secondaryOcc$(occ\_pos,len)$
        \ELSE
            \STATE \textbf{return}        
        \ENDIF
        \STATE $source\_id \leftarrow source\_id - 1$
    \ENDWHILE
\end{algorithmic}
\end{minipage}
\end{tabular}
\end{center}
\caption{\texorpdfstring{Searching for secondary occurrences from $T[start,start+len]$ (preliminary version)}{Searching for secondary occurrences}}
\label{alg:secondary}
\end{figure}

\begin{example}
Consider the only primary occurrence of the pattern \texttt{`la'} starting at position 2. We find the third \texttt{0} in the bitmap of sources at position 12. Then we consider all ones starting from position 11 to the left. The first \texttt{1} at position 11 maps to a phrase of length 2 that covers the occurrence, hence we report an occurrence at position 10. The second \texttt{1} maps to a phrase of length 6 that also covers the occurrence, thus we report another occurrence at position 15. The third \texttt{1} maps to a phrase of length $1$, hence it does not cover the occurrence and we stop. We proceed recursively for the secondary occurrences found at position 10 and 15.
\end{example} 

\begin{remark}
The method explained above is just introductory, as it does not work for general LZ77-like parsings. It only works for parsings in which no source strictly contains another source. Is it easy to see that if a source $S_2$ is strictly covered by another $S_1$ some secondary occurrences are lost. Let $M$ be a match of the pattern sought and let $M$ be between the rightmost positions of $S_2$ and $S_1$.  Then, as $S_2$ is the first source to the left of $M$, we test if it covers $M$, stopping the process. However, $S_1$ does cover $M$ and produces a secondary occurrence, which was not detected by the algorithm presented above.
\end{remark}

\begin{example}
Let us start with the primary occurrence of the pattern \texttt{`ba'} starting at position 4. The first source to the left is \texttt{`la'}, at position 2 and of length 2, which does not cover the pattern. Hence, the algorithm explained above would stop, reporting no secondary occurrences. However, to the left of this source is the source \texttt{`alabar'} that does cover the pattern and generates the secondary occurrence starting at position 16.
\end{example}

\subsection{Complete Solution}
\label{sec:secondary2}
K{\"a}rkk{\"a}inen in his thesis \cite{Kar99} proposes a method for converting the LZ77 parsing into one in which no source contains another. However, we decided not to use it as it increases excessively the number of phrases. Recall that our index will use space proportional to the number of phrases of the parsing, thus any increase in the number of phrases affects directly the final size of the index.

Another proposal of K{\"a}rkk{\"a}inen is to separate the sources by levels, so that within a level no source strictly contains another, and then apply the method explained in Section \ref{sec:secondary} within each level.

\begin{definition}
\label{def:depth_source}
The \emph{depth} of a source is defined as
\begin{displaymath}
    depth(s) = \left \lbrace      
                    \begin{array}{ll}
                           0 &  cover(s)=\emptyset\\
                           1+\max_{s' \in cover(s)}depth(s') &  otherwise
                         \end{array}
               \right. ,
\end{displaymath} 
where $cover(s)$ is the set of all sources containing the source $s$. Let $S_1, S_2$ be two sources starting at $p_1, p_2$ and of lengths $l_1, l_2$. $S_1$ is said to cover $S_2$ if $p_1 < p_2$ and $p_1+l_1\ge p_2+l_2$. Note that, by definition, sources starting at the same position are not covered by each other. However, sources ending at the same position may cover each other.  
For $s=\varepsilon$ we define $depth(\varepsilon)=0$.
\end{definition}


Figure \ref{fig:depth} shows the additional array storing the depths of each source. The four sources \texttt{`a'} and the source \texttt{`alabar'} have depth equal to 0 as all of them start at the same position. Source \texttt{`la'} has depth 1, as it is contained by the source \texttt{`alabar'}. 

\begin{figure}[!ht]
\centering
\includegraphics[width=10cm]{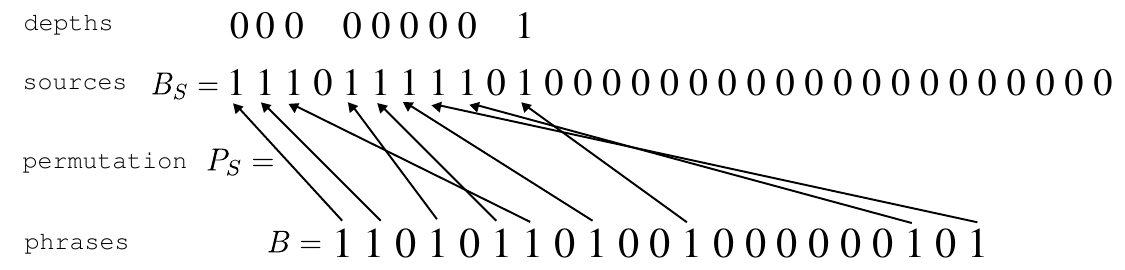}
\caption{\texorpdfstring{The depth of the sources for the string \texttt{`alabar\_a\_la\_alabarda\$'}}{The depth of the sources for the string 'alabar\_a\_la\_alabarda\$'}}
\label{fig:depth}
\end{figure}

The process now is similar to the idea presented earlier; however, now when we find a source not covering the occurrence we look for its depth $d$ and then consider to the left only sources with depth $d'<d$, as those at depth $\ge d$ are guaranteed not to contain the occurrence. This process works because in each level it holds that sources to the left will end earlier than the current source, 
because of the definition of depth. Moreover, sources at higher depths to the left will also end earlier as they are contained in a source of the current depth to the left.

Now the total running time to find all $occ$ secondary occurrences given a seed occurrence is $\Omega(\frac{1}{\varepsilon}occ\cdot L)$ and $O(\frac{1}{\varepsilon}occ\cdot L + D)$, where $\varepsilon$ is the parameter for computing the inverse permutation $P_S^{-1}$ (Section \ref{sec:permutation}), \djeremy{$O(L)$}\jeremy{$L$} is the time to find the next element to the left with depth less than a given value (an operation we consider next), and $D$ is the maximum depth. 
The additional $O(D)$ cost is because in the worst case after finding the last occurrence we will be in a source of depth $D$, then move to a source of depth $D-1$, that does not yield an occurrence, and so on up to a source of depth~1.

\subsection{Prev-Less Data Structure}
\label{sec:prevless}
As explained above, we need to be able to, given a position $pos$ in an array $U$ and a value $v$, find the rightmost position $p$ preceding $pos$ for which it holds $U[p]\le v$. We will call this query $prevLess(U,p,v)$. 

To solve this query we will encode $U$ (i.e., the array of depths of Section \ref{sec:secondary2}) using a wavelet tree (Section \ref{sec:wavelettree}) supporting this additional operation. The algorithm 
descends according to the bits of value $v$.
If the value $v$ gets mapped to a \texttt{0} we recursively search in the left subtree. If the value $v$ gets mapped to a \texttt{1} we recursively search in the right subtree. In this case, as the answer could be at the left side, we look for the rightmost \texttt{0} preceding $pos$ in the bitmap of the wavelet tree node. Finding this \texttt{0} takes constant time using \emph{rank} and \emph{select}. Finally, we return the maximum of the value returned by the right subtree and the rightmost zero.

The pseudocode of the algorithm is presented in Figure \ref{alg:prevless}. The algorithm receives as parameters a wavelet tree $tree$, a position $pos$, and a value $v$, and returns $prevLess(array(tree),pos,v)$, where $array(tree)$ are the values represented by the wavelet tree. The bitmap of the wavelet tree is denoted $tree.B$. Function \emph{toBit} returns to which side the value goes, and its output depends on the level.

\begin{figure}[!ht]
\renewcommand{\algorithmiccomment}[1]{//#1}
\algsetup{linenodelimiter=,indent=0.8em}
\begin{center}
\begin{tabular}{l}
\begin{minipage}{10.5cm}
\textbf{prevLess}$(tree, pos, v)$
\begin{algorithmic}[1]
    \STATE \COMMENT{toBit depends on the level}
    \IF{$toBit(v,tree)=0$}
        \STATE $lpos \leftarrow prevLess(tree.left, rank_0(tree.B, pos), v)$
        \STATE \textbf{return} $select_0(tree.B, lpos)$         
    \ELSE
        \STATE \COMMENT{rightmost zero}
        \STATE $rm0 \leftarrow $ $select_0(tree.B, rank_0(tree.B, pos))$ 
        \STATE $lpos \leftarrow prevLess(tree.right, rank_1(tree.B, pos), v)$
        \STATE \textbf{return} $\max\lbrace rm0, select_1(tree.B, lpos)\rbrace$
    \ENDIF
\end{algorithmic}
\end{minipage}
\end{tabular}
\end{center}
\caption{PrevLess algorithm}
\label{alg:prevless}
\end{figure}

As the algorithm just performs constant-time operations at each level, its total running time is $L=O(\log D)$.

If, \ddiego{instead of the bitmap of sources,} we label each source with its depth, and \jeremy{label the} changes to the next text position with a $D+1$, \ddiego{where $D$ is the maximum depth}, we can get rid of the original bitmap of sources. \dmine{As}Since the wavelet tree also supports \emph{rank} and \emph{select} queries we have the same functionality as the bitmap of sources, yet with the ability to answer \emph{prevLess} queries. 
However, as in practice the bitmap of sources is very sparse, we preferred to use a $\delta$-encoded bitmap to represent it and the wavelet tree for the depths.

Using this operation we can now modify algorithm \emph{secondaryOcc} of Figure \ref{alg:secondary}. We keep track of the maximum depth $d$ for which there may be sources covering the occurrence. When a source does not cover the occurrence, we update the value of $d$. Using the operation \emph{prevLess}, we move to the next candidate source. The final algorithm is presented in Figure \ref{alg:secondary2}.
\begin{figure}[!ht]
\renewcommand{\algorithmiccomment}[1]{//#1}
\algsetup{linenodelimiter=,indent=0.8em}
\begin{center}
\begin{tabular}{l}
\begin{minipage}{12cm}
\textbf{secondaryOcc}$(start,len)$
\begin{algorithmic}[1]
    \STATE $pos \leftarrow select_0(B_S,start+1)$
    \STATE $source\_id \leftarrow pos-start-1$
    \STATE \COMMENT{$D$ is the maximum depth}
    \STATE $d \leftarrow D$
    \WHILE{$source\_id>0$}
        \STATE $phrase\_id \leftarrow P_S^{-1}(source\_id)$
        \STATE $source\_start \leftarrow select_1(B_S,source\_id)-source\_id$ 
        \IF{$source\_start + len(phrase\_id) \ge start+len$}
            \STATE $occ\_pos \leftarrow first\_pos(source\_id)+start-source\_start$
            \STATE \textbf{report} $occ\_pos$
            \STATE secondaryOcc$(occ\_pos,len)$
        \ELSE
            \STATE $d \leftarrow depth[source\_id]-1$
            \IF{$d<0$}
                \STATE \textbf{return}        
            \ENDIF
        \ENDIF
        \STATE $source\_id \leftarrow prevLess(depth\_tree,source\_id,d)$
    \ENDWHILE
\end{algorithmic}
\end{minipage}
\end{tabular}
\end{center}
\caption{\texorpdfstring{Searching for secondary occurrences from $T[start,start+len]$}{Searching for secondary occurrences}}
\label{alg:secondary2}
\end{figure}

\section{Query Time}
Combining all the steps gives us the total running time to find the occurrences.
\begin{itemize}
\item \textbf{Primary Occurrences:} the total time is $O(m(Find_m^{sst} + Find_m^{rev} + \log n' + Extract_m)+ occ_1 \log n')$, where \djeremy{$O(Find_m^{sst})$} \jeremy{$Find_m^{sst}$} is the time to search for a subpattern of length $m$ in the \ddiego{sparse} suffix \diego{trie}, \djeremy{$O(Find_m^{rev})$} \jeremy{$Find_m^{rev}$} is the time to search for a subpattern of length $m$ in the reverse trie, and $occ_1$ is the number of primary occurrences. Thus the time to count the occurrences in the range structure is $O(\log n')$, the total time to locate the primary occurrences in the range structure is $O(occ_1 \log n')$, and \djeremy{$O(Extract_m)$} \jeremy{$Extract_m$} is the time to extract $m$ characters to verify the PATRICIA searches. \djeremy{$O(Extract_m)$} \jeremy{$Extract_m$}, as said in Section \ref{sec:lz_extraction}, depends on the parsing and is $O(mH)$ for LZ77 and $O(m+H)$ for LZ-End in the worst case. As our experiments show later (Section \ref{sec:results}), in practice the difference is not as drastic: LZ77 is about 3 times (for most texts) slower for long substrings and not much slower for short substrings.

The $Find$ times depend on the structures used:
    \begin{itemize}
        \item Tries: $O(Find_m^{sst})=O(Find_m^{rev}) = O(m + Skips) = O(m + mH) = O(mH)$ (as the time to compute all skips is $O(mExtract_1)$ in the worst case).
        \item Tries+Skips: $O(Find_m^{sst})=O(Find_m^{rev}) = O(m + Skips) = O(m)$ (as the skips are stored).
        \item Binary Search: $O(Find_m^{sst}) = O(\log n' \cdot Extract_m)$ if we store the array of ids explicitly, otherwise the time is $O(Find_m^{sst}) = O(\log n'(\log n'+ Extract_m))$. $O(Find_m^{rev}) = O(\log n' \cdot Extract_m)$ on LZ77, and $O(m\log n')$ on LZ-End (as the extraction takes constant time per extracted symbol  in this case).
Here we save the verification of PATRICIA trees but this has no effect on the total complexity.
    \end{itemize}
    With this the total time using tries is $O(m^2H + m\log n' + occ_1 \log n')$, independent of the parsing. 
When adding skips the time drops to $O(m^2+mH+m\log n'+occ_1 \log n')$ on LZ-End.
When using, instead, binary searching, the time is $O(m(m+H)\log n'+occ_1 \log n')$ for LZ-End and $O(m^2H\log n'+occ_1 \log n')$ for LZ77 if we store the $id$ array explicitly, otherwise the time increases to $O(m(\log n'+m + H)\log n'+occ_1 \log n')$ for LZ-End and $O(m(\log n' + mH)\log n'+occ_1 \log n')$ for LZ77.

\item \textbf{Secondary Occurrences:} the total time is $O(\frac{1}{\varepsilon}occ(\log D + D))$, where $D$ is the maximum depth and $\varepsilon$ is the parameter \jeremy{for the representation} of the permutation (Section \ref{sec:permutation}). Recall from Sections \ref{sec:secondary} and \ref{sec:secondary2} that the time to find the secondary occurrences from a seed is $O(\frac{1}{\varepsilon}occ\log D + D)$. However in this case we are recursively locating the secondary occurrences from all the occurrences found, and in a worst case we could pay $O(D)$ for each occurrence, not finding new ones. 

\end{itemize}
Taking $\varepsilon=\frac{1}{\log n'}$ gives us a total time similar to the \jeremy{one} given for the primary occurrences, except that $occ_1 \log n'$ changes to $occ \cdot D \log n'$.

To solve \emph{exists} queries, we basically search for the first primary occurrence. Hence the total time is as given for the primary occurrences replacing $occ_1=0$ (the details can be seen in Table \ref{tab:summary_index}).
\section{Construction}
In this section we explain the construction algorithm of the proposed index. We propose a practical construction algorithm, with bounded space usage and decent times. \jeremy{(See table \ref{tab:summary_parameters} for a reminder of the definitions of the variables.)}
\begin{enumerate}
    \item \textbf{Alphabet mapping:} as we work with standard texts that represent each symbol using 1 byte, we map the byte values to effective alphabet positions and vice versa.

    \item \textbf{LZ parsing:} For LZ77 we use the algorithm CPS2 of Chen \etal ~\cite{CPS08} and for LZ-End we use the algorithm \djeremy{explained} \jeremy{described} in Section \ref{sec:constr}. At this stage we generate three different files containing the trailing characters, the lengths of the sources, and the starting positions of the sources. Using suffix trees (Section \ref{sec:stree}), the LZ77 parsing can be computed in $O(n)$ time using $O(n)$ words of space, and the LZ-End parsing in $O(N)$ time using the same space. Additionally, the LZ77 parsing can be computed theoretically in $O(n\log n(\log^\epsilon n + o(\log\sigma)))$ time using $n(H_k(T)+2)+o(n\log\sigma)$ bits and the LZ-End parsing in time $O(\log n(N\log^\epsilon n + n\, o(\log\sigma)))$ using $n(H_k(T)+2)+n'\log n + o(n\log \sigma)$ bits (Section \ref{sec:constr}). The practical algorithms take total time $O(n\log n)$ for LZ77 and $O(N\log n)$ for LZ-End (recall Section \ref{sec:constr}). The space usage is around $5.7n$ bytes for LZ77 and $9n$ bytes for LZ-End, and this is the peak space usage for the self-index construction. In the index we only store explicitly the trailing characters using $n'\log \sigma$ bits.

    \item \textbf{Bitmap of phrases:} this bitmap is easily computed from the array containing the lengths of the sources in time $O(n')$. It uses $n'\log\frac{n}{n'}+O(n'+\frac{n\log\log n}{\log n})$ bits (Section \ref{sec:bitmaps}). In practice we use $\delta$-encoded bitmaps, using $n'\log \frac{n}{n'}+O(n'\log\log n+\frac{n'\log n'}{s})$ bits, where $s$ is the sampling step. All query times are then multiplied by $s+\log \frac{n'}{s}$. For the analysis we will assume $s=\log n'$.

    \item \textbf{\ddiego{Sparse} Suffix trie and reverse trie:} for constructing these trees, we decided to insert all indexed substrings in a PATRICIA trie. This is $O(n)$ time for the reverse trie, but it could be quadratic for the \ddiego{sparse} suffix \diego{trie} (there are complex $O(n)$-time algorithms for building \ddiego{sparse} suffix \diego{tries} \cite{sst_constr}). In practice this does not happen and the running time is good, as the number of phrases $n'$ generated by the parsing is relatively small. Of course we insert and store \ddiego{in the trie pointers to the text rather than the whole strings} \diego{pointers to the text in the trie, rather than the whole strings}. From the PATRICIA trie, we can extract the sorted ids, the skips and the DFUDS representation of the tree in time $O(n')$. Each tree will have at most $2n'$ nodes, hence they require $4n'+o(n')$ bits (Section \ref{sec:dfuds}) for the topology of the tree, plus $2n'$ label characters encoded using $2n'\log \sigma$ bits. Additionally, the rev\_ids are stored using $n' \log n'$ bits. 

    \item \textbf{Range structure:} to build the range structure we need a permutation from the ids of the \ddiego{sparse} suffix \diego{tries} to the ids of the reverse trie. This is done in $O(n')$ time, inverting the permutation of the ids and then traversing the rev\_ids and assigning each to the corresponding id. 
Then, the range structure is built starting from the permutation in time $O(n'\log n')$. It uses $n'\log n' + O(n'\log \log n')$ bits (Section \ref{sec:range}).

    \item \textbf{Sources depths:} for computing the secondary occurrences related structures we need to first compute the depth of each source. First we sort all sources by increasing starting position, breaking ties by decreasing length. Doing this we know that all parents of a source are to its left. We keep track of the rightmost source of depth $d$ for each possible depth. Then for each source we binary search the rightmost sources and find the deepest source $d$ that covers the current phrase. Afterward, we set the current source as the rightmost source of depth $d+1$. The running time of the algorithm is $O(n'\log n')$.  

    \item \textbf{Prev-Less Depth Structure:} 
this structure is constructed in $O(n'\log D)$ time as it is just a wavelet tree
  It uses $n'\log D + O(n'\log \log D)$ bits (Section \ref{sec:prevless}).    

    \item \textbf{Source-Phrase Permutation:} it takes $O(n')$ time starting from the ids of the sorted sources. It is stored using $(1 + \varepsilon)n' \log n' + O(n')$ bits (Section \ref{sec:permutation}), and as we set $\varepsilon = \frac{1}{\log n'}$, the total space is $n'\log n'+O(n')$ bits. 

    \item \textbf{Bitmap of Sources:} it takes $O(n')$ time to build from the starting positions of the sorted sources. It uses $n'\log\frac{n}{n'}+O(n'+\frac{n\log\log n}{\log n})$ bits (Section \ref{sec:bitmaps}). In practice we use $\delta$-encoded bitmaps, so the same considerations as for the bitmap of phrases apply.
\end{enumerate}

Adding up the space of all structures we get that the index requires $2n'\log n + n'\log n' + n'\log D + O(n'\log \sigma + \frac{n\log\log n}{\log n})$ bits of space, which in our practical implementation is $2n'\log n + n'\log n' + n'\log D + O(n'\log \sigma + n'\log\log n)$ bits plus the skips we store. Note that in the case of binary searching we do not use tries, yet the asymptotic space complexity is not reduced.

Note \jeremy{that} our practical index space is fully proportional to $n'$, depending on $n$ only logarithmically.

For the construction time and space of the index we have given practical figures. We give now two \djeremy{theoretical} \jeremy{trade-offs for the theoretical upper} bounds. The first, Theory$_1$, uses the least possible construction time, and the second, Theory$_2$, the least possible construction space.
\begin{itemize}
\item Theory$_1$: The space gets dominated by the $O(n\log n)$ bits needed to build the parsing. All construction times are $O(n'\log n')$, except the parsing and creating the PATRICIA trees. Hence, the index is built in time $O(n+n'\log n')$ ($O(N+n'\log n')$ for LZ-End) using $O(n\log n)$ bits. 

\item Theory$_2$: In Section \ref{sec:constr} we showed that the LZ77 parsing can be computed in $O(n\log n(\log^\epsilon n + o(\log\sigma)))$ time using $O(nH_k(T))+o(n\log\sigma)$ bits and the LZ-End parsing in time $O(\log n(N\log^\epsilon n + n\, o(\log\sigma)))$ using the same space. All data structures (except PATRICIA trees) are constructed as explained above, each structure requiring at most $O(n'\log n) = O(nH_k(T))+ o(n\log \sigma)$ (recall Lemma \ref{lemma:phrase_bounds}) and $O(n' \log n')$ time. 
In the following we show that we can build the PATRICIA trees in time $O(n' \log^{1+\epsilon} n) $ using $O(nH_k(T)+ o(n\log \sigma))$ bits. The idea is similar to that presented by Claude and Navarro \cite{CN10} enhanced with some ideas from Russo \etal ~\cite{RNO08}\\

1) First we build the FM-index \cite{FM05} of $T$ in $O(n\log n \frac{\log \sigma}{\log \log n})$ time within $nH_k(T)+o(n\log \sigma)$ bits of space. 2) Then, we build the Fully-Compressed Suffix Tree (FCST) \cite{RNO08}, that supports all tree operations in $O(\log^{1+\varepsilon}n)$ time. For building the FCST, we simulate a traversal of the suffix tree starting from the root using Weiner links \cite{Wei73}, which are simulated using the LF mapping (see Section \ref{sec:bwt}) over the FM-Index. During the traversal we mark all nodes that are at depth multiple of $\delta$ in the implicit tree defined by the Weiner links, where $\delta$ is the space/time trade-off parameter of the FCST (which takes $o(n \log \sigma)$ bits of space for $\delta = \log^{1+\epsilon} n$). These nodes are stored in a simple array using $o(n\log \sigma)$ bits with all the information required by the FCST construction algorithm. The running time of this process is $O(n\frac{\log \sigma}{\log \log n})$ and the space is $o(n\log \sigma)$ bits. 3) We use a dynamic balanced tree to mark some of the nodes of the FCST; these will be the nodes of our PATRICIA tree. For each phrase starting position we convert it using $A^{-1}$ to an FM-Index position, and then \emph{selectLeaf} (which gives the $i$-th leaf) converts it to a position in the FCST. We mark in the balanced tree the \emph{preorder} of the FCST node, as well as the phrase id. Then, we traverse the balanced tree from left to right computing $LCA(x_i,x_{i+1})$ (where $x_i$ is the current node and $x_{i+1}$ is the node to the right, and LCA is their lowest common ancestor in the FCST) and inserting the value in the tree. To build the PATRICIA tree, we traverse the balanced tree again from left to right, creating the PATRICIA nodes, generating the parentheses representation and the labels of the edges. Using the operation \emph{letter} of the FCST (which gives any letter of the path leading to a node) we retrieve the label. Using the FCST we determine the topology of the tree (we keep the current PATRICIA path in a stack; add closing parentheses and pop the stack until the top of the stack is an ancestor of the new node; and then we add the opening parenthesis for the current node and push it to the stack).  This step runs in $O(n' \log^{1+\epsilon} n) $ time and within $O(n'\log n)=O(nH_k(T))+o(n\log \sigma)$ (recall Lemma \ref{lemma:phrase_bounds}) bits of space.\\ 

Hence, the total time is $O(n\log n \frac{\log \sigma}{\log \log n}+n'\log^{1+\varepsilon}n)$ and the total space is $O(nH_k(T))+o(n\log \sigma)$. The process for constructing the reverse trie is almost the same, but now we do not consider the whole suffixes, because they are limited by the phrase length. Given $A^{-1}(pos)$, we use the operation $LAQ_S(d)$ of the FCST (which retrieves the ancestor with string depth $d$) to find which node we need to mark.\\

As the time for constructing the PATRICIA trees gets dominated by the parsing algorithm the total space required is $O(nH_k(T))+ o(n\log \sigma)$ and the total running time is $O(n\log n(\log^\epsilon n + o(\log\sigma)))$ for LZ77 and $O(\log n(N\log^\epsilon n + n\, o(\log\sigma)))$ for LZ-End.

\end{itemize}

\section{Summary}
We have presented a self-index that given a text of length $n$, parsed into $n'$ different phrases by a Lempel-Ziv like parsing, uses space proportional to that of the compressed text, i.e., $O(n'\log n)+o(n)$ bits. Table \ref{tab:summary_index} summarizes the space and time of the operations over the index and Table \ref{tab:summary_parameters} summarizes all the parameters of the index. In practice, due to our sparse bitmap representation, the $o(n)$ bits  disappear from the space but the times of \emph{extract}, \emph{exists} and \emph{locate} are multiplied by $O(\log n')$.
\begin{table}[ht]
\begin{center}
\begin{small}
\begin{tabular}{|l|l|l|}
\hline
 & Tries & Binary Search \\ \hline
Construction Time & \multicolumn{2}{|p{11cm}|}{Theory$_1$: $O(n+n'\log n')$ for LZ77 and $O(N+n'\log n')$ for LZ-End\newline
Theory$_2$: $O(n\log n(\log^\epsilon n + o(\log\sigma)))$ for LZ77 and $O(\log n(N\log^\epsilon n + n\, o(\log\sigma)))$ for LZ-End\newline
Practice: $O(n\log n)$ for LZ77, $O(N\log n)$ for LZ-End.} \\ \hline

Construction Space & \multicolumn{2}{|p{11cm}|}{Theory$_1$: $O(n \log n)$ bits\newline
Theory$_2$: $O(nH_k(T))+ o(n\log \sigma)$\newline
Practice: LZ77 $\approx 6n$ bytes, LZ-End $\approx 9n$ bytes} \\ \hline

Index Space & \multicolumn{2}{|p{11cm}|}{Theory:  $2n'\log n + n'\log n' + n'\log D + O(n'\log \sigma + \frac{n\log\log n}{\log n})$ bits\newline
Practice: $2n'\log n + n'\log n' + n'\log D + O(n'\log \sigma + n'\log\log n)$ bits} \\ \hline

Extract Time & \multicolumn{2}{|c|}{LZ77: $O(mH)$ , LZ-End: $O(m+H)$} \\ \hline

Exists Time & \multicolumn{1}{|p{3.5cm}|}{$O(m^2H + m\log n')$\newline
With skips and LZ-End: $O(m^2+mH+m\log n)$
} & \multicolumn{1}{|p{7.5cm}|}{
Using $n'\log n'$ additional bits: LZ77: $O(m^2H\log n')$, LZ-End: $O(m(m+H)\log n')$.\newline
Otherwise: LZ77: $O(m(\log n'+mH)\log n')$, LZ-End: $O(m(\log n'+m+H)\log n')$
} \\ \hline

Locate Time & \multicolumn{1}{|p{3.5cm}|}{$O(m^2H + m\log n' +  occ \cdot D \log n')$\newline
With skips and LZ-End: $O(m^2+mH+m\log n'+occ \cdot D \log n' )$
} & \multicolumn{1}{|p{7.5cm}|}{
Using $n'\log n'$ additional bits: LZ77: $O(m^2H\log n' + occ \cdot D \log n')$, LZ-End: $O(m(m+H)\log n' + occ \cdot D \log n')$.\newline
Otherwise: LZ77: $O(m(\log n'+mH)\log n' + occ \cdot D \log n')$, LZ-End: $O(m(\log n' + m+H)\log n' + occ \cdot D \log n')$
}
 \\ \hline
\end{tabular}
\end{small}
\end{center}
\caption[Summary table of LZ77-Index]{Summary table of LZ77-Index. Adding skips requires at most $4n'\log n$ more bits, but far less in practice. In practice times are multiplied by $O(\log n')$.}
\label{tab:summary_index}
\end{table}

\begin{table}[ht]
\begin{center}
\begin{small}
\begin{tabular}{|l|l|l|}
\hline
Parameter & Description & Defined in\\ \hline
$\sigma$ & size of the alphabet & Section \ref{sec:strings} \\ \hline
$n$ & length of the text & Section \ref{sec:strings} \\ \hline
$n'$ & length of the LZ parsing & Definitions \ref{def:lz77} and \ref{def:lzend}\\ \hline
$m$ & length of the pattern & Section \ref{sec:queries}\\ \hline
$s$ & sampling step of $\delta$-encoded bitmap & Section \ref{sec:deltacodes}\\ \hline
$\varepsilon$ & parameter of the permutation & Section \ref{sec:permutation}\\ \hline
$D$ & maximum depth of the sources & Section \ref{sec:secondary2}\\ \hline
$H$ & height of the LZ parsing & Definition \ref{def:height}\\ \hline
$N$ & total text retraversed in the LZ-End parsing & Section \ref{sec:constr}\\ \hline
\end{tabular}
\end{small}
\end{center}
\caption[Summary table of parameters of LZ77-Index]{Summary table of parameters of LZ77-Index}
\label{tab:summary_parameters}
\end{table}

\chapter{Experimental Evaluation}
\label{chap:results}
\section{Experimental Setup}
\label{sec:results}
In our tests we compared the proposed index against RLCSA \cite{NM07}. We did not test the SLP Index \cite{CN09} because we could not make it run consistently in our collections, yet some comparison results can be inferred from their experimental evaluation \cite{CFMPNbibe10}, as we do \djeremy{later} \jeremy{in Section \ref{sec:analysis_index}}.

We used in our experiments the LZ77 and the LZ-End parsings. For the LZ indexes we used the following variants (\jeremy{defined in} \djeremy{see} Section \ref{sec:implementing}), ordered by decreasing space requirement. In all variants we stored the skips of the trees using DAC (Section \ref{sec:dac}), because not using them lead to results worse than using binary search \jeremy{(our slowest variant)}.
\begin{enumerate}
\item \ddiego{Sparse} Suffix \diego{trie} and reverse trie (original proposal).
\item Binary search on ids with the explicit ids and reverse trie.
\item Binary search on reverse ids and \ddiego{sparse} suffix \diego{trie}.
\item Binary search on ids with explicit ids and binary search on reverse ids.
\item Binary search on ids with implicit ids and binary search on reverse ids.
\end{enumerate}

Recall from Section \ref{sec:implementing} that the array of the ids is not stored in the index, only the array of reverse ids. Thus, if we want to binary search over ids we have two alternatives: (1) spend $n'\log n'$ additional bits to store explicitly the array of ids, or (2) using Remark \ref{rem:ids} to access the array implicitly by paying $O(\log n')$ access time. The index variants with \emph{explicit ids} refer to the alternative (1) and the ones with \emph{implicit ids} refer to alternative (2). The alternative using the \ddiego{sparse} suffix \diego{trie} do not need to access the id array, but rather the reverse ids array, which is always maintained in explicit form

\begin{remark}
The reader may note that the \jeremy{results concerning the} alternative
\begin{itemize}
\item Binary search on ids with implicit ids and reverse trie
\end{itemize}
is not present. We omit \jeremy{the empirical results of} this alternative as the compression ratio is about the same obtained using alternative number 3 and the performance of \emph{locate} is noticeably worse. Remember that accessing the implicit array of ids takes time $O(\log n')$.
\end{remark}

 
The parameters used for the data structure are as follows: $s=16$ for the $\delta$-codes bitmap (Section \ref{sec:deltacodes}), $\varepsilon=1/32$ for the permutation (Section \ref{sec:permutation}) and sampling step $b=20$ for the bitmaps of González \etal ~(Section \ref{sec:prac_bitmap}). We used these parameter values as they are the typical ones used in experimentation, and additionally with these values our indexes achieve a good space/time trade-off.

For RLCSA we used sampling with steps 512, 256, 128 and 64. The index was built using a buffer of 100MiB.   

All our experiments were conducted on a machine with two Intel Xeon CPU running at 2.00GHz with 16GiB main memory. The operating system is Ubuntu 8.04.4 LTS with Kernel 2.6.24-27-server. The compiler used was \texttt{g++} (\texttt{gcc} version 4.2.4) executed with the \texttt{-O3} optimization flag.

We present the results obtained for the following texts \nieves{(the results of only one text from each collection are presented in this section, the remaining results are presented in Appendix \ref{appendix:results})}:
\begin{itemize}
\item Artificial: $F_{41}$, $R_{13}$, $T_{29}$.
\item Pseudo-Real: DNA 0.1\% (Scheme 1), Proteins 0.1\% (Scheme 1), English 0.1\% (Scheme 2), Sources 0.1\% (Scheme 2).
\item Real: Para, Cere, Influenza, Escherichia Coli, Coreutils, Kernel, Einstein (en), Einstein (de), World Leaders. 
\end{itemize}

We restricted the experiments to the texts listed above since many of the texts produced similar results \jeremy{during preliminary experiments}. For DNA and Wiki domains, we chose the largest texts. For the case of pseudo-real texts we kept the DNA, Proteins, English and Sources texts, as these kind of texts naturally form repetitive collections. 

Two types of experiments were carried out. One, labeled ``results (1)'' in Figures \figuresres, considers the time of the operations as a function of the pattern length $|P|$. The second, labeled ``results (2)'' in Figures \figuresres, shows the space/time trade-off of the operations. 
Although we do not show a space/time tuning for LZ77/LZ-End index, the plots of figures labeled ``results (2)'' show a line formed by 5 points. These refer to the variants 1-5 described above.
The results are presented and discussed in Section \ref{sec:results}. The experiments conducted are the following:
\begin{itemize}
\item \textbf{Construction time and space:} we present the build time for each index as well as the peak memory usage. Results are given in Figure \ref{fig:construction}.

\item \textbf{Compression Ratio:} we present the compression ratio for different self-indexes. We show alternatives 1 and 5 of our indexes, which are respectively the largest and smallest variants. For RLCSA we show the space achieved with a sampling step of 512, and without the samples, which is the lowest space reachable by that index. For the LZ78-index \cite{AN06} (Section \ref{sec:lz78index}) we used $\varepsilon=\frac{1}{128}$ as the sampling step of the permutation. Additionally, we show the compression ratio of ILZI \cite{ilzi} (Section \ref{sec:ilzi}). The results are shown in Table \ref{tab:index_compression}, where we also include p7zip as a baseline.

\item \textbf{Structures Space:} we present the space usage of the different data structures used in our indexes. The results are given in Tables \ref{tab:lz77_det} and \ref{tab:lzend_det} as percentage of the size of the index.

\item \textbf{Parsing Statistics:} we present the value of $D$ and $H$, which affect the performance of our indexes. The results are displayed in Tables \ref{tab:dvalues} and \ref{tab:hvalues}.
 
\item \textbf{Extraction speed:} we extracted 10,000 substrings of length $2^i, i\in\lbrace 0, \ldots, 12 \rbrace$. We show only one line for the LZ77 and the LZ-End index, as all the variants have the same extraction speed. See the top-left plots of Figures \ref{res:t29:1}-\ref{res:kernel:2} labeled ``results (1)'', which are representative of all the results (the rest are in Appendix \ref{appendix:results}, in Figures \ref{res:f41:1}-\ref{res:leaders:2}). We also show the space/time trade-off of the indexes for extracting a pattern of length $2^{12}$. Extraction times per character stabilize at this length. See the top-left plot of Figures \figuresres labeled ``results (2)''.

\item \textbf{Search time:} we located 1,000 patterns of length 10, 15, and 20. We limited the number of occurrences reported to 30,000. See plots 2-4 in reading order of Figures \figuresres labeled ``results (2)''. We also located patterns of increasing length from 5 to 40. In this case we only show the results for alternative 1 (original proposal) and 5 (minimum space) of both LZ77 and LZ-End.
See top-right plot of Figures \figuresres labeled ``results (1)''.

\item \textbf{Locate time:} we located 1,000 patterns of length 2 and 4. This test highlights the time needed to find the occurrences in our indexes, as it dominates the time for traversing the tries. We limited the number of occurrences reported to 100,000. See plots 5-6 of Figures \figuresres labeled ``results (2)''.

\item \textbf{Exists Time:} we generated 20,000 patterns of lengths 5, 10, 20, 40 and 80; half of them were present in the text and the other half were a random concatenation of symbols of the text. For RLCSA we check the existence using a \emph{count} query. For this reason, we only show one line for RLCSA, as \emph{count} time is independent of the sampling size. The \emph{exists} query of the LZ77 index is basically a search of a primary occurrence, and thus it illustrates the time for traversing the tries. We only show the results for alternative 1 (original proposal) of both LZ77 and LZ-End, since the other alternatives are orders of magnitude slower for this. See left-bottom plot of  Figures \figuresres labeled ``results (1)'' for existing patterns and right-bottom plot for non-existing patterns. We also show the space/time trade-off of these two queries for patterns of length 20, see plots 7-8 of Figures \figuresres labeled ``results (2)''.
\end{itemize}

\begin{figure}[!ht]
\centering
\includegraphics[angle=-90,scale=0.35]{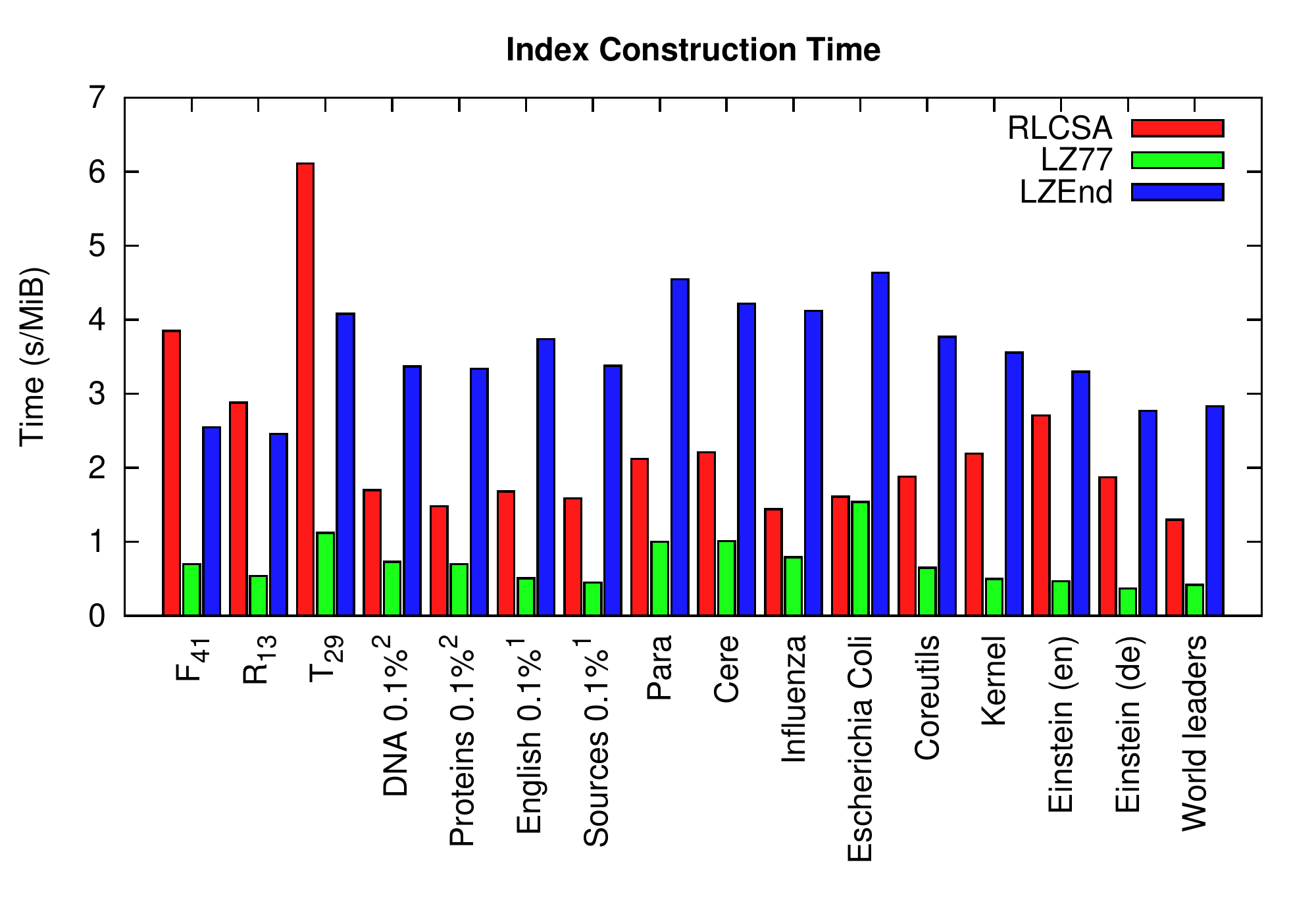}
\includegraphics[angle=-90,scale=0.35]{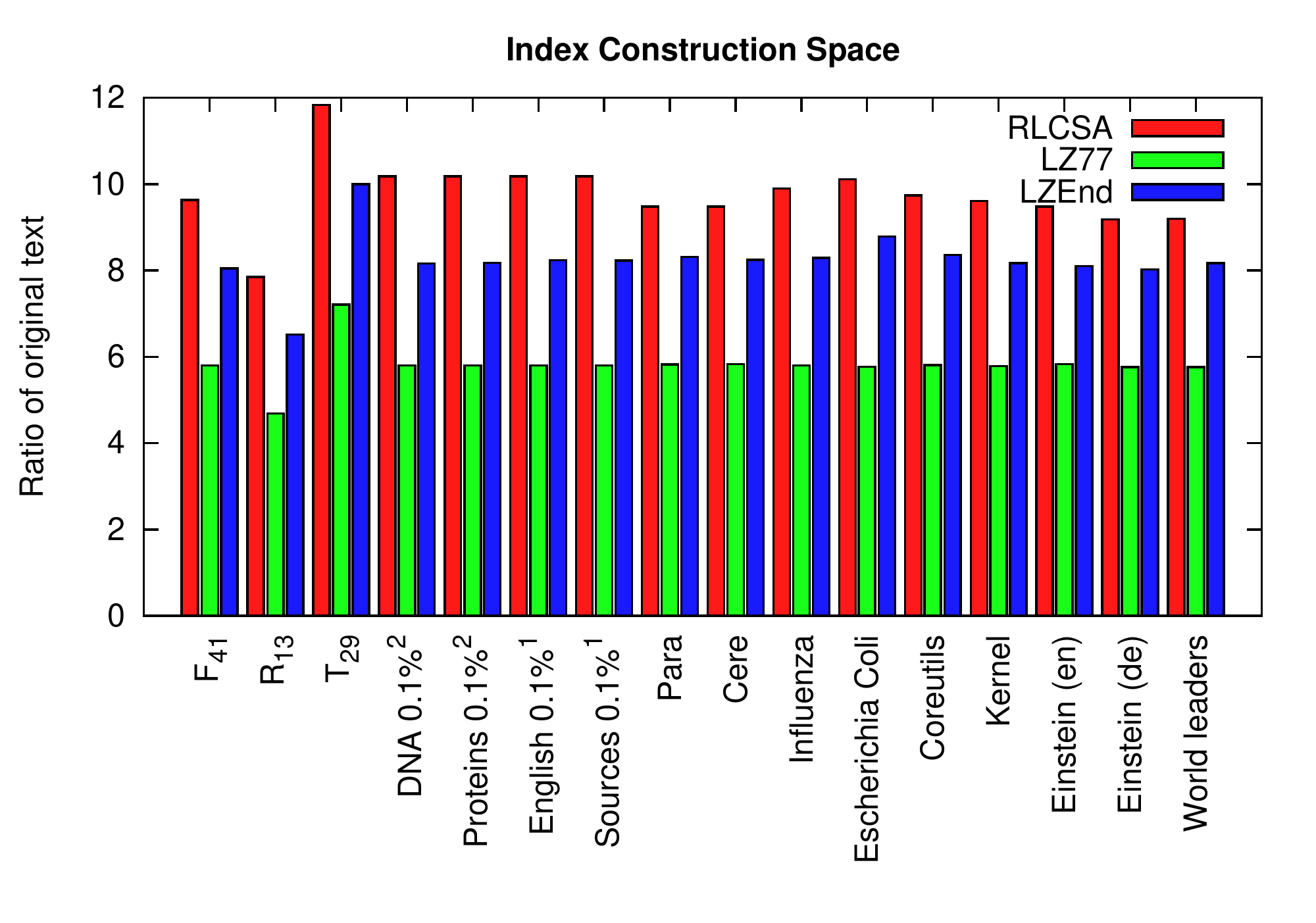}
\begin{scriptsize}
\begin{tabular}{|l|r|r|r|r|r|r|}
\hline
 & \multicolumn{2}{|c|}{LZ77 Index} & \multicolumn{2}{|c|}{LZ-End Index} & \multicolumn{2}{|c|}{RLCSA} \\ \hline
\textbf{File} & Time & Space & Time & Space & Time & Space \\ \hline
$F_{41}$ & 0.70 & 5.80 & 2.55 & 8.05 & 3.85 & 9.63 \\ \hline
$R_{13}$ & 0.54 & 4.69 & 2.46 & 6.52 & 2.88 & 7.85 \\ \hline
$T_{29}$ & 1.12 & 7.21 & 4.08 & 10.00 & 6.11 & 11.83 \\ \hline \hline
DNA 0.1\% $^2$ & 0.73 & 5.80 & 3.37 & 8.16 & 1.70 & 10.18 \\ \hline
Proteins 0.1\% $^2$ & 0.70 & 5.80 & 3.34 & 8.18 & 1.48 & 10.18 \\ \hline
English 0.1\% $^1$ & 0.51 & 5.80 & 3.74 & 8.24 & 1.68 & 10.18 \\ \hline
Sources 0.1\% $^1$ & 0.45 & 5.80 & 3.38 & 8.23 & 1.59 & 10.18 \\ \hline \hline
Para & 1.00 & 5.82 & 4.55 & 8.31 & 2.12 & 9.48 \\ \hline
Cere & 1.01 & 5.83 & 4.22 & 8.25 & 2.21 & 9.48 \\ \hline
Influenza & 0.79 & 5.80 & 4.12 & 8.29 & 1.44 & 9.90 \\ \hline
Escherichia Coli & 1.54 & 5.77 & 4.64 & 8.79 & 1.61 & 10.11 \\ \hline
Coreutils & 0.65 & 5.81 & 3.77 & 8.36 & 1.88 & 9.74 \\ \hline
Kernel & 0.50 & 5.79 & 3.56 & 8.17 & 2.19 & 9.61 \\ \hline
Einstein (en) & 0.47 & 5.83 & 3.30 & 8.10 & 2.71 & 9.48 \\ \hline
Einstein (de) & 0.37 & 5.76 & 2.77 & 8.02 & 1.87 & 9.18 \\ \hline
World leaders & 0.42 & 5.76 & 2.83 & 8.17 & 1.30 & 9.20 \\ \hline
\end{tabular}
\end{scriptsize}
\caption[Construction time and space for the indexes]{Construction time and space for the indexes. Times are in seconds per MiB and spaces are the ratio between construction space and text space.}
\label{fig:construction}
\end{figure}

\begin{table}[htbp]
\begin{center}
\begin{scriptsize}
\newcommand{\e}[1]{E#1}
\begin{tabular}{|l||r|r||r|r||r|r|r|r||r|}
\hline 
Text                & LZ78  & ILZI   & RLCSA                 & RLCSA$_{512}$& LZ77$_5$              & LZ77$_1$              & LZ-End$_5$            & LZ-End$_1$            & p7zip \\ \hline \hline
$F_{41}$            & 2.43  & 0.59   & 4.73\e{-4}   & 0.96 & \textbf{3.28\e{-4}}   & 6.04\e{-4}   & \textbf{3.75\e{-4}}   & 5.75\e{-3}   & 0.18 \\ \hline
$R_{13}$            & 3.68  & 0.59   & 7.47\e{-4}   & 0.94 & \textbf{4.15\e{-4}}   & 7.36\e{-4}   & \textbf{6.72\e{-4}}   & 1.10\e{-3}   & 0.36 \\ \hline
$T_{29}$            & 4.18  & 0.76   & 6.03\e{-4}   & 0.52 & \textbf{3.80\e{-4}}   & 6.93\e{-4}   & \textbf{5.95\e{-4}}   & 1.02\e{-3}   & 0.17 \\ \hline\hline
DNA 0.1\% $^2$      & 83.28 & 60.55  & 4.25                  & 4.73 & \textbf{2.12}         & \textbf{3.47}         & \textbf{4.21}         & 6.19                  & 0.51 \\ \hline
Proteins 0.1\% $^2$ & 82.94 & 71.67  & 3.80                  & 4.30 & \textbf{2.37}         & \textbf{3.69}         & 4.69                  & 6.71                  & 0.59 \\ \hline
English 0.1\% $^1$  & 90.71 & 69.42  & 4.52                  & 5.00 & \textbf{2.49}         & \textbf{3.91}         & 7.32                  & 10.62                 & 0.55 \\ \hline
Sources 0.1\% $^1$  & 82.82 & 58.40  & 3.77                  & 4.25 & \textbf{2.10}         & \textbf{3.31}         & 6.83                  & 10.13                 & 0.44 \\ \hline\hline
Para                & 82.04 & 72.99  & 9.86                  & 10.83 & \textbf{5.36}         & \textbf{8.45}         & \textbf{9.20}         & 13.38                 & 1.46 \\ \hline
Cere                & 79.37 & 66.93  & 7.60                  & 8.57 & \textbf{3.74}         & \textbf{5.94}         & \textbf{6.16}         & 8.96                  & 1.14 \\ \hline
Influenza           & 46.88 & 36.19  & 3.84                  & 4.77 & 4.55                  & 7.49                  & 9.20                  & 13.89                 & 1.35 \\ \hline
Escherichia Coli    & 88.36 & 83.69  & 24.71                 & 25.62 & \textbf{18.82}        & 29.71                 & 30.36                 & 43.58                 & 4.72 \\ \hline
Coreutils           & 81.36 & 58.38  & 6.53                  & 7.46 & 7.89                  & 12.47                 & 11.89                 & 17.47                 & 1.94 \\ \hline
Kernel              & 85.94 & 60.19  & 3.78                  & 4.71 & \textbf{3.31}         & 5.26                  & 5.12                  & 7.50                  & 0.81 \\ \hline
Einstein (en)       & 33.26 & 13.28  & 0.23                  & 1.20 & \textbf{0.18}         & 0.30                  & 0.32                  & 0.48                  & 0.07 \\ \hline
Einstein (de)       & 48.99 & 21.61  & 0.48                  & 1.39& \textbf{0.32}         & 0.54                  & 0.59                  & 0.89                  & 0.11 \\ \hline
World leaders       & 49.74 & 35.08  & 3.32                  & 4.20 & 3.85                  & 6.27                  & 6.44                  & 9.63                  & 1.29 \\ \hline
\end{tabular}
\end{scriptsize}
\end{center}
\caption[Compression ratio of different self-indexes]{Compression ratio (given in percentage of original file size) of different self-indexes. In bold are highlighted those LZ-based indexes outperforming the best compression achievable by RLCSA.}
\label{tab:index_compression}
\end{table}

\begin{table}[!ht]
\centering
\begin{scriptsize}
\begin{tabular}{|l|r|r|r|r|}
\hline
 & \multicolumn{2}{|c|}{LZ77 Index} & \multicolumn{2}{|c|}{LZ-End Index} \\ \hline
Text & Mean & $D$ & Mean & $D$ \\ \hline \hline
$F_{41}$  & 0.00 & 0 & 0.10 & 1 \\ \hline
$R_{13}$  & 0.35 & 2 & 1.47 & 4 \\ \hline
$T_{29}$  & 0.51 & 2 & 1.58 & 3 \\ \hline \hline
DNA 0.1\% $^2$  & 13.45 & 27 & 11.43 & 26 \\ \hline
Proteins 0.1\% $^2$  & 14.77 & 38 & 12.25 & 43 \\ \hline
English 0.1\% $^1$  & 1.65 & 15 & 2.81 & 23 \\ \hline
Sources 0.1\% $^1$  & 1.55 & 16 & 2.65 & 34 \\ \hline \hline
Para  & 2.76 & 24 & 3.31 & 21 \\ \hline
Cere  & 3.53 & 37 & 4.32 & 21 \\ \hline
Influenza  & 2.00 & 33 & 1.58 & 24 \\ \hline
Escherichia Coli  & 2.15 & 13 & 2.40 & 15 \\ \hline
Coreutils  & 2.51 & 32 & 2.29 & 40 \\ \hline
Kernel  & 2.57 & 28 & 3.05 & 41 \\ \hline
Einstein (en)  & 2.53 & 14 & 2.62 & 22 \\ \hline
Einstein (de)  & 2.86 & 13 & 3.27 & 16 \\ \hline
World leaders  & 2.81 & 46 & 2.63 & 29 \\ \hline
\end{tabular}
\end{scriptsize}
\caption[\texorpdfstring{$D$ value and mean depth for the LZ indexes}{D value and mean depth for the LZ indexes}]{$D$ value (i.e., maximum depth) and mean depth for the LZ indexes}
\label{tab:dvalues}
\end{table}

\begin{table}[!ht]
\centering
\begin{scriptsize}
\begin{tabular}{|l|r|r|r|r|}
\hline
 & \multicolumn{2}{|c|}{LZ77 Index} & \multicolumn{2}{|c|}{LZ-End Index} \\ \hline
Text & Mean & $H$ & Mean & $H$ \\ \hline \hline
$F_{41}$  & 26.56 & 39 & 24.56 & 36 \\ \hline
$R_{13}$  & 19.54 & 27 & 21.7 & 45 \\ \hline
$T_{29}$  & 14.41 & 22 & 13.53 & 27 \\ \hline
DNA 0.1\% $^2$  & 6.63 & 26 & 5.81 & 25 \\ \hline
Proteins 0.1\% $^2$  & 4.77 & 19 & 4.86 & 22 \\ \hline
English 0.1\% $^1$  & 30.36 & 96 & 9.67 & 43 \\ \hline
Sources 0.1\% $^1$  & 31.06 & 98 & 10.08 & 64 \\ \hline
Para  & 9.38 & 31 & 8.39 & 259 \\ \hline
Cere  & 9.25 & 34 & 8.66 & 257 \\ \hline
Influenza  & 11.93 & 46 & 11.65 & 80 \\ \hline
Escherichia Coli  & 7.95 & 28 & 6.86 & 28 \\ \hline
Coreutils  & 9.03 & 39 & 11.21 & 175 \\ \hline
Kernel  & 12.36 & 51 & 9.45 & 46 \\ \hline
Einstein (en)  & 176.28 & 1003 & 23.66 & 118 \\ \hline
Einstein (de)  & 73.01 & 507 & 15.72 & 61 \\ \hline
World leaders  & 11.31 & 41 & 14.77 & 103 \\ \hline
\end{tabular}
\end{scriptsize}
\caption[\texorpdfstring{$H$ value and mean extraction cost for the LZ indexes}{H value and mean extraction cost for the LZ indexes}]{$H$ value (i.e., maximum extraction cost) and mean extraction cost for the LZ indexes}
\label{tab:hvalues}
\end{table}

\begin{table}[htbp]
\centering
\begin{scriptsize}
\begin{tabular}{|l|r|r|r|r|r|r|r|r|r|r|r|}
\hline
 & \trotate{Trailing characters} & \trotate{$B$} & \trotate{Suffix trie} & \trotate{Suffix trie skips} & \trotate{Reverse trie} & \trotate{Reverse trie skips} & \trotate{$rev\_ids$} & \trotate{Range} & \trotate{Depth} & \trotate{$P_S$} & \trotate{$B_S$} \\ \hline \hline
$F_{41}$  & 0.03 & 0.35 & 47.00 & 0.76 & 47.00 & 0.74 & 0.08 & 1.93 & 0.51 & 0.54 & 0.32 \\ \hline
$R_{13}$  & 0.03 & 0.34 & 46.87 & 0.72 & 46.86 & 0.67 & 0.08 & 1.92 & 0.94 & 0.53 & 0.31 \\ \hline
$T_{29}$  & 0.04 & 0.46 & 46.63 & 0.79 & 46.62 & 0.80 & 0.11 & 1.91 & 0.94 & 0.57 & 0.41 \\ \hline \hline
DNA 0.1\% $^2$  & 1.98 & 10.45 & 13.69 & 7.14 & 13.36 & 6.52 & 11.86 & 12.51 & 3.51 & 12.93 & 6.04 \\ \hline
Proteins 0.1\% $^2$  & 3.48 & 9.69 & 12.94 & 6.94 & 12.09 & 5.64 & 12.51 & 13.19 & 4.10 & 13.64 & 5.78 \\ \hline
English 0.1\% $^1$  & 4.50 & 9.59 & 13.15 & 5.99 & 13.40 & 5.59 & 11.56 & 12.19 & 2.76 & 12.60 & 8.67 \\ \hline
Sources 0.1\% $^1$  & 4.36 & 9.92 & 13.42 & 5.94 & 13.43 & 5.36 & 11.22 & 11.84 & 3.09 & 12.23 & 9.18 \\ \hline \hline
Para  & 1.90 & 8.52 & 12.53 & 6.81 & 12.23 & 6.43 & 13.28 & 13.95 & 3.32 & 14.36 & 6.65 \\ \hline
Cere  & 1.87 & 8.60 & 12.72 & 6.85 & 12.42 & 6.45 & 13.06 & 13.72 & 3.55 & 14.12 & 6.64 \\ \hline
Influenza  & 2.34 & 9.35 & 13.08 & 7.82 & 12.71 & 7.26 & 11.70 & 12.30 & 3.46 & 12.68 & 7.31 \\ \hline
Escherichia Coli  & 2.56 & 8.20 & 12.66 & 6.65 & 12.40 & 6.38 & 13.43 & 14.11 & 2.68 & 14.52 & 6.42 \\ \hline
Coreutils  & 4.92 & 7.50 & 13.23 & 5.99 & 13.21 & 5.62 & 12.91 & 13.57 & 3.47 & 13.96 & 5.62 \\ \hline
Kernel  & 5.09 & 7.47 & 13.41 & 6.01 & 13.25 & 5.72 & 12.71 & 13.37 & 3.37 & 13.78 & 5.82 \\ \hline
Einstein (en)  & 5.19 & 8.23 & 15.06 & 6.59 & 14.45 & 5.95 & 11.02 & 11.70 & 2.88 & 12.06 & 6.86 \\ \hline
Einstein (de)  & 4.55 & 7.93 & 16.80 & 6.42 & 17.52 & 5.78 & 9.75 & 10.55 & 3.08 & 10.77 & 6.78 \\ \hline
World leaders  & 4.47 & 8.28 & 13.57 & 6.83 & 13.98 & 6.16 & 11.49 & 12.13 & 4.08 & 12.52 & 6.49 \\ \hline
\end{tabular}
\end{scriptsize}
\caption[Detailed space of LZ77 index structures]{Detailed space of LZ77 index structures. Values are in percentage of the total size.}
\label{tab:lz77_det}
\end{table}

\begin{table}[htbp]
\centering
\begin{scriptsize}
\begin{tabular}{|l|r|r|r|r|r|r|r|r|r|r|r|}
\hline
  & \trotate{Trailing characters} & \trotate{$B$} & \trotate{Suffix trie} & \trotate{Suffix trie skips} & \trotate{Reverse trie} & \trotate{Reverse trie skips} & \trotate{$rev\_ids$} & \trotate{Range} & \trotate{Depth} & \trotate{$P_S$} & \trotate{$B_S$} \\ \hline \hline
$F_{41}$  & 0.03 & 0.51 & 47.13 & 0.47 & 47.12 & 0.46 & 0.08 & 1.94 & 0.51 & 0.54 & 0.48 \\ \hline
$R_{13}$  & 0.05 & 0.82 & 45.70 & 0.78 & 45.70 & 0.78 & 0.15 & 2.24 & 1.75 & 0.61 & 0.74 \\ \hline
$T_{29}$  & 0.06 & 1.02 & 45.55 & 1.05 & 45.54 & 0.84 & 0.20 & 2.22 & 1.33 & 0.65 & 0.83 \\ \hline \hline
DNA 0.1\% $^2$  & 1.70 & 15.01 & 11.25 & 6.12 & 10.66 & 5.67 & 10.78 & 11.35 & 2.99 & 11.71 & 12.75 \\ \hline
Proteins 0.1\% $^2$  & 2.91 & 14.38 & 10.70 & 5.62 & 10.17 & 5.20 & 11.07 & 11.65 & 3.55 & 12.03 & 12.71 \\ \hline
English 0.1\% $^1$  & 3.90 & 14.23 & 11.58 & 5.43 & 10.73 & 4.67 & 10.58 & 11.13 & 2.78 & 11.49 & 13.48 \\ \hline
Sources 0.1\% $^1$  & 3.76 & 13.93 & 12.55 & 5.33 & 11.53 & 4.48 & 10.22 & 10.75 & 3.10 & 11.10 & 13.24 \\ \hline \hline
Para  & 1.63 & 13.66 & 11.10 & 5.69 & 10.48 & 5.28 & 11.93 & 12.53 & 2.67 & 12.88 & 12.16 \\ \hline
Cere  & 1.66 & 14.13 & 10.99 & 5.89 & 10.31 & 5.48 & 11.60 & 12.18 & 2.71 & 12.54 & 12.52 \\ \hline
Influenza  & 2.09 & 13.99 & 11.41 & 6.54 & 11.03 & 6.28 & 10.44 & 10.97 & 2.75 & 11.32 & 13.18 \\ \hline
Escherichia Coli  & 2.21 & 13.61 & 10.69 & 5.65 & 10.05 & 5.24 & 12.16 & 12.77 & 2.33 & 13.12 & 12.18 \\ \hline
Coreutils  & 4.26 & 12.63 & 11.70 & 5.10 & 11.53 & 4.81 & 11.18 & 11.75 & 3.17 & 12.09 & 11.78 \\ \hline
Kernel  & 4.40 & 12.84 & 11.63 & 5.11 & 11.42 & 4.85 & 11.01 & 11.57 & 3.30 & 11.93 & 11.92 \\ \hline
Einstein (en)  & 4.47 & 13.82 & 12.73 & 5.69 & 12.10 & 5.05 & 9.49 & 10.05 & 2.92 & 10.38 & 13.28 \\ \hline
Einstein (de)  & 3.94 & 13.69 & 14.00 & 5.60 & 13.38 & 4.92 & 8.99 & 9.65 & 2.80 & 9.89 & 13.11 \\ \hline
World leaders  & 3.90 & 13.77 & 11.77 & 5.65 & 12.12 & 5.33 & 10.02 & 10.57 & 3.02 & 10.93 & 12.91 \\ \hline
\end{tabular}
\end{scriptsize}
\caption[Detailed space of LZ-End index structures]{Detailed space of LZ-End index structures. Values are in percentage of the total size.}
\label{tab:lzend_det}
\end{table}

\clearpage

\results{tm29}{$T_{29}$ results}{res:t29}
\clearpage
\results{dna.001.1}{DNA 0.1\% $^1$ results}{res:dna}
\clearpage
\results{kernel}{Kernel results}{res:kernel}
\clearpage
\section{Analysis of the Results}
\label{sec:analysis_index}
It can be seen from the results presented above that our indexes are competitive with RLCSA and in most cases they show better space/time trade-offs.

Figure \ref{fig:construction} shows that our indexes are built more efficiently than RLCSA. The space needed to build the LZ77 index is about 60\% of that of RLCSA, and for the case of \jeremy{the} LZ-End index the space is about 85\% of that of RLCSA. The construction time for LZ77 is about 35\% of that of RLCSA, yet for LZ-End it is about 185\% of that of RLCSA (i.e., slower). Our space occupancy during construction is a great advantage against RLCSA as it allows us to build the index for larger texts using the same resources. We could reduce construction space further by sacrificing construction time, recall Section \ref{sec:constr}.

The compression ratio of our indexes is usually superior to that of RLCSA (see Table \ref{tab:index_compression}). When considering the lower bound of RLCSA, which only supports \emph{count} and \emph{exists} \jeremy{queries}, our smallest index compresses better than RLCSA for all except \djeremy{3} texts \jeremy{influenza, coreutils and World Leaders}. When considering RLCSA with a sampling step equal to 512, our compression is better for all except \djeremy{1} text \jeremy{coreutils}. From now on we compare our indexes with the RLCSA with sampling step 512. The compression difference is more noticeable for artificial texts, where our compression is 100-1000 times better than RLCSA. For DNA collections Para and Cere our best compression space (always achieved with LZ77) is about 45\% of RLCSA's, yet it raises to 70-80\% for Influenza and Escherichia Coli. The space is also 80\% on World Leaders. On kernel our space is 60\% of RLCSA, but they are almost the same on Coreutils. On Wikipedia articles our space is 20-30\% of that of RLCSA. LZ-End needs more space that LZ77, losing to RLCSA in Influenza, Escherichia Coli, Coreutils and World Leaders. For texts of pseudo-real collections the compression ratio of LZ77 is about 60\% of that of RLCSA, and even the alternative using more space has better compression ratios than RLCSA. It can also be seen in Figures \figuresres labeled ``results (2)'' that in most cases the space/time trade-off is in favor of LZ77/LZ-End. Our LZ77 indexes use 2.63-7.52 times the space of p7zip and our LZ-End indexes use 4.57-23.03 times the space, depending on the type of text and the number of structures we use (e.g., tries).
Finally, the compression ratios of LZ78-based indexes are not competitive at all (note that ILZI's maximal parsing performs better than LZ78).

\jeremy{We expected to improve the compression of RLCSA for highly repetitive texts, since LZ77 is more powerful to detect repetitions (see Lemma \ref{lemma:lz77rep} and Section \ref{sec:rlcsa}). 
For artificial texts, the most repetitive ones, this situation is even more clear. For such texts, most of the space of the RLCSA corresponds to the space to represent the samplings, which is difficult to compress \cite{RLCSA_journal}.}

Tables \ref{tab:dvalues} and \ref{tab:hvalues} show that $D$ is generally moderate (below 42), and that the greatest extraction cost is also moderate (at most 257 \jeremy{steps} in LZ-End and at most 98 \jeremy{steps} in LZ77 except for the texts of Wikipedia). When taking into account the mean values of the depth and extraction cost, the values decrease noticeably, being the average extraction cost below 25 \jeremy{steps} for LZ-End and below 32 \jeremy{steps} for LZ77 (except for the texts of Wikipedia). 

Tables \ref{tab:lz77_det} and \ref{tab:lzend_det} show that the \ddiego{sparse} suffix \diego{trie} and the reverse trie use more space than the rest of the data structures. Then we see that the range structure, the reverse ids and the permutation use roughly the same space (they require $n'\log n'+o(n'\log n')$ bits). For LZ-End we see that the space needed to represent both bitmaps increases noticeably, being even higher to that of the range structure, or the permutation. Finally, we see that the skips only use 11-13\% and the wavelet tree of depths uses 2-4\% of the total space. Additionally, for the artificial texts, we see that more than 90\% of the space is used to represent the \ddiego{sparse} suffix \diego{trie} and the reverse trie. This is because our implementation of DFUDS stores a boosting table of constant size 1kiB, and for these texts the number of phrases $n'$ is less than 100, being the space of the table considerably larger than the space of the remaining data structures. We also note that our implementation stores the labels of the tree using 1 byte for each symbol.

The extract time of LZ-End index is better than RLCSA in all texts, being at least twice as fast, and up to 10 times faster for short passages. The LZ77 index extracts substrings of length up to 50 faster than RLCSA. When taking into account the space/time trade-offs (top-left plot of Figures \figuresres labeled ``results (2)'') we see that our indexes improve RLCSA both in extraction time and space, dominating the curve defined by RLCSA, excepting the texts where LZ-End loses in space, in which case no one dominates. \jeremy{The extraction space/time trade-off is better than that of RLCSA, because since RLCSA cannot compress the sampling, it has to use a sparse sampling to be competitive in space.}

The performance of locate queries is related to the pattern length. This is because in our indexes the locating cost is quadratic on the pattern length (yet, this is in the worst case; in practice many searches are abandoned earlier). It can be seen in plots 2-6 of Figures \figuresres labeled ``results (2)'' that for patterns of length 2 or 4 all of our indexes are significantly faster than RLCSA. This is because our indexes are much faster to locate each occurrence, as this is the cost that dominates for short patterns. However, when we increase the length of the patterns, the increase in cost is noticeable for the alternatives using binary search, which are those using least space. However, the alternative number 1, using tries, shows a time basically independent of $|P|$ (except somewhat in DNA 0.1\%$^1$ and Escherichia Coli). This is also seen in the top-right plot of Figures \figuresres labeled ``results (1)'' . RLCSA time is almost insensitive to $|P|$, thus in several cases it becomes faster for longer patterns (which also have fewer occurrences, for reporting which RLCSA is slower).

By analyzing the performance results of SLPs \cite{CFMPNbibe10} it is clear that the compression ratio of SLPs (at least when using Re-Pair to create the grammar) is worse than that of RLCSA. For the case of DNA (Para, Cere and Influenza) the compression ratio is more than twice that of the LZ77 indexes. Furthermore, the locate time of SLPs is only comparable to RLCSA for small patterns ($|P|\le6$), in which case our indexes show a space/time trade-off much better than that of RLCSA.

Finally, we have that \emph{exists} queries are solved consistently faster by RLCSA than by our indexes. Looking at plots 3 and 4 of Figures \figuresres labeled ``results (1)'' we notice that our larger variants are comparable to RLCSA, although always slower, in the case of patterns present in the text. The difference widens in the case of non-existent patterns, as RLCSA improves more sharply. Moreover, in our indexes the time increases with the length, opposite to RLCSA where the time is practically constant when the pattern does not exist. In plots 7 and 8 of Figures \figuresres labeled ``results (2)'' one can see the trade-off of \emph{exists} queries. They show that binary search is not an alternative if we are interested in this type of queries. For the case of patterns present in the text, binary searching the queries takes about 10 times more, and for patterns not present in the text about 10-1000 times more, than the time needed using tries.


\chapter{Conclusions}
We have presented a new full-text self-index based on the Lempel-Ziv parsing. This index is especially well suited for applications in which the text is highly repetitive and the user is interested in finding patterns in the text (\emph{locate}) and accessing arbitrary substrings of the text (\emph{extract}). Our indexes provide a much better space/time trade-off than the previous ones for these operations.

The compression ratio of our indexes is more than 10 times better than previous indexes based on LZ78, which are shown to be inappropriate for very repetitive texts. Additionally, the compression ratio of our smallest index is, for almost all texts (13 out of 16)\jeremy{\footnote{We could not devise any characteristic that explains why the compression ratio of RLCSA for those texts is superior to our indexes.}}, better than the lower bound achievable using RLCSA \cite{MNSV08}, the best previous self-index for these texts. When compared to the smallest practical RLCSA, the compression of our index is better for all except one text, usually by a factor of 2 at least. Compared to pure LZ77 compression, our index takes usually 3-6 times the space achieved by p7zip.

We also introduced a new LZ-parsing called LZ-End, which is close to LZ77 in compression ratio and gives faster access to text substrings.
The extraction speed when using LZ-End is always better than that of RLCSA, and the extraction speed of our LZ77-based index is also superior for small substrings. Our indexes are always better for locating the occurrences of short patterns (of length up to 10), and the results are mixed for longer ones. \jeremy{This is because our locating time is quadratic and depends also in the extraction speed, which shows different behaviors according to the text.}

The only operation for which RLCSA is consistently better than our indexes is for answering if a pattern is present in the text (\emph{exists}), \djeremy{being} the difference \jeremy{being} even more notorious for the case of non-existent patterns. Similarly, our indexes cannot count the number of occurrences without locating them all, whereas the RLCSA can do this very fast. Nevertheless, \djeremy{we believe} \jeremy{it has been argued \cite{ANjea10} that} these two queries are \jeremy{used in much more specific applications serving as a basis for complex tasks such as approximate pattern matching or text categorization, while} \djeremy{less relevant than} extracting and locating \jeremy{are the most important for general applications}.


An interesting goal for future research would be to reduce the $m^2$ factor of the \emph{locate} query time to just $m$. This improvement would make our index even more attractive. This has been achieved for other LZ-based indexes \cite{AN07,ilzi}, yet these have been built on LZ parsings that are too weak for very repetitive texts.

Another line of research would be to design new LZ-like compression schemes allowing fast decompression of random substrings of the LZ parsing. Note that our only trade-off related to extraction speed is the use of LZ-End instead of LZ77, and still LZ-End takes constant time per extracted symbol only in certain cases. In a recent work, Kuruppu \etal ~\cite{KPZ10} use a single document as the dictionary for the LZ77 algorithm, storing that source document in plain form. This method is a heuristic and works fairly well enough only when the documents of the collection are not successive versions, as in collections of DNA. However, even the compression achieved for DNA collections (para and cere) is almost the same than the compression we achieved using our best LZ77 variant, yet we have a self-index and they are only able to extract text, although their extraction times are more than 100 times faster than ours. \ddiego{Anyways} \diego{Nevertheless}, this method is orthogonal to our index proposal and we could build our self-index \diego{on} \ddiego{in} top of their compression scheme.
 
Another important line of research is to devise an LZ parsing algorithm that uses space proportional to that of the final compressed text. Currently, to build the LZ77 parsing one needs about six times the space of the original text. Although this space is lower than that of RLCSA, is still too much to handle very large text collections. Alternatively, a parsing algorithm working in secondary storage would also be useful to handle very large collections. We are aware that this is a more than challenging task, as the parsing process is strongly related with dynamic self-indexes for repetitive texts. That is, if we have a dynamic self-index (or at least an index able to insert strings at the end) we can easily produce the LZ parsing of the text. Hence, studying how we can build a dynamic LZ-based index is a natural research direction.

It would be also interesting to study if counting could be answered more efficiently, and if more meaningful operations like approximate pattern matching could be implemented, or if some operations of the suffix tree could be simulated on the index.

Another interesting research goal is to decrease the space factor, both in theory and in practice. Compared to a pure LZ77 compressor, the factor is 4 in theory and 3-6 in practice, as mentioned. Such a reduction has been achieved for Arroyuelo and Navarro's LZ-index, reducing the factor from 4 \cite{Nav02} to $(2+\epsilon)$ \cite{AN06} (see Section \ref{sec:lz78index}). This was possible because there was some redundancy between the components of the index. We could also reduce the factor by coding the bitmaps of the wavelet tree of depths in compressed form \cite{RRR02} (see Section \ref{sec:bitmaps}), since most depths are very low in practice and only some are high. However, the space improvement would not be too impressive, since the space of wavelet tree of depths is just 2-4\% of the index size.
 
We have also presented a text corpus oriented to repetitive texts. The main goal of this corpus is to become a reference set in experimentation with this kind of texts. The corpus is publicly available at \corpusurl.

Finally, our implementation has been left public in the site \indexurl, to promote its use in real-world and research applications and to serve as a baseline for future developments in repetitive text indexing.

\newpage
\bibliographystyle{alpha}
\bibliography{msc}

\newcommand{\etalchar}[1]{$^{#1}$}
\begin{thebibliography}{ZdMNBY00}

\bibitem[ACNS10]{ACNSalenex10}
Diego Arroyuelo, Rodrigo C{\'a}novas, Gonzalo Navarro, and Kunihiko Sadakane.
\newblock Succinct trees in practice.
\newblock In {\em Proc. 11th Workshop on Algorithm Engineering and Experiments
  (ALENEX)}, pages 84--97. SIAM Press, 2010.

\bibitem[AN]{AN_LZ}
Diego Arroyuelo and Gonzalo Navarro.
\newblock Space-efficient construction of {L}empel-{Z}iv compressed text
  indexes.
\newblock Manuscript.

\bibitem[AN07]{AN07}
Diego Arroyuelo and Gonzalo Navarro.
\newblock Smaller and faster {L}empel-{Z}iv indices.
\newblock In {\em Proc. 18th International Workshop on Combinatorial Algorithms
  (IWOCA)}, pages 11--20. College Publications, UK, 2007.

\bibitem[AN10]{ANjea10}
Diego Arroyuelo and Gonzalo Navarro.
\newblock Practical approaches to reduce the space requirement of
  {L}empel-{Z}iv-based compressed text indices.
\newblock {\em ACM Journal of Experimental Algorithmics (ACM JEA)}, 2010.
\newblock To appear.

\bibitem[ANS06]{AN06}
Diego Arroyuelo, Gonzalo Navarro, and Kunihiko Sadakane.
\newblock Reducing the space requirement of {LZ}-index.
\newblock In {\em Proc. 17th Annual Symposium on Combinatorial Pattern Matching
  (CPM)}, LNCS 4009, pages 319--330, 2006.

\bibitem[ANS10]{ANSalgor10}
Diego Arroyuelo, Gonzalo Navarro, and Kunihiko Sadakane.
\newblock Stronger {L}empel-{Z}iv based compressed text indexing.
\newblock {\em Algorithmica}, 2010.
\newblock To appear.

\bibitem[AS99]{thuemorse}
Jean-Paul Allouche and Jeffrey Shallit.
\newblock The ubiquitous {P}rouhet-{T}hue-{M}orse sequence.
\newblock In {\em Proc. 1st International Conference on Sequences and their
  Applications (SETA)}, pages 1--16. Springer-Verlag, 1999.

\bibitem[B{\etalchar{+}}08]{dna_sim}
David~R. Bentley et~al.
\newblock Accurate whole human genome sequencing using reversible terminator
  chemistry.
\newblock {\em Nature}, 456(7218):53--59, 2008.

\bibitem[Ban09]{lzb}
Mohammad Banikazemi.
\newblock {LZB}: Data compression with bounded references.
\newblock In {\em Proc. 19th Data Compression Conference (DCC)}, page 436. IEEE
  Computer Society, 2009.
\newblock Poster.

\bibitem[BDM{\etalchar{+}}05]{labeled_child}
David Benoit, Erik~D. Demaine, J.~Ian Munro, Rajeev Raman, Venkatesh Raman, and
  S.~Srinivasa Rao.
\newblock Representing trees of higher degree.
\newblock {\em Algorithmica}, 43(4):275--292, 2005.

\bibitem[BLN09]{DAC}
Nieves Brisaboa, Susana Ladra, and Gonzalo Navarro.
\newblock Directly addressable variable-length codes.
\newblock In {\em Proc. 16th International Symposium on String Processing and
  Information Retrieval (SPIRE)}, LNCS 5721, pages 122--130. Springer, 2009.

\bibitem[BM77]{BM77}
Robert~S. Boyer and J.~Strother Moore.
\newblock A fast string searching algorithm.
\newblock {\em Communications of the ACM}, 20(10):762--772, 1977.

\bibitem[BW94]{BW94}
Michael Burrows and David Wheeler.
\newblock A block sorting lossless data compression algorithm.
\newblock Technical Report 124, Digital Equipment Corporation, 1994.

\bibitem[CFMPN10]{CFMPNbibe10}
Francisco Claude, Antonio Fari{\~n}a, Miguel Mart{\'{\i}}nez-Prieto, and
  Gonzalo Navarro.
\newblock Compressed $q$-gram indexing for highly repetitive biological
  sequences.
\newblock In {\em Proc. 10th IEEE Conference on Bioinformatics and
  Bioengineering (BIBE)}, pages 86--91. IEEE Press, 2010.

\bibitem[Cha88]{Cha88}
Bernard Chazelle.
\newblock Functional approach to data structures and its use in
  multidimensional searching.
\newblock {\em SIAM Journal on Computing}, 17(3):427--462, 1988.

\bibitem[CIT08]{CIT08}
Maxime Crochemore, Lucian Ilie, and Liviu Tinta.
\newblock Towards a solution to the "runs" conjecture.
\newblock In {\em Proc. 19th Annual Symposium on Combinatorial Pattern Matching
  (CPM)}, pages 290--302. Springer-Verlag, 2008.

\bibitem[Cla96]{Cla96}
David Clark.
\newblock {\em Compact Pat Trees}.
\newblock PhD thesis, University of Waterloo, 1996.

\bibitem[CLL{\etalchar{+}}05]{CLLP+05}
Moses Charikar, Eric Lehman, Ding Liu, Rina Panigrahy, Manoj Prabhakaran, Amit
  Sahai, and Abhi Shelat.
\newblock The smallest grammar problem.
\newblock {\em IEEE Transactions on Information Theory}, 51(7):2554--2576,
  2005.

\bibitem[CN09]{CN09}
Francisco Claude and Gonzalo Navarro.
\newblock Self-indexed text compression using straight-line programs.
\newblock In {\em Proc. 34th International Symposium on Mathematical
  Foundations of Computer Science (MFCS)}, LNCS 5734, pages 235--246. Springer,
  2009.

\bibitem[CN10]{CN10}
Francisco Claude and Gonzalo Navarro.
\newblock Self-indexed grammar-based compression.
\newblock {\em Fundamenta Informaticae}, 2010.
\newblock to appear.

\bibitem[CPS08]{CPS08}
Gang Chen, Simon~J. Puglisi, and William~F. Smyth.
\newblock {L}empel-{Z}iv factorization using less time \& space.
\newblock {\em Mathematics in Computer Science}, 1(4):605--623, June 2008.

\bibitem[FG89]{FG89}
Edward~R. Fiala and Daniel~H. Greene.
\newblock Data compression with finite windows.
\newblock {\em Communications of the ACM}, 32(4):490--505, 1989.

\bibitem[FH07]{FH07}
Johannes Fischer and Volker Heun.
\newblock {A New Succinct Representation of RMQ-Information and Improvements in
  the Enhanced Suffix Array}.
\newblock In {\em Proc. 1st International Symposium on Combinatorics,
  Algorithms, Probabilistic and Experimental Methodologies (ESCAPE)}, volume
  4614 of {\em LNCS 4614}, pages 459--470. Springer-Verlag, 2007.

\bibitem[FM05]{FM05}
Paolo Ferragina and Giovanni Manzini.
\newblock Indexing compressed text.
\newblock {\em Journal of the ACM}, 52(4):552--581, 2005.

\bibitem[FMMN07]{FMMN07}
Paolo Ferragina, Giovanni Manzini, Veli M{\"a}kinen, and Gonzalo Navarro.
\newblock Compressed representations of sequences and full-text indexes.
\newblock {\em ACM Transactions on Algorithms (TALG)}, 3(2):article 20, 2007.

\bibitem[FSS03]{rich}
Frantisek Franek, R.J. Simpson, and William~F. Smyth.
\newblock The maximum number of runs in a string.
\newblock In {\em Proc. Australian Workshop on Combinatorial Algorithms
  (AWOCA)}, pages 26--35, 2003.

\bibitem[GBYS92]{sa_gonet}
Gaston~H. Gonnet, Ricardo~A. Baeza-Yates, and Tim Snider.
\newblock New indices for text: Pat trees and pat arrays.
\newblock In {\em Information Retrieval: Data Structures {\&} Algorithms},
  pages 66--82. Prentice Hall, 1992.

\bibitem[GGMN05]{GGMNwea05}
Rodrigo Gonz\'alez, Szymon Grabowski, Veli M{\"akinen}, and Gonzalo Navarro.
\newblock Practical implementation of rank and select queries.
\newblock In {\em Poster Proc. Volume of 4th Workshop on Efficient and
  Experimental Algorithms (WEA)}, pages 27--38. CTI Press and Ellinika
  Grammata, 2005.

\bibitem[GGV03]{GGV03}
Roberto Grossi, Ankur Gupta, and Jeffrey~Scott Vitter.
\newblock High-order entropy-compressed text indexes.
\newblock In {\em Proc. 14th {A}nnual ACM-SIAM {S}ymposium on {D}iscrete
  {A}lgorithms (SODA)}, pages 841--850. SIAM Press, 2003.

\bibitem[GN08]{GN08}
Rodrigo Gonz{\'a}lez and Gonzalo Navarro.
\newblock Rank/select on dynamic compressed sequences and applications.
\newblock {\em Theoretical Computer Science}, 410:4414--4422, 2008.

\bibitem[GV05]{GV00}
Roberto Grossi and Jeffrey~Scott Vitter.
\newblock Compressed suffix arrays and suffix trees with applications to text
  indexing and string matching.
\newblock {\em SIAM Journal of Computing}, 35(2):378--407, 2005.

\bibitem[Ham86]{gibbs}
Richard~Wesley Hamming.
\newblock {\em Coding and Information Theory}.
\newblock Prentice-Hall, 1986.

\bibitem[IT06]{sst_constr}
Shunsuke Inenaga and Masayuki Takeda.
\newblock On-line linear-time construction of word suffix trees.
\newblock In {\em Proc. 17th Annual Symposium on Combinatorial Pattern Matching
  (CPM)}, pages 60--71. Springer-Verlag, 2006.

\bibitem[Jac89]{bp_jacobson}
Guy Jacobson.
\newblock Space-efficient static trees and graphs.
\newblock In {\em Annual IEEE Symposium on Foundations of Computer Science},
  pages 549--554. IEEE Computer Society, 1989.

\bibitem[K{\"a}r99]{Kar99}
Juha K{\"a}rkk{\"a}inen.
\newblock {\em Repetition-Based Text Indexes}.
\newblock PhD thesis, Department of Computer Science, Univeristy of Helsinki,
  Finland, November 1999.

\bibitem[KK99]{runslz}
Roman Kolpakov and Gregory Kucherov.
\newblock On maximal repetitions in words.
\newblock {\em Journal of Discrete Algorithms}, 1:159--186, 1999.

\bibitem[KM99]{KM99}
S.~Rao Kosaraju and Giovanni Manzini.
\newblock Compression of low entropy strings with {L}empel-{Z}iv algorithms.
\newblock {\em SIAM Journal on Computing}, 29(3):893--911, 1999.

\bibitem[KMP77]{KMP77}
Donald~E. Knuth, James~H. Morris, and Vaughan~R. Pratt.
\newblock Fast pattern matching in strings.
\newblock {\em SIAM Journal of Computing}, 6(2):323--350, 1977.

\bibitem[KN10]{KN10}
Sebastian Kreft and Gonzalo Navarro.
\newblock L{Z}77-like compression with fast random access.
\newblock In {\em Proc. 20th Data Compression Conference (DCC)}, pages
  239--248, 2010.

\bibitem[KPZ10]{KPZ10}
Shanika Kuruppu, Simon~J. Puglisi, and Justin Zobel.
\newblock Relative {L}empel-{Z}iv compression of genomes for large-scale
  storage and retrieval.
\newblock In {\em Proc. 17th International Symposium on String Processing and
  Information Retrieval (SPIRE)}, pages 201--206, 2010.

\bibitem[KS03]{KS03}
Juha K\"arkk\"ainen and Peter Sanders.
\newblock Simple linear work suffix array construction.
\newblock In {\em Proc. 30th International Colloquium on Automata, Languages
  and Programming (ICALP)}, LNCS 2719, pages 943--955, 2003.

\bibitem[KU96a]{KU96}
Juha K{\"a}rkk{\"a}inen and Esko Ukkonen.
\newblock Lempel-{Z}iv parsing and sublinear-size index structures for string
  matching.
\newblock In {\em Proc. 3rd South American Workshop on String Processing
  (WSP)}, pages 141--155. Carleton University Press, 1996.

\bibitem[KU96b]{KU96_sst}
Juha K\"{a}rkk\"{a}inen and Esko Ukkonen.
\newblock Sparse suffix trees.
\newblock In {\em Proc. 2nd Annual International Conference on Computing and
  Combinatorics (COCOON)}, pages 219--230. Springer-Verlag, 1996.

\bibitem[LM00]{LM00}
N.~Jesper Larsson and Alistair Moffat.
\newblock Off-line dictionary-based compression.
\newblock {\em Proc. IEEE}, 88(11):1722--1732, 2000.

\bibitem[Lot02]{ACW}
M.~Lothaire.
\newblock {\em Algebraic Combinatorics on Words}.
\newblock Cambridge University Press, 2002.

\bibitem[LZ76]{LZ76}
Abraham Lempel and Jacob Ziv.
\newblock On the complexity of finite sequences.
\newblock {\em IEEE Transactions on Information Theory}, 22(1):75--81, 1976.

\bibitem[Mai89]{Main89}
Michael~G. Main.
\newblock Detecting leftmost maximal periodicities.
\newblock {\em Discrete Applied Mathematics}, 25(1-2):145--153, 1989.

\bibitem[Man01]{Man2001}
Giovanni Manzini.
\newblock An analysis of the {B}urrows-{W}heeler transform.
\newblock {\em Journal of the ACM}, 48(3):407--430, 2001.

\bibitem[McC76]{McC76}
Edward~M. McCreight.
\newblock A space-economical suffix tree construction algorithm.
\newblock {\em Journal of the ACM}, 32(2):262--272, 1976.

\bibitem[MKI{\etalchar{+}}08]{MKIBS08}
Wataru Matsubara, Kazuhiko Kusano, Akira Ishino, Hideo Bannai, and Ayumi
  Shinohara.
\newblock New lower bounds for the maximum number of runs in a string.
\newblock In {\em Proc. Prague Stringology Conference (PSC)}, pages 140--145,
  2008.

\bibitem[MM93]{MM93}
Udi Manber and Gene Myers.
\newblock Suffix arrays: a new method for on-line string searches.
\newblock {\em SIAM Journal on Computing}, 22(5):935--948, 1993.

\bibitem[MN07]{MN07}
Veli M{\"a}kinen and Gonzalo Navarro.
\newblock Rank and select revisited and extended.
\newblock {\em Theoretical Computer Science}, 387(3):332--347, 2007.
\newblock Special issue on ``The Burrows-Wheeler Transform and its
  Applications''.

\bibitem[MNSV10]{RLCSA_journal}
Veli M{\"a}kinen, Gonzalo Navarro, Jouni Sir{\'e}n, and Niko V{\"a}lim{\"a}ki.
\newblock Storage and retrieval of highly repetitive sequence collections.
\newblock {\em Journal of Computational Biology}, 17(3):281--308, 2010.

\bibitem[Mor68]{patricia}
Donald~R. Morrison.
\newblock Patricia-practical algorithm to retrieve information coded in
  alphanumeric.
\newblock {\em Journal of the ACM}, 15(4):514--534, 1968.

\bibitem[MR01]{bp_munroraman}
J.~Ian Munro and Venkatesh Raman.
\newblock Succinct representation of balanced parentheses and static trees.
\newblock {\em SIAM Journal on Computing}, 31(3):762--776, 2001.

\bibitem[MRRR03]{MRRR03}
J.~Ian Munro, Rajeev Raman, Venkatesh Raman, and S.~Srinivasa Rao.
\newblock Succinct representations of permutations.
\newblock In {\em Proc. 30th International Colloquium on Automata, Languages
  and Computation (ICALP)}, LNCS 2719, pages 345--356. Springer, 2003.

\bibitem[Mun86]{Munro86}
J.~Ian Munro.
\newblock An implicit data structure supporting insertion, deletion, and search
  in {$O(\log n)$} time.
\newblock {\em Journal of Computer System Sciences}, 33(1):66--74, 1986.

\bibitem[Nav04]{Nav02}
Gonzalo Navarro.
\newblock Indexing text using the {Z}iv-{L}empel trie.
\newblock {\em Journal of Discrete Algorithms}, 2(1):87--114, 2004.

\bibitem[Nav08]{Nav08}
Gonzalo Navarro.
\newblock Indexing {L}{Z}77: {T}he next step in self-indexing.
\newblock {\em Keynote talk at Third {W}orkshop on {C}ompression, {Text}, and
  {A}lgorithms}, 2008.

\bibitem[Nav09]{Nav09}
Gonzalo Navarro.
\newblock Implementing the {LZ}-index: Theory versus practice.
\newblock {\em ACM Journal of Experimental Algorithmics (JEA)}, 13:article 2,
  2009.

\bibitem[NM07]{NM07}
Gonzalo Navarro and Veli M{\"a}kinen.
\newblock Compressed full-text indexes.
\newblock {\em ACM Computing Surveys}, 39(1):article 2, 2007.

\bibitem[OS07]{OS07}
Daisuke Okanohara and Kunihiko Sadakane.
\newblock Practical entropy-compressed rank/select dictionary.
\newblock In {\em Proc. 9th Workshop on Algorithm Engineering and Experiments
  (ALENEX)}. SIAM Press, 2007.

\bibitem[OS08]{OS08}
Daisuke Okanohara and Kunihiko Sadakane.
\newblock An online algorithm for finding the longest previous factors.
\newblock In {\em Proc. 16th {A}nnual {E}uropean {S}ymposium on {A}lgorithms
  (ESA)}, pages 696--707. Springer-Verlag, 2008.

\bibitem[PST07]{sa_construction}
Simon~J. Puglisi, William~F. Smyth, and Andrew~H. Turpin.
\newblock A taxonomy of suffix array construction algorithms.
\newblock {\em ACM Computing Surveys}, 39(2):4, 2007.

\bibitem[PWZ92]{PWZ92}
Eli Plotnik, Marcelo Weinberger, and Jacob Ziv.
\newblock Upper bounds on the probability of sequences emitted by finite-state
  sources and on the redundancy of the {L}empel-{Z}iv algorithm.
\newblock {\em IEEE Transactions on Information Theory}, 38(1):66--72, 1992.

\bibitem[RNO08]{RNO08}
Lu\'{\i}s M.~S. Russo, Gonzalo Navarro, and Arlindo~L. Oliveira.
\newblock Fully-compressed suffix trees.
\newblock In {\em {Proc. 8th Latin American Symposium on Theoretical
  Informatics (LATIN)}}, LNCS 4957, pages 362--373, 2008.

\bibitem[RO08]{ilzi}
Lu\'{\i}s M.~S. Russo and Arlindo~L. Oliveira.
\newblock A compressed self-index using a {Z}iv-{L}empel dictionary.
\newblock {\em Journal of Information Retrieval}, 5(3):501--513, 2008.
\newblock Special issue SPIRE 2006.

\bibitem[RRR02]{RRR02}
Rajeev Raman, Venkatesh Raman, and S.~Srinivasa Rao.
\newblock Succinct indexable dictionaries with applications to encoding $k$-ary
  trees and multisets.
\newblock In {\em Proc. 13th Annual ACM-SIAM Symposium on Discrete Algorithms
  (SODA)}, pages 233--242. SIAM Press, 2002.

\bibitem[Ryt03]{Rytter03}
Wojciech Rytter.
\newblock Application of lempel--ziv factorization to the approximation of
  grammar-based compression.
\newblock {\em Theoretical Computer Science}, 302(1-3):211--222, 2003.

\bibitem[Sad03]{Sad03}
Kunihiko Sadakane.
\newblock New text indexing functionalities of the compressed suffix arrays.
\newblock {\em Journal of Algorithms}, 48(2):294 -- 313, 2003.

\bibitem[SS82]{LZSS}
James~A. Storer and Thomas~G. Szymanski.
\newblock Data compression via textual substitution.
\newblock {\em Journal of the ACM}, 29(4):928--951, 1982.

\bibitem[SVMN08]{MNSV08}
Jouni Sir{\'e}n, Niko V{\"a}lim{\"a}ki, Veli M{\"a}kinen, and Gonzalo Navarro.
\newblock Run-length compressed indexes are superior for highly repetitive
  sequence collections.
\newblock In {\em Proc. 15th International Symposium on String Processing and
  Information Retrieval (SPIRE)}, LNCS 5280, pages 164--175. Springer, 2008.

\bibitem[Ukk95]{Ukk95}
Esko Ukkonen.
\newblock Constructing suffix trees on-line in linear time.
\newblock {\em Algorithmica}, 14(3):249--260, 1995.

\bibitem[Wei73]{Wei73}
Peter Weiner.
\newblock Linear pattern matching algorithms.
\newblock In {\em Proc. 14th Annual Symposium on Switching and Automata
  Theory}, pages 1--11, 1973.

\bibitem[Wel84]{LZW}
Terry~A. Welch.
\newblock {A Technique for High-Performance Data Compression}.
\newblock {\em Computer}, 17(6):8--19, 1984.

\bibitem[Wil91]{LZRW}
Ross~N. Williams.
\newblock An extremely fast ziv-lempel data compression algorithm.
\newblock In {\em Data Compression Conference}, pages 362--371, 1991.

\bibitem[WZ99]{vbyte}
Hugh~E. Williams and Justin Zobel.
\newblock Compressing integers for fast file access.
\newblock {\em Computer Journal}, 42(3):193--201, 1999.

\bibitem[ZdMNBY00]{ZMNBY00}
Nivio Ziviani, Edleno~Silva de~Moura, Gonzalo Navarro, and Ricardo Baeza-Yates.
\newblock Compression: A key for next-generation text retrieval systems.
\newblock {\em IEEE Computer}, 33(11):37--44, 2000.

\bibitem[ZL77]{ZL77}
Jacob Ziv and Abraham Lempel.
\newblock A universal algorithm for sequential data compression.
\newblock {\em IEEE Transactions on Information Theory}, 23(3):337--343, 1977.

\bibitem[ZL78]{ZL78}
Jacob Ziv and Abraham Lempel.
\newblock Compression of individual sequences via variable-rate coding.
\newblock {\em IEEE Transactions on Information Theory}, 24(5):530--536, 1978.

\end{thebibliography}
\begin{appendices}
\renewcommand{\chaptermark}[1]{%
\markboth{Appendix \thechapter\ #1}{}}
\chapter{Experimental Results}
\label{appendix:results}
In this appendix we present the results of the experiments described in Section \ref{sec:results} for the remaining texts.
\results{fib41}{$F_{41}$ results}{res:f41}
\results{rs.13}{$R_{13}$ results}{res:r13}
\clearpage
\results{proteins.001.1}{Proteins 0.1\% $^1$ results}{res:proteins}
\results{english.001.2}{English 0.1\% $^2$ results}{res:english}
\results{sources.001.2}{Sources 0.1\% $^2$ results}{res:sources}
\clearpage
\results{para}{Para results}{res:para}
\results{cere}{Cere results}{res:cere}
\results{influenza}{Influenza results}{res:influ}
\results{Escherichia_Coli}{Escherichia Coli results}{res:coli}
\results{coreutils}{Coreutils results}{res:coreutils}
\clearpage
\results{einstein.en.txt}{Einstein (en) results}{res:wikien}
\results{einstein.de.txt}{Einstein (de) results}{res:wikide}
\results{world_leaders}{World Leaders results}{res:leaders}

\end{appendices}

\end{document}